\renewenvironment{align}{\begin{equation}\begin{aligned}}{\end{aligned}\end{equation}}
\renewenvironment{align*}{\begin{equation*}\begin{aligned}}{\end{aligned}\end{equation*}}
\numberwithin{equation}{section}
\let\oldtocsection=\tocsection
\let\oldtocsubsection=\tocsubsection
\let\oldtocsubsubsection=\tocsubsubsection
\renewcommand{\tocsection}[2]{\hspace{0em}\oldtocsection{#1}{#2}}
\renewcommand{\tocsubsection}[2]{\hspace{1em}\oldtocsubsection{#1}{#2}}
\renewcommand{\tocsubsubsection}[2]{\hspace{2em}\oldtocsubsubsection{#1}{#2}}
\DeclareMathOperator{\spann}{span}
\DeclareMathOperator{\ad}{ad}
\DeclareMathOperator{\diag}{diag}
\DeclareMathOperator{\supp}{supp}
\DeclareMathOperator{\Op}{Op}
\newcommand{\de}{\, \mathrm{d}}
\newcommand{\del}{\partial}
\newcommand{\R}{\mathbb R}
\newcommand{\C}{\mathbb C}
\newcommand{\N}{\mathbb N}
\newcommand{\BBS}{\mathbb S}
\newcommand{\resc}{s}
\newcommand{\rco}{\chi}
\newcommand{\pd}{uncoupleable}
\newcommand{\mpu}{competely uncoupleable}
\newcommand{\CA}{\mathcal{A}}
\newcommand{\CB}{\mathcal{B}}
\newcommand{\CC}{\mathcal{C}}
\newcommand{\CD}{\mathcal{D}}
\newcommand{\CE}{\mathcal{E}}
\newcommand{\CG}{\mathcal{G}}
\newcommand{\CI}{\mathcal{I}}
\newcommand{\CJ}{\mathcal{J}}
\newcommand{\CK}{\mathcal{K}}
\newcommand{\CL}{\mathcal{L}}
\newcommand{\CM}{\mathcal{M}}
\newcommand{\CN}{\mathcal{N}}
\newcommand{\CO}{\mathcal{O}}
\newcommand{\CR}{\mathcal{R}}
\newcommand{\CS}{\mathcal{S}}
\newcommand{\CT}{\mathcal{T}}
\newcommand{\CU}{\mathcal{U}}
\newcommand{\CV}{\mathcal{V}}
\newcommand{\CX}{\mathcal{X}}
\newcommand{\CY}{\mathcal{Y}}
\newcommand{\CZ}{\mathcal{Z}}
\newcommand{\COD}{{\CO\CD}}
\newcommand{\CNR}{{\CN\!\CR}}
\newcommand{\RB}{\mathrm{B}}
\newcommand{\RC}{\mathrm{C}}
\newcommand{\RD}{\mathrm{D}}
\newcommand{\RG}{\mathrm{G}}
\newcommand{\RH}{\mathrm{H}}
\newcommand{\RL}{\mathrm{L}}
\newtheorem{thm}{Theorem}[section]
\newtheorem{lem}[thm]{Lemma}
\newtheorem{prop}[thm]{Proposition}
\newtheorem{cor}[thm]{Corollary}
\theoremstyle{definition}
\newtheorem*{convention}{Convention}
\newtheorem{defi}[thm]{Definition}
\newtheorem{rem}[thm]{Remark}
\newtheorem{cond}{Condition}
\newtheorem{exa}[thm]{Example}
\newcommand{\bee}{\begin{equation}}
\newcommand{\ene}{\end{equation}}
\newcommand{\bees}{\begin{equation*}}
\newcommand{\enes}{\end{equation*}}
\newcommand{\bes}{\begin{split}}
\newcommand{\ens}{\end{split}}
\newcommand{\bet}{\begin{thm}}
\newcommand{\ent}{\end{thm}}
\newcommand{\bel}{\begin{lem}}
\newcommand{\enl}{\end{lem}}
\newcommand{\bec}{\begin{cor}}
\newcommand{\enc}{\end{cor}}
\newcommand{\becl}{\begin{cla}}
\newcommand{\encl}{\end{cla}}
\newcommand{\bep}{\begin{proof}}
\newcommand{\enp}{\end{proof}}
\newcommand{\ber}{\begin{rem}}
\newcommand{\enr}{\end{rem}}
\newcommand{\eps}{\varepsilon}
\newcommand{\Z}{\mathbb Z}
\newcommand{\btheta}{\boldsymbol \theta}
\newcommand{\bzeta}{\boldsymbol \zeta}
\newcommand{\bxi}{\boldsymbol \xi}
\newcommand{\bmu}{\boldsymbol \mu}
\newcommand{\bnu}{\boldsymbol \nu}
\newcommand{\btau}{\boldsymbol \tau}
\newcommand{\boeta}{\boldsymbol \eta}
\newcommand{\bone}{\boldsymbol 1}
\newcommand{\bzero}{\boldsymbol 0}
\newcommand {\ba}{\mathbf a}
\newcommand {\bg}{\mathbf g}
\newcommand {\bb}{\mathbf b}
\newcommand {\BA}{\mathbf A}
\newcommand {\BB}{\mathbf B}
\newcommand {\BD}{\mathbf D}
\newcommand {\BE}{\mathbf E}
\newcommand {\BS}{\mathbf S}
\newcommand {\BR}{\mathbf R}
\newcommand {\BT}{\mathbf T}
\newcommand {\BU}{\mathbf U}
\newcommand{\BX}{\mathbf X}
\newcommand{\BY}{\mathbf Y}
\newcommand {\bx}{\mathbf x}
\newcommand {\bbb}{\mathbf b}
\newcommand {\be}{\mathbf e}
\newcommand {\bk}{\mathbf k}
\newcommand {\bp}{\mathbf p}
\newcommand {\by}{\mathbf y}
\newcommand {\br}{\mathbf r}
\newcommand {\bu}{\mathbf u}
\newcommand {\bv}{\mathbf v}
\newcommand {\bn}{\mathbf n}
\newcommand{\BPsi}{\boldsymbol \Psi}
\newcommand {\FA}{\mathfrak A}
\newcommand {\FB}{\mathfrak B}
\newcommand {\FD}{\mathfrak D}
\newcommand {\FE}{\mathrm{EP}}
\newcommand {\FH}{\mathfrak H}
\newcommand {\FL}{\mathfrak L}
\newcommand {\FS}{\mathfrak S}
\newcommand {\FT}{\mathfrak T}
\newcommand {\FU}{\mathfrak U}
\newcommand {\FV}{\mathfrak V}
\newcommand {\FW}{\mathfrak W}
\renewcommand{\phi}{\varphi}
\renewcommand{\subseteq}{\subset}
\renewcommand{\supseteq}{\supset}
\newcommand{\bigo}[1]{O\left( #1 \right)}
\newcommand{\smallo}[1]{o\left( #1 \right)}
\DeclareMathOperator{\spec}{spec}
\DeclareMathOperator{\dom}{dom}
\DeclareMathOperator{\vol}{vol}
\DeclareMathOperator{\dist}{dist}
\DeclareMathOperator{\Id}{Id}
\renewcommand{\epsilon}{\varepsilon}
\newcommand{\act}{\triangleright}
\DeclareMathOperator{\id}{id}
\renewcommand{\mathsf}{\mathrm}
\renewcommand{\tilde}{\widetilde}
\newcommand{\flo}[1]{\left\lfloor#1\right\rfloor}
\newcommand{\ceil}[1]{\left\lceil#1\right\rceil}
\newcommand{\abs}[1]{\left\lvert #1 \right\rvert}
\newcommand{\set}[1]{\left\{ #1 \right\}}
\newcommand{\norm}[1]{\left\| #1 \right\|}
\newcommand{\snorm}[3]{\left\| #1\right\|_{#3}^{(#2)}}
\newcommand{\ang}[1]{\langle #1 \rangle}
\newcommand{\er}{r}
\begin{document}

\title[Gauge transform and applications]{The almost periodic Gauge Transform --- An Abstract Scheme with Applications
to Dirac Operators}

\author[J. Lagacé]{Jean Lagacé}
\author[S. Morozov]{Sergey Morozov}
\author[L. Parnovski]{Leonid Parnovski}
\author[B. Pfirsch]{Bernhard Pfirsch}
\author[R. Shterenberg]{Roman Shterenberg}
\address{Department of Mathematics, University College London, Gower Street, London, WC1E 6BT, UK}
\email{j.lagace@ucl.ac.uk}
\email{l.parnovski@ucl.ac.uk}
\email{bernhard.pfirsch.15@alumni.ucl.ac.uk}

\address{Mathematisches Institut der Universität München, Theresienstr. 39,
D-80333 München, Germany}
\email{morozov@math.lmu.de}

\address{University of Alabama at Birmingham, 1300 University Blvd, Birmingham, AL 35294. USA}
\email{shterenb@math.uab.edu}
\date{\today}
\begin{abstract}
  One of the main tools used to understand both qualitative and
  quantitative spectral behaviour of periodic and almost periodic Schr\"odinger operators is the
  method of gauge transform. In this paper, we extend this method to an abstract setting, thus
  allowing for greater flexibility in its applications that include, among others,   
  matrix-valued operators. In particular, we obtain
  asymptotic expansions for the density of states of certain almost periodic systems of elliptic
  operators, including systems of Dirac type. We also prove that a range of
  periodic systems including the two-dimensional Dirac operators satisfy the
  Bethe--Sommerfeld property, that the spectrum contains a semi-axis --- or
indeed two semi-axes in the case of operators that are not semi-bounded.
\end{abstract}
\maketitle

\tableofcontents

\section{Introduction}

  \subsection{A Gauge transform.}

During the last fifteen years, substantial progress has been made in the spectral
theory of periodic and almost periodic scalar operators. An important tool that was developed during this period and was used  
to obtain asymptotic spectral results was the method of gauge
transform (see, e.g., 
\cite{Sobolev2005,Sobolev2006,ParSob2010,ParSht2012,MorParSht2014,ParSht2016,Ivrii2018,ParSht2019}). 
This method, which heavily uses commutator estimates, was originally created for classical 
pseudo-differential operators (see e.g. \cite{Weinstein1977,Rozenbljum1978}) but
was then modified to the periodic case by Sobolev \cite{Sobolev2005,Sobolev2006}
and to the almost periodic setting by Parnovski and Shterenberg
\cite{ParSht2012}. The aim of this paper is to describe the method of gauge
transform on an abstract level and then apply this abstract scheme to a concrete
example --- elliptic systems of operators (including Dirac operators).

Here is the basic setting: suppose that we are given an operator 
\begin{equation}
  A = A_0 + B,
  \label{LP1}
\end{equation}
where $A_0$ is a diagonal operator in a given basis and $B$ is a perturbation,
which is assumed to be small in some sense. The standard example which the
reader may want to keep in mind is
\begin{equation}
  A_0 = \diag(a_1(-\Delta)^{\alpha/2},\dotsc,a_m(-\Delta)^{\alpha/2}),  
\end{equation}
where $\alpha>0$, $0 \ne a_j \in \R$, and $B$ is a pseudo-differential
perturbation of order smaller than $\alpha$
with periodic or almost periodic coefficients. For instance, a Dirac operator
with an almost periodic potential can be brought to such a form by a unitary 
transformation. In many applications we will
furthermore require $A$ to be self-adjoint, even though our general scheme may
not always require it. 

We want to find an operator $A'$
 that is unitarily
equivalent to $A$ and is simpler -- either diagonal or, failing this, has a form 
\begin{equation}\label{LP2}
  A' = U A U^{-1} = A_0' + B',
\end{equation}
where $A_0'$ is diagonal, $U$ is unitary, and $B'$ is a perturbation that is smaller than $B$.
The notion of `smallness' assumes that we have a small parameter, and $B'$ has
this small parameter entering in a higher power than $B$. The most common
example of application to PDEs assumes that the order of $B'$ is smaller than
the order of $B$ (so the role of the small parameter is played by the inverse of the
energy), but in some cases the small parameter can be chosen to be a coupling
constant, see \cite{ParSht2019}. The operators $A$ and $A'$ have the same
spectrum and the hope is that it is easier to describe the spectrum of $A'$, both quantitatively and
qualitatively. As an example of the spectral properties  we want to study, we list the following two types of problems:
\begin{enumerate}
  \item Obtaining asymptotic expansions for the so-called \emph{integrated density of
    states}
    $N(A;\lambda)$ as the spectral parameter $\lambda$ tends to $\pm \infty$;
  \item If $B$ has periodic coefficients, to prove that whenever $A$ is
    unbounded above (resp. below), its spectrum contains a semi-axis
    $[\lambda_0,\infty)$ (resp. $(-\infty,\lambda_0])$. Such an operator $A$ is
    said to satisfy the
      \emph{Bethe--Sommerfeld property}.
\end{enumerate}

If we seek the unitary operator $U$ in \eqref{LP2} in the form 
$U=\exp(i \Psi)$, then we have
\begin{equation}\label{LP3}
  A' = A_0 + B+i [A_0,\Psi] + i[B,\Psi]-\frac12 [[A_0,\Psi],\Psi]-\frac12 [[B,\Psi],\Psi]+R,
\end{equation}
where $R$ consists of further terms given by formally expanding the series for
the exponentials $\exp(i\Psi)$. Our hope is to solve the equation
\begin{equation}
\label{LP4}
B + i[A_0,\Psi]=0
\end{equation}
for $\Psi$, so that the second and third terms of \eqref{LP3} cancel each other.
Ideally, the
rest of the terms (starting from the fourth one) would indeed be smaller than $B$.
In most cases, however, these two wishes turn out to be infeasible. 

The main obstacle is that solutions $\Psi$ to equation \eqref{LP4} involve a  denominator
that could be small for some $B$ (for example, to have any hope of solving
\eqref{LP4}, the diagonal part of $B$ has to be absent). Therefore, we usually
have to modify our procedure and divide the perturbation $B$ into two parts --
good (or non-resonant) part $B^{\CNR}$ for which the equation 
\begin{equation} 
\label{LP5}
B^{\CNR} + i[A_0,\Psi]=0
\end{equation}
has a nice solution $\Psi^{\CNR}$ and bad (or resonant) part
$B^{\CR}=B-B^{\CNR}$
which we will be unable to destroy using our procedure. Thus, at the end we will
have
\begin{equation}
\label{LP6}
  A' = A_0' + B^{R}+B',
\end{equation}
where $B'$ is smaller (in order, say) than $B$. Of course, we also hope that the
resonant part $B^{R}$ is better in some sense than the initial perturbation
$B$; in many applications, the operator $B^{R}$ acts in subspaces of our Hilbert
space generated by `specially designated
and geometrically defined' areas of the phase space.

After we have reduced our operator to the improved form \eqref{LP6}, in
principle we can repeat the same procedure --- finitely, or even infinitely
many times. The latter process is much more difficult to realise, and we will
not give examples of it in this paper. However, in many settings we indeed have
to run this procedure several times (more than once) in order to achieve the
desired `smallness' of the remainder. In other words, we construct the
`improved' operator in the form 
\begin{equation}
\label{LP7}
  A_n = \exp(i \Psi_n)...\exp(i \Psi_2)\exp(i \Psi_1)A\exp(-i \Psi_1)\exp(-i \Psi_2)...\exp(-i \Psi_n).
\end{equation}
We call this method {\emph {the consecutive gauge transform}}. Sometimes, it is more convenient to look for the improved operator in the form 
\begin{equation}
\label{LP8}
  A^{(n)} = \exp(i (\Psi_n+\dotso+ \Psi_2+\Psi_1))A\exp(-i (\Psi_1+\Psi_2+\dotso +\Psi_n)),
\end{equation}
which we call {\emph {the parallel gauge transform}}. In both situations, the operators
$\Psi_j$ are solutions of equations similar in form to \eqref{LP5}.

Another important distinction between different variations of the gauge
transform is as follows. In order to prove that the order of the remainder  $B'$ is smaller than
the order of $B$, we have to estimate the orders of various commutators.
Sometimes, it is enough to have the basic estimate: the order of the commutator
is not greater than the sum of the orders of its entries. This estimate holds
without any restrictions, but for it to be effective we need to have some a
priori inequalities between the orders of the principal term $A_0$ and the
perturbation $B$; we call this approach the weak gauge transform. On the other
hand, quite often we can improve our estimate on commutators: for example,
in the classical scalar pseudo-differential  calculus,  the order
of the commutator can be estimated by the sum of the orders of the entries minus
one. If we have such an estimate, we can guarantee that the order of $B'$ is
indeed smaller than the order of $B$, assuming nothing other than that the order
of $A_0$ is larger than the order of $B$. This approach is called the strong
gauge transform. In this paper, we will define the weak and strong
gauge transforms rigorously and give a general abstract setting in which they can be applied. We
discuss the advantages and drawbacks of both types of gauge transforms and
finish with a couple of concrete applications.

The first application is to obtain asymptotic expansions for the density of
states of elliptic almost periodic operator systems. Under some technical
conditions described later, we may either obtain complete or limited expansions
as the spectral parameter goes to $\pm\infty$.
The other application is to prove that some elliptic periodic systems have the
Bethe--Sommerfeld property. This will be done under the same
conditions that allow us to obtain a complete asymptotic expansion for the density of
states. In either of these cases, some Dirac operators are examples of those to which we
can apply our results.

\subsection{Description of the results for elliptic systems and the Dirac operator}

While describing the precise class of operators $A$ for which we obtain spectral
asymptotics requires definitions that are made later, we can make these results
explicit for Dirac operators in dimension $2$ and $3$ perturbed by classical
pseudo-differential almost periodic operators right away. The
two-dimensional Dirac operator with mass $M$ acts in $\RL^2(\R^2;\C^2)$ and is given by
\begin{equation}
  \BA_{2,M} := -i(\sigma_1 \del_{x_1} + \sigma_2 \del_{x_2}) + \sigma_3 M,
\end{equation}
where $\sigma_1, \sigma_2,
\sigma_3$ are the Pauli matrices
\begin{align}
\sigma_1=\begin{pmatrix}
0 & 1\\
1 & 0
\end{pmatrix},\qquad 
\sigma_2=\begin{pmatrix}
0 & -i\\
i & 0
\end{pmatrix},\qquad \text{and} \qquad
\sigma_3=\begin{pmatrix}
1 & 0\\
0 & -1
\end{pmatrix}.
\end{align}

The three-dimensional Dirac operator with mass $M$ acts in $\RL^2(\R^3;\C^4)$ and is
given by
\begin{equation}
  \BA_{3,M} := -i \left( \gamma_1 \del_{x_1} + \gamma_2 \del_{x_2} + \gamma_3
  \del_{x_3} \right) + \Gamma M,
\end{equation}
where the matrices $\gamma_j$, $\Gamma$ are the Dirac matrices (see
\cite{Upmeier})\footnote{Many authors would write $\alpha_j$ for $\gamma_j$ and
  $\beta$ for $\Gamma$, see e.g. \cite{thaller}. We keep our convention in line
  with higher-dimensional generalisations and to avoid some notational conflicts
later on.}
\begin{equation}
\begin{aligned}
\gamma_j=\begin{pmatrix}
\mathbf 0  & \sigma_j\\
\sigma_j & \mathbf 0
\end{pmatrix}, \quad \text{and} \quad\ 
\Gamma=\begin{pmatrix}
\Id_2 & \mathbf 0\\
\mathbf 0 & -\Id_2
\end{pmatrix}.
\end{aligned}
\end{equation}
We obtain asymptotic expansions for the density of states of operators of the
type $\BA = \BA_{d,M} + \BB$
under the assumption that $\BB$ is a `generic' 
almost periodic pseudo-differential 
perturbation. The precise meaning of generic is given in Section
\ref{sec:besicovitch}. The density of states for elliptic differential operators $A$ that are not semi-bounded can be
defined by the formula
\begin{equation}
  N(\lambda;A)  := \lim_{L \to \infty} \frac{N(\lambda;A_D^{(L)})}{(2L)^d}.
\end{equation}
Here, $A_D^{(L)}$ is the restriction of $A$ to the cube $[-L,L]^d$ with
Dirichlet boundary condition, and $N(\lambda;A_D^{(L)})$ is the counting function
  for the discrete eigenvalues of $A_D^{(L)}$ in the interval
  $[0,\lambda)$ when $\lambda > 0$ and $(\lambda,0]$ when $\lambda < 0$. Later,
  we will give several equivalent definitions of $N(\lambda)$ which are more
  convenient to work with and allow pseudo-differential perturbations.

\begin{thm}
  \label{thm:dirac2dids}
  Let $\BA = \BA_{2,M} + \BB$, where $\BB$ is a generic symmetric pseudo-differential
  operator with almost periodic coefficients of order $\beta < 1$ acting
  in $\RL^2(\R^2;\C^2)$. Then, there is a complete asymptotic expansion for the
  density of states of $\BA$ in the sense that for every $K > -2$, 
  there is a
  finite set $L \subset (0,2+K)$ and constants $C_{j}^\pm$, $C_{j,\log}^\pm$,
  $j \in L \cup \set 0$ such that
    \begin{equation}
      N(\pm\lambda;\BA) = C_0^\pm \lambda^2 + \sum_{j \in L}\left( C_{j}^\pm
        \lambda^{2 - j} +
      C_{j,\log}^\pm \lambda^{j} \log \lambda \right)+ \bigo{\lambda^{-K}}
    \end{equation}
    as $\lambda \to \infty$.
\end{thm}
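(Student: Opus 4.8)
The plan is to put $\BA$ into the abstract form \eqref{LP1}, apply the strong gauge transform described above, and read the density of states off the resulting operator \eqref{LP6}. To this end, observe first that the full symbol of the free operator $\BA_{2,M}$ is the Hermitian matrix $\sigma_1\xi_1+\sigma_2\xi_2+\sigma_3 M$, with distinct eigenvalues $\pm\sqrt{\abs{\xi}^2+M^2}$; a smooth choice of orthonormal eigenbasis provides a unitary matrix symbol $w(\xi)$ of order $0$ (globally smooth when $M\neq 0$, and smooth away from a compact frequency set when $M=0$). Setting $W:=\Op(w)$, conjugation gives $W\BA W^{-1}=A_0+B$, where
\begin{equation*}
  A_0=\diag\bigl(\sqrt{-\Delta+M^2},\,-\sqrt{-\Delta+M^2}\bigr)
\end{equation*}
is on each block a scalar elliptic classical $\Psdo$ of order $1$, and $B$ is again a symmetric classical almost periodic $\Psdo$, now of order $\beta<1$; since $W$ is a classical operator of order $0$, conjugation by it preserves the class of generic symmetric perturbations. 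Thus we are in the setting of \eqref{LP1} with $\ord A_0=1>\beta=\ord B$, and the one-order gain of commutators in the classical $\Psdo$ calculus puts us under the hypotheses of the strong gauge transform. As a preliminary remark, the two entries of $A_0$ differ by the elliptic first-order operator $2\sqrt{-\Delta+M^2}$, so the homological equation for the off-diagonal block of $B$ has a solution of negative order and a single gauge step removes the off-diagonal part down to order $2\beta-1<\beta$; the substantive content is therefore carried by the two decoupled scalar problems $\pm\sqrt{-\Delta+M^2}+(\text{diagonal block of }\BB)+\dotsc$

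\emph{Gauge transform and counting.} Fix $K>-2$. Iterating the gauge transform a number $n=n(\beta,K)$ of times produces an operator $A^{(n)}$, unitarily equivalent to $\BA$, of the form \eqref{LP6}: one has $A^{(n)}=A_0'+B^{\CR}+B'$ with $\ord B'<-K$, with $A_0'$ a scalar Fourier multiplier on each block carrying a complete symbol expansion $\sqrt{\abs{\xi}^2+M^2}+(\text{lower-order terms})$, and with $B^{\CR}$ the resonant remainder, supported in the phase-space sense explained above in the union of the resonance regions of the frequency module of $\BB$. Since the density of states is a unitary invariant, $N(\pm\lambda;\BA)=N(\pm\lambda;A^{(n)})$, and $B'$ affects the counting function only by $\bigo{\lambda^{-K}}$. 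Split phase space into the non-resonant set and the resonance regions. Over the non-resonant set $A^{(n)}$ coincides with the Fourier multiplier $A_0'$, whose contribution to $N(\pm\lambda;A^{(n)})$ is the normalised volume of the set where $a_0'$ lies on the appropriate side of $\pm\lambda$, intersected with the non-resonant set; expanding the complete symbol $a_0'$ and the (generically transversal) boundaries of the excluded resonance regions in powers of $\lambda$ yields the leading term $C_0^{\pm}\lambda^{2}$ together with finitely many further powers $\lambda^{2-j}$. Over each resonance region the structure of $B^{\CR}$ reduces the count to a lower-dimensional spectral problem whose contribution, after integration over the resonance surface, produces additional powers $\lambda^{2-j}$ and the logarithmic terms $C_{j,\log}^{\pm}\lambda^{j}\log\lambda$ of the statement (the logarithms arising from integrals over shrinking tubular neighbourhoods of the resonance surfaces). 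Genericity of $\BB$ guarantees that only finitely many resonance regions are relevant below any prescribed order, that they are sufficiently thin and pairwise almost disjoint for their contributions to be of the stated size, and that the reduced operators have no eigenvalues of infinite multiplicity, so that all counting functions involved are well defined. Truncating at the order dictated by $K$ leaves the remainder $\bigo{\lambda^{-K}}$, and the constants depend on the sign because a generic $\BB$ destroys the charge-conjugation symmetry of $\BA_{2,M}$.

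\emph{The main obstacle.} The symbol diagonalisation and the bookkeeping of orders in the gauge transform are routine — this is precisely what the abstract scheme is designed to mechanise, so that the present statement becomes a special case of the general theorem for elliptic systems. The genuine difficulty is the resonance analysis: controlling the geometry, size and overlaps of the resonance regions for a generic almost periodic $\BB$, and extracting from them the exact power-plus-logarithm structure of the density of states with only finitely many terms below each given order. Here the genericity hypothesis is indispensable, and the matrix structure must be carried through with care — the two diagonal blocks carry opposite-sign symbols, so the resonant pieces feeding $N(+\lambda)$ and $N(-\lambda)$ have to be treated separately — but once set up the argument follows the template of the scalar almost periodic theory of \cite{ParSht2012,ParSht2016}.
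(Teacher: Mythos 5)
Your skeleton matches the paper's: diagonalise the matrix symbol by an order-zero unitary (this is Proposition \ref{prop:diagdirac}, giving principal symbol $\abs\bxi\,\Gamma=\diag(\abs\bxi,-\abs\bxi)$, with the mass absorbed into the order-$0$ perturbation), then use a gauge transform to kill the coupling between the two blocks, then treat the two scalar problems. But there are genuine gaps in how you execute the second and third steps. First, the heavy lifting in your ``counting'' paragraph — splitting phase space into resonant and non-resonant regions, extracting powers and logarithms from ``integrals over shrinking tubular neighbourhoods of the resonance surfaces'', and asserting that genericity leaves only finitely many relevant resonances below each order — is precisely the content of the scalar almost periodic theorem of \cite{MorParSht2014}, not a routine computation. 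The paper does not redo this; it reduces $\BA$ to an uncoupled operator (Theorems \ref{thm:gtsystem} and \ref{thm:ctou}) and then invokes the scalar result block by block (Proposition \ref{thm:aexpconcreteu}). As written, your sketch of this part is not a proof, and you never state what ``generic'' means (Conditions \ref{condI}--\ref{condIV}), which is what makes the finiteness and transversality claims true. Note also that the uncoupling step needs only the \emph{weak} gauge transform here: since $a_1=1\neq a_2=-1$, the denominators $a_j\abs{\bxi+\btheta}-a_k\abs\bxi$ for $j\neq k$ are $\gg\abs\bxi$ outside a bounded set, so $\BPsi\in\BS_m^{\beta-1}$ and trivial commutator estimates suffice; the one-order commutator gain you invoke is not what drives the argument.

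Second, two reductions that the paper cannot do without are absent from your proposal. (i) Because $\BA$ is unbounded above and below, one cannot work with half-line counting functions; the paper develops two-sided DSM estimates on bounded intervals (Lemmas \ref{lem:variational}, \ref{lem:monotonicity}, \ref{lem:spectralperturb}) and proves the expansion on overlapping dyadic intervals $I_n$ before gluing (Theorem \ref{thm:ainint}). Related to this, your claim that a remainder $B'$ of order $<-K$ ``affects the counting function only by $\bigo{\lambda^{-K}}$'' is not automatic: Lemma \ref{lem:monotonicity} only converts the perturbation into a shift of the interval endpoints by some $\eps$, and turning that into a bound on $N$ requires already knowing the asymptotics of $N$ for the unperturbed (uncoupled) operator — this is why the paper proves the uncoupled case first and chooses $\eps\ll\rho^{-\alpha-d-K}$ in Theorem \ref{thm:ctou}. (ii) For genuinely almost periodic $\BB$ the frequency set is infinite, and the gauge transform and resonance analysis only work after approximating $\BB$ by a quasi-periodic operator with quantitative control on the error (Condition \ref{condIII} and Lemma \ref{lem:qpenough}, which in turn relies on the ``spectrally far'' perturbation Lemma \ref{lem:spectralperturb}). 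Your proposal skips this entirely.
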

 Theorem \ref{thm:aexpconcrete} is a more general version of Theorem
 \ref{thm:dirac2dids}. It is
applicable to elliptic systems of
pseudodifferential operators whose principal symbol has only simple eigenvalues.

We obtain a restricted expansion for the three-dimensional case.
\begin{thm}
  \label{thm:dirac3dids}
  Let $\BA = \BA_{3,M} + \BB$, where $\BB$ is a generic operator of the form
  $$ \BB = B_1 \gamma_1 + B_2 \gamma_2 + B_3 \gamma_3 + B_\Gamma \Gamma + B_{\Id}
  \Id_4,$$
  where each $B_j$, $j \in\set{1,2,3,\Gamma,\Id}$ is a scalar symmetric
  pseudo-differential operator with almost periodic coefficients of order
  $\beta$, $0 \le \beta \le 1/2$. Then, writing $\gamma^* = \max \set{\beta - 1,
  2 \beta - 1}$  there is a
  finite set $L \subset (0,1 - \gamma^*)$ and constants $C_{j}^\pm$, $C_{j,\log}^\pm$,
  $j \in L \cup \set{0}$ such that
    \begin{equation}
      N(\pm\lambda;A) = C_0^\pm \lambda^3 + \sum_{j \in L}\left( C_{j}^\pm
        \lambda^{3 - j} +
      C_{j,\log}^\pm \lambda^{3 - j} \log \lambda \right)+ \bigo{\lambda^{2 +
      \gamma^*}}
    \end{equation}
    as $\lambda \to \infty$.
\end{thm}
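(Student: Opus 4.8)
The plan is to deduce Theorem~\ref{thm:dirac3dids} from the abstract machinery developed earlier in the paper, following the same route that (presumably) yields Theorem~\ref{thm:dirac2dids} and the general statement Theorem~\ref{thm:aexpconcrete}, but accounting for the fact that in dimension three the principal symbol of $\BA_{3,M}$ has \emph{double} eigenvalues rather than simple ones. First I would diagonalise the free operator: the symbol of $\BA_{3,M}$ is $\sum_j \gamma_j \xi_j + \Gamma M$, whose square is $(|\xi|^2 + M^2)\Id_4$, so its eigenvalues are $\pm\sqrt{|\xi|^2 + M^2}$, each with multiplicity two. A smooth (away from $\xi = 0$) unitary symbol $U_0(\xi)$ conjugates $\BA_{3,M}$ into the block-diagonal form $\diag(\sqrt{-\Delta + M^2}\,\Id_2, -\sqrt{-\Delta + M^2}\,\Id_2)$ modulo a compactly supported (in frequency) smoothing error that does not affect the density of states; after this conjugation the perturbation $\BB$ becomes another almost periodic $\Psi$DO of order $\beta$, still symmetric. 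So $\BA$ is unitarily equivalent (up to a negligible term) to an operator $A_0 + B$ of the abstract type \eqref{LP1} with $A_0$ diagonal of order $\alpha = 1$ and $B$ of order $\beta \le 1/2$.

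Next I would invoke the abstract gauge transform theorem to produce, after finitely many steps of the consecutive or parallel scheme, a unitarily equivalent operator $A' = A_0' + B^{\CR} + B'$ as in \eqref{LP6}, where $B'$ has order as low as we like and $B^{\CR}$ is supported on the resonant regions of phase space. The crucial point is the commutator bookkeeping: with $\alpha = 1$ and $\beta \le 1/2$, the basic "strong" estimate (order of commutator $\le$ sum of orders $-1$) shows each iteration lowers the order of the non-resonant remainder, but the presence of the double eigenvalue of $A_0$ means the $2\times 2$ blocks do not fully decouple — we can remove the part of $B$ that is off-diagonal \emph{between} the two sign-eigenspaces but not the part mixing the two copies \emph{within} a sign-eigenspace along resonant layers. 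Tracking this carefully, the order of the surviving resonant term is governed by $\gamma^* = \max\set{\beta - 1, 2\beta - 1}$: the $\beta - 1$ comes from one round of the strong gauge transform applied to the genuinely non-resonant part, and the $2\beta - 1$ reflects the quadratic-in-$B$ contribution $-\tfrac12[[A_0,\Psi],\Psi]$ that one cannot improve further because the within-block mixing persists. This is exactly why the three-dimensional expansion is only \emph{restricted}, with remainder $\bigo{\lambda^{2 + \gamma^*}}$, whereas in two dimensions (simple eigenvalues) one gets a complete expansion.

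Finally I would pass from the normal form $A' = A_0' + B^{\CR} + B'$ to the asymptotics of $N(\lambda;\BA)$. Since $A_0'$ is an explicit scalar function of $\sqrt{-\Delta+M^2}$ tensored with identity blocks, and $B^{\CR}$ lives in the direct sum of the resonant subspaces associated with finitely many reciprocal-lattice-type vectors (here: finitely many frequencies from the Fourier--Bohr spectrum of the coefficients of $\BB$), the counting function decomposes into a main term from $A_0'$ plus boundary-layer corrections from each resonant piece. The main term contributes the leading $C_0^\pm \lambda^3$ (the Weyl volume $\tfrac{1}{6\pi^2}\cdot 2 \cdot \tfrac{4}{3}\pi$ times appropriate factors, the extra $2$ from the double multiplicity) together with lower-order powers $\lambda^{3-j}$ and $\lambda^{3-j}\log\lambda$ terms coming from the subprincipal symbol and the resonant contributions; the powers $j$ that appear form the finite set $L \subset (0, 1-\gamma^*)$ because each gauge-transform step shifts orders by integer or half-integer multiples of $1 - \beta$ and of $1$, and once the order drops below $2 + \gamma^*$ it disappears into the error. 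The error term is $\bigo{\lambda^{-K}}$-type control on the non-resonant remainder $B'$ combined with the size of $B^{\CR}$, which — crucially — is only $\bigo{\lambda^{2+\gamma^*}}$ rather than better, because the resonant (within-block) part cannot be made arbitrarily small.

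The main obstacle I expect is precisely the analysis in the second paragraph: handling the resonant regions in the presence of the double eigenvalue, i.e.\ showing that the within-block mixing along each resonant layer produces a contribution of order exactly $\lambda^{2+\gamma^*}$ and no worse, and confirming that the restricted structure $\BB = B_1\gamma_1 + B_2\gamma_2 + B_3\gamma_3 + B_\Gamma\Gamma + B_{\Id}\Id_4$ with scalar $B_j$ is exactly what is needed for the genericity hypothesis to apply and for the resonant operator to have the structure required to extract clean power-type asymptotics. Everything else — the initial diagonalisation, the invocation of the abstract theorem, and the final Weyl-type counting — should follow the established template with only bookkeeping changes.
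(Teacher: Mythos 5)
Your overall skeleton --- diagonalise the free Dirac operator by an explicit unitary of order $0$, transport $\BB$, feed the result into the abstract system machinery, and count --- is the paper's route (Propositions \ref{prop:diagdirac} and \ref{prop:permittedperturbations} followed by Theorem \ref{thm:aexpconcretecut}). But there is a genuine gap at the step you yourself flag as "the main obstacle": the origin of the $\beta-1$ in $\gamma^*=\max\{\beta-1,2\beta-1\}$. You attribute it to "one round of the strong gauge transform applied to the genuinely non-resonant part". No gauge transform, weak or strong, can produce this improvement for the coupling \emph{within} a degenerate eigenvalue block. If $a_j=a_k$, the denominator in \eqref{eq:defpsiabstract} is $a_j\bigl(|\bxi+\btheta|-|\bxi|\bigr)$, which is bounded (and arbitrarily small on whole slabs $\{\bxi:\bxi\cdot\btheta\approx -|\btheta|^2/2\}$), so the corresponding $\Psi$ would have order $\ge\beta\ge 0$ and the entire block is resonant; this coupling simply cannot be destroyed or improved by conjugation with $\exp(i\BPsi)$. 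That is precisely why Theorems \ref{thm:systemonestep} and \ref{thm:aexpconcretecut} take as a \emph{hypothesis} (see \eqref{eq:hypresonant}) that the within-block entries of the perturbation already lie in $\BS^{\gamma}$ for some $\gamma\le 0$, and why the present theorem restricts $\BB$ to the special form $\sum_j B_j\gamma_j+B_\Gamma\Gamma+B_{\Id}\Id_4$ with \emph{scalar} $B_j$.

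What actually produces the $\beta-1$ is the structure of $\BU\BB\BU^{*}$ itself, before any gauge transform. Writing $\BU=\tfrac{1}{\sqrt2}\bigl(\Id_m+\sum_k U_k\Gamma\gamma_k\bigr)$ modulo smoothing, with $U_k$ scalar of order $0$, the conjugated perturbation splits (Proposition \ref{prop:permittedperturbations}) into: an uncoupled part; a part whose symbol lies in the odd subspace of the Clifford algebra and is therefore, by Lemma \ref{lem:anticom}, block-off-diagonal with respect to $\Gamma$ --- i.e.\ it couples only the eigenvalues $+|\bxi|$ and $-|\bxi|$, where denominators are of size $\asymp|\bxi|$ and the weak gauge transform of Theorem \ref{thm:systemonestep} applies, contributing the $2\beta-\alpha=2\beta-1$; and a remainder built entirely from commutators $[U_k,B_j]$ of \emph{scalar} classical pseudo-differential operators, which gain one order and hence lie in $\BS_m^{\beta-1}$ --- this is the only source of within-block coupling, and it is already of order $\gamma=\beta-1\le 0$ since $\beta\le 1/2$. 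Without this computation the hypotheses of the general theorem are not verified and the argument does not close. (Your first and last paragraphs are otherwise consistent with the paper's reduction to uncoupled operators and the scalar asymptotics; the inclusion of the mass in the diagonalised principal symbol versus treating $\Gamma M$ as a $B_\Gamma$-term of order $0$ is immaterial.)
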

This time, it is Theorem \ref{thm:aexpconcretecut} which is a more general
version of Theorem \ref{thm:dirac3dids}. It is applicable to elliptic
systems of pseudodifferential operators whose principal symbol has multiple
eigenvalues under some more restrictive conditions on the perturbation.

We also obtain that two-dimensional Dirac operators satisfy the
Bethe--Sommerfeld property.
\begin{thm}
  \label{thm:dirac2dbs}
  Let $\BA = \BA_{2,M} + \BB$, where $\BB$ is a symmetric pseudo-differential operator of
  order $\beta < 1$ with
  periodic coefficients. Then, there exists $\lambda_0 > 0$ such that the
  spectrum of $\BA$ contains intervals $(-\infty,-\lambda_0]$ and
  $[\lambda_0,\infty)$. 
\end{thm}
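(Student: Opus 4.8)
The plan is to read off the spectrum near $\pm\infty$ from the normal form produced by the gauge transform, combined with the Floquet--Bloch decomposition, which is available since the coefficients are periodic. First I would apply a preliminary Fourier-multiplier unitary $W$ that diagonalises the principal symbol $\sigma_1\xi_1+\sigma_2\xi_2+\sigma_3M$: this turns $\BA_{2,M}$ into $\diag\bigl((-\Delta+M^2)^{1/2},-(-\Delta+M^2)^{1/2}\bigr)$ exactly and replaces $\BB$ by another periodic pseudo-differential operator of order $\beta$, so that $W\BA W^{-1}$ fits the model \eqref{LP1}. Since $\ord\BB=\beta<1=\ord\BA_{2,M}$ and commutators in the classical calculus gain one order, each step of the gauge transform (in its strong form) lowers the order of the remainder by at least $1-\beta>0$; after $n=n(N)$ steps, in the form \eqref{LP7}, we obtain a unitary $U$ with
\begin{equation*}
  U\,\BA\,U^{-1}=\BA_0'+\BB^{\CR}+\BB',\qquad \ord\BB'\le -N,
\end{equation*}
where $\BA_0'$ differs from the diagonalised operator by a periodic term of order at most $2\beta-1<1$ and the symbol of the resonant part $\BB^{\CR}$ is supported in thin neighbourhoods of finitely many resonant planes $\set{\bxi:\abs{\bxi}=\abs{\bxi-\bn}}$, $\bn\in\Ga^\dagger$. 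By Floquet--Bloch, $\spec\BA=\overline{\bigcup_{\bk\in\T^\dagger}\spec\bigl((U\BA U^{-1})(\bk)\bigr)}$, where $\T^\dagger=\R^2/\Ga^\dagger$; so it is enough to produce, for every sufficiently large $\mu$, a momentum $\bk$ at which the fibre operator has an eigenvalue equal to $\mu$, and symmetrically for $\mu\le-\lambda_0$.

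I would then fix a single direction $\btheta\in\BBS^1$ that is totally non-resonant: for a full-measure set of $\btheta$ the ray $\set{t\btheta:t\ge t_0}$ avoids all the resonant neighbourhoods and is not parallel to any vector of $\Ga^\dagger$. In dimension two this genericity is easy to arrange, which is why the argument is limited to $d=2$. Writing $t\btheta=\bk(t)+\bn(t)$ with $\bk(t)\in\T^\dagger$, $\bn(t)\in\Ga^\dagger$, the symbol $\BB^{\CR}$ vanishes along the ray, so the symbol of $(U\BA U^{-1})(\bk(t))$ in the Fourier mode $\bn(t)$ carries, on the positive branch, the eigenvalue $\nu(t)=(t^2+M^2)^{1/2}+\bigo{t^{2\beta-1}}$; and because $\btheta$ is non-resonant, the eigenvalues arising from the other Fourier modes and from the negative branch are separated from $\nu(t)$ by a distance bounded below by a fixed $c>0$ (by $\sim 2t$ for the branch separation). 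A Schur-complement estimate, restricting the fibre operator to the spectral window of width $c/3$ around $\nu(t)$, then produces a genuine eigenvalue $\Lambda(t)$ of $(U\BA U^{-1})(\bk(t))$ with $\abs{\Lambda(t)-\nu(t)}=\bigo{t^{-N}}$, which is moreover the unique eigenvalue within $c/3$ of $\nu(t)$. As $t\mapsto\bk(t)$ is continuous into the torus $\T^\dagger$ and the band functions are continuous on $\T^\dagger$, this uniqueness forces $t\mapsto\Lambda(t)$ to be continuous on $[t_0,\infty)$, while $\Lambda(t)=(t^2+M^2)^{1/2}+\bigo{t^{2\beta-1}}+\bigo{t^{-N}}\to\infty$.

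Now put $\lambda_0:=\Lambda(t_0)+1$. If some interval $(\alpha,\gamma)$ with $\alpha\ge\lambda_0$ were a gap of $\spec\BA$, then the continuous function $\Lambda$ --- all of whose values lie in $\spec\BA$ --- would satisfy $\Lambda(t)\notin(\alpha,\gamma)$ for every $t\ge t_0$, yet $\Lambda(t_0)=\lambda_0-1<\alpha$ and $\Lambda(t)>\gamma$ for $t$ large, contradicting the intermediate value theorem. Hence $[\lambda_0,\infty)\subset\spec\BA$. Running the same construction on the negative branch of $\BA_0'$ --- decoupled from the positive branch up to $\bigo{t^{-N}}$ because the two branches differ by $\sim 2t$ --- gives $(-\infty,-\lambda_0]\subset\spec\BA$ after enlarging $\lambda_0$ if necessary, which covers the non-semibounded case and completes the proof.

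The step I expect to be the main obstacle is the rigorous justification of the second paragraph, on two fronts. First, one must show that the resonant remainder $\BB^{\CR}$ of the normal form really is confined to thin neighbourhoods of \emph{finitely many} resonant planes, uniformly enough that a single generic direction stays non-resonant for \emph{all} energies past a threshold; this is precisely where the two-dimensionality and the condition $\beta<1$ enter, and it relies on the combinatorial bookkeeping of resonances built into the gauge transform. Second, the near-decoupling (Schur-complement) lemma at a non-resonant momentum --- that, once restricted to a spectral window around a simple non-resonant symbol eigenvalue, the fibre operator is an $\bigo{t^{-N}}$ perturbation of the corresponding eigenvalue of the $2\times2$ principal symbol of $\BA_0'$ --- is the standard but delicate reduction behind all gauge-transform proofs, here made more subtle by the need to control simultaneously the off-diagonal coupling between Fourier modes and the $2\times2$ coupling between the two symbol branches.
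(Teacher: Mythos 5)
Your reduction steps (diagonalising the principal symbol, applying the gauge transform to reach a normal form with a small remainder and a resonant part confined to thin angular sectors) match the paper's set-up, but the heart of your argument --- the second paragraph --- contains a fatal gap. You claim that non-resonance of the direction $\btheta$ forces the eigenvalues of the fibre operator coming from ``the other Fourier modes'' to stay a fixed distance $c>0$ away from $\nu(t)$, so that $\Lambda(t)$ is the \emph{unique} eigenvalue in a window of width $c/3$ and can therefore be tracked continuously. This is false: the unperturbed fibre operator at quasimomentum $\bk(t)$ has eigenvalues $\pm(|\bk(t)+\bm|^2+M^2)^{1/2}$ for \emph{every} $\bm$ in the dual lattice, and the number of lattice points $\bm$ with $|\bk(t)+\bm|$ within $c$ of $t$ grows like $c\,t^{d-1}$ (the area of the annulus). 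Non-resonance of $t\btheta$ only controls $\bigl||t\btheta+\bn|-|t\btheta|\bigr|$ for $\bn$ in the (bounded, truncated) frequency set of the perturbation --- i.e.\ it controls which modes the perturbation \emph{couples} to $e_{t\btheta}$ --- not the spacing of the full unperturbed spectrum near $\nu(t)$. There is no spectral gap, no Schur complement, and no uniqueness, so the ``uniqueness forces continuity'' step and the intermediate value theorem argument built on it collapse. What non-resonance actually yields is only that $e_{t\btheta}$ is an approximate eigenvector, i.e.\ $\operatorname{dist}(\nu(t),\spec \BA(\bk(t)))=\bigo{t^{-N}}$; running your IVT argument with this weaker input shows merely that any spectral gap above $\lambda_0$ has length $\bigo{\lambda^{-N}}$, which does not exclude gaps.

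This is precisely the difficulty the paper's proof is designed to circumvent. Instead of tracking a single branch, it works with counting functions: Skriganov's characterisation \eqref{eq:overlap} of the overlap function, the three-family decomposition of eigenvalue branches along a short radial segment $[(1-t)\bxi_0,(1+t)\bxi_0]$ in Proposition \ref{prop:reformulation} (at least $Q$ increasing ``good'' branches cross the window, fewer than $Q$ ``bad'' branches come near it, the rest stay away), and volume estimates on the resonant sets and their lattice translates (Lemmas \ref{lem:enlargedresonant}--\ref{lem:goodcrossings}) to produce a point $\bxi_0$ where this count works out. The passage from the uncoupled model back to $\BA$ is then handled not by perturbing individual eigenvalues but by the stability of the overlap function under perturbations that are small in the density-of-states sense (Lemma \ref{lem:overlapidscont} and Proposition \ref{prop:overlapcontev}), with the exponent $S$ in $\zeta(\lambda)\gg\lambda^{-S}$ beating the $\bigo{\lambda^{-N}}$ error of the gauge transform. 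A correct proof along your lines would have to reintroduce essentially all of this counting machinery.
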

This theorem also has a more general version in Theorem \ref{thm:bs}. It is applicable to systems
whose principal symbol has only simple eigenvalues.

In Section \ref{sec:dirac}, we also describe generalisations of these results to
higher dimensional Dirac operators, and give some technical conditions under
which we can get complete asymptotic expansions or the Bethe--Sommerfeld
property for the three-dimensional Dirac operator.

\subsection{Description of the main results and plan of the paper}

In the first half of our paper, we discuss the gauge
transform in an abstract setting. The setting is developed  while keeping in
mind
particular applications to almost periodic operators. As such, the space on
which the operators act looks like an
abstract version of a Besicovitch space. In the second half, we will discuss
the specific applications of the results obtained in the first half to
elliptic systems of pseudo-differential almost periodic operators; in
particular,
in the last section
we will show that Dirac operators are a specific example of them. 
An interesting part of the application of
our methods to systems is that we need to intertwine and alternate the use of
the weak and strong gauge transforms, whereas in the past only one type was used
at a time. In order to help the reader familiar with previous literature on 
 the method of gauge transform, we have kept the notation as close as possible to the one used in
\cite{MorParSht2014,ParSob2010}. 

\subsubsection*{Plan and results of Part I} 
In Section \ref{sec:operators} we define an
algebra of operators $\BS^\infty$ acting on a non separable Hilbert space which
should be thought of as an abstract version of a Besicovitch space. For some set
$\Xi$ this algebra will be concretely realised on
$\ell^2(\Xi)$ through a group action on its basis elements. 
This algebra is filtered as an algebra of pseudo-differential operators on
$\ell^2(\Xi)$, and it has similar properties to those of classical pseudo-differential
operators in the PDE sense. Their natural domains are
 generalisations of Sobolev spaces. This section contains many technical but
very useful lemmas describing boundedness properties, adjoints,
compositions and commutators of operators in $\BS^\infty$. One of the main
differences with classical pseudo-differential operators is illustrated in
Proposition \ref{prop:normorder}, which plays the role of the
Calderon--Vaillancourt theorem in our setting. It essentially says that we can
directly correlate symbol norms of operators with the norms of
individual summands in Paley-Wiener type decompositions.

In Section \ref{sec:perturbation}, we turn our attention to some natural 
subspaces of $\BS^\infty$ -- operators that are either elliptic or diagonal. Just
as in the classical setting, our definition of elliptic operators
allows us to characterise natural domains of self-adjointness for operators in
the algebra $\BS^\infty$. The three main results of
this section illustrate the three most important properties of elliptic
operators. In Proposition \ref{prop:globell}, they are shown to admit a
parametrix, and are therefore invertible up to a controllable error. Lemma
\ref{lem:relative boundedness} is used repeatedly throughout the paper and shows
that lower order perturbations of elliptic operators are relatively bounded,
with explicit bounds. Finally, in Proposition \ref{prop:closedness and
self-adjointness}, we show that elliptic operators are closed and self-adjoint
if symmetric. 

In Section \ref{sec:ids}, we consider the situation where operators in
$\BS^\infty$ are affiliated to a $\mathrm{I}_\infty$ or $\mathrm{II}_\infty$
factor. This is common in the study of almost-periodic operators and
their generalisation. We define a general notion of density of states
measures (DSM) in $\BS^\infty$ as traces in the affiliated $\mathrm{I}_\infty$
or $\mathrm{II}_\infty$ factor. We give a variational
description of the DSM of an interval $J$ even in situations where the operator is not bounded
below. This is used to show the principal results of this section: small
perturbations of elliptic self-adjoint operators do not change their density of
states much. The definition of 'smallness' of the perturbation is made clear in
that section. In Lemma \ref{lem:monotonicity}, we control to what extent perturbations of
smaller order can affect the DSM, whereas in Lemma \ref{lem:spectralperturb} it
is perturbations that are spectrally supported away from the interval $J$ that are shown to
have a small effect.

In Section \ref{sec:gt}, we describe the abstract gauge transform scheme,
which is split into two cases: the weak and strong gauge transforms. In both
cases, we describe the resonant regions geometrically as subsets of the index set
$\Xi$. The consecutive scheme for the weak gauge transform is described in
Lemmas \ref{lem:psi} and \ref{lem:onestepweakgauge} and Corollary
\ref{cor:onestepweakgauge}, whereas the parallel scheme is described in Proposition
\ref{prop:symbolestweakgauge}. In both cases, only trivial estimates on the
commutator are used. In Lemma \ref{lem:sgt}, we describe conditions under which
a stronger scheme can be used. Since conditions for the strong transform to be
applicable are varying in nature, we do not attempt at completely classifying
them.

Finally, in Section \ref{sec:mapo}, we describe the case where the
symbols are functions into $\mathrm{Mat}^{m\times m}(\C)$ rather than $\C$.
We describe how this can be reduced to the abstract scalar case and introduce a new
class of operator systems: uncoupled operators. Our goal is to show that under
some specific conditions, elliptic systems are unitarily equivalent to uncoupled
operators up to a remainder which we can control. In that
light, the main results of this section are Theorems
\ref{thm:systemonestep} and \ref{thm:gtsystem} which
give explicit conditions under which one can use the weak gauge transform to
conjugate elliptic symmetric operators into almost uncoupled ones. The remainders are small (in the sense of Section \ref{sec:ids}) perturbations.

\subsubsection*{Plan and results of Part II}

In the second part, we apply the results of Part I to concrete systems of
elliptic pseudo-differential operators with periodic and almost periodic
perturbations. More specifically, we study operator systems of the form $\BA =
\BA_0
+ \BB$,
defined on a dense domain in $\RL^2(\R^d;\C^m)$ where $\BA_0$ is defined as in
\eqref{LP2},
and $\BB$ is a pseudo-differential perturbation of order $\beta < \alpha$. In
Section \ref{sec:besicovitch}, we give a description of these operators in term
of Besicovitch space, and we make the relevant definitions concerning periodic
operators. In Sections \ref{sec:prelim} and \ref{sec:asyexp}, we obtain
asymptotic expansions for the IDS. In Sections \ref{sec:bs} and \ref{sec:cg} we
prove that some elliptic systems of operators have the Bethe--Sommerfeld
property
using some combinatorial geometric arguments. Finally in Section \ref{sec:dirac}
we expose how Dirac operators may fit in our setting.

Since the precise description of the results requires some notations and
language defined in Part I, we postpone their description to the beginning of
Part II.

\subsection*{Acknowledgements}
The research of JL and LP was supported by EPSRC grant EP/P024793/1. The
research of JL was also
partially supported by NSERC's postdoctoral fellowship. The research of SM was
supported by RSF grant 18-11-00032.  The research of RS was supported by NSF grant
DMS-1814664.

\section*{\textsc{Part I : An abstract gauge transform scheme}}

\section{Generalised almost-periodic operators} \label{sec:operators}

In this section, we define an algebra of generalised almost-periodic operators.
We start by defining the space on which those
operators are defined. We also define generalised Sobolev spaces which are their natural domains.
We then describe the algebraic properties of the generalised almost-periodic
operators, and obtain version of the Calderon--Vaillancourt theorem in our
context in Proposition \ref{prop:normorder}.

\subsection{Generalised Sobolev spaces}

Let $\Xi$ be an infinite, possibly uncountable set equipped with a weight function 
$\ang \cdot : \Xi \to [1,\infty)$. We will often call $\Xi$ {\emph {the index set}}. 
For $\gamma \in \R$ we define the spaces
\begin{equation}
 \begin{split} \RH^{\gamma}(\Xi) := \bigg\{\begin{matrix}x: \Xi \to \C\\ \displaystyle  x:\xi\mapsto x_\xi\end{matrix}:\quad \sum_{\xi \in \Xi}\ang{\xi}^{2\gamma}\abs{x_\xi}^2 < \infty\bigg\}
 \end{split}
\end{equation}
and 
\begin{equation}
{\RG}^\infty(\Xi):=\bigcap\limits_{\gamma\in\mathbb{R}}{\RH}^\gamma(\Xi).
\end{equation}
In particular, every $x\in \RH^{\gamma}(\Xi)$ vanishes at all but
countably many $\xi\in\Xi$. Every $\RH^{\gamma}(\Xi)$ is a Hilbert space
with inner product
\begin{equation}
  \label{eq:inner product}
  \left( x,y \right)_{\RH^\gamma(\Xi)} := \sum_{\xi \in
  \Xi}\ang{\xi}^{2\gamma}x_\xi\,\overline{y_\xi}.
\end{equation}
It is easy to see that $\RH^0(\Xi) = \ell^2(\Xi)$ with the standard
orthonormal basis indexed bijectively from $\Xi$ as
\begin{equation}
 \CE := \set{\be_{\xi}: \xi \in \Xi}, \qquad \be_{\xi} :\eta\in\Xi\mapsto \begin{cases}
                                                                     1 &
                                                                     \text{if }\eta
                                                                     =\xi,\\ 0 &
                                                                     \text{if }\eta\neq\xi
                                                                    \end{cases},
\end{equation}
and that $\RH^{\gamma_1}(\Xi) \subset \RH^{\gamma_2}(\Xi)$ for all $\gamma_1 >\gamma_2\in\R$. When there is no risk of confusion, we will write
$\RH^\gamma:=\RH^\gamma(\Xi)$. 

\subsection{An algebra of operators}

Let $G$ be a group that acts from the left on $\Xi$, so that the action is free, i.e.~only the identity of $G$ has fixed points. We denote by $g\act \xi$ the action of $g\in G$ on $\xi\in\Xi$.
Starting from the weight function $\langle\cdot\rangle$ on $\Xi$ we define one on $G$ by 
\begin{equation}
  \ang g := 1 + \sup_{\xi \in \Xi} \left|\ang{g\act\xi} - \ang{\xi}\right|.
  \label{eq:modulusg}
\end{equation}
We assume that $G$ has a \emph{bounded range of action}, which means that $\ang g$ is finite for all $g\in G$.

It will be useful for future convenience to observe the following properties of the
weight function:
\begin{lem} For all $f,g \in G$, $\xi \in \Xi$ and $t \in\mathbb R$ the following relations hold:
\begin{enumerate}
 \item 
\begin{equation}
 \ang g = \ang{g^{-1}};\label{eq:anginv}
\end{equation}
 \item Peetre-type inequalities:
\begin{equation}
  \ang g^{-1} \ang \xi \le \ang{g \act \xi} \le \ang g \ang \xi
\label{eq:peetre}
\end{equation}
and
\begin{equation}
  \ang{fg}^t \le \min\left\{\ang f^t \ang g^{\abs t}, \ \ang f^{\abs t} \ang g^{t}\right\}.
  \label{eq:peetre2}
\end{equation}
\end{enumerate}
\end{lem}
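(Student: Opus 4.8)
The plan is to deduce the whole lemma from two elementary facts: that every $g \in G$ acts on $\Xi$ as a bijection whose inverse is the action of $g^{-1}$, so that suprema over $\Xi$ may be freely reindexed, and that the weight takes values in $[1,\infty)$, which lets one upgrade additive estimates into multiplicative ones. For \eqref{eq:anginv}, in the definition \eqref{eq:modulusg} of $\ang{g^{-1}}$ I would substitute $\eta = g^{-1}\act\xi$; since $\xi \mapsto g^{-1}\act\xi$ is a bijection of $\Xi$, we get $\sup_{\xi}\abs{\ang{g^{-1}\act\xi}-\ang\xi} = \sup_{\eta}\abs{\ang\eta - \ang{g\act\eta}} = \sup_{\eta}\abs{\ang{g\act\eta}-\ang\eta}$, which is precisely $\ang g - 1$, hence $\ang{g^{-1}} = \ang g$.

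Next I would prove \eqref{eq:peetre}. For the upper bound, $\ang{g\act\xi} \le \ang\xi + \abs{\ang{g\act\xi}-\ang\xi} \le \ang\xi + (\ang g - 1)$, and since $\ang g - 1 \ge 0$ and $\ang\xi \ge 1$ we have $\ang\xi + (\ang g-1) \le \ang\xi + (\ang g - 1)\ang\xi = \ang g \ang\xi$. For the lower bound I would avoid a direct estimate and instead apply the upper bound just proved, but with $g$ replaced by $g^{-1}$ and $\xi$ replaced by $g\act\xi$: this gives $\ang\xi = \ang{g^{-1}\act(g\act\xi)} \le \ang{g^{-1}}\,\ang{g\act\xi} = \ang g\,\ang{g\act\xi}$, the last equality being \eqref{eq:anginv}; rearranging yields $\ang{g\act\xi} \ge \ang g^{-1}\ang\xi$.

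To handle \eqref{eq:peetre2} I would first record submultiplicativity, $\ang{fg} \le \ang f \ang g$. Writing $(fg)\act\xi = f\act(g\act\xi)$ and using the triangle inequality, $\abs{\ang{(fg)\act\xi} - \ang\xi} \le \abs{\ang{f\act(g\act\xi)} - \ang{g\act\xi}} + \abs{\ang{g\act\xi}-\ang\xi} \le (\ang f - 1) + (\ang g - 1)$, so $\ang{fg} \le \ang f + \ang g - 1 \le \ang f \ang g$, the last inequality being equivalent to $(\ang f - 1)(\ang g - 1) \ge 0$. Now \eqref{eq:peetre2} splits according to the sign of $t$: for $t \ge 0$ both claimed bounds are just the $t$-th power of $\ang{fg} \le \ang f \ang g$. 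For $t < 0$, write $t = -\abs t$; then $\ang{fg}^t \le \ang f^t \ang g^{\abs t}$ is equivalent to $\ang f \le \ang{fg}\,\ang g$, which follows from $f = (fg)g^{-1}$ together with submultiplicativity and \eqref{eq:anginv}, and symmetrically $\ang{fg}^t \le \ang f^{\abs t}\ang g^t$ reduces to $\ang g \le \ang f\,\ang{fg}$ via $g = f^{-1}(fg)$.

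There is no genuine obstacle here; this is a warm-up lemma. The only points that need care are the systematic appeal to the bijectivity of the action when reindexing suprema and when passing from $g$ to $g^{-1}$, and the repeated use of $\ang\cdot \ge 1$ to convert additive bounds into multiplicative ones. Finiteness of all quantities involved is exactly the bounded-range-of-action hypothesis, which guarantees $\ang g < \infty$ for every $g \in G$ and hence also $\ang{fg} < \infty$.
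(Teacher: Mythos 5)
Your proof is correct and follows essentially the same route as the paper: derive the additive bounds $\abs{\ang{g\act\xi}-\ang\xi}\le\ang g-1$ from the definition, upgrade them to multiplicative ones using $\ang\cdot\ge 1$, and reduce the negative-$t$ case of \eqref{eq:peetre2} to $\ang f\le\ang{fg}\ang g$ and $\ang g\le\ang f\ang{fg}$. The only (harmless) variation is that you obtain the lower Peetre bound by applying the upper bound to $g^{-1}$ and $g\act\xi$, whereas the paper deduces it directly from the additive lower bound $\ang{g\act\xi}\ge\max\{1,\,1+\ang\xi-\ang g\}$ and the elementary inequality $a+1-b\ge a/b$ for $a\ge b\ge 1$.
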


\begin{proof}
For all $g \in G$, $\xi \in \Xi$ the definition \eqref{eq:modulusg} implies \eqref{eq:anginv} and the estimates
\begin{align}\label{bounds on g act xi}
 \max\big\{1, 1 +\ang{\xi} -\ang{g}\big\} \leq \ang{g\act \xi} \leq \ang\xi +\ang g -1.
\end{align}
Note the relations
\begin{eqnarray}
 a +1 -b =\big((b -1)(a -b) +a\big)/b \geq a/b,\quad&\text{for all }& a\geq b\geq 1.\label{upper1}\\
 a +b -1 \leq a +b -1 +(a -1)(b -1) =ab,\quad&\text{for all }& a, b\geq 1,\label{lower1}
\end{eqnarray}
The first estimate in \eqref{eq:peetre} follows from \eqref{bounds on g act xi} and \eqref{upper1}, the second from \eqref{bounds on g act xi} and \eqref{lower1}.
Now by \eqref{eq:modulusg} and \eqref{upper1} for all $f,g \in G$ we obtain
\begin{equation}\label{eq:angprod}
\begin{aligned}
 \ang{fg}&\leq 1 +\sup_{\xi \in \Xi} \left|\ang{fg\act\xi} - \ang{g\act \xi}\right| +\sup_{\xi \in \Xi}\left|\ang{g\act\xi} - \ang{\xi}\right| \\
 &=\ang f +\ang g -1 \leq \ang f\ang g,
 \end{aligned}
\end{equation}
which implies \eqref{eq:peetre2} for $t >0$. Now \eqref{eq:angprod} and \eqref{eq:anginv} imply
\begin{equation}
 \ang g =\ang{f^{-1}fg} \leq\ang f\ang{fg}\text{ and }\ang f =\ang{fgg^{-1}} \leq \ang{fg}\ang{g},
\end{equation}
which delivers \eqref{eq:peetre2} for $t < 0$. The case $t = 0$ is trivial.
\end{proof}

\begin{defi}
We call a function $b : G \times \Xi \to \C$, $(g,\xi) \mapsto b_g(\xi)$ an almost periodic
\emph{symbol} if there exists a countable set $\Theta \subset G$, closed under inversion and containing the
identity $\id_G$, such that for all $g \in G \setminus
\Theta$, $b_g(\xi) \equiv 0$. Whenever there is no risk of confusion, we will write $\id := \id_G$. 
We call $\Theta$ a \emph{frequency set} 
for $b$ and the functions $\left\{b_\theta(\,\cdot\,)\right\}_{\theta\in\Theta}$
the \emph{Fourier coefficients} of $b$.
For every symbol $b$ and every $\gamma\in\mathbb{R}$, $l\geq 0$,
we define the family of norms
\begin{equation}\label{eq:defsnorm}
  \snorm{b}{\gamma}{l} := \sum_{\theta \in \Theta} \ang \theta^l
    \sup_{\xi \in \Xi} \left( \ang{\xi}^{-\gamma}
    \left|b_\theta(\xi)\right|\right).
\end{equation}
The class of \emph{symbols of order $\gamma$} is defined as
\begin{equation} \label{eq:sgamma}
  \BS^{\gamma}:=\BS^{\gamma}(G,\Xi):=\set{ b:G\times\Xi\to\C :
  \snorm{b}{\gamma}{l}<\infty \text{ for all } l\geq 0}.
\end{equation}
\end{defi}

The space of symbols is naturally a linear space. It is clear that if $\Theta$
is a frequency set for a symbol, then any $\Gamma \supset \Theta$ is also one.
It is obvious from the definition that  $\snorm{\cdot}{\gamma}{l}$
is a decreasing function of $\gamma$ and an increasing function of $l$, thus
\begin{equation}
\BS^{\gamma_1}\subseteq\BS^{\gamma_2},\quad \text{ for all }\gamma_1\leq \gamma_2.
\end{equation}
We introduce
\begin{equation}
  \BS^\infty:= \bigcup_{\gamma \in \R} \BS^\gamma \quad \text{ and }
    \quad \BS^{-\infty}:= \bigcap_{\gamma \in \R} \BS^\gamma.
  \label{eq:algebradef}
\end{equation}
\begin{lem}\label{lem:complete}
For every $\gamma\in\mathbb{R}$, the space $\BS^\gamma$ equipped with the family of norms 
 $\big\lbrace\snorm{\cdot}{\gamma}{l}\big\rbrace_{l\geq 0}$ is a Fréchet space.
\end{lem}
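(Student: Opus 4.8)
To show that $\BS^\gamma$ is a Fréchet space I need three things: that it is a locally convex topological vector space whose topology is generated by the countable family of seminorms $\{\snorm{\cdot}{\gamma}{l}\}_{l \geq 0}$, that this topology is Hausdorff, and that it is (sequentially) complete. The first point is automatic once we know each $\snorm{\cdot}{\gamma}{l}$ is a seminorm: positive homogeneity and the triangle inequality follow immediately from the definition \eqref{eq:defsnorm}, since each is built from $\sup_\xi$ and $\sum_\theta$ of absolute values. Hausdorffness is equally direct: if $\snorm{b}{\gamma}{l} = 0$ for all $l$ --- in fact, already for $l = 0$ --- then $\sup_\xi \ang{\xi}^{-\gamma}|b_\theta(\xi)| = 0$ for every $\theta$, so $b_\theta \equiv 0$ for all $\theta$, hence $b \equiv 0$. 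So the only real content is completeness, and since the topology is metrizable (countably many seminorms) it suffices to prove that every Cauchy sequence converges.

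\medskip

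\textbf{Completeness.} Let $(b^{(n)})_{n}$ be a Cauchy sequence in $\BS^\gamma$. For each fixed $l$, $(b^{(n)})_n$ is Cauchy in $\snorm{\cdot}{\gamma}{l}$. Taking $l = 0$ first, for each $\theta \in G$ the scalars $\ang{\xi}^{-\gamma} b^{(n)}_\theta(\xi)$ form a Cauchy sequence in $n$, uniformly in $\xi$; so there is a pointwise limit $b_\theta(\xi)$, and $\ang{\xi}^{-\gamma}(b^{(n)}_\theta(\xi) - b_\theta(\xi)) \to 0$ uniformly in $\xi$. This defines a candidate symbol $b : G \times \Xi \to \C$, $(g,\xi) \mapsto b_g(\xi)$. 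The first thing to check is that $b$ is genuinely an almost periodic symbol in the sense of the definition, i.e.\ that it has a \emph{countable} frequency set. Each $b^{(n)}$ has a frequency set $\Theta_n$; a priori $\bigcup_n \Theta_n$ need not be countable --- wait, it is, being a countable union of countable sets. Moreover if $g \notin \bigcup_n \Theta_n$ then $b^{(n)}_g \equiv 0$ for all $n$, so $b_g \equiv 0$; hence $\Theta := \bigcup_n \Theta_n$ (after closing under inversion and adjoining $\id$, which keeps it countable) is a frequency set for $b$.

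\medskip

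\textbf{The estimates.} It remains to show $b \in \BS^\gamma$, i.e.\ $\snorm{b}{\gamma}{l} < \infty$ for every $l$, and that $\snorm{b^{(n)} - b}{\gamma}{l} \to 0$. This is the standard ``Cauchy sequences of Cauchy sequences'' argument and is the one place requiring a little care. Fix $l \geq 0$ and $\eps > 0$; choose $N$ so that $\snorm{b^{(n)} - b^{(m)}}{\gamma}{l} < \eps$ for $n, m \geq N$. Now fix $n \geq N$ and any finite subset $F \subset \Theta$. For each $\theta \in F$ and each $\xi$, let $m \to \infty$ in the pointwise estimate: since there are only finitely many $\theta \in F$ we may pass to the limit termwise, obtaining
\begin{equation}
  \sum_{\theta \in F} \ang\theta^l \sup_\xi \ang{\xi}^{-\gamma}\bigl|b^{(n)}_\theta(\xi) - b_\theta(\xi)\bigr| \leq \eps.
\end{equation}
Taking the supremum over finite $F \subset \Theta$ gives $\snorm{b^{(n)} - b}{\gamma}{l} \leq \eps$ for all $n \geq N$; in particular $b^{(n)} - b \in \BS^\gamma$ for such $n$, and since $b = b^{(n)} - (b^{(n)} - b) \in \BS^\gamma$, we conclude $b \in \BS^\gamma$ and $\snorm{b^{(n)} - b}{\gamma}{l} \to 0$ for every $l$. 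Hence $b^{(n)} \to b$ in $\BS^\gamma$, proving completeness.

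\medskip

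\textbf{Main obstacle.} There is no deep obstacle; the only subtlety --- and the point most easily skated over --- is the bookkeeping with frequency sets: making sure the limiting symbol still has a countable frequency set, and handling the fact that $\snorm{\cdot}{\gamma}{l}$ is an infinite sum over $\theta$, so that passing to the limit $m \to \infty$ has to be done on finite truncations first and then supremized, rather than interchanging limit and infinite sum directly. Everything else is the routine verification that a countable family of seminorms with trivial common kernel produces a Fréchet topology.
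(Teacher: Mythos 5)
Your proof is correct and follows essentially the same route as the paper: take pointwise limits of the Fourier coefficients, note that $\Theta=\bigcup_n\Theta(n)$ is a countable frequency set for the limit, and then verify $b\in\BS^\gamma$ and convergence in every norm (a step the paper dismisses as "a simple computation" and which you carry out via the standard finite-truncation argument). Nothing further is needed.
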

\begin{proof}
  Consider a sequence 
  \begin{equation}
(b_n)_{n\geq 1}\subset \BS^\gamma
\end{equation}
  that is Cauchy with respect to $\snorm{\cdot}{\gamma}{l}$ for every $l \geq
  0$, and denote by $\Theta(n)$ a frequency set for each $b_n$.
  Then, for all $\theta \in G$, we observe that 
$b_\theta(\xi):=\lim\limits_{n\to\infty}(b_n)_\theta(\xi)$ exists and vanishes
outside the countable set $\Theta = \bigcup_n \Theta(n)$. It is a simple computation to see that 
$b\in\BS^\gamma$ with $\snorm{b_n-b}{\gamma}{l}\to 0$, as 
$n\to\infty$, for all $l\geq 0$. Hence, the claim follows.
\end{proof}

\begin{defi}
Let $b:G\times\Xi\to\C$ be a symbol with frequency set $\Theta\subseteq G$ and 
\begin{equation}
\big(b_{\theta}(\xi)\big)_{\theta\in\Theta}\in \ell^2(\Theta), \text{ for all } \xi\in\Xi.
\end{equation}
Then the \emph{almost periodic linear operator associated to $b$} is 
\begin{equation}
B:= \Op(b):\spann(\CE)\to \ell^2(\Xi)
\end{equation}
defined by 
\begin{equation}\label{eq:defopB}
  B \be_\xi := \sum_{\theta \in \Theta} b_\theta(\xi) \be_{\theta\act\xi}, \quad\text{for all }
  \xi\in\Xi.
\end{equation}
\end{defi}
\begin{rem}
If $b\in\BS^\infty$, then, in view of \eqref{eq:defsnorm} and \eqref{eq:sgamma},
$\big(b_{\theta}(\xi)\big)_{\theta\in\Theta}\in \ell^1(\Theta)\subseteq
\ell^2(\Theta)$ holds for all $\xi\in\Xi$. This means that we can associate an
almost periodic operator to every symbol in $\BS^\infty$. 
On the other hand, since the group action of $G$ on $\Xi$ is free, $b$ can be recovered from $B$ via the identity
\begin{equation}\label{eq:symbrecovery}
  b_g(\xi)=(\be_{g\act\xi},B\be_\xi)_{\ell^2(\Xi)}, \quad\text{for all $g\in G$, $\xi\in\Xi$.}
\end{equation}
Thus, there is a one-to-one correspondence between almost periodic symbols and almost
periodic operators. This correspondance is in contrast to the case of 
classical pseudo-differential operators
where this correspondence is only modulo smoothing operators. Hence, we allow
ourselves to overload the notation and write $B = \Op(b)  \in \BS^\gamma$ if $b
\in \BS^\gamma$, $\gamma \in \R \cup \set{\pm \infty}$, and let $\snorm B \gamma
l := \snorm b \gamma l$ for all $l \ge 0$, $\gamma \in \R$. Note that this
correspondence gets lost if one does not require the group action of $G$ on $\Xi$ to be free.
Our construction can be generalised to such non-free group actions, but for
simplicity of the exposition we do not do it in this paper.
\end{rem} 
We call $B$ \emph{quasi-periodic} if $b$ admits a finite frequency set.
A simple example of a quasi-periodic operator of class $\BS^\gamma$,
$\gamma\in\R$, is $\Op(h)$ with
$$h_g(\xi):= \begin{cases}\tilde h(\xi) & \text{if } g = \id, \\ 0
&\text{otherwise.}\end{cases}$$ Here, $\tilde h$ is a function on $\Xi$ satisfying $\big|\tilde h(\xi)\big| \leq\ang\xi^\gamma$ for all $\xi\in\Xi$.

\begin{rem}
Our terminology is justified by the following example. Suppose that $G$ is a
locally compact abelian (LCA) group and $G_B$ is its Bohr compactification, see
\cite[\S 1]{Shubin1978}. Index by $\Xi$ the set of characters
$\widetilde{\CE}:=\lbrace \tilde{\be}_\xi:\xi\in\Xi\rbrace$ of $G$ or,
equivalently, $G_B$. On $\mathrm{CAP}(G)$, the continuous almost periodic
functions on $G$, we can define an inner product $(f,g) = \CM(f\overline g)$,
where $\CM(f)$ is the mean of $f$ with respect to the normalised Haar measure on
$G_B$. The Besicovitch space $\RB^2(G)$ is defined as the closure of
$\mathrm{CAP}(G)$ with respect to the norm induced by this inner product. 
By \cite[Proposition 1.5]{Shubin1978},
 the map
\begin{equation}
\mathcal{E}\to\widetilde{\mathcal{E}}\, \quad \be_\xi\mapsto\tilde{\be}_\xi,
\end{equation}
extends to an isometric isomorphism $\ell^2(\Xi)\to \mathsf{B}^2(G)$. In particular, for
$G = (\R^d,+)$, one has $\widetilde{\CE}=\lbrace \bx\mapsto \exp(i \bx\cdot
\bxi),\ \bxi\in\mathbb{R}^d\rbrace$ and the operators in
$\BS^\infty(\mathbb{R}^d,\mathbb{R}^d)$ correspond to almost periodic
pseudo-differential operators in $\mathsf{B}^2(\mathbb{R}^d)$ or
$\mathsf{L}^2(\mathbb{R}^d)$, see \cite[\S 3--4]{Shubin1978} and \cite[Equation
(8.8)]{ParSht2012}.  
The present work 
  can be applied to more general settings, for example,
  when the underlying group $G$ is non-abelian. Note that the Bohr
  compactification construction is inadequate in that situation, e.g., for $G = SL(2;\R)$
  we have $G_B = \set \id$, see \cite[p. 4]{Shubin1978}.
\end{rem}

From Lemma \ref{lem:complete} we obtain the following corollary.
\begin{cor} \label{cor:complete}
  Let $(B_n)_{n\geq 1} \subset \BS^\gamma$ be such that
\begin{equation}
  \sum_{n\geq 1}\snorm{B_n}{\gamma}{l}<\infty
\end{equation}
for all $l \geq 0$. Then the sum 
\begin{equation}
B:=\sum_{n\geq 1}B_n
\end{equation}
converges in $\BS^\gamma$ with
\begin{equation}
  \snorm{B}{\gamma}{l}\leq\sum\limits_{n\geq 1}\snorm{B_n}{\gamma}{l}.
\end{equation}
\end{cor}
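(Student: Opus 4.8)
The plan is to deduce this directly from the completeness of $\BS^\gamma$ established in Lemma \ref{lem:complete}. First I would record the elementary fact that each seminorm $\snorm{\cdot}{\gamma}{l}$ is subadditive. Choosing a common countable frequency set $\Theta$ containing $\id$ and closed under inversion that works simultaneously for all the $B_n$ — for instance $\Theta = \bigcup_{n\geq 1}\Theta(n)$, exactly as in the proof of Lemma \ref{lem:complete} — the definition \eqref{eq:defsnorm} expresses $\snorm{\cdot}{\gamma}{l}$ as a sum over $\theta\in\Theta$ of weighted suprema over $\xi\in\Xi$ of moduli of Fourier coefficients, and each of these operations respects the triangle inequality in $\C$. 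Hence $\snorm{B+C}{\gamma}{l}\le \snorm{B}{\gamma}{l}+\snorm{C}{\gamma}{l}$ for all $B,C\in\BS^\gamma$ and all $l\ge 0$.

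Next I would set $S_N := \sum_{n=1}^N B_n$ and show that $(S_N)_{N\ge 1}$ is Cauchy in the Fréchet topology of $\BS^\gamma$. For $M>N\ge 1$ and any $l\ge 0$, subadditivity gives
\begin{equation}
  \snorm{S_M - S_N}{\gamma}{l} = \snorm{\sum_{n=N+1}^{M} B_n}{\gamma}{l} \le \sum_{n=N+1}^{M}\snorm{B_n}{\gamma}{l} \le \sum_{n>N}\snorm{B_n}{\gamma}{l},
\end{equation}
and by hypothesis the right-hand side is the tail of a convergent series, hence tends to $0$ as $N\to\infty$ uniformly in $M$. Thus $(S_N)$ is Cauchy with respect to every $\snorm{\cdot}{\gamma}{l}$, and since $\BS^\gamma$ is a Fréchet space by Lemma \ref{lem:complete} it converges to some $B\in\BS^\gamma$; this $B$ is by definition $\sum_{n\ge 1}B_n$, and the convergence takes place in $\BS^\gamma$ as claimed.

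Finally, for the norm estimate I would use continuity of the seminorms: since $\big|\snorm{S_N}{\gamma}{l} - \snorm{B}{\gamma}{l}\big| \le \snorm{S_N - B}{\gamma}{l} \to 0$, we have $\snorm{B}{\gamma}{l} = \lim_{N\to\infty}\snorm{S_N}{\gamma}{l}$; combining this with the subadditivity bound $\snorm{S_N}{\gamma}{l}\le\sum_{n=1}^{N}\snorm{B_n}{\gamma}{l}$ and letting $N\to\infty$ yields $\snorm{B}{\gamma}{l}\le\sum_{n\ge 1}\snorm{B_n}{\gamma}{l}$ for every $l\ge 0$. There is no serious obstacle here: the argument is the standard ``absolute convergence implies convergence'' principle in a complete metrizable topological vector space, and the only points requiring a line of verification are the subadditivity of $\snorm{\cdot}{\gamma}{l}$ and the harmless passage to a common countable frequency set, both of which are already implicit in the proof of Lemma \ref{lem:complete}.
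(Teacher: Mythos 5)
Your proof is correct and follows exactly the route the paper intends: the corollary is stated as an immediate consequence of the completeness established in Lemma \ref{lem:complete}, and your argument (subadditivity of the seminorms, Cauchy partial sums, convergence by completeness, and passage to the limit for the norm bound) is precisely the standard ``absolute convergence implies convergence'' deduction that the authors leave implicit.
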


Up until now, operators from $\BS^\infty$ were only defined on $\spann(\CE)$.
We now show that they can be extended in a natural way.
\begin{lem} \label{lem:domain}
For every $\beta,\gamma\in\mathbb{R}$ the operator $B \in \BS^\gamma$ can be uniquely extended to a bounded  linear operator $B : \mathsf H^{\beta} \to \mathsf H^{\beta-\gamma}$. Moreover, we have the bound
  \begin{equation} 
  \norm{B}_{\mathsf H^{\beta} \to \mathsf H^{\beta-\gamma}} 
    \le \snorm{B}{\gamma}{\abs{\beta-\gamma}}.
\end{equation}
\end{lem}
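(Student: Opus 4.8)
The plan is to first establish the bound on $\spann(\CE)$ by a direct estimate and then invoke density of $\spann(\CE)$ in $\RH^\beta$ together with completeness of $\RH^{\beta-\gamma}$ to obtain the unique bounded extension. So the first step is: for $x = \sum_{\xi} x_\xi \be_\xi \in \spann(\CE)$, write out $Bx = \sum_{\xi \in \Xi}\sum_{\theta \in \Theta} x_\xi b_\theta(\xi)\be_{\theta \act \xi}$ using \eqref{eq:defopB}, and regroup the sum by the value $\eta = \theta \act \xi$ of the target index. The coefficient of $\be_\eta$ in $Bx$ is then $\sum_{\theta \in \Theta} b_\theta(\theta^{-1}\act\eta)\, x_{\theta^{-1}\act\eta}$ (using that the action is free so each pair $(\theta, \eta)$ determines $\xi = \theta^{-1}\act\eta$ uniquely).

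Next I would estimate the $\RH^{\beta-\gamma}$-norm of $Bx$ by the triangle inequality in $\RH^{\beta-\gamma}$, treating $Bx$ as a sum over $\theta \in \Theta$ of the sequences $y^{(\theta)}$ with $(y^{(\theta)})_\eta := b_\theta(\theta^{-1}\act \eta)\, x_{\theta^{-1}\act\eta}$. For each fixed $\theta$, reindex by $\xi = \theta^{-1}\act\eta$ (a bijection of $\Xi$) and bound:
\begin{equation*}
\norm{y^{(\theta)}}_{\RH^{\beta-\gamma}}^2 = \sum_{\xi \in \Xi} \ang{\theta\act\xi}^{2(\beta-\gamma)} \abs{b_\theta(\xi)}^2 \abs{x_\xi}^2.
\end{equation*}
Now apply the Peetre inequality \eqref{eq:peetre}: $\ang{\theta\act\xi}^{\beta-\gamma} \le \ang{\theta}^{\abs{\beta-\gamma}}\ang{\xi}^{\beta-\gamma}$, and $\abs{b_\theta(\xi)} \le \ang{\xi}^\gamma \sup_{\eta}(\ang{\eta}^{-\gamma}\abs{b_\theta(\eta)})$. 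Combining, $\ang{\theta\act\xi}^{\beta-\gamma}\abs{b_\theta(\xi)} \le \ang{\theta}^{\abs{\beta-\gamma}}\big(\sup_\eta \ang{\eta}^{-\gamma}\abs{b_\theta(\eta)}\big)\ang{\xi}^\beta$, so that $\norm{y^{(\theta)}}_{\RH^{\beta-\gamma}} \le \ang\theta^{\abs{\beta-\gamma}}\big(\sup_\eta\ang\eta^{-\gamma}\abs{b_\theta(\eta)}\big)\norm{x}_{\RH^\beta}$. Summing over $\theta \in \Theta$ and recognizing the definition \eqref{eq:defsnorm} gives $\norm{Bx}_{\RH^{\beta-\gamma}} \le \snorm{B}{\gamma}{\abs{\beta-\gamma}}\norm{x}_{\RH^\beta}$.

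Finally, since $\spann(\CE)$ is dense in $\RH^\beta$ (finitely supported sequences are dense in any weighted $\ell^2$ space) and $\RH^{\beta-\gamma}$ is complete, the bounded operator $B$ extends uniquely to all of $\RH^\beta$ with the same norm bound; uniqueness is automatic from density and continuity. The main technical point to get right is the bookkeeping in regrouping the double sum and justifying the interchange of summation — this is legitimate because for $x \in \spann(\CE)$ only finitely many $x_\xi$ are nonzero and $(b_\theta(\xi))_\theta \in \ell^1(\Theta)$ for $b \in \BS^\infty$, so everything is absolutely convergent; after the estimate is established on this dense subspace no further convergence subtleties arise. No single step is a genuine obstacle here; the only place to be careful is making sure the exponent in the Peetre inequality is applied with $\abs{\beta-\gamma}$ (not $\beta-\gamma$, whose sign is unknown), which is exactly what produces the norm index $\abs{\beta-\gamma}$ in the statement.
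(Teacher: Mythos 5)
Your proof is correct and takes essentially the same route as the paper's: both decompose $B$ over the frequency set $\Theta$ into weighted shifts, use the freeness of the action to reindex, apply the Peetre inequality \eqref{eq:peetre} to trade $\ang{\theta\act\xi}^{\beta-\gamma}$ for $\ang{\theta}^{\abs{\beta-\gamma}}\ang{\xi}^{\beta-\gamma}$ (which is precisely where the index $\abs{\beta-\gamma}$ in the symbol norm comes from), and conclude by density of $\spann(\CE)$. The only cosmetic difference is that you estimate $\norm{Bx}_{\RH^{\beta-\gamma}}$ directly via the triangle inequality over $\theta$, whereas the paper bounds the sesquilinear pairing $\abs{(x,By)_{\RH^{\beta-\gamma}}}$ term by term via Cauchy--Schwarz; the two are equivalent and give the identical constant.
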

\begin{proof}
  Let $x,y \in \spann(\CE)$, i.e.~$x_\xi=y_\xi=0$ for all but finitely many $\xi$.
  Then, the Cauchy--Schwarz and Peetre inequalities \eqref{eq:peetre} imply
\begin{equation}
  \begin{aligned}
    \left|(x,By)_{\mathsf H^{\beta-\gamma}} \right|& = \abs{\sum_{\theta\in\Theta} \sum_{\xi\in\Xi}
            \ang{\theta\act\xi}^{2(\beta - \gamma)} b_\theta(\xi) \overline{x_{\theta\act\xi}}\,y_\xi}\\
    &\le \sum_{\theta \in \Theta} \ang \theta^{\abs{\beta-\gamma}} 
     \sup_{\zeta \in \Xi} \left(\ang{\zeta}^{-\gamma}\abs{b_\theta(\zeta)}\right) \times \\
    & \qquad \times \left(\sum_{\xi\in \Xi}\ang{\theta \act \xi}^{2(\beta-\gamma)}
         \abs{x_{\theta\act\xi}}^2\right)^{1/2}
    \left(\sum_{\xi \in \Xi} \ang \xi^{2\beta} \abs{ y_\xi}^2\right)^{1/2}\\
    &\leq \snorm{B}{\gamma}{\abs{\beta-\gamma}} \|x\|_{\mathsf H^{\beta-\gamma}
            } \|y\|_{\mathsf H^{\beta}}.
\end{aligned}
\end{equation} 
The claim follows by density of $\spann\CE$ in $\mathsf H^{\alpha}$ for all $\alpha\in\mathbb R$.
\end{proof}

We obtain the following immediate corollary.

\begin{cor} \label{cor:domain}
  Every $B \in \BS^0$ extends to a bounded operator on $\ell^2 =\mathsf H^0$
  such that
 \begin{equation}
   \|B\|_{\ell^2\to\ell^2} \le \snorm{B}{0}{0}.
 \end{equation}
\end{cor}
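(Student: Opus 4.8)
The plan is to obtain this as a direct specialisation of Lemma \ref{lem:domain}. First I would recall that, by construction, $\mathsf H^0(\Xi) = \ell^2(\Xi)$, and that every $B \in \BS^0$ is in particular an operator of finite order, so Lemma \ref{lem:domain} applies with no extra hypotheses. Then I would simply set $\beta = \gamma = 0$ in that lemma: it yields that $B$, initially defined on $\spann(\CE)$, extends uniquely to a bounded linear operator $B : \mathsf H^0 \to \mathsf H^{0-0} = \mathsf H^0$, with
\begin{equation*}
  \norm{B}_{\ell^2 \to \ell^2} = \norm{B}_{\mathsf H^0 \to \mathsf H^0} \le \snorm{B}{0}{\abs{0-0}} = \snorm{B}{0}{0}.
\end{equation*}
That is exactly the assertion of the corollary, so the argument ends there.

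There is essentially no obstacle: the only point worth flagging is that the regularity index $l = \abs{\beta-\gamma}$ appearing in the estimate of Lemma \ref{lem:domain} collapses to $l = 0$ in this symmetric case, which is why the $\ell^2 \to \ell^2$ operator norm is controlled by the coarsest seminorm $\snorm{\cdot}{0}{0}$, with no frequency-decay factor needed. If instead one wanted a fully self-contained argument, one could rerun the computation from the proof of Lemma \ref{lem:domain} with $\beta = \gamma = 0$: there $\ang{\theta \act \xi}^{2(\beta-\gamma)} = 1$ and $\ang{\theta}^{\abs{\beta-\gamma}} = 1$, so the Cauchy--Schwarz step collapses immediately to $\abs{(x, By)_{\ell^2}} \le \snorm{B}{0}{0}\, \norm{x}_{\ell^2}\norm{y}_{\ell^2}$ for $x, y \in \spann(\CE)$, and density of $\spann(\CE)$ in $\ell^2(\Xi)$ finishes the proof.
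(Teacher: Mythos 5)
Your proposal is correct and matches the paper exactly: the corollary is stated there as an immediate consequence of Lemma \ref{lem:domain}, obtained precisely by taking $\beta = \gamma = 0$ so that $\mathsf H^{\beta} = \mathsf H^{\beta-\gamma} = \ell^2(\Xi)$ and the seminorm index collapses to $l = \abs{\beta-\gamma} = 0$. No further argument is needed.
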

\begin{defi}
For $b\in\BS^\infty$, we define
\begin{equation}\label{eq:adjoint symbol}
b^\dagger_\theta(\xi):=\begin{cases}\overline{b_{\theta^{-1}}(\theta\act\xi)}&
  \text{if } \theta\in \Theta,\\
  0& \text{if }\theta\in G\setminus\Theta
\end{cases}
\end{equation}
for all $\xi\in\Xi$, where $\Theta$ is a frequency set for $b$.
\end{defi}
\begin{lem} \label{lem:adjoint}
If $b\in\BS^\gamma$, then $b^\dagger\in\BS^\gamma$.
Moreover, for all $x,y\in\mathsf{H}^\gamma$, one
has 
\begin{equation}\label{eq:B+onH}
  (x,By)_{\ell^2(\Xi)}=(B^\dagger x,y)_{\ell^2(\Xi)}, \quad  \text{i.e.~$B^\dagger\subseteq B^*$.}
\end{equation}
In particular, $B$ is symmetric on $\mathsf{H}^\gamma$ if and only if $B=B^\dagger$.
\end{lem}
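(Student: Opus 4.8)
The plan is to verify the two assertions of Lemma~\ref{lem:adjoint} in turn: first that $b^\dagger \in \BS^\gamma$ whenever $b \in \BS^\gamma$, and then the duality identity \eqref{eq:B+onH}, from which the symmetry criterion is immediate.

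For the first part, I would fix a frequency set $\Theta$ for $b$ and check directly from the definition \eqref{eq:adjoint symbol} that $\Theta$ is also a frequency set for $b^\dagger$ (it is closed under inversion and contains $\id$ by hypothesis, and $b^\dagger_\theta \equiv 0$ for $\theta \notin \Theta$ by construction). Then I would estimate the symbol norms. For $\theta \in \Theta$ and $\xi \in \Xi$, we have $|b^\dagger_\theta(\xi)| = |b_{\theta^{-1}}(\theta\act\xi)|$; using the Peetre inequality \eqref{eq:peetre} in the form $\ang{\theta\act\xi} \le \ang\theta\ang\xi$ (when $\gamma \ge 0$) or $\ang{\theta\act\xi}\ge \ang\theta^{-1}\ang\xi$ (when $\gamma < 0$) one bounds $\ang\xi^{-\gamma}|b^\dagger_\theta(\xi)| \le \ang\theta^{|\gamma|}\ang{\theta\act\xi}^{-\gamma}|b_{\theta^{-1}}(\theta\act\xi)|$, and taking the supremum over $\xi$ (equivalently over $\theta\act\xi$, since the action is a bijection of $\Xi$) gives $\sup_\xi \ang\xi^{-\gamma}|b^\dagger_\theta(\xi)| \le \ang\theta^{|\gamma|}\sup_\zeta \ang\zeta^{-\gamma}|b_{\theta^{-1}}(\zeta)|$. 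Multiplying by $\ang\theta^l$, summing over $\theta\in\Theta$, and using $\ang\theta = \ang{\theta^{-1}}$ from \eqref{eq:anginv} to reindex $\theta \mapsto \theta^{-1}$, I get $\snorm{b^\dagger}{\gamma}{l} \le \snorm{b}{\gamma}{l+|\gamma|} < \infty$ for every $l \ge 0$, so $b^\dagger \in \BS^\gamma$.

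For the duality identity, it suffices to check \eqref{eq:B+onH} for $x,y \in \spann(\CE)$ and then extend by density, using Lemma~\ref{lem:domain} to know both $B$ and $B^\dagger$ are bounded $\mathsf H^\gamma \to \mathsf H^{\gamma-\gamma} = \ell^2$ (actually one needs a little care: the pairing $(\,\cdot\,,\,\cdot\,)_{\ell^2}$ is continuous on $\ell^2\times\ell^2$, and $B, B^\dagger$ map $\mathsf H^\gamma$ boundedly into $\ell^2$ when $\gamma \le 0$; for $\gamma > 0$ one works on $\mathsf H^\gamma \times \mathsf H^\gamma$ directly, noting $\spann\CE$ is dense there too). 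On basis vectors, $(\be_\eta, B\be_\xi)_{\ell^2} = \sum_{\theta\in\Theta} b_\theta(\xi)(\be_\eta,\be_{\theta\act\xi})_{\ell^2}$, which is $b_\theta(\xi)$ when $\eta = \theta\act\xi$ for the (unique, by freeness) such $\theta$, and $0$ otherwise. Similarly $(B^\dagger\be_\eta,\be_\xi)_{\ell^2} = \overline{(\be_\xi, B^\dagger\be_\eta)_{\ell^2}} = \overline{b^\dagger_{\theta'}(\eta)}$ when $\xi = \theta'\act\eta$, i.e. $\theta' = \theta^{-1}$, and by \eqref{eq:adjoint symbol} $\overline{b^\dagger_{\theta^{-1}}(\eta)} = \overline{\overline{b_\theta(\theta^{-1}\act\eta)}} = b_\theta(\xi)$ since $\theta^{-1}\act\eta = \theta^{-1}\act(\theta\act\xi) = \xi$. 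So the two sides agree on all pairs of basis vectors, hence on $\spann(\CE)\times\spann(\CE)$ by bilinearity, and then on $\mathsf H^\gamma \times \mathsf H^\gamma$ by density and continuity. This gives $B^\dagger \subseteq B^*$. Finally, $B$ symmetric on $\mathsf H^\gamma$ means $(x,By) = (Bx,y)$ for $x,y$ in the domain, which combined with \eqref{eq:B+onH} forces $(B^\dagger x, y) = (Bx,y)$ for all such $x,y$, hence $B^\dagger = B$ by the symbol recovery formula \eqref{eq:symbrecovery} (taking $x = \be_\xi$, $y = \be_{g\act\xi}$); the converse is trivial.

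The only genuinely delicate point is the bookkeeping in the density argument — making sure the spaces on which $B$ and $B^\dagger$ are bounded are compatible with the $\ell^2$ pairing appearing in \eqref{eq:B+onH}. When $\gamma \le 0$ both operators land in $\ell^2$ and there is nothing to worry about; when $\gamma > 0$ one instead observes that for $x,y \in \mathsf H^\gamma$ the quantities $(x, By)_{\ell^2}$ and $(B^\dagger x, y)_{\ell^2}$ are both well-defined and continuous in $(x,y)$ because $B : \mathsf H^\gamma \to \mathsf H^0 = \ell^2$ and $B^\dagger : \mathsf H^\gamma \to \ell^2$ are bounded by Lemma~\ref{lem:domain} with $\beta = \gamma$, so the identity on the dense subspace $\spann(\CE)$ propagates. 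Everything else is a direct computation with the definitions and the Peetre inequalities already established.
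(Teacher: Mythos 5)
Your proof is correct and follows essentially the same route as the paper's: the same Peetre-inequality estimate giving $\snorm{b^\dagger}{\gamma}{l}\le\snorm{b}{\gamma}{l+|\gamma|}$ after reindexing $\theta\mapsto\theta^{-1}$, and the same verification of the adjoint identity on basis vectors via \eqref{eq:symbrecovery} followed by extension by density using Lemma \ref{lem:domain}. The extra care you take about which Sobolev spaces the pairing lives on is exactly the role Lemma \ref{lem:domain} plays in the paper's (terser) density argument.
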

\begin{proof}
Every frequency set $\Theta$ for $b\in\BS^\gamma$ is also one for $b^\dagger$. Moreover, since $\Theta=\Theta^{-1}$ holds by convention, \eqref{eq:anginv} and \eqref{eq:peetre} imply that for all $l\geq 0$,
\begin{equation*}
\begin{aligned}
\snorm{b^\dagger}{\gamma}{l}&=\sum\limits_{\theta\in\Theta_B} \ang{\theta}^l\sup\limits_{\xi\in\Xi}\big[\ang{\xi}^{-\gamma}|b_{\theta^{-1}}(\theta\act\xi)|\big]\\
&=\sum\limits_{\theta\in\Theta_B}\ang{\theta}^l\sup\limits_{\xi\in\Xi}\big[\ang{\theta^{-1}\act\xi}^{-\gamma}|b_{\theta^{-1}}(\xi)|\big]\\
&\leq \sum\limits_{\theta\in\Theta_B}\ang{\theta}^{l+|\gamma|}\sup\limits_{\xi\in\Xi}\big[\ang{\xi}^{-\gamma}|b_{\theta^{-1}}(\xi)|\big]\\
&=\sum\limits_{\theta\in\Theta_B}\ang{\theta}^{l+|\gamma|}\sup\limits_{\xi\in\Xi}\big[\ang{\xi}^{-\gamma}|b_{\theta}(\xi)|\big]\\
&=\snorm{b}{\gamma}{l+|\gamma|},
\end{aligned}
\end{equation*}
thus $b^\dagger\in\BS^\gamma$ holds. 
Moreover, \eqref{eq:symbrecovery} and \eqref{eq:adjoint symbol} yield
\begin{equation}\label{eq:B+onspanE}
(\be_\eta,B\be_\xi)=(B^\dagger\be_\eta,\be_\xi), \qquad \text{for all $\eta,\xi\in\Xi$.}
\end{equation}
 In view of Lemma \ref{lem:domain} and the density of $\spann\mathcal{E}$ in $\mathsf{H}^\gamma$, \eqref{eq:B+onspanE} extends to \eqref{eq:B+onH}. This finishes the proof of the lemma.
\end{proof}
\begin{defi}
  Let $a, b \in \BS^\infty$ be
  symbols with frequency sets $\Theta_a$ and $\Theta_b$. The \emph{composed
  symbol} $a \circ b$ with frequency set
\begin{align}\label{eq:frequencysetcomposition}
\Theta_{a\circ b}:=\Theta_a \Theta_b := \set{\theta_a \theta_b : 
                    \theta_a \in \Theta_a, \theta_b \in \Theta_b}
\end{align}
is defined as 
\begin{equation}\label{eq:symbolcomposition}
 (a \circ b)_\theta(\xi) := \sum_{\theta_a \theta_b = \theta} 
           a_{\theta_a}(\theta_b \act\xi) b_{\theta_b}(\xi) \quad\text{ for all
           $\theta\in\Theta_{a\circ b}$, $\xi\in\Xi$}.
\end{equation} 
\end{defi}

\begin{lem}\label{lem:product}
For $\alpha, \beta\in \R$ let $A = \Op(a) \in \BS^{\alpha}$ and $B = \Op(b) \in \BS^{\beta}$.
Then $AB\in\BS^{\alpha+\beta}$ and $AB =\Op(a\circ b)$. Moreover, for all $l \ge 0$ we have the bound
\begin{equation}\label{eq:product}
  \snorm{AB}{\alpha+\beta}{l} \leq 
          \snorm{A}{\alpha}{l}\snorm{B}{\beta}{l + \abs{\alpha}}.
\end{equation}
\end{lem}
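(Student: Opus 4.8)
The plan is to verify directly that the operator $AB$, defined on $\spann(\CE)$ as the composition of the two bounded operators, is exactly $\Op(a \circ b)$, and then estimate the symbol norms. First I would compute the action of $AB$ on a basis vector: using \eqref{eq:defopB} twice,
\begin{equation*}
  AB\,\be_\xi = A\Bigl(\sum_{\theta_b \in \Theta_b} b_{\theta_b}(\xi)\,\be_{\theta_b \act \xi}\Bigr)
  = \sum_{\theta_b \in \Theta_b}\sum_{\theta_a \in \Theta_a} a_{\theta_a}(\theta_b \act \xi)\,b_{\theta_b}(\xi)\,\be_{\theta_a \act (\theta_b \act \xi)}.
\end{equation*}
Since the $G$-action satisfies $\theta_a \act (\theta_b \act \xi) = (\theta_a \theta_b) \act \xi$, regrouping the double sum over the value $\theta = \theta_a \theta_b$ gives precisely $\sum_{\theta \in \Theta_{a\circ b}} (a\circ b)_\theta(\xi)\,\be_{\theta \act \xi}$, with $(a\circ b)_\theta$ as in \eqref{eq:symbolcomposition} and frequency set contained in $\Theta_a \Theta_b$. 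To be careful about interchanging the order of summation, one should first check that the relevant double series is absolutely summable; this is where I would invoke that $a, b \in \BS^\infty$, so each of $(a_{\theta_a}(\cdot))$ and $(b_{\theta_b}(\cdot))$ lies in $\ell^1$ over its frequency set with the sup-bounds coming from \eqref{eq:defsnorm}, making the rearrangement legitimate. Once the symbol identity is established on $\spann(\CE)$, the extension to $\mathsf H^\beta \to \mathsf H^{\alpha+\beta}$ follows from Lemma \ref{lem:domain} together with the norm bound proved below (or, alternatively, from composing the two bounded extensions, which agree with $\Op(a\circ b)$ on the dense subspace $\spann(\CE)$).

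Next I would bound $\snorm{AB}{\alpha+\beta}{l}$. Starting from the definition,
\begin{equation*}
  \snorm{AB}{\alpha+\beta}{l} = \sum_{\theta} \ang{\theta}^l \sup_{\xi}\Bigl(\ang{\xi}^{-\alpha-\beta}\,\Bigl|\sum_{\theta_a\theta_b = \theta} a_{\theta_a}(\theta_b\act\xi)\,b_{\theta_b}(\xi)\Bigr|\Bigr),
\end{equation*}
and then estimate $\ang{\theta}^l \le \ang{\theta_a}^l \ang{\theta_b}^l$ by the Peetre inequality \eqref{eq:peetre2} (applied with $t = l \ge 0$), bound $|a_{\theta_a}(\theta_b\act\xi)| \le \ang{\theta_b\act\xi}^{\alpha}\sup_\zeta(\ang{\zeta}^{-\alpha}|a_{\theta_a}(\zeta)|)$, and use the Peetre inequality \eqref{eq:peetre} in the form $\ang{\theta_b\act\xi}^{\alpha} \le \ang{\theta_b}^{|\alpha|}\ang{\xi}^{\alpha}$ to pull out the weight $\ang{\xi}^{\alpha}$, which cancels against part of $\ang{\xi}^{-\alpha-\beta}$ and leaves $\ang{\xi}^{-\beta}|b_{\theta_b}(\xi)|$. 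After the triangle inequality the double sum over $(\theta_a,\theta_b)$ factorises:
\begin{equation*}
  \snorm{AB}{\alpha+\beta}{l} \le \Bigl(\sum_{\theta_a}\ang{\theta_a}^l \sup_\zeta(\ang{\zeta}^{-\alpha}|a_{\theta_a}(\zeta)|)\Bigr)\Bigl(\sum_{\theta_b}\ang{\theta_b}^{l+|\alpha|}\sup_\xi(\ang{\xi}^{-\beta}|b_{\theta_b}(\xi)|)\Bigr),
\end{equation*}
which is exactly $\snorm{A}{\alpha}{l}\,\snorm{B}{\beta}{l+|\alpha|}$. In particular this shows $AB \in \BS^{\alpha+\beta}$, since the right-hand side is finite for every $l \ge 0$.

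I do not expect any serious obstacle here; the statement is a routine symbol-calculus estimate and the proof is essentially bookkeeping with the two Peetre inequalities already established in the excerpt. The one point that deserves a sentence of care is the justification of the rearrangement of the double sum defining $(a\circ b)$ and of the interchange of $\sup$ and the (absolutely convergent) series in the norm estimate — both are handled by the $\ell^1$-summability of the Fourier coefficients that membership in $\BS^\infty$ guarantees via \eqref{eq:defsnorm}. A secondary bookkeeping point is the asymmetry in the final bound (the shift $|\alpha|$ lands on the $B$-factor rather than the $A$-factor): this comes from having chosen to estimate the argument $\theta_b\act\xi$ of $a$ rather than rewriting the argument of $b$; symmetrising would instead give $\snorm{A}{\alpha}{l+|\beta|}\snorm{B}{\beta}{l}$, and one simply records the version stated.
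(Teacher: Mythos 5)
Your proof is correct and follows essentially the same route as the paper's: the composed symbol is identified by computing $AB\be_\xi$ directly, and the norm bound \eqref{eq:product} is obtained from the triangle inequality together with the two Peetre inequalities \eqref{eq:peetre} and \eqref{eq:peetre2}, with the shift $\abs{\alpha}$ landing on the $B$-factor exactly as in the paper. The only difference is that you spell out the summability justification for rearranging the double sum, which the paper leaves implicit.
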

\begin{proof}
The frequency set $\Theta_{a\circ b}$ is, clearly, a countable set.
For any $l \ge 0$, we have
\begin{equation}
  \begin{aligned}
    \snorm{a\circ b}{\alpha + \beta}{l} &= \sum_{\theta \in \Theta_{a\circ b}}
           \sum_{\theta_a\theta_b = \theta} \ang{\theta}^l 
            \sup_{\xi \in \Xi} \left(\ang{\xi}^{-\alpha - \beta} 
             \abs{a_{\theta_a}(\theta_b\act\xi)}\abs{b_{\theta_b}(\xi)}\right) \\
  &\le \sum_{\theta_b \in \Theta_b} \ang{\theta_b}^{l + \abs{\alpha}}
             \sup_{\xi \in \Xi}\left(\ang{\xi}^{-\beta} 
              \abs{b_{\theta_b}(\xi)}\right)\times \\
  &\qquad \times   \sum_{\theta_a \in \Theta_a}\ang{\theta_a}^{l} 
                    \sup_{\zeta \in \Xi}\left(\ang{\theta_b \act \zeta}^{-\alpha}
                     \abs{a_{\theta_a}(\theta_b \act \zeta)}\right) \\
  &\le \snorm{a}{\alpha}{l} \snorm{b}{\beta}{l + \abs{\alpha}}.
\end{aligned}
\end{equation}
Thus $a\circ b\in \BS^{\alpha + \beta}$ and \eqref{eq:defopB} implies $AB =\Op(a\circ b)$.
\end{proof}

It is natural to consider operators from $\BS^\infty$ on the common domain $\mathsf{H}^\infty$. Then Lemmata \ref{lem:domain}, \ref{lem:adjoint}, and \ref{lem:product} yield the following corollary.
\begin{cor}
$\BS^\infty=\bigcup\limits_{\gamma\in\mathbb{R}}\BS^\gamma$ is a $*$-algebra of operators on $\mathsf{H}^\infty$, filtered by $\mathbb{R}$, with involution $\dagger$. The subalgebra of regularising operators $\BS^{-\infty}$ forms a two-sided ideal of $\BS^\infty$.
\end{cor}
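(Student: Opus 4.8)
The plan is to assemble the corollary from Lemmata \ref{lem:domain}, \ref{lem:adjoint}, and \ref{lem:product}, which already carry all the analytic weight; what remains is purely algebraic bookkeeping. First I would record that $\BS^\infty=\bigcup_{\gamma\in\R}\BS^\gamma$ is a linear space of operators on the common domain $\mathsf H^\infty$: each $\BS^\gamma$ is linear, the filtration is directed (since $\BS^{\gamma_1}\subseteq\BS^{\gamma_2}$ whenever $\gamma_1\le\gamma_2$, observed right after the definition of $\BS^\gamma$), so any two elements of $\BS^\infty$ lie in a common $\BS^\gamma$; and by Lemma \ref{lem:domain} a given $B\in\BS^\gamma$ maps $\mathsf H^\beta$ boundedly into $\mathsf H^{\beta-\gamma}$ for every $\beta\in\R$, hence maps $\mathsf H^\infty$ into $\bigcap_\beta\mathsf H^{\beta-\gamma}=\mathsf H^\infty$.

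For the algebra structure, Lemma \ref{lem:product} gives that $A\in\BS^\alpha$ and $B\in\BS^\beta$ compose to $AB=\Op(a\circ b)\in\BS^{\alpha+\beta}$; this both shows that composition is an internal operation on $\BS^\infty$ and exhibits the $\R$-filtration, $\BS^\alpha\BS^\beta\subseteq\BS^{\alpha+\beta}$ (together with the inclusions $\BS^{\gamma_1}\subseteq\BS^{\gamma_2}$ noted above). Bilinearity of $\circ$ is read off from \eqref{eq:symbolcomposition}. Associativity I would inherit from the associativity of operator composition on $\mathsf H^\infty$, or — if a symbol-level argument is preferred — check that both $\big((a\circ b)\circ c\big)_\theta(\xi)$ and $\big(a\circ(b\circ c)\big)_\theta(\xi)$ equal $\sum_{\theta_a\theta_b\theta_c=\theta}a_{\theta_a}\big((\theta_b\theta_c)\act\xi\big)\,b_{\theta_b}(\theta_c\act\xi)\,c_{\theta_c}(\xi)$, using $\theta_b\act(\theta_c\act\xi)=(\theta_b\theta_c)\act\xi$ (left action) and associativity of the group law to match the summation sets; the rearrangement is justified by the absolute convergence of these sums, which holds since $\snorm{\cdot}{\gamma}{l}<\infty$. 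Thus $\BS^\infty$ is an associative, $\R$-filtered algebra.

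For the $*$-structure, Lemma \ref{lem:adjoint} already shows that $b\mapsto b^\dagger$ preserves each $\BS^\gamma$, hence $\BS^\infty$, and antilinearity is immediate from \eqref{eq:adjoint symbol}. Two involution identities then remain, both of which are short symbol computations from \eqref{eq:adjoint symbol} and \eqref{eq:symbolcomposition}. For $(b^\dagger)^\dagger=b$: applying \eqref{eq:adjoint symbol} twice and using $\theta\act(\theta^{-1}\act\xi)=\xi$ gives $(b^\dagger)^\dagger_\theta(\xi)=\overline{b^\dagger_{\theta^{-1}}(\theta\act\xi)}=b_\theta(\xi)$. For the anti-homomorphism property $(a\circ b)^\dagger=b^\dagger\circ a^\dagger$: expanding both sides through \eqref{eq:symbolcomposition} and \eqref{eq:adjoint symbol} and relabelling the summation indices by $\theta_j\mapsto\theta_j^{-1}$ matches the two expressions term by term; alternatively, one may combine the exact pairing $(x,By)_{\ell^2(\Xi)}=(B^\dagger x,y)_{\ell^2(\Xi)}$ from \eqref{eq:B+onH} with the one-to-one correspondence \eqref{eq:symbrecovery} between symbols and operators. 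Together with the algebra structure, this makes $\BS^\infty$ a filtered $*$-algebra with involution $\dagger$.

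Finally, for the ideal property: if $a\in\BS^{-\infty}=\bigcap_{\gamma\in\R}\BS^\gamma$ and $b\in\BS^\beta\subseteq\BS^\infty$, then for every $\mu\in\R$ we have $a\in\BS^{\mu-\beta}$, so Lemma \ref{lem:product} gives $a\circ b\in\BS^{(\mu-\beta)+\beta}=\BS^\mu$ and, symmetrically, $b\circ a\in\BS^\mu$; as $\mu$ was arbitrary, both $a\circ b$ and $b\circ a$ lie in $\BS^{-\infty}$. Hence $\BS^{-\infty}$ is a two-sided ideal, and it is a $*$-subalgebra (closedness under composition is the special case $b\in\BS^{-\infty}$, and closedness under $\dagger$ follows from Lemma \ref{lem:adjoint} applied for every $\gamma$). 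I do not anticipate a genuine obstacle here: the only steps that are computations rather than direct citations are the associativity of $\circ$ and the anti-homomorphism property of $\dagger$, and both collapse to elementary index manipulations once absolute convergence is invoked — all the substantive estimates having already been discharged in Lemmata \ref{lem:domain}, \ref{lem:adjoint}, and \ref{lem:product}.
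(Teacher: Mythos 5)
Your proposal is correct and follows exactly the route the paper intends: the corollary is stated in the paper without proof as an immediate consequence of Lemmata \ref{lem:domain}, \ref{lem:adjoint}, and \ref{lem:product}, and your write-up simply makes explicit the routine bookkeeping (associativity, the involution identities, and the ideal property) that those lemmas leave implicit. All of your index manipulations and the degree-counting argument for the ideal are sound.
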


We also consider the adjoint actions $\ad(A,B): = \mathrm i(AB -BA)$ with the frequency set
$\Theta_{\ad(a,b)} = \Theta_{a \circ b } \cup \Theta_{b \circ a}$. The Fourier coefficients
of $\ad(A,B)$ are
\begin{equation}\label{eq:ad symbol}
 \ad(a,b)_\theta(\xi) = \mathrm i\, \left(\sum_{\theta_a \theta_b = \theta} 
              a_{\theta_a}(\theta_b\act\xi) b_{\theta_b}(\xi) - 
 \sum_{\theta_b \theta_a = \theta} 
        b_{\theta_b}(\theta_a\act\xi)a_{\theta_a}(\xi)\right),
\end{equation}
for all $\theta \in \Theta_{\ad(a,b)}$.
If $G$ is commutative, \eqref{eq:ad symbol} simplifies to
\begin{equation}\label{eq:adcom}
 \ad(a,b)_\theta(\xi) = \mathrm i\, \sum_{\theta_a \theta_b = \theta}\big(
   a_{\theta_a}(\theta_b\act\xi) b_{\theta_b}(\xi) - 
 b_{\theta_b}(\theta_a\act\xi)a_{\theta_a}(\xi)\big). 
\end{equation}

For $k=1,2,3,\dots$ and $A,B,B_1,\dots B_k\in\BS^\infty$, we define recursively \label{page:defad}
\begin{equation}\label{eq:defadk}
\begin{aligned}
\ad(A;B_1,\dotsc,B_k) &:= \ad\left(\ad(A;B_1,\dotsc,B_{k-1}),B_k\right),\\
\ad^0(A,B)&:= A,\\
\ad^k(A;B)&:= \ad(\ad^{k -1}(A;B);B).
\end{aligned}
\end{equation}

The following lemma is a direct consequence of Lemma \ref{lem:product}.
\begin{lem}\label{lem:weakcommest}
  Let $k \in \N$ and assume that $A_j \in \BS^{\gamma_j}$ for $0 \le j \le k$. Put
  \begin{equation}
    \gamma = \sum_{j=0}^k \gamma_j, \qquad \hat\gamma =\sum_{j=0}^k |\gamma_j|.
  \end{equation}
 Then
 $\ad(A_0;A_1,\dotsc,A_k) \in \BS^{\gamma}$. Furthermore, if for all $0 \le j
 \le k$ we have $A_j = A_j^\dagger$, then
 $\ad(A_0;A_1,\dotsc,A_k)=\ad(A_0;A_1,\dotsc,A_k)^\dagger$. Moreover, for all $l
 \ge 0$ we have
  \begin{equation}
  \begin{aligned}\label{eq:normestcomm}
  \snorm{\ad(A_0;A_1,\dotsc,A_k)}{\gamma}{l}\leq 2^{k} 
  \prod_{j=0}^k \snorm{A_j}{\gamma_{j}}{l + \hat\gamma - |\gamma_j|}.
\end{aligned}
\end{equation}
 In particular, for any $A\in\BS^{\alpha}$, $B\in\BS^{0}$ and $k\in\mathbb N$ we obtain the estimate
\begin{align}\label{eq:normestcommzeroorder}
  \snorm{\ad^k(A;B)}{\alpha}{l}\leq 2^k \snorm{A}{\alpha}{l}
            \left(\snorm{B}{0}{l + \abs{\alpha}}\right)^k.
\end{align}
\end{lem}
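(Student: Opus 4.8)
The plan is to prove all three assertions by induction on $k$, with Lemma \ref{lem:product} as the essential input and two elementary facts used repeatedly: that $\snorm{\cdot}{\gamma}{l}$ is increasing in $l$, and that $\bigl|\sum_j \gamma_j\bigr| \le \sum_j |\gamma_j|$. The membership $\ad(A_0;A_1,\dotsc,A_k) \in \BS^\gamma$ is automatic once the norm bound \eqref{eq:normestcomm} is in place, since every step composes symbols lying in $\BS^\infty$ and adds their orders.

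For the base case $k = 1$ I would write $\ad(A_0;A_1) = \mathrm i(A_0 A_1 - A_1 A_0)$ and invoke Lemma \ref{lem:product} for both products, obtaining $A_0A_1, A_1A_0 \in \BS^{\gamma_0+\gamma_1}$ and
\begin{align*}
  \snorm{A_0 A_1}{\gamma_0+\gamma_1}{l} &\le \snorm{A_0}{\gamma_0}{l}\,\snorm{A_1}{\gamma_1}{l+|\gamma_0|},\\
  \snorm{A_1 A_0}{\gamma_0+\gamma_1}{l} &\le \snorm{A_1}{\gamma_1}{l}\,\snorm{A_0}{\gamma_0}{l+|\gamma_1|}.
\end{align*}
Since $\hat\gamma = |\gamma_0| + |\gamma_1|$ and the norms increase in $l$, each right-hand side is bounded by $\snorm{A_0}{\gamma_0}{l+\hat\gamma-|\gamma_0|}\,\snorm{A_1}{\gamma_1}{l+\hat\gamma-|\gamma_1|}$, and the triangle inequality for $\snorm{\cdot}{\gamma_0+\gamma_1}{l}$ then gives \eqref{eq:normestcomm} with the constant $2^1$; the case $k=0$ is trivial.

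For the inductive step I would set $C := \ad(A_0;A_1,\dotsc,A_{k-1})$, $\gamma' := \sum_{j=0}^{k-1}\gamma_j$ and $\hat\gamma' := \sum_{j=0}^{k-1}|\gamma_j|$, so that $C \in \BS^{\gamma'}$ satisfies the bound of the lemma by the induction hypothesis. Applying the already-proved case $k=1$ to $\ad(C,A_k)$ yields
\begin{equation*}
  \snorm{\ad(C,A_k)}{\gamma}{l} \le 2\,\snorm{C}{\gamma'}{l+|\gamma_k|}\,\snorm{A_k}{\gamma_k}{l+|\gamma'|}.
\end{equation*}
Into the first factor I would substitute the induction hypothesis evaluated at $l + |\gamma_k|$, using $\hat\gamma = \hat\gamma' + |\gamma_k|$ to rewrite $(l+|\gamma_k|) + \hat\gamma' - |\gamma_j| = l + \hat\gamma - |\gamma_j|$ for $0 \le j \le k-1$; into the second factor I would use $|\gamma'| \le \hat\gamma' = \hat\gamma - |\gamma_k|$ together with monotonicity in $l$ to replace $\snorm{A_k}{\gamma_k}{l+|\gamma'|}$ by $\snorm{A_k}{\gamma_k}{l+\hat\gamma-|\gamma_k|}$. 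Multiplying the two estimates and collecting the constant $2\cdot 2^{k-1}=2^k$ produces exactly \eqref{eq:normestcomm}.

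The self-adjointness claim runs along the same induction: since $\BS^\infty$ is a $*$-algebra with involution $\dagger$ (Lemma \ref{lem:adjoint}), one has $\ad(X,Y)^\dagger = \ad(X^\dagger,Y^\dagger)$, so $C^\dagger = C$ and $A_k^\dagger = A_k$ force $\ad(C,A_k)^\dagger = \ad(C,A_k)$, the base case being the same computation for $\ad(A_0,A_1)$. Finally \eqref{eq:normestcommzeroorder} is the specialisation $A_0 = A \in \BS^\alpha$, $A_1 = \dots = A_k = B \in \BS^0$, where $\gamma = \alpha$ and $\hat\gamma = |\alpha|$, so $l+\hat\gamma-|\gamma_0| = l$ while $l+\hat\gamma-|\gamma_j| = l+|\alpha|$ for $j\ge 1$. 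There is no genuine obstacle here: the only delicate point is the bookkeeping of the $l$-slot when the induction hypothesis is fed into the $k=1$ estimate, and the use of the triangle inequality $|\gamma'|\le\hat\gamma'$ to absorb the extra loss caused by the outermost commutator with $A_k$.
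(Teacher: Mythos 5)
Your proof is correct and follows exactly the route the paper intends: the authors state the lemma as "a direct consequence of Lemma \ref{lem:product}" without writing out the induction, and your argument supplies precisely those details, with the bookkeeping of the $l$-slots and the factor $2^k$ handled correctly.
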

For some $\Xi$ and $G$ it may be possible to improve this lemma and show that $\ad(A,B)\in\BS^\gamma$ holds with $\gamma<\alpha+\beta$ for all $A\in\BS^{\alpha}$, $B\in\BS^{\beta}$.
This will be discussed in Section \ref{sec:sgt}.

The following proposition provides bounds on norms of operators restricted to
`annuli' in $\Xi$.
\begin{prop} \label{prop:normorder}
  For $1 \le m \le M \le \infty$, let $\Upsilon \subseteq \set{\xi \in \Xi : m \le \ang \xi \le M}$ and denote by 
  $P_{\Upsilon}$ the orthogonal projection in $\ell^2(\Xi)$ onto the closure of $\spann\set{\be_\xi : \xi \in \Upsilon}$.
  Then, for any $A \in \BS^\gamma$ with $\gamma \ge 0$, the norm inequality
  \begin{equation}
  \label{eq:upslarge}
    \norm{A P_\Upsilon}_{\ell^2\to\ell^2} \le   M^\gamma\snorm{A}{\gamma}{0}
  \end{equation}
  holds. For any $A \in \BS^\gamma$ with $\gamma \le 0$, we get the inequality
  \begin{equation}
  \norm{A P_\Upsilon}_{\ell^2\to\ell^2} \le m^\gamma\snorm{A}{\gamma}{0}.
  \end{equation}
\end{prop}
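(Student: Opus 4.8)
The plan is to estimate $\norm{AP_\Upsilon x}_{\ell^2}$ directly for $x \in \spann(\CE)$ and then invoke density. Write $x = \sum_{\xi \in \Upsilon} x_\xi \be_\xi$ (only finitely many nonzero), so that, using the action formula \eqref{eq:defopB},
\begin{equation*}
  AP_\Upsilon x = \sum_{\xi \in \Upsilon} x_\xi \sum_{\theta \in \Theta} a_\theta(\xi)\, \be_{\theta \act \xi}.
\end{equation*}
The key point is that for fixed $\theta$ the map $\xi \mapsto \theta \act \xi$ is injective (the action is free, but even injectivity of a single group element suffices), so the vectors $\set{\be_{\theta \act \xi} : \xi \in \Upsilon}$ are orthonormal for each fixed $\theta$. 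Hence by the triangle inequality in $\ell^2$, splitting over $\theta$ first,
\begin{equation*}
  \norm{AP_\Upsilon x}_{\ell^2} \le \sum_{\theta \in \Theta} \norm{\sum_{\xi \in \Upsilon} x_\xi a_\theta(\xi) \be_{\theta \act \xi}}_{\ell^2} = \sum_{\theta \in \Theta} \left( \sum_{\xi \in \Upsilon} \abs{a_\theta(\xi)}^2 \abs{x_\xi}^2 \right)^{1/2}.
\end{equation*}

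Now I would bound each inner sum using the sup-bound $\abs{a_\theta(\xi)} \le \ang{\xi}^\gamma \sup_{\zeta \in \Xi}\big(\ang\zeta^{-\gamma}\abs{a_\theta(\zeta)}\big)$ together with the constraint $m \le \ang\xi \le M$ on $\Upsilon$. When $\gamma \ge 0$ we have $\ang\xi^\gamma \le M^\gamma$ on $\Upsilon$, so
\begin{equation*}
  \left( \sum_{\xi \in \Upsilon} \abs{a_\theta(\xi)}^2 \abs{x_\xi}^2 \right)^{1/2} \le M^\gamma \sup_{\zeta \in \Xi}\big(\ang\zeta^{-\gamma}\abs{a_\theta(\zeta)}\big) \left( \sum_{\xi \in \Upsilon} \abs{x_\xi}^2 \right)^{1/2} = M^\gamma \sup_{\zeta \in \Xi}\big(\ang\zeta^{-\gamma}\abs{a_\theta(\zeta)}\big) \norm{x}_{\ell^2}.
\end{equation*}
Summing over $\theta \in \Theta$ and recognising $\sum_{\theta} \sup_\zeta\big(\ang\zeta^{-\gamma}\abs{a_\theta(\zeta)}\big) = \snorm{A}{\gamma}{0}$ (this is \eqref{eq:defsnorm} with $l = 0$) gives $\norm{AP_\Upsilon x}_{\ell^2} \le M^\gamma \snorm{A}{\gamma}{0} \norm{x}_{\ell^2}$, which is \eqref{eq:upslarge} after extending by density of $\spann\CE$ in $\ell^2$. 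For $\gamma \le 0$ the identical argument applies, except that on $\Upsilon$ one now has $\ang\xi^\gamma \le m^\gamma$, yielding the second inequality.

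I do not expect a serious obstacle here; the only subtlety is the orthogonality observation that lets one pull the $\ell^2$ norm inside the $\theta$-sum cleanly, and for that one must note that distinct $\xi$ give distinct $\theta \act \xi$ for each fixed $\theta$ — which is immediate from freeness of the action (or just injectivity of translation by $\theta$). Everything else is Minkowski's inequality plus the defining sup-bound on symbol coefficients. Note that no smoothness in $\theta$ is used, which is why the weight $\ang\theta^l$ appears only with $l = 0$; this is precisely the Calderón--Vaillancourt–type phenomenon advertised in the introduction, where symbol norms control norms of the individual Paley--Wiener summands.
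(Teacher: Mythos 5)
Your proof is correct. It does, however, take a more hands-on route than the paper. The paper observes that $P_\Upsilon$ is itself a diagonal operator of class $\BD\BS^{-\gamma}$ with $\snorm{P_\Upsilon}{-\gamma}{l}=\sup_{\xi\in\Upsilon}\ang\xi^\gamma$, which is at most $M^\gamma$ or $m^\gamma$ according to the sign of $\gamma$, and then simply invokes Corollary \ref{cor:domain} together with the composition estimate of Lemma \ref{lem:product} to get $\norm{AP_\Upsilon}\le\snorm{A}{\gamma}{0}\snorm{P_\Upsilon}{-\gamma}{\abs\gamma}$ (the degenerate case $M=\infty$, $\gamma>0$ being trivial). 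You instead redo the estimate from scratch: expand $AP_\Upsilon x$ via \eqref{eq:defopB}, use that for fixed $\theta$ the vectors $\be_{\theta\act\xi}$ are orthonormal (any group element acts bijectively, so freeness is not even needed, as you note) to get an exact Pythagorean identity for each summand, and then apply Minkowski over $\theta$ and the pointwise bound $\abs{a_\theta(\xi)}\le\ang\xi^\gamma\sup_\zeta\big(\ang\zeta^{-\gamma}\abs{a_\theta(\zeta)}\big)$ on $\Upsilon$. The two arguments are morally identical --- both reduce to bounding $\ang\xi^\gamma$ on $\Upsilon$ and summing the $l=0$ symbol norm over $\Theta$ --- but yours is self-contained and slightly sharper in mechanism (exact orthogonality for fixed $\theta$ rather than the Cauchy--Schwarz pairing hidden in Lemma \ref{lem:domain}), while the paper's is shorter given that the symbol calculus has already been set up. No gaps.
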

\begin{proof}
  Observe that $P_\Upsilon$ is a quasi-periodic operator with a frequency set $\Theta=\{\id\}$ and the symbol $(p_\Upsilon)_{\id} =\bone_{\Upsilon}$ (the indicator function of $\Upsilon$). Thus, for all $\gamma\in\mathbb{R}$ and $l\geq0$, 
  \begin{equation}
  \label{eq:upscases}
  \begin{aligned}
   \snorm{P_\Upsilon}{-\gamma}{l} &= \sup_{\xi \in \Upsilon} \ang \xi^\gamma \leq \begin{cases}
      \hfill m^\gamma &\text{if } \gamma \le 0, \\
       \hfill M^\gamma &\text{if } \gamma \ge 0.
      \end{cases}
   \end{aligned}
  \end{equation}
If $M < \infty$ or $\gamma \le 0$, then Corollary \ref{cor:domain} and Lemma \ref{lem:product} imply the bound
\begin{equation}
 \norm{A P_\Upsilon}_{\ell^2\to\ell^2} \le \snorm{A}{\gamma}{0} \snorm{P_\Upsilon}{-\gamma}{\abs{\gamma}},
\end{equation}
and the statement of the lemma follows from \eqref{eq:upscases}. On the other hand, the inequality \eqref{eq:upslarge} is trivial for $M = \infty$ and $\gamma > 0$.
\end{proof}
\section{Elliptic and diagonal operators}\label{sec:perturbation}

In this section, we introduce particular classes of operators from $\BS^\infty$
and study their properties. Some of these classes do depend on the specific
choice of orthonormal basis $\CE$ for $\ell^2(\Xi)$. However, the class of
operators on which our main theorems depend, that of elliptic operators, is
invariant under change of basis.
\begin{defi}\label{def:dapo}
  The subalgebra $\BD\BS^\infty \subset \BS^\infty$ of \emph{diagonal operators} is defined as
\begin{equation}
  \BD\BS^\infty := \set{A = \Op(a) \in \BS^\infty:\set{\id} \text{ is a frequency set
  for } a}.
  \label{eq:dapo}
\end{equation}
For symbols of operators from $\BD\BS^\infty$ we can suppress the subscript $\id$, i.e. we let $a(\xi) :=a_{\id}(\xi)$ for all $A =\Op(a)\in\BD\BS^\infty$, $\xi\in\Xi$.
For $\alpha \in \R\cup \set{-\infty}$ we define $\BD\BS^\alpha := \BD\BS^\infty \cap \BS^\alpha$. Introduce the map $\CD
: \BS^\infty \to \BD\BS^\infty$, $A\mapsto A^\CD$, that projects $A=\Op(a)$ onto its diagonal
part $A^\CD:= \Op(a^\CD)$ where
\begin{equation}
a^\CD(\xi) := a_{\id}(\xi),
\end{equation}
i.e.
\begin{equation}
 A^\CD\be_{\xi} =\langle\be_{\xi}, A\be_{\xi}\rangle \be_{\xi}
\end{equation}
holds for all $\xi \in\Xi$.
We also define the \emph{off-diagonal part} as $A^{\CO\CD} := \Op(a^{\CO\CD})$ with $a^{\CO\CD}:=a-a^\CD$.
\end{defi}

Note that for any $A\in\BS^\alpha$ with $\alpha\in\mathbb R$ and all $l\geq 0$,
\begin{equation}
  \label{eq:partition}
\snorm{A^{\CD}}{\alpha}{l}+\snorm{A^{\COD}}{\alpha}{l}=\snorm{A}{\alpha}{l}
\end{equation}
and
\begin{equation}\label{eq:diagonal l irrelevant}
 \snorm{A^\CD}{\alpha}{l} =\snorm{A^\CD}{\alpha}{0}.
\end{equation}

\begin{defi}\label{def:DES}
The set $\BD\BE\BS^\alpha$ of \emph{diagonal  elliptic operators of order
$\alpha\in\R$} is defined as the set of operators $A = \Op(a) \in \BD\BS^\alpha$ for which
there exist \emph{ellipticity parameters} $\kappa> 0$ and $\er\ge 1$ such that
\begin{equation}
  \label{eq:ellconst}
|a(\xi)|\geq \kappa\ang{\xi}^\alpha \qquad \text{ for all  $\xi\in\Xi$
 such that  $\ang{\xi}\geq 
\er$}.
\end{equation}
Let the set of ellipticity parameters $(\kappa, \er)$ of $A$ be denoted by
$\FE(A)$. Note that $(\kappa, \er)\in \FE(A)$ implies $(\tilde \kappa,\widetilde
\er)\in \FE(A)$ for all $0 <\tilde \kappa\leq \kappa$, and $\widetilde \er \geq \!
\er$.
\end{defi}

\begin{defi}\label{def:ES}
The set $\BS\BE\BS^\alpha$ of \emph{strongly elliptic operators of order $\alpha\in\mathbb{R}$} consists of operators $A\in\BS^\alpha$ such that $A^\CD \in \BD\BE\BS^\alpha$ and $A^{\CO\CD} \in
\BS^\gamma$ for some $\gamma < \alpha$.
For $(\kappa,\er)\in \FE(A^\CD)$ we define $P_\er$ as the diagonal operator with symbol
$\bone_{\{\xi: \ang{\xi}\leq
\er\}}$. We also define $P_\er^c$ as $\Id - P_\er$, and
\begin{equation}\label{eq:tilde A}
  \tilde A_{\kappa,\er} := A^\CD P_\er^c + \kappa \er^\alpha P_\er.
\end{equation}
\end{defi}
\begin{defi}\label{def:WES}
The set $\BE\BS^\alpha$ of \emph{elliptic operators of order $\alpha\in\R$}
consists of operators $A\in\BS^\alpha$ for which there exists a unitary
$U\in\BS^0$ with $U A U^\dagger\in\BS\BE\BS^\alpha$.
\end{defi}
As we did with diagonal operators, we set
\begin{equation}
\BT^\infty:=\bigcup\limits_{\gamma\in\mathbb{R}}\BT^\gamma, \ \BT^{-\infty}:=\bigcap\limits_{\gamma\in\mathbb{R}}\BT^\gamma, \ \text{for} \ \BT\in \lbrace \BD\BE\BS, \BS\BE\BS, \BE\BS \rbrace.
\end{equation}
 Clearly, both $\BS\BE\BS^\alpha$ and $\BE\BS^\alpha$ are closed
under addition of operators in $\BS^\beta$, $\beta < \alpha$. 
\begin{prop}
  \label{prop:globell}
Let $A\in\BS^\infty$ and $\alpha >0$ such that $A^{\CD}\in\BD\BE\BS^\alpha$. For any $(\kappa, \er)\in\FE(A^\CD)$ the
  operator $\tilde A_{\kappa,\er}$ is invertible with $\tilde
  A_{\kappa,\er}^{-1}\in\BD\BS^{-\alpha}$ and for all $l \ge 0$ we have 
\begin{equation}
  \label{eq:globinvsnorm}
  \snorm{\tilde A_{\kappa,\er}^{-1}}{\gamma}{l} =\snorm{\tilde
  A_{\kappa,\er}^{-1}}{\gamma}{0} \leq \kappa^{-1}
  \begin{cases}
    \er^{-\alpha}
  &\text{ for } \gamma \geq 0,\\ \er^{-\alpha -\gamma} &\text{for }-\alpha \leq
\gamma < 0.\end{cases}
\end{equation}
Moreover, the following estimates hold for all $\gamma \in\R$, $l\geq 0$:
    \begin{equation}\label{eq:AA_R -Id}
      \snorm{A \tilde A_{\kappa,\er}^{-1} - \Id}{\gamma - \alpha}{l} \le 
      \er^{\alpha -\gamma} + \frac 1 \kappa \left(
         \er^{\alpha - \gamma} \snorm{A^\CD}{\alpha}{0} +
        \snorm{A^{\CO\CD}}{\gamma}{l}
      \right)
    \end{equation}
    and
    \begin{equation}\label{eq:A_RA -Id}
      \snorm{\tilde A_{\kappa,\er}^{-1} A - \Id}{\gamma - \alpha}{l} \le 
       \er^{\alpha -\gamma} + \frac 1 \kappa \left(
         \er^{\alpha - \gamma} \snorm{A^\CD}{\alpha}{0} +
        \snorm{A^{\CO\CD}}{\gamma}{l+\alpha}
      \right).
    \end{equation}
  \end{prop}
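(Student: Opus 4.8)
\textbf{Proof plan for Proposition \ref{prop:globell}.}

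The plan is to exploit the fact that $\tilde A_{\kappa,\er} = A^\CD P_\er^c + \kappa\er^\alpha P_\er$ is a diagonal operator with an explicit, bounded-below symbol. First I would write down its symbol: for $\ang\xi \le \er$ it equals $\kappa\er^\alpha$, and for $\ang\xi > \er$ it equals $a^\CD(\xi)$, which by the ellipticity condition \eqref{eq:ellconst} satisfies $|a^\CD(\xi)| \ge \kappa\ang\xi^\alpha$. In both cases the symbol is bounded below in absolute value by $\kappa\min\{\er^\alpha,\ang\xi^\alpha\}$, which since $\alpha>0$ is $\kappa\er^\alpha$ on the inner region and grows like $\kappa\ang\xi^\alpha$ on the outer region. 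Hence the pointwise reciprocal $(\tilde A_{\kappa,\er})^{-1}$ is a well-defined diagonal symbol with $|(\tilde A_{\kappa,\er}^{-1})(\xi)| \le \kappa^{-1}\er^{-\alpha}$ when $\ang\xi\le\er$ and $\le \kappa^{-1}\ang\xi^{-\alpha}$ when $\ang\xi>\er$; in either case it is $\le \kappa^{-1}\ang\xi^{-\alpha}$ for $\ang\xi\ge\er\ge 1$ and $\le\kappa^{-1}\er^{-\alpha}$ always. Estimating $\snorm{\tilde A_{\kappa,\er}^{-1}}{\gamma}{0}$ by \eqref{eq:defsnorm} then amounts to bounding $\sup_\xi \ang\xi^{-\gamma}|(\tilde A_{\kappa,\er}^{-1})(\xi)|$: for $\gamma\ge 0$ the factor $\ang\xi^{-\gamma}\le 1$ so we get $\kappa^{-1}\er^{-\alpha}$; for $-\alpha\le\gamma<0$ we split into $\ang\xi\le\er$, where $\ang\xi^{-\gamma}\le\er^{-\gamma}$ gives $\kappa^{-1}\er^{-\alpha-\gamma}$, and $\ang\xi>\er$, where $\ang\xi^{-\gamma-\alpha}\le\er^{-\gamma-\alpha}$ since $-\gamma-\alpha\le 0$, again giving $\kappa^{-1}\er^{-\alpha-\gamma}$. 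The $l$-independence in \eqref{eq:globinvsnorm} is just \eqref{eq:diagonal l irrelevant}, since $\tilde A_{\kappa,\er}^{-1}$ is diagonal. This establishes invertibility and the first claim.

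For the remaining two estimates, I would write $A = A^\CD + A^{\CO\CD}$ and compute
\begin{equation*}
  A\tilde A_{\kappa,\er}^{-1} - \Id = \big(A^\CD \tilde A_{\kappa,\er}^{-1} - \Id\big) + A^{\CO\CD}\tilde A_{\kappa,\er}^{-1}.
\end{equation*}
The second term is handled by Lemma \ref{lem:product}: $\snorm{A^{\CO\CD}\tilde A_{\kappa,\er}^{-1}}{\gamma-\alpha}{l} \le \snorm{A^{\CO\CD}}{\gamma}{l}\snorm{\tilde A_{\kappa,\er}^{-1}}{-\alpha}{l+|\gamma|}$, and the latter factor is $\le\kappa^{-1}\er^{0}=\kappa^{-1}$ by \eqref{eq:globinvsnorm} with exponent $-\alpha$ (the case $\gamma\ge 0$ there, reading off $\er^{-\alpha}$ with $\alpha$ replaced appropriately — more precisely $\snorm{\tilde A_{\kappa,\er}^{-1}}{-\alpha}{0}\le\kappa^{-1}$). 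This yields the $\frac1\kappa\snorm{A^{\CO\CD}}{\gamma}{l}$ contribution. The diagonal term $A^\CD\tilde A_{\kappa,\er}^{-1} - \Id$ is itself diagonal with symbol vanishing for $\ang\xi>\er$ (there $a^\CD(\xi)\cdot a^\CD(\xi)^{-1}=1$) and equal to $a^\CD(\xi)/(\kappa\er^\alpha) - 1$ for $\ang\xi\le\er$; its $\snorm{\cdot}{\gamma-\alpha}{l}$ norm is $\sup_{\ang\xi\le\er}\ang\xi^{\alpha-\gamma}|a^\CD(\xi)(\kappa\er^\alpha)^{-1}-1|$, which by the triangle inequality is at most $\er^{\alpha-\gamma}(1 + \kappa^{-1}\er^{-\alpha}\snorm{A^\CD}{\alpha}{0})$ after using $|a^\CD(\xi)|\le\snorm{A^\CD}{\alpha}{0}\ang\xi^\alpha\le\snorm{A^\CD}{\alpha}{0}\er^\alpha$ on this region and $\ang\xi^{\alpha-\gamma}\le\er^{\alpha-\gamma}$ when $\alpha-\gamma$ — one has to be slightly careful about the sign of $\alpha-\gamma$, handling $\ang\xi^{\alpha-\gamma}$ against $\er$ or against $1$ as appropriate, but either way it is absorbed into the stated $\er^{\alpha-\gamma}$ and $\frac1\kappa\er^{\alpha-\gamma}\snorm{A^\CD}{\alpha}{0}$ terms. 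Adding the two contributions gives \eqref{eq:AA_R -Id}. The estimate \eqref{eq:A_RA -Id} is entirely symmetric, with the only change that $\tilde A_{\kappa,\er}^{-1} A^{\CO\CD}$ forces the loss $\snorm{\tilde A_{\kappa,\er}^{-1}}{-\alpha}{\cdot}\snorm{A^{\CO\CD}}{\gamma}{\cdot+\alpha}$ — the extra $+\alpha$ in the regularity index comes from the order-$|-\alpha|=\alpha$ factor in \eqref{eq:product} now sitting on the left — producing $\snorm{A^{\CO\CD}}{\gamma}{l+\alpha}$ instead of $\snorm{A^{\CO\CD}}{\gamma}{l}$.

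I do not expect a genuine obstacle here: everything reduces to the monotonicity of $\ang\cdot^s$ in its exponent's sign and to Lemma \ref{lem:product}. The one point requiring care is bookkeeping the exponents of $\er$ and the regularity indices $l$ versus $l+\alpha$ across the product estimate so that the asymmetry between \eqref{eq:AA_R -Id} and \eqref{eq:A_RA -Id} comes out correctly; this is a routine but slightly fiddly tracking exercise rather than a conceptual difficulty.
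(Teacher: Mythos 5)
Your proposal is correct and follows essentially the same route as the paper: the paper likewise bounds the diagonal symbol of $\tilde A_{\kappa,\er}$ from below by $\kappa\ang\xi^\alpha\bone_{\ang\xi>\er}+\kappa\er^\alpha\bone_{\ang\xi\le\er}$ to get \eqref{eq:globinvsnorm}, and then writes $A\tilde A_{\kappa,\er}^{-1}-\Id=-P_\er+\kappa^{-1}\er^{-\alpha}A^\CD P_\er+A^{\CO\CD}\tilde A_{\kappa,\er}^{-1}$ (and its mirror image), applying Lemma \ref{lem:product} term-wise — which is exactly your decomposition, since your diagonal piece $A^\CD\tilde A_{\kappa,\er}^{-1}-\Id$ equals $-P_\er+\kappa^{-1}\er^{-\alpha}A^\CD P_\er$. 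Your identification of where the $l$ versus $l+\alpha$ asymmetry between \eqref{eq:AA_R -Id} and \eqref{eq:A_RA -Id} comes from is also the same as in the paper.
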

    
\begin{proof}
  We have that $\tilde A_{\kappa,\er} - A^{\CD} = (\kappa \er^\alpha - A^{\CD}) P_\er$, and
since $P_\er \in \BS^{-\infty}$,
which is an ideal of $\BS^\infty$, we observe that $\tilde A_{\kappa,\er} \equiv
A^{\CD}
\mod{\BS^{-\infty}}$. By \eqref{eq:ellconst} and \eqref{eq:tilde A},
$\tilde A_{\kappa,\er}\in\BD\BS^\infty$ and its symbol satisfies
\begin{equation}
  \abs{\tilde a_{\kappa, \er}(\xi)} = \abs{a_{\id}(\xi)}\bone_{\ang \xi > \er} +
  \kappa \er^\alpha\bone_{\ang \xi \le \er} \ge \kappa \ang
  \xi^\alpha\bone_{\ang \xi > \er} + \kappa \er^\alpha\bone_{\ang \xi \le \er}
\end{equation}
for all $\xi \in \Xi$. Hence $\tilde A_{\kappa,\er}^{-1} =\Op\big(\tilde
a_{\kappa,\er}^{-1}\big)\in\BD\BS^{-\alpha}$ and \eqref{eq:globinvsnorm} holds.

The estimates \eqref{eq:AA_R -Id} and \eqref{eq:A_RA -Id} follow by applying \eqref{eq:product} term-wise to the right hand sides of the identities
\begin{equation*}\begin{split}
 A \tilde A_{\kappa,\er}^{-1} - \Id &= -P_\er
 +\kappa^{-1}\er^{-\alpha}A^{\CD}P_\er +A^{\CO\CD}\tilde A_{\kappa,\er}^{-1},\\
 \tilde A_{\kappa,\er}^{-1}A  - \Id &= -P_\er
 +\kappa^{-1}\er^{-\alpha}A^{\CD}P_\er +\tilde A_{\kappa,\er}^{-1}A^{\CO\CD}
\end{split}\end{equation*}
and taking \eqref{eq:diagonal l irrelevant} into account.
\end{proof}

\begin{lem}\label{lem:relative boundedness}
Let $\beta\in\mathbb R$, $\alpha>\max(\beta,0)$, $0 <\gamma < \alpha$, and assume that $A \in \BS\BE\BS^\alpha$ with $A^{\CO\CD}\in \BS^\gamma$ and $B \in \BS^\beta$.
Then for $\beta\le 0$ the operator $B$ is bounded, and, in the case of $\alpha
>\beta >0$, for every $x\in\mathsf H^\alpha$ and $$(\kappa,
\er)\in\FE(A^\CD)\cap \Big\{\er
\geq\big(\|A^{\CO\CD}\|_0^{(\gamma)}/\kappa\big)^{1/(\alpha
-\gamma)}\Big\},$$ we have
\begin{equation}\label{eq:infinitesimal estimate}
 \|Bx\| \le \frac{\er^{\beta -\alpha}\|B\|_0^{(\beta)}}{\kappa -\er^{\gamma
 -\alpha}\|A^{\CO\CD}\|_0^{(\gamma)}}\Big(\|Ax\| +\kappa \er^{\alpha}\big(1 +\kappa^{-1}\|A^\CD\|_0^{(\alpha)}\big)\|x\|\Big). 
\end{equation}
In particular, $B$ is infinitesimally $A$-bounded in $\ell^2(\Xi)$.
\end{lem}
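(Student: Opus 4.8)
\textbf{Proof plan for Lemma \ref{lem:relative boundedness}.}

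The plan is to reduce the estimate for a general $B\in\BS^\beta$ to control of $\|Ax\|$ versus $\|x\|_{\RH^\alpha}$, using the parametrix for the diagonal part constructed in Proposition \ref{prop:globell}. First, the case $\beta\le 0$ is immediate: by Corollary \ref{cor:domain} (applied after noting $\BS^\beta\subseteq\BS^0$ for $\beta\le 0$, with $\snorm{B}{0}{0}\le\snorm{B}{\beta}{0}$ since $\ang\xi\ge 1$) the operator $B$ is bounded on $\ell^2(\Xi)$, hence trivially infinitesimally $A$-bounded. So assume $\alpha>\beta>0$. The key point is that $\|Bx\|\le \snorm{B}{\beta}{\abs\beta}\,\|x\|_{\RH^\beta}$ by Lemma \ref{lem:domain}, and since $\beta<\alpha$ we have $\|x\|_{\RH^\beta}\le \er^{\beta-\alpha}\|x\|_{\RH^\alpha} + (\text{bounded correction on the ball }\ang\xi\le\er)$; more precisely, splitting $x = P_\er x + P_\er^c x$ one gets $\|x\|_{\RH^\beta}^2 \le \er^{2\beta}\|P_\er x\|^2 + \er^{2(\beta-\alpha)}\|x\|_{\RH^\alpha}^2$, so it suffices to bound $\|x\|_{\RH^\alpha}$ in terms of $\|Ax\|$ and $\|x\|$.

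To do that, I would use the left parametrix: with $\tilde A := \tilde A_{\kappa,\er}$ as in \eqref{eq:tilde A}, Proposition \ref{prop:globell} gives $\tilde A^{-1}\in\BD\BS^{-\alpha}$ with $\|\tilde A^{-1}\|_{\RH^{\beta-\alpha}\to\RH^\beta}$ controlled by $\kappa^{-1}\er^{?}$, and the identity
\begin{equation*}
  \tilde A^{-1} A = \Id + \big(\tilde A^{-1} A^{\CD} - \Id\big) + \tilde A^{-1} A^{\CO\CD}
\end{equation*}
from the proof of that proposition. On the range $\ang\xi>\er$ the symbol of $\tilde A^{-1}A^\CD$ is identically $1$, so $\tilde A^{-1}A^\CD - \Id$ is supported in $\ang\xi\le\er$ and contributes only to the $\er^{\alpha}$-type correction, while $\tilde A^{-1}A^{\CO\CD}\in\BS^{\gamma-\alpha}$ has operator norm on $\ell^2$ bounded by $\kappa^{-1}\er^{\gamma-\alpha}\snorm{A^{\CO\CD}}{\gamma}{0}$ by Proposition \ref{prop:normorder} (or Lemma \ref{lem:product} plus \eqref{eq:globinvsnorm}). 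Here the hypothesis $\er\ge(\snorm{A^{\CO\CD}}{\gamma}{0}/\kappa)^{1/(\alpha-\gamma)}$ guarantees this norm is $\le 1$, so a Neumann-type absorption argument applies: from $x = \tilde A^{-1}Ax - (\tilde A^{-1}A^\CD-\Id)x - \tilde A^{-1}A^{\CO\CD}x$ one extracts
\begin{equation*}
  \|x\|_{\RH^\alpha} \le \frac{1}{1 - \kappa^{-1}\er^{\gamma-\alpha}\snorm{A^{\CO\CD}}{\gamma}{0}}\Big( \|\tilde A^{-1}Ax\|_{\RH^\alpha} + \|(\tilde A^{-1}A^\CD-\Id)x\|_{\RH^\alpha}\Big),
\end{equation*}
and the two terms on the right are bounded by $\kappa^{-1}\er^{-\alpha}\|Ax\|$ (wait --- one must be careful: $\|\tilde A^{-1}Ax\|_{\RH^\alpha}\le\|\tilde A^{-1}\|_{\ell^2\to\RH^\alpha}\|Ax\|$, and $\|\tilde A^{-1}\|_{\RH^0\to\RH^\alpha}$ is $O(\kappa^{-1})$ by the $\gamma=0$ case of \eqref{eq:globinvsnorm} read in $\RH^0\to\RH^\alpha$) and by $\er^\alpha(1+\kappa^{-1}\snorm{A^\CD}{\alpha}{0})\|x\|$ respectively, the latter because the symbol of $\tilde A^{-1}A^\CD-\Id$ is supported in $\ang\xi\le\er$ and bounded there by $1 + \kappa^{-1}\er^{-\alpha}|a_{\id}(\xi)|\le 1+\kappa^{-1}\snorm{A^\CD}{\alpha}{0}\er^0$, contributing at most $\er^\alpha(1+\kappa^{-1}\snorm{A^\CD}{\alpha}{0})$ to the $\RH^0\to\RH^\alpha$ norm. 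Combining these with the $\RH^\beta$-splitting of the first paragraph and collecting the powers of $\er$ yields exactly \eqref{eq:infinitesimal estimate}.

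Finally, infinitesimal $A$-boundedness follows by inspection of the resulting inequality: as $\er\to\infty$ along admissible ellipticity parameters (keeping $\kappa$ fixed, which is allowed since $(\kappa,\er)\in\FE(A^\CD)$ for all large $\er$), the prefactor $\er^{\beta-\alpha}\|B\|_0^{(\beta)}/(\kappa-\er^{\gamma-\alpha}\|A^{\CO\CD}\|_0^{(\gamma)})$ multiplying $\|Ax\|$ tends to $0$ because $\beta<\alpha$, while the coefficient of $\|x\|$, though growing like $\er^\beta$, is merely a (large) constant for each fixed $\er$; this is precisely the statement that $B$ is $A$-bounded with relative bound $0$. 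The main obstacle, and the part requiring care rather than cleverness, is bookkeeping the exponents of $\er$ through the three compositions ($\tilde A^{-1}A$, the $\RH^\beta\to\RH^\alpha$ interpolation split, and the Neumann absorption) so that the final constants match \eqref{eq:infinitesimal estimate} on the nose; the conceptual content is entirely in Proposition \ref{prop:globell} and the observation that the ellipticity lower bound on $\er$ makes the off-diagonal remainder a contraction.
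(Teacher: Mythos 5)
Your plan rests on the same two core ingredients as the paper's proof --- the regularised inverse $\tilde A_{\kappa,\er}$ of Proposition \ref{prop:globell}, with $\tilde A_{\kappa,\er}^{-1}A^{\CD}-\Id$ supported on $\ang\xi\le\er$, and absorption of the off-diagonal part using the lower bound on $\er$ --- but the mechanics differ. The paper never passes through $\|x\|_{\RH^\alpha}$: it bounds $\|Bx\|\le\|B\tilde A_{\kappa,\er}^{-1}\|\,\|A^\CD x\|+\|B(\tilde A_{\kappa,\er}^{-1}A^\CD-\Id)\|\,\|x\|$ directly via operator norms on $\ell^2$, which first yields \eqref{eq:infinitesimal estimate} with $A^\CD$ in place of $A$; it then applies that same preliminary estimate to $B=A^{\CO\CD}$ and uses $\|Ax\|\ge\|A^\CD x\|-\|A^{\CO\CD}x\|$ to convert $\|A^\CD x\|$ into $\|Ax\|$, which is exactly where the denominator $\kappa-\er^{\gamma-\alpha}\|A^{\CO\CD}\|_0^{(\gamma)}$ comes from. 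Your route (Sobolev splitting of $\|x\|_{\RH^\beta}$ plus a Neumann-type absorption for $\|x\|_{\RH^\alpha}$ on the parametrix identity) is viable, but it buys nothing extra here and makes the constant-tracking harder.

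One step as written does not follow from what you cite. The bound $\|\tilde A_{\kappa,\er}^{-1}A^{\CO\CD}\|_{\ell^2\to\ell^2}\le\kappa^{-1}\er^{\gamma-\alpha}\snorm{A^{\CO\CD}}{\gamma}{0}$ is not a consequence of Proposition \ref{prop:normorder} or of Lemma \ref{lem:product} combined with \eqref{eq:globinvsnorm}: here $A^{\CO\CD}$ is not composed with a projection onto $\set{\ang\xi>\er}$, so there is no source for the factor $\er^{\gamma-\alpha}$ (the product estimate only gives $\kappa^{-1}\snorm{A^{\CO\CD}}{\gamma}{\alpha}$). Moreover, to absorb the term $\tilde A_{\kappa,\er}^{-1}A^{\CO\CD}x$ in your identity measured in $\RH^\alpha$, what you actually need is the $\RH^\alpha\to\RH^\alpha$ norm, not the $\ell^2$ norm. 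The repair uses the same frequency splitting you already apply to $\|x\|_{\RH^\beta}$: since $\|\tilde A_{\kappa,\er}^{-1}\|_{\RH^0\to\RH^\alpha}\le\kappa^{-1}$, one has $\|\tilde A_{\kappa,\er}^{-1}A^{\CO\CD}x\|_{\RH^\alpha}\le\kappa^{-1}\snorm{A^{\CO\CD}}{\gamma}{0}\|x\|_{\RH^\gamma}\le\kappa^{-1}\snorm{A^{\CO\CD}}{\gamma}{0}\big(\er^{\gamma-\alpha}\|x\|_{\RH^\alpha}+\er^\gamma\|x\|\big)$; the first term is absorbed thanks to the hypothesis on $\er$, and the second is a harmless addition to the coefficient of $\|x\|$. (A minor further point: Lemma \ref{lem:domain} gives $\|Bx\|\le\snorm{B}{\beta}{0}\|x\|_{\RH^\beta}$, not $\snorm{B}{\beta}{\abs\beta}$, and you need the former to land on the constant $\|B\|_0^{(\beta)}$ in \eqref{eq:infinitesimal estimate}.) With these corrections the argument closes, though matching the constants exactly is easier along the paper's route.
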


\begin{proof}
The only non-trivial case is $\alpha >\beta >0$. 
For every $x\in\mathsf H^\alpha$ we have
\begin{equation*}
 \|Bx\| \le \big\|B\tilde A_{\kappa,\er}^{-1}\big\|\norm{A^\CD x}
 +\big\|B(\tilde A_{\kappa,\er}^{-1}A^\CD -\Id)\big\|\norm{x},
\end{equation*}
with $\tilde A_{\kappa,\er}$ defined as in \eqref{eq:tilde A}. 
Corollary \ref{cor:domain} and displays \eqref{eq:product},
\eqref{eq:globinvsnorm} and \eqref{eq:A_RA -Id} imply the estimates
\begin{align*}
 \big\|B(\tilde A_{\kappa,\er}^{-1}A^\CD -\Id)\big\| &\le
 \|B\|_0^{(\beta)}\|\tilde A_{\kappa,\er}^{-1}A^\CD
 -\Id\|_{|\beta|}^{(-\beta)}\\
 &\le \er^\beta\|B\|_0^{(\beta)}\big(1 +\kappa^{-1}\|A^\CD\|_0^{(\alpha)}\big)
\end{align*}
and
\begin{align*}
 \big\|B\tilde A_{\kappa,\er}^{-1}\big\| &\le \|B\|_0^{(\beta)}\big\|\tilde 
 A_{\kappa,\er}^{-1}\big\|_0^{(-\beta)}\\ &\le \kappa^{-1}\|B\|_0^{(\beta)}\er^{\beta -\alpha},
\end{align*}
and we obtain
\begin{equation}\label{eq:prelim infinitesimal estimate}
 \|Bx\| \le \kappa^{-1}\er^{\beta -\alpha}\|B\|_0^{(\beta)}\Big(\|A^\CD x\|
 +\kappa \er^\alpha\big(1 +\kappa^{-1}\|A^\CD\|_0^{(\alpha)}\big)\|x\|\Big),
\end{equation}
which is \eqref{eq:infinitesimal estimate} with $A^\CD\in \BD\BE\BS^\alpha$
replacing $A$.
Applying \eqref{eq:prelim infinitesimal estimate} with $B=A^{\CO\CD}$, we arrive at
\begin{equation}
  \begin{aligned}
 \|A^{\CO\CD}x\| &\le \kappa^{-1}\er^{\gamma
 -\alpha}\|A^{\CO\CD}\|_0^{(\gamma)}\|A^\CD x\| \color{black} \\ &\qquad +\er^\gamma\|A^{\CO\CD}\|_0^{(\gamma)}\big(1 +\kappa^{-1}\|A^\CD\|_0^{(\alpha)}\big)\|x\|.
 \end{aligned}
\end{equation}
Hence we have
\begin{equation*}
\begin{split}
 \|Ax\| &\ge \|A^\CD x\| -\|A^{\CO\CD}x\|\\ &\ge \big(1 -\kappa^{-1}\er^{\gamma
 -\alpha}\|A^{\CO\CD}\|_0^{(\gamma)}\big)\|A^\CD x\| -\er^\gamma\|A^{\CO\CD}\|_0^{(\gamma)}\big(1 +\kappa^{-1}\|A^\CD\|_0^{(\alpha)}\big)\|x\|,
\end{split}
\end{equation*}
which implies
\begin{equation}\label{eq:A^CD control}
\begin{aligned}
 \|A^\CD x\| \le \big(1 -&\kappa^{-1}\er^{\gamma
 -\alpha}\|A^{\CO\CD}\|_0^{(\gamma)}\big)^{-1} \Big(\|A x\| +\er^\gamma\|A^{\CO\CD}\|_0^{(\gamma)}\big(1 +\kappa^{-1}\|A^\CD\|_0^{(\alpha)}\big)\|x\|\Big).
 \end{aligned}
\end{equation}
Substituting \eqref{eq:A^CD control} into \eqref{eq:prelim infinitesimal estimate} we obtain \eqref{eq:infinitesimal estimate}.
\end{proof}

We conclude the section with the following proposition. 
\begin{prop}\label{prop:closedness and self-adjointness}
For $\alpha \in\mathbb R$, every operator from $\BE\BS^\alpha$ is closed on
$\mathsf H^{\max\{\alpha, 0\}}$ in the Hilbert space $\ell^2(\Xi)$. Every symmetric
operator from $\BE\BS^\alpha$ defined on $\mathsf H^{\max\{\alpha, 0\}}$ is
self-adjoint.
\end{prop}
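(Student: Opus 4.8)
The plan is to reduce, via the unitary conjugation built into the definition of $\BE\BS^\alpha$, to a strongly elliptic operator, then to split that operator into its diagonal and off-diagonal parts and treat the latter as an infinitesimally small perturbation of the former.

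The case $\alpha\le 0$ is immediate: then $\mathsf H^{\max\{\alpha,0\}}=\ell^2$ and $\BE\BS^\alpha\subset\BS^\alpha\subset\BS^0$, so by Corollary \ref{cor:domain} the operator is bounded and everywhere defined, hence closed, and self-adjoint whenever symmetric. So assume $\alpha>0$, so that the domain is $\mathsf H^\alpha$, and pick a unitary $U\in\BS^0$ with $UAU^\dagger\in\BS\BE\BS^\alpha$. By Lemma \ref{lem:adjoint}, $U^\dagger\in\BS^0$, and since $U$ is unitary on $\ell^2=\mathsf H^0$ one has $U^\dagger=U^*=U^{-1}$ there; by Lemma \ref{lem:domain} both $U$ and $U^\dagger$ restrict to bounded operators on $\mathsf H^\alpha$, so $U$ is a Banach-space isomorphism of $\mathsf H^\alpha$ with inverse $U^\dagger$. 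Conjugating graphs in $\ell^2\times\ell^2$ by the homeomorphism $(x,y)\mapsto(U^\dagger x,U^\dagger y)$ shows $A$ is closed on $\mathsf H^\alpha$ iff $UAU^\dagger$ is; and since $\dagger$ is an involutive anti-automorphism, $(UAU^\dagger)^\dagger=UA^\dagger U^\dagger$, so (using Lemma \ref{lem:adjoint}) $A$ is symmetric iff $UAU^\dagger$ is. Hence it suffices to prove the statement for $A\in\BS\BE\BS^\alpha$.

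Write $A=A^\CD+A^\COD$ with $A^\CD\in\BD\BE\BS^\alpha$ and $A^\COD\in\BS^\gamma$, $\gamma<\alpha$; enlarging $\gamma$ if necessary we may assume $0<\gamma<\alpha$. The key step is to recognise $A^\CD$, with domain $\mathsf H^\alpha$, as the \emph{maximal} multiplication operator $x\mapsto\bigl(a_{\id}(\xi)x_\xi\bigr)_{\xi\in\Xi}$ on $\ell^2(\Xi)$. One direction follows from $|a_{\id}(\xi)|\le\snorm{A^\CD}{\alpha}{0}\ang\xi^\alpha$, which makes $A^\CD\colon\mathsf H^\alpha\to\ell^2$ bounded; for the other, if $(\kappa,\er)\in\FE(A^\CD)$, then $|a_{\id}(\xi)|\ge\kappa\ang\xi^\alpha$ whenever $\ang\xi\ge\er$, so for any $x\in\ell^2$ with $A^\CD x\in\ell^2$,
\begin{equation*}
  \|x\|_{\mathsf H^\alpha}^2\le\er^{2\alpha}\|x\|^2+\kappa^{-2}\|A^\CD x\|^2<\infty,
\end{equation*}
whence $x\in\mathsf H^\alpha$ and the graph norm of $A^\CD$ is equivalent to the $\mathsf H^\alpha$-norm. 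A maximal multiplication operator on $\ell^2$ is closed, and it is self-adjoint exactly when its multiplier is real-valued; since by \eqref{eq:adjoint symbol} the identity $A^\CD=(A^\CD)^\dagger$ is equivalent to $a_{\id}$ being real-valued, $A^\CD$ is closed on $\mathsf H^\alpha$, and self-adjoint there whenever $A$ (hence $A^\CD$) is symmetric.

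Finally, $A^\COD$ is treated as a perturbation of $A^\CD$. Since $A^\CD\in\BD\BE\BS^\alpha\subset\BS\BE\BS^\alpha$ and $A^\COD\in\BS^\gamma$ with $0<\gamma<\alpha$, Lemma \ref{lem:relative boundedness}, applied with $A^\CD$ in place of $A$ and $A^\COD$ in place of $B$, shows that $A^\COD$ is infinitesimally $A^\CD$-bounded, in particular $A^\CD$-bounded with relative bound strictly below $1$. Stability of closedness under relatively bounded perturbations of relative bound $<1$ then gives that $A=A^\CD+A^\COD$ is closed on $\dom(A^\CD)=\mathsf H^\alpha$; and if $A$ is symmetric, then so is $A^\COD=A-A^\CD$, and the Kato--Rellich theorem yields that $A$ is self-adjoint on $\mathsf H^\alpha$. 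The one point requiring care is precisely this perturbation step: it needs the off-diagonal part to have relative bound \emph{strictly} less than $1$, not merely to be relatively bounded, and this is exactly the infinitesimal bound provided by Lemma \ref{lem:relative boundedness}; the remaining bookkeeping — transferring closedness and symmetry through the unitary conjugation, and the identification of $\mathsf H^\alpha$ with the maximal domain of the diagonal part — is routine.
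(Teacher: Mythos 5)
Your proof is correct, and its overall architecture coincides with the paper's: dispose of $\alpha\le 0$ by boundedness, transfer closedness and symmetry through the unitary conjugation of Definition \ref{def:WES}, establish the diagonal elliptic part as a closed (self-adjoint) operator with domain exactly $\mathsf H^\alpha$, and then absorb the off-diagonal part by Lemma \ref{lem:relative boundedness} together with Kato--Rellich and the stability of closedness under relative bound $<1$. The one place where you genuinely diverge is the treatment of the diagonal part: the paper first passes to the regularised operator $\tilde A_{\kappa,\er}$ of \eqref{eq:tilde A}, whose symbol is bounded below everywhere, proves the graph-norm equivalence for it, and then establishes self-adjointness by an explicit computation of $\dom(\tilde A_{\kappa,\er}^*)$ using the truncated test vectors $y_n$; the perturbation it then removes is $A-\tilde A_{\kappa,\er}=A^{\CO\CD}+(A^\CD-\kappa\er^\alpha)P_\er$. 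You instead work directly with $A^\CD$, identify $\mathsf H^\alpha$ as its maximal multiplication domain via the two-sided symbol bounds, and quote the standard fact that a maximal multiplication (diagonal) operator on $\ell^2(\Xi)$ is closed and is self-adjoint precisely when its multiplier is real --- which is legitimate here since $\ell^2(\Xi)$ is $L^2$ of counting measure even for uncountable $\Xi$, and $A=A^\dagger$ forces $a_{\id}$ real by \eqref{eq:adjoint symbol}. Your route is marginally shorter and avoids the $\tilde A_{\kappa,\er}$ detour at the price of importing an external standard fact, whereas the paper's adjoint-domain computation is self-contained; both perturbation steps invoke exactly the same infinitesimal bound, so the two arguments are interchangeable.
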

\begin{proof}
For $\alpha\leq 0$ we have $\BE\BS^\alpha\subset \BS^0$, and the statements follow from Corollary \ref{cor:domain}.
Now assume $A\in \BS\BE\BS^\alpha$ with $\alpha >0$. By \eqref{eq:inner
product}, Definition \ref{def:ES} and Lemma \ref{lem:domain}, for any $(\kappa,
\er)\in\FE(A^\CD)$ we have the estimates
\begin{equation}
 \kappa^2\|x\|_{\mathsf H^\alpha}^2 \le \|\tilde A_{\kappa,\er}x\|^2 \le
 \big(\|\tilde A_{\kappa,\er}\|^{(\alpha)}_{\alpha}\big)^2\|x\|_{\mathsf H^\alpha}^2
\end{equation}
for all $x \in \RH^\alpha$. Hence the graph norm of $\tilde A_{\kappa,\er}$ is equivalent to the norm of
$\mathsf H^\alpha$, and $\tilde A_{\kappa,\er}$ is closed on $\mathsf H^\alpha$.
If $\tilde A_{\kappa,\er}$ is symmetric, then for every $x\in\dom(\tilde
A_{\kappa,\er}^*)$ there exists $C_x\geq 0$ such that for all $y \in \RH^\alpha$
\begin{equation}
  \big|(x, \tilde A_{\kappa,\er}y)\big|\leq C_x\|y\|_{\ell^2(\Xi)}.
\end{equation}
In particular, with $(y_{n})_\xi:=\overline{\tilde a_{\kappa, \er, \id}(\xi)}\bone_{\ang
\xi \le n}\ x_\xi$ for $n\geq \er$, $\xi\in \Xi$, we obtain by \eqref{eq:ellconst} and \eqref{eq:tilde A} that
\begin{equation}
\begin{aligned}
 \sum_{\substack{\xi \in \Xi \\ \ang \xi \le n}}\ang{\xi}^{2\alpha}|x_\xi|^2
 &\leq \kappa^{-1}(x, \tilde A_{\kappa,\er}y_{n})\leq
 \kappa^{-1}C_x\|y_n\|_{\ell^2(\Xi)}\\
&\leq \kappa^{-1}C_x\snorm{\tilde{A}_{\kappa,\er}}{\alpha}{0}\Big(\sum_{\substack{\xi \in \Xi \\ \ang \xi \le n}}\ang{\xi}^{2\alpha}|x_\xi|^2\Big)^{1/2}.
\end{aligned}
\end{equation}
Passing to the limit $n\to \infty$, it follows from \eqref{eq:inner product}
that $\|x\|_{\mathsf H^\alpha} \leq
\kappa^{-1}C_x\snorm{\tilde{A}_{\kappa,\er}}{\alpha}{0}$, i.e. $$\dom(\tilde
A_{\kappa,\er}^*)\subset \mathsf H^\alpha =\dom(\tilde A_{\kappa,\er}),$$ hence
$\tilde A_{\kappa,\er}$ is self-adjoint.
By Lemma \ref{lem:relative boundedness} $A -\tilde A_{\kappa,\er}$ is
infinitesimally $\tilde A_{\kappa,\er}$-bounded, so that $A$ is also
self-adjoint (see,
e.g., Theorems 3.4.2 and 4.1.9 in \cite{BirmanSolomyak}). 

For
$A\in\BE\BS^\alpha$, by Definition \ref{def:WES} there exist a unitary
$U\in\BS^0$ and $H\in\BS\BE\BS^\alpha$ such that $A=UH U^\dagger$. Moreover, it
follows from Lemma \ref{lem:domain} that
$U\mathsf{H}^\alpha=U^\dagger\mathsf{H}^\alpha=\mathsf{H}^\alpha$. Now, let
$(x_n)_{n\in\mathbb{N}}\subset\mathsf{H}^\alpha$ with $x_n\to x$ and $U^\dagger
HUx_n\to z$ in $\ell^2$, as $n\to\infty$. Since $U$ is bounded, $Ux_n\to Ux$ and
$HUx_n\to Uz$, thus the closedness of $H$ implies that $Ux\in \mathsf{H}^\alpha$
and $HUx=Uz$, i.e.~$x\in\mathsf{H}^\alpha$ and $U H U^\dagger x=z$. Hence, $A$ is
closed on $\mathsf{H}^\alpha$ and self-adjoint if symmetric.
\end{proof}

\section{The Density of States Measure and von Neumann Algebras} \label{sec:ids}

In this section, following \cite{Shubin1979a}, we consider a representation of
$\BS^\infty$ into another operator algebra, affiliated with an infinite factor
(accounting for the almost periodicity), and define the density of states
measure (DSM) for self-adjoint operators in $\BE\BS^\infty$ with respect to this
representation. For a suitable representation, this DSM will coincide with the
classically defined DSM on elliptic differential operators with almost periodic
coefficients. We follow the construction and terminology of \cite[\S
1]{Shubin1979a}, generalising Shubin's symbol classes to the ones defined in Section~\ref{sec:operators}. 

\subsection{Representations of the operator algebra}

Let $\FH$ be a Hilbert space and $\FA$
a factor of either type $\text{I}_\infty$ or $\text{II}_\infty$ in
$\mathcal{B}(\FH)$, the algebra of bounded linear operators in $\mathfrak{H}$.
Let $\widetilde{\mathsf{H}}^\infty$ be a dense subspace of $\mathfrak{H}$ and
$$\widetilde{\BS}^\infty =\bigcup_{\gamma\in\R}\widetilde\BS^\gamma$$ a
$*$-algebra of unbounded linear operators in $\mathfrak{H}$ defined on
$\widetilde{\mathsf{H}}^\infty$, filtered by $\mathbb{R}$. We assume that
$\widetilde{\BS}^\infty\widetilde{\mathsf{H}}^\infty\subseteq\widetilde{\mathsf{H}}^\infty$, and that $\widetilde{\BS}^\infty$ is invariant under the involution
$\tilde
A\mapsto\tilde{A}^\dagger:=\big.\tilde{A}^*\big|_{\widetilde{\mathsf{H}}^\infty}$,
where $\tilde{A}^*$ is the adjoint to $\tilde
A\in\widetilde{\BS}^\infty$. We also suppose that $\widetilde{\BS}^\infty$ is
affiliated with
the factor $\FA$, denoted $\widetilde{\BS}^\infty \eta \FA$. Finally, we consider a representation $\rho: \BS^\infty \to \widetilde{\BS}^\infty$ having the following properties:
\begin{enumerate}[(i)]
\item \label{item:rhohomo} $\rho$ is a homomorphism of filtered $*$-algebras with $\rho(\BS^\gamma)\subseteq \widetilde\BS^\gamma$, for all $\gamma\in\mathbb R$.
\item \label{item:rhoS0bounded} For every $A\in\BS^0$, $\rho(A)$ extends to a bounded linear operator on $\mathfrak{H}$ with
\begin{equation}
  \norm{\rho(A)}_{\FH \to \FH} = \norm{A}_{\mathsf{\ell}^2(\Xi)\to\mathsf{\ell}^2(\Xi)}.
  \label{eq:normpreserving}  
\end{equation}
\item For all $A\in\BS^\infty$, $\rho(A)$ is closable in $\mathfrak{H}$ with the closure $A^\sharp:=\overline{\rho(A)}$. For every $\alpha>0$ there exists a dense subspace $\widetilde{\mathsf{H}}^\alpha\supseteq\widetilde{\mathsf{H}}^\infty$ such that
\begin{enumerate}
 \item $\widetilde{\mathsf{H}}^\alpha\subseteq\widetilde{\mathsf{H}}^\gamma$ if $0 <\gamma\leq \alpha$,
 \item \label{item:domsharpDES}$A\in\BD\BE\BS^\alpha$ implies $\dom(A^\sharp)=\widetilde{\mathsf{H}}^\alpha$,
 \item \label{item:domsharpinv} for all $B\in\BS^0$, $B^\sharp \widetilde{\mathsf{H}}^\alpha\subseteq\widetilde{\mathsf{H}}^\alpha$. 
\end{enumerate}
\item \label{item:sharpAselfadjoint} If $A\in\BD\BE\BS^\alpha$, $\alpha>0$, is self-adjoint on $\mathsf{H}^\alpha$, then $A^\sharp$ is self-adjoint.
\end{enumerate}
\begin{rem}
When $\FA$ is a $\mathrm{I}_\infty$ factor some of the statements in this section become rather trivial. However, we include this case for applications in Section \ref{sec:bs}.
\end{rem}
\begin{rem}
  \label{rem:Shubin}
In \cite{Shubin1979a}, Shubin considers $G = \R^d$ acting
  on itself by translation, with almost periodic  operators acting both in Besicovitch space $\mathsf{B}^2(\mathbb{R}^d)\cong \ell^2(\R^d)$ and in $\mathsf{L}^2(\R^d)$ through the Fourier
  integral representation of pseudo-differential operators. The appropriate Hilbert space is
  then
  \begin{equation}
    \FH = \mathsf{B}^2(\R^d) \otimes \mathsf{L}^2(\R^d),
    \label{eq:hilbert}
  \end{equation}
  and the $\text{II}_\infty$ factor $\FA$ is generated by the two families of
  operators
  \begin{equation}
    \set{\be_{\bxi} \otimes \be_{\bxi}: \bxi \in \R^d} \text{
    and }  \set{I \otimes T_{\bxi}: \bxi \in \R^d},
    \label{eq:iiinffact}
  \end{equation}
  where $\be_{\bxi}$ is multiplication by the character $\be_{\bxi}(\bx)=\mathrm
  e^{\mathrm i\bxi\cdot \bx}$ and 
  $T_{\bxi}$ is the
  translation operator $T_{\bxi} f(\bx) = f(\bx - \bxi) $. The representation
  $\rho$ is given on $A=\Op(a)\in \BS^\infty$ by the linear operator $\rho(A)=a(\bx+\by;D_\by)$ acting on 
\begin{equation}
\widetilde{\mathsf{H}}^\infty:=\mathsf B^2(\R^d) \otimes \hat{\mathsf{H}}^\infty(\R^d).
\end{equation}  
    Here, $\bx$ is the variable of functions in $\mathsf B^2(\R^d)$, $\by$ is the
    variable of functions in $\RL^2(\R^d)$, $D_y=-\mathrm i\nabla_\by$, and $\hat{\mathsf{H}}^\infty(\R^d):= \big\{f\in \mathsf C^\infty(\mathbb{R}^d): \partial^\alpha f\in\mathsf{L}^2(\mathbb{R}^d)\text{ for all }\alpha\in\mathbb N_0^d\big\}$. 
\end{rem}
Properties \eqref{item:rhohomo} and \eqref{item:rhoS0bounded} of the representation $\rho$ imply the following lemma.
\begin{lem}\label{lem:sharpS0}
If $A\in\BS^0$, then $A^\sharp$ is defined on $\mathfrak{H}$ and satisfies
$\big(A^\sharp\big)^*=\big(A^\dagger\big)^\sharp = \big(A^*\big)^\sharp$  and
$\|A^\sharp\|_{\FH\to\FH}=\|A\|_{\ell^2(\Xi)\to\ell^2(\Xi)}$. In particular, the
map $\BS^0\to\FB(\FH),\ A\mapsto A^\sharp$ is an injective homomorphism of
$*$-algebras. If $U\in\BS^0$ is unitary, then so is $U^\sharp$.
\end{lem}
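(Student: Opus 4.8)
The plan is to read off every assertion from properties \eqref{item:rhohomo} and \eqref{item:rhoS0bounded} of $\rho$, together with the elementary facts about the involution $\dagger$ on $\BS^0$ contained in Lemma \ref{lem:adjoint} and Corollary \ref{cor:domain}; no quantitative symbol estimates are needed. First I would note that for $A\in\BS^0$ property \eqref{item:rhoS0bounded} asserts that $\rho(A)$ extends to a bounded operator on $\FH$, and this extension is precisely the closure $A^\sharp=\overline{\rho(A)}$. Thus $A^\sharp$ is everywhere defined and bounded, and \eqref{eq:normpreserving} gives $\|A^\sharp\|_{\FH\to\FH}=\|A\|_{\ell^2(\Xi)\to\ell^2(\Xi)}$. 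Injectivity of $A\mapsto A^\sharp$ is then immediate: if $A^\sharp=0$ then $\|A\|_{\ell^2(\Xi)\to\ell^2(\Xi)}=0$, so $A$ vanishes as an operator on $\ell^2(\Xi)$, and hence, by the symbol recovery formula \eqref{eq:symbrecovery}, as a symbol.

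Next I would treat the adjoint identities. By Lemma \ref{lem:adjoint}, $A^\dagger\in\BS^0$, so the first step applies to $A^\dagger$ as well and $(A^\dagger)^\sharp$ is a bounded everywhere-defined operator. Since $\rho$ respects the involution (property \eqref{item:rhohomo}), $\rho(A^\dagger)=\rho(A)^\dagger=\rho(A)^*\big|_{\widetilde{\mathsf H}^\infty}$, the restriction of the bounded operator $\rho(A)^*$ to the dense subspace $\widetilde{\mathsf H}^\infty$; closing this restriction up recovers $\rho(A)^*$, so $(A^\dagger)^\sharp=\rho(A)^*$. On the other hand $A^\sharp=\overline{\rho(A)}$ is bounded, whence $(A^\sharp)^*=\rho(A)^*$ as well, and therefore $(A^\dagger)^\sharp=(A^\sharp)^*$. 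Finally, $A$ and $A^\dagger$ both lie in $\BS^0$ and are thus bounded on $\ell^2(\Xi)$, while $A^\dagger\subseteq A^*$ by Lemma \ref{lem:adjoint}; hence $A^*=A^\dagger$ in $\ell^2(\Xi)$, so $(A^*)^\sharp=(A^\dagger)^\sharp$.

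For the remaining claims, linearity of $A\mapsto A^\sharp$ is clear, and multiplicativity follows because $\rho(AB)=\rho(A)\rho(B)$ on $\widetilde{\mathsf H}^\infty$ by \eqref{item:rhohomo}, while $A^\sharp B^\sharp$ is a bounded everywhere-defined operator agreeing with $\rho(AB)$ on the dense set $\widetilde{\mathsf H}^\infty$; hence $(AB)^\sharp=\overline{\rho(AB)}=A^\sharp B^\sharp$. Combined with the involution identity just proved, this exhibits $A\mapsto A^\sharp$ as an injective $*$-homomorphism $\BS^0\to\FB(\FH)$. For the last statement, recalling that $\rho$ carries the unit of $\BS^0$ to $\Id_\FH$, if $U\in\BS^0$ is unitary then $U^\sharp(U^\sharp)^*=U^\sharp(U^\dagger)^\sharp=(UU^\dagger)^\sharp=\Id_\FH$, and likewise $(U^\sharp)^*U^\sharp=\Id_\FH$, so $U^\sharp$ is unitary.

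The only points requiring a little care are, in my view, the bookkeeping around closures of restrictions of bounded operators to the dense subspace $\widetilde{\mathsf H}^\infty$ — in particular identifying $\overline{\rho(A)^*|_{\widetilde{\mathsf H}^\infty}}$ with $\rho(A)^*=(A^\sharp)^*$ rather than with some proper restriction — together with the routine-but-needed observations that $A^*=A^\dagger$ on $\ell^2(\Xi)$ for $A\in\BS^0$ and that $\rho$ preserves the unit. Everything else is formal.
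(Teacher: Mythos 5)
Your proof is correct and follows exactly the route the paper intends: the paper offers no written proof, stating only that properties (i) and (ii) of $\rho$ imply the lemma, and your argument is precisely the careful unwinding of those properties together with Lemma \ref{lem:adjoint}, Corollary \ref{cor:domain}, and the symbol-recovery formula \eqref{eq:symbrecovery}. Your flagged caveat that $\rho$ must carry the unit of $\BS^0$ to $\Id_\FH$ is indeed an implicit assumption of the paper's framework (satisfied in the concrete representation of Remark \ref{rem:Shubin}) rather than a formal consequence of the listed properties, so noting it explicitly is appropriate.
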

We will now carry over Lemma \ref{lem:relative boundedness} to images under $\sharp$. This provides us with some information on the domains of operators from $(\BS^\infty)^\sharp$. 
\begin{lem}\label{lem:relboundsrho} Let $\beta\in\mathbb R$,  $B\in\BS^\beta$ and $A\in\BS\BE\BS^\alpha$ for some $\alpha>0$. Then
\begin{enumerate}
 \item\label{approx_domain} $\bigcup_{\zeta >\max\{\beta, 0\}}\widetilde{\mathsf H}^\zeta\subseteq\dom(B^\sharp)$,
 \item\label{tildeHalpha} $\dom(A^\sharp)=\widetilde{\mathsf{H}}^\alpha$.
 \item\label{relbound} Suppose $\beta < \alpha$ and $0 <\gamma < \alpha$ with
   $A^{\CO\CD}\in \BS^\gamma$. Then for $\beta\le 0$ the operator $B^\sharp$ is
   bounded, and, otherwise, for every $\phi\in\widetilde{\mathsf{H}}^\alpha$ and
   $$(\kappa, \er)\in\FE(A^\CD)\cap \Big\{\er
   \geq\big(\|A^{\CO\CD}\|_0^{(\gamma)}/\kappa\big)^{1/(\alpha
 -\gamma)}\Big\},$$
   we have
\begin{equation}\label{eq:relboundssharp}
 \|B^\sharp\phi\|_{\mathfrak{H}} \le \frac{\er^{\beta
 -\alpha}\|B\|_0^{(\beta)}}{\kappa -\er^{\gamma
 -\alpha}\|A^{\CO\CD}\|_0^{(\gamma)}}\Big(\|A^\sharp\phi\|_{\mathfrak{H}}
 +\kappa \er^{\alpha}\big(1 +\kappa^{-1}\|A^\CD\|_0^{(\alpha)}\big)\|\phi\|_{\mathfrak{H}}\Big). 
\end{equation}
In particular, $B^\sharp$ is infinitesimally $A^\sharp$-bounded in $\mathfrak{H}$.
\end{enumerate}
\end{lem}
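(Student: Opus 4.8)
The plan is to transport Lemma~\ref{lem:relative boundedness} through the representation $\rho$, using that $\rho$ is a filtered $*$-homomorphism (property~\eqref{item:rhohomo}) that is norm-preserving on $\BS^0$ (property~\eqref{item:rhoS0bounded}), together with the domain statements in property~(iii). First I would treat part~\eqref{tildeHalpha}: write $A = A^\CD + A^{\CO\CD}$ with $A^\CD\in\BD\BE\BS^\alpha$ and $A^{\CO\CD}\in\BS^\gamma$, $\gamma<\alpha$; by property~\eqref{item:domsharpDES} we have $\dom\big((A^\CD)^\sharp\big)=\widetilde{\mathsf H}^\alpha$, and since $A^{\CO\CD}$ is built (via Proposition~\ref{prop:globell}, as in the proof of Lemma~\ref{lem:relative boundedness}) out of compositions of an operator of order $\le 0$ with $\tilde A_{\kappa,\er}^{-1}$, the operator $(A^{\CO\CD})^\sharp$ restricted to $\widetilde{\mathsf H}^\alpha$ maps into $\mathfrak H$ and is in fact infinitesimally $(A^\CD)^\sharp$-bounded by the same manipulation performed abstractly. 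A Kato--Rellich type perturbation argument (the abstract versions of Theorems~3.4.2 and~4.1.9 in \cite{BirmanSolomyak}, already invoked in the proof of Proposition~\ref{prop:closedness and self-adjointness}) then gives $\dom(A^\sharp) = \dom\big((A^\CD)^\sharp\big) = \widetilde{\mathsf H}^\alpha$.

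Next I would prove part~\eqref{approx_domain}. For $B\in\BS^\beta$, pick any $\zeta>\max\{\beta,0\}$ and any diagonal elliptic $D\in\BD\BE\BS^\zeta$ that is self-adjoint on $\mathsf H^\zeta$; such $D$ exists (e.g.\ $\Op(\ang\xi^\zeta)$). By property~\eqref{item:domsharpDES}, $\dom(D^\sharp)=\widetilde{\mathsf H}^\zeta$, and by Lemma~\ref{lem:relative boundedness} applied with $A$ replaced by $D$ (noting $D^{\CO\CD}=0$, so the ellipticity condition on $\er$ is vacuous), $B$ is $D$-bounded in $\ell^2$, hence by Lemma~\ref{lem:domain} $B\colon\mathsf H^\zeta\to\ell^2$ is bounded. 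I would then argue that the relative-bound \emph{estimate} between $B$ and $D$, being a statement about norms of compositions of operators in $\BS^\infty$, transfers verbatim to the $\sharp$-images using \eqref{eq:normpreserving} and the homomorphism property, giving $\widetilde{\mathsf H}^\zeta\subseteq\dom(B^\sharp)$; taking the union over $\zeta>\max\{\beta,0\}$ yields~\eqref{approx_domain}.

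For part~\eqref{relbound}, the $\beta\le 0$ case is immediate: $B\in\BS^0$ by definition of the filtration, so Lemma~\ref{lem:sharpS0} gives that $B^\sharp$ is bounded with norm $\snorm{B}{0}{0}$. For $0<\beta<\alpha$, I would follow the proof of Lemma~\ref{lem:relative boundedness} line by line but at the level of $\sharp$-images. The key point is that every inequality there is obtained by applying Corollary~\ref{cor:domain} and Lemma~\ref{lem:product} to a \emph{fixed algebraic identity} among operators in $\BS^\infty$ --- namely the decomposition $B = B\tilde A_{\kappa,\er}^{-1}A^\CD + B(\tilde A_{\kappa,\er}^{-1}A^\CD - \Id)$ and its analogue with $B=A^{\CO\CD}$. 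Applying $\sharp$ to such an identity and using that $\sharp$ is a homomorphism turns it into the same identity among operators in $\mathfrak H$; the operator norms of the $\BS^0$-pieces are preserved by \eqref{eq:normpreserving}, and the reference operator norm $\|A^\CD x\|$ gets replaced by $\|(A^\CD)^\sharp\phi\|_{\mathfrak H}$, which by part~\eqref{tildeHalpha} (applied to the diagonal part, i.e.\ property~\eqref{item:domsharpDES}) we may in turn bound by $\|A^\sharp\phi\|_{\mathfrak H}$ exactly as in the scalar case. This produces \eqref{eq:relboundssharp}, and letting $\er\to\infty$ shows the $A^\sharp$-bound of $B^\sharp$ can be made arbitrarily small, i.e.\ infinitesimal relative boundedness.

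The main obstacle --- really the only subtle point --- is justifying that an identity and inequality chain valid on $\spann\CE\subset\ell^2(\Xi)$ transfers to one valid on $\widetilde{\mathsf H}^\alpha\subset\mathfrak H$: one needs $\rho$ to intertwine the relevant domains, which is exactly what properties~(iii)(b)--(c) and~(iv) are designed to supply, but care is required because $\rho(A)$ for $A\in\BS^\gamma$, $\gamma>0$, is only closable, so one must check that all the compositions appearing (e.g.\ $B\tilde A_{\kappa,\er}^{-1}$, which lies in $\BS^{\beta-\alpha}\subseteq\BS^0$) land in classes where $\sharp$ is a genuine bounded operator and the algebraic relations survive closure. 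Once the bookkeeping of which factors are bounded and which are the single unbounded ``reference'' operator is organised exactly as in Lemma~\ref{lem:relative boundedness}, the estimate is formal.
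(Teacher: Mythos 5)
Your proposal is correct and follows essentially the same route as the paper: transport the algebraic decomposition and estimates of Lemma~\ref{lem:relative boundedness} through $\rho$ using properties \eqref{item:rhohomo}, \eqref{item:rhoS0bounded} and \eqref{item:domsharpDES}, first on $\widetilde{\mathsf H}^\infty$ and then by graph-norm density. The only cosmetic difference is ordering --- the paper derives \eqref{eq:relboundssharp} on $\widetilde{\mathsf H}^\infty$ first and reads off \eqref{tildeHalpha} from equivalence of the graph norms of $\rho(A)$ and $\rho(A^\CD)$, whereas you phrase the same step as a Kato--Rellich-type closedness argument --- and your final paragraph correctly identifies the one genuine subtlety (extending from the core $\widetilde{\mathsf H}^\infty$ to $\widetilde{\mathsf H}^\alpha$), which the paper resolves exactly as you indicate.
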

\begin{proof}
For $\beta\le 0$ the statements \emph{\eqref{approx_domain}} and \eqref{relbound} follow from \eqref{eq:normpreserving}. Let now $\beta>0$ and assume that $0 <\gamma < \alpha$ with $A^{\CO\CD}\in \BS^\gamma$.
Following the proof of Lemma \ref{lem:relative boundedness} and applying properties \eqref{item:rhohomo} and \eqref{item:rhoS0bounded} of the representation $\rho$ where necessary, we derive \eqref{eq:relboundssharp} for $\phi\in\widetilde{\mathsf H}^\infty$. Consequently, the graph norm of $\rho(B)$ is dominated 
by the graph norm of $\rho(A)$, thus $\dom(B^\sharp)\supseteq \dom(A^\sharp)$.
Applying \eqref{eq:relboundssharp} for $A^{\CO\CD}$ instead of $B$ and $A^{\CD}$ instead of $A$, we conclude that the graph norms of $\rho(A)$ and $\rho(A^\CD)$ are equivalent, thus \eqref{item:domsharpDES} implies $\dom(A^\sharp)=\dom\big(\big(A^\CD\big)^\sharp\big)=\widetilde{\mathsf{H}}^\alpha$, which is \eqref{tildeHalpha}. Now \eqref{approx_domain} follows by varying $A\in \BS\BE\BS^\alpha$ with $\alpha >\beta$.
Finally, we can extend \eqref{eq:relboundssharp} from $\widetilde{\mathsf{H}}^\infty$ to $\widetilde{\mathsf{H}}^\alpha$ by density with respect to the graph norm of $A^\sharp$.
\end{proof}
Properties \eqref{item:domsharpDES} and \eqref{item:sharpAselfadjoint} of the
map $\rho$ can also be extended to operators from the classes $\BE\BS^\alpha$, $\alpha>0$.
\begin{lem}\label{lem:sharpWES}
Let $\alpha>0$ and $A\in\BE\BS^\alpha$. Then $\dom(A^\sharp)=\widetilde{\mathsf{H}}^\alpha$ and for all unitary $U\in\BS^0$
\begin{equation}\label{eq:Usharp*AsharpUsharp}
  U^\sharp A^\sharp \big(U^\sharp\big)^* = \big(U A U^\dagger\big)^\sharp \qquad \text{holds on} \ \widetilde{\mathsf{H}}^\alpha.
\end{equation}
Moreover, if $A$ is self-adjoint on $\mathsf{H}^\alpha$, then $A^\sharp$ is self-adjoint.
\end{lem}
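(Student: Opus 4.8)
The plan is to reduce everything to the strongly elliptic case --- which is settled by Lemma~\ref{lem:relboundsrho} --- and then transport the conclusions along a unitary conjugation; the one point that needs care is that conjugation by a bounded operator with bounded inverse commutes with passing to the closure of an operator.

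First I would, by Definition~\ref{def:WES}, fix a unitary $U_0\in\BS^0$ with $H:=U_0AU_0^\dagger\in\BS\BE\BS^\alpha$, so that $A=U_0^\dagger HU_0$ in the $*$-algebra $\BS^\infty$. Since $\rho$ is a homomorphism of $*$-algebras and, by Lemma~\ref{lem:sharpS0}, $\rho(U_0)$ and $\rho(U_0^\dagger)$ extend to the mutually adjoint unitaries $U_0^\sharp$ and $(U_0^\sharp)^*=(U_0^\dagger)^\sharp$ on $\FH$, each restricting to a bijection of $\widetilde{\mathsf H}^\infty$, one has $\rho(A)=(U_0^\sharp)^*\rho(H)U_0^\sharp$ on $\widetilde{\mathsf H}^\infty$; that is, $\rho(A)$ is the conjugate of $\rho(H)$ by the bounded invertible operator $V:=(U_0^\sharp)^*$, and these two operators even share the domain $\widetilde{\mathsf H}^\infty$. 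I would then record the elementary fact that if $T$ is densely defined and closable and $V$ is bounded with bounded inverse, then $VTV^{-1}$ is closable, $\overline{VTV^{-1}}=V\overline T V^{-1}$, and $\dom\big(\overline{VTV^{-1}}\big)=V\big(\dom\overline T\big)$. Applied with $T=\rho(H)$ it gives $A^\sharp=\overline{\rho(A)}=(U_0^\sharp)^*H^\sharp U_0^\sharp$ with $\dom(A^\sharp)=(U_0^\sharp)^*\big(\dom H^\sharp\big)$; by Lemma~\ref{lem:relboundsrho}\eqref{tildeHalpha} we have $\dom H^\sharp=\widetilde{\mathsf H}^\alpha$, and property~\eqref{item:domsharpinv} applied to both $U_0$ and $U_0^\dagger$ gives $(U_0^\sharp)^*\widetilde{\mathsf H}^\alpha=\widetilde{\mathsf H}^\alpha$, so $\dom(A^\sharp)=\widetilde{\mathsf H}^\alpha$.

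The conjugation identity~\eqref{eq:Usharp*AsharpUsharp} follows by the same argument with $V:=U^\sharp$ for an arbitrary unitary $U\in\BS^0$: from $\rho(UAU^\dagger)=U^\sharp\rho(A)(U^\sharp)^*$ on $\widetilde{\mathsf H}^\infty$, passing to closures yields $\big(UAU^\dagger\big)^\sharp=U^\sharp A^\sharp(U^\sharp)^*$ on $U^\sharp\widetilde{\mathsf H}^\alpha=\widetilde{\mathsf H}^\alpha$ (note $UAU^\dagger\in\BE\BS^\alpha$, witnessed by the unitary $U_0U^\dagger$, so the first part of the lemma applies to it as well). For self-adjointness, assume $A$ is self-adjoint on $\mathsf H^\alpha$, so $A=A^\dagger$ by Lemma~\ref{lem:adjoint}; then $H^\dagger=(U_0^\dagger)^\dagger A^\dagger U_0^\dagger=U_0AU_0^\dagger=H$, and since the diagonal projection $\CD$ commutes with the involution $\dagger$, also $(H^\CD)^\dagger=H^\CD$ and $(H^{\CO\CD})^\dagger=H^{\CO\CD}$. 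In particular $H^\CD\in\BD\BE\BS^\alpha$ is symmetric, hence self-adjoint on $\mathsf H^\alpha$ by Proposition~\ref{prop:closedness and self-adjointness}, so $(H^\CD)^\sharp$ is self-adjoint by property~\eqref{item:sharpAselfadjoint}. Writing $H=H^\CD+H^{\CO\CD}$ with $H^{\CO\CD}\in\BS^\gamma$, $\gamma<\alpha$, the operator $(H^{\CO\CD})^\sharp$ is symmetric (the closure of the symmetric operator $\rho(H^{\CO\CD})$) and, by Lemma~\ref{lem:relboundsrho}\eqref{relbound} applied with $H^\CD$ in the role of $A$, infinitesimally $(H^\CD)^\sharp$-bounded (bounded if $\gamma\le 0$); since $\rho(H)=\rho(H^\CD)+\rho(H^{\CO\CD})$ on the common core $\widetilde{\mathsf H}^\infty$, the Kato--Rellich theorem then shows that $H^\sharp$ is self-adjoint on $\widetilde{\mathsf H}^\alpha$, and since $A^\sharp=(U_0^\sharp)^*H^\sharp U_0^\sharp$ with $U_0^\sharp$ unitary, $A^\sharp$ is self-adjoint as well.

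The only genuinely load-bearing step is the interchange of operator-closure with conjugation by the bounded unitary $U_0^\sharp$; everything else is an assembly of Lemmas~\ref{lem:sharpS0} and~\ref{lem:relboundsrho} with properties \eqref{item:domsharpinv} and~\eqref{item:sharpAselfadjoint} of the representation $\rho$. A secondary point requiring care is that property~\eqref{item:sharpAselfadjoint} is stated only for the class $\BD\BE\BS^\alpha$, which is why $H$ must be split into its diagonal and (lower-order, infinitesimally bounded) off-diagonal parts rather than invoking that property for $H\in\BS\BE\BS^\alpha$ directly.
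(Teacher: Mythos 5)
Your proof is correct and follows essentially the same route as the paper: reduce to $H=U_0AU_0^\dagger\in\BS\BE\BS^\alpha$, identify $\dom(A^\sharp)$ by interchanging closure with conjugation by the unitary $U_0^\sharp$ (the paper packages this as two inclusions via the closedness argument from Proposition \ref{prop:closedness and self-adjointness}, you as a single abstract fact about $\overline{VTV^{-1}}=V\overline{T}V^{-1}$, but the content is identical), and prove self-adjointness in the strongly elliptic case by splitting $H=H^\CD+H^{\CO\CD}$ and applying Kato--Rellich before transporting back along $U_0^\sharp$. No gaps.
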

\begin{proof}
Assume first that $A\in\BS\BE\BS^\alpha$, so that $A^\CD\in\BD\BE\BS^\alpha$ and $A^{\CO\CD}\in\BS^\gamma$ for some $0<\gamma<\alpha$. According to Lemma \ref{lem:relboundsrho}$(\ref{approx_domain}, \ref{tildeHalpha})$ we have that $\dom(A^\sharp)=\widetilde{\mathsf{H}}^\alpha\subseteq\dom\big(\big(A^{\CO\CD}\big)^\sharp\big)$.
Moreover, if $A$ is self-adjoint, then $\big(A^\CD\big)^\sharp$ is self-adjoint on $\widetilde{\mathsf{H}}^\alpha$ and $\rho(A^{\CO\CD})$ is symmetric on $\widetilde{\mathsf H}^\infty$, as follows from properties \eqref{item:sharpAselfadjoint} and \eqref{item:rhohomo} of $\rho$, respectively. Since $\dom\big(\big(A^{\CO\CD}\big)^\sharp\big)\supseteq\widetilde{\mathsf{H}}^\alpha$ is the closure of $\widetilde{\mathsf H}^\infty$ with respect to the graph norm of $\big(A^{\CO\CD}\big)^\sharp$, the operator $\big(A^{\CO\CD}\big)^\sharp$ is also symmetric on $\widetilde{\mathsf{H}}^\alpha$.
Moreover, by Lemma \ref{lem:relboundsrho}\eqref{relbound} it is infinitesimally $A^\sharp$-bounded. 
Thus, \cite[Theorem~4.1.9]{BirmanSolomyak} implies that $A^\sharp=\big(A^\CD +A^{\CO\CD}\big)^\sharp$ is self-adjoint on $\widetilde{\mathsf{H}}^\alpha$.

Let now $A\in\BE\BS^\alpha$. By definition, there exist $H\in\BS\BE\BS^\alpha$
and $V\in\BS^0$ unitary such that $A=V^\dagger HV$ on $\mathsf{H}^\infty$. Since $\rho$ is a $*$-homomorphism, 
\begin{equation}\label{eq:rhoAprod}
\rho(A)=\rho(V)^\dagger \rho(H)\rho(V)
\end{equation}
holds on $\widetilde{\mathsf{H}}^\infty$.
By Lemma \ref{lem:sharpS0} the operator $V^\sharp$ is unitary, and property \eqref{item:domsharpinv} implies that 
\begin{equation}\label{eq:VsharptildeHalpha}
V^\sharp \widetilde{\mathsf{H}}^\alpha= \big(V^\sharp\big)^*\widetilde{\mathsf{H}}^\alpha=\widetilde{\mathsf{H}}^\alpha.
\end{equation}
We have already proved in Lemma \ref{lem:relboundsrho}\eqref{tildeHalpha} that
$\dom\big(H^\sharp\big)=\widetilde{\mathsf H}^\alpha$, thus the argument at the
end of the proof of Proposition \ref{prop:closedness and self-adjointness}
implies that $\big(V^\sharp)^\ast H^\sharp V^\sharp$ is closed on
$\widetilde{\mathsf{H}}^\alpha$. As by \eqref{eq:rhoAprod} and Lemma
\ref{lem:sharpS0} this operator is an extension of $\rho(A)$, it follows that
$\dom(A^\sharp)\subseteq\widetilde{\mathsf{H}}^\alpha$. Similarly, we have on
$\widetilde{\mathsf{H}}^\infty$ that
\begin{equation}
\rho(H)=\rho(V)\rho(A)\rho(V)^\dagger
\end{equation}
and $V^\sharp A^\sharp \big(V^\sharp\big)^*$ is a closed operator on
$V^\sharp\dom(A^\sharp)\subseteq \widetilde{\mathsf{H}}^\infty$. Thus,
$$\widetilde{\mathsf{H}}^\alpha=\dom(H^\sharp)\subseteq
V^\sharp\dom(A^\sharp),$$
and \eqref{eq:VsharptildeHalpha} yields
$\widetilde{\mathsf{H}}^\alpha\subseteq\dom(A^\sharp)$. Hence
$\dom(A^\sharp)=\widetilde{\mathsf{H}}^\alpha$. Finally, let $U\in\BS^0$ be
unitary. Then 
\begin{equation}
\rho(U)\rho(A)\rho(U)^\dagger=\rho(U A U^\dagger)\subseteq (U A
U^\dagger)^\sharp,
\end{equation} 
and $U^\sharp A^\sharp\big( U^\sharp\big)^*$ is a closed extension of
$\rho(U) \rho(A)\rho(U)^\dagger$ on the domain
$\widetilde{\mathsf{H}}^\alpha=\dom\big((U A U^\dagger)^\sharp\big)$,
i.e.~\eqref{eq:Usharp*AsharpUsharp} holds. If $A$ is self-adjoint on
$\mathsf{H}^\alpha$, then so is $H$, thus $H^\sharp$ by the first part of the
proof. Hence, the self-adjointness of $A^\sharp$ follows from
\eqref{eq:Usharp*AsharpUsharp} with $U=V$ and $H$ instead of $A$.
\end{proof}

\subsection{The density of states measure}
Since $\mathfrak{A}$ is a factor of type $\mathrm{I}_\infty$ or $\mathrm{II}_\infty$, there exists, by definition, a semi-finite faithful normal trace $\mathfrak{T}$ on $\mathfrak{A}$, see \cite[I.6 and I.8.4]{Dixmier}. 
Moreover, due to \cite[I.6.4, Corollary]{Dixmier}, this trace is unique up to multiplication by a positive number. Following the notation of \cite{Naimark}, we write $\FL\eta\FA$ to denote that 
$\FL\subset \FH$ is a closed linear subspace adjoint to $\FA$, i.e.~$P_{\FL}\in\FA$, where $P_\FL$ is the projection onto $\FL$. If $\FL\eta \FA$, the relative dimension of $\FL$ is defined by 
\begin{align*}
\FD(\FL):=\FT(P_{\FL})\in [0,\infty].
\end{align*}
If $\FA$ is a $\mathrm I_\infty$-factor, the range of the relative dimension is $c\mathbb N_0\cup\set{\infty}$, for some
$c>0$. It is $[0,\infty]$ if $\FA$ is a
$\mathrm{II}_\infty$-factor. 

\begin{defi}
Let $A\in\BS^0\cup\BE\BS^\infty$ be symmetric and $J\subseteq \mathbb R$ be a
Borel measurable set. Denote by $E_J(A^\sharp)$ the spectral projection of
$A^\sharp$ for $J$. We define the \emph{density of states measure} (DSM) of $A$
on $J$, relative to the representation $\rho$, by 
\begin{equation}
 N(J; A) := \FT\big(E_J(A^\sharp)\big)=\FD\big(E_J(A^\sharp)\FH\big).
  \label{eq:dfnids}
\end{equation}
\end{defi}

\begin{rem} \label{rem:iiinfinityfactor}
  Usually, the dependence on the representation $\rho$ and the factor $\FA$ is unambiguous
  and is thus not reflected in the notation.
  
\end{rem}
The following corollary generalises \cite[Lemma 4.4]{ParSht2012}. It follows
directly from Lemma \ref{lem:sharpWES} (or Lemma \ref{lem:sharpS0} for
$A\in\BS^0$) and the invariance of $\FT$ under unitary transformations in
$\mathfrak{A}$. We remark at this point that, since
$\widetilde{\BS}^\infty\eta\FA$, one has $U^\sharp\in\mathfrak{A}$ for every
unitary $U\in\BS^0$, see Lemma \ref{lem:sharpS0} and \cite[\S 35.1]{Naimark}.
\begin{cor} \label{cor:unitary}
Let $U\in\BS^0$ be unitary and let $A\in\BS^0\cup\BE\BS^\infty$ be symmetric.
Then one has $N(J;A) = N(J;U A U^\dagger)$ for any Borel measurable set $J\subseteq \mathbb{R}$.
\end{cor}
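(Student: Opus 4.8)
The plan is to show that the $\sharp$-images $A^\sharp$ and $(UAU^\dagger)^\sharp$ are conjugate by a unitary operator lying in the factor $\FA$, and then to invoke the invariance of the trace $\FT$ under such conjugation. First I would check that $UAU^\dagger$ is again symmetric and lies in the same class as $A$, so that $N(J;UAU^\dagger)$ is defined: if $A\in\BS^0$ this is clear since $\BS^0$ is a $*$-algebra, and if $A\in\BE\BS^\alpha$ then, choosing a unitary $V\in\BS^0$ with $VAV^\dagger\in\BS\BE\BS^\alpha$ as in Definition \ref{def:WES}, the unitary $VU^\dagger\in\BS^0$ satisfies $(VU^\dagger)(UAU^\dagger)(VU^\dagger)^\dagger=VAV^\dagger$, whence $UAU^\dagger\in\BE\BS^\alpha$.

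Next I would establish the operator identity $(UAU^\dagger)^\sharp=U^\sharp A^\sharp(U^\sharp)^*$. For $A\in\BS^0$ this is immediate from Lemma \ref{lem:sharpS0}, since $\sharp$ is a $*$-homomorphism on $\BS^0$ and $(U^\dagger)^\sharp=(U^\sharp)^*$. For $A\in\BE\BS^\alpha$ with $\alpha>0$, Lemma \ref{lem:sharpWES} gives the identity on $\widetilde{\mathsf{H}}^\alpha$; combining this with the facts that $\dom(A^\sharp)=\dom((UAU^\dagger)^\sharp)=\widetilde{\mathsf{H}}^\alpha$, that property \eqref{item:domsharpinv} of $\rho$ forces $U^\sharp\widetilde{\mathsf{H}}^\alpha=\widetilde{\mathsf{H}}^\alpha$ (as in \eqref{eq:VsharptildeHalpha}), and that $A^\sharp$ and $(UAU^\dagger)^\sharp$ are self-adjoint by Lemma \ref{lem:sharpWES}, the two sides become self-adjoint operators with the same domain that agree on it, hence coincide. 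Since $U^\sharp\in\FA$ is unitary by Lemma \ref{lem:sharpS0}, the spectral theorem then gives $E_J((UAU^\dagger)^\sharp)=U^\sharp E_J(A^\sharp)(U^\sharp)^*$ for every Borel set $J\subseteq\R$.

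Finally, since $\widetilde{\BS}^\infty\eta\FA$ and $A^\sharp=\overline{\rho(A)}$, the spectral projection $E_J(A^\sharp)$ lies in $\FA$, so conjugating it by the unitary $U^\sharp\in\FA$ produces a Murray--von Neumann equivalent projection, which has the same value of $\FT$; equivalently $\FT(U^\sharp P(U^\sharp)^*)=\FT(P)$ for every projection $P\in\FA$ by the trace property (cf.\ \cite[\S 35]{Naimark}). Substituting this and the previous display into the definition \eqref{eq:dfnids} yields $N(J;UAU^\dagger)=\FT(E_J(A^\sharp))=N(J;A)$. I expect the only genuinely delicate point to be the identification of $(UAU^\dagger)^\sharp$ with $U^\sharp A^\sharp(U^\sharp)^*$ as an equality of \emph{self-adjoint} operators (not merely on a common core) in the elliptic case, but this is exactly what Lemma \ref{lem:sharpWES} together with the domain bookkeeping above provides; the rest is the standard trace invariance.
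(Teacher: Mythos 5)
Your proposal is correct and follows essentially the same route as the paper: the paper derives the corollary directly from Lemma \ref{lem:sharpWES} (or Lemma \ref{lem:sharpS0} when $A\in\BS^0$), which supplies the identity $(UAU^\dagger)^\sharp=U^\sharp A^\sharp(U^\sharp)^*$, together with the observation that $U^\sharp\in\FA$ and the invariance of $\FT$ under unitary conjugation within $\FA$. Your additional domain bookkeeping in the elliptic case is exactly the content already packaged in Lemma \ref{lem:sharpWES}, so the argument matches the paper's.
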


In the remainder of this section, we investigate the behaviour of the DSM for
elliptic operators of positive order under perturbations. In
\cite{ParSht2012,MorParSht2014,ParSht2016} such an analysis was conducted for
operators that are bounded from below and the particular case $J = (-\infty,
\lambda)$, $\lambda\in\mathbb{R}$.

Before continuing, let us introduce the following notation. For any interval $J =[s,t]\subset \R$, $s <t$ and $\eps \in\R$,
 we define 
 $$
 J_{\eps}:= \begin{cases}
   \hfill\varnothing\hfill & \text{for } \eps < \dfrac{s-t}{2}, \\
   [s-\eps, t+\eps] & \text{otherwise}.
 \end{cases}
   $$
 The following lemma gives
us a variational characterisation of the DSM (cf. \cite[Lemma 4.1]{ParSht2012}).

\begin{lem}
  \label{lem:variational}
  Let $A\in \BS^0\cup\BE\BS^\infty$ be symmetric. Then, for any interval
  $J=[q-r,q+r]$ with $q\in\mathbb{R}$ and $r>0$, we
  have 
\begin{equation} \label{eq:variational}
  \begin{aligned}
   N(J;A) &= \sup\big\{\FD(\FL) : \FL\subset\dom(A^\sharp),\ \FL \eta \FA,
     \\&\qquad \qquad\text{\emph{and} } 
   \|(A^\sharp-q)\phi\|_{\mathfrak{H}}\leq r\|\phi\|_\mathfrak{H} \ \forall \phi
 \in \FL\big\}.
 \end{aligned}
 \end{equation}
The analogous statement holds for the open interval $J=(q-r,q+r)$ with strict inequality in \eqref{eq:variational}.
\end{lem}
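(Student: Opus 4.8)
The plan is to prove the two inequalities in \eqref{eq:variational} separately. For the lower bound ``$\geq$'', I would take $\FL = E_J(A^\sharp)\FH$. Since $A\in\BS^0\cup\BE\BS^\infty$ is symmetric, $A^\sharp$ is self-adjoint (by Lemma \ref{lem:sharpS0} when $A\in\BS^0$, by Lemma \ref{lem:sharpWES} when $A\in\BE\BS^\infty$), so the spectral theorem applies. The projection $E_J(A^\sharp)$ lies in $\FA$ because $A^\sharp$ is affiliated with $\FA$; hence $\FL\eta\FA$ and $\FD(\FL)=\FT(E_J(A^\sharp))=N(J;A)$. The containment $\FL\subset\dom(A^\sharp)$ holds since on a bounded spectral subspace $A^\sharp$ acts boundedly, and for $\phi\in\FL$ the spectral calculus gives $\|(A^\sharp-q)\phi\|_{\mathfrak H}^2 = \int_J (\lambda-q)^2 \de\|E_\lambda\phi\|^2 \leq r^2\|\phi\|_{\mathfrak H}^2$ since $|\lambda - q|\leq r$ on $J=[q-r,q+r]$. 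So this particular $\FL$ is admissible in the supremum, which proves ``$\geq$''.

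For the upper bound ``$\leq$'', let $\FL$ be any admissible subspace: $\FL\eta\FA$, $\FL\subset\dom(A^\sharp)$, and $\|(A^\sharp-q)\phi\|_{\mathfrak H}\leq r\|\phi\|_{\mathfrak H}$ for all $\phi\in\FL$. I want to show $\FD(\FL)\leq N(J;A) = \FT(E_J(A^\sharp))$. The key claim is that the quadratic-form bound forces $\FL$ to be ``mostly inside'' the range of $E_J(A^\sharp)$; precisely, I would show that $E_{\R\setminus J}(A^\sharp)$ restricted to $\FL$ is injective, equivalently $E_J(A^\sharp)|_{\FL}$ is injective is \emph{not} quite what is needed — rather, that $P_{\FL}$ and $\Id - E_J(A^\sharp)$ have no common vector, i.e.\ $E_{\R\setminus J}(A^\sharp)\FH \cap \FL = \{0\}$. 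Indeed, if $\phi\in\FL$ with $E_J(A^\sharp)\phi = 0$, then $\phi = E_{\R\setminus J}(A^\sharp)\phi$, so $\|(A^\sharp - q)\phi\|^2 = \int_{\R\setminus J}(\lambda-q)^2\de\|E_\lambda\phi\|^2 > r^2\|\phi\|^2$ unless $\phi = 0$ (using that $|\lambda-q|>r$ on $\R\setminus J$ for the open interval, and $\geq r$ for the closed one — for the closed interval one argues instead that if the inequality is non-strict the vector is supported on $\{|\lambda-q|=r\}\subset J$, contradiction with $E_J\phi=0$). This contradicts admissibility, so $E_J(A^\sharp)|_\FL$ is injective. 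It is a standard fact in the comparison theory of projections in a von Neumann algebra (a ``parallelogram''/Kaplansky-type argument, see e.g.\ \cite[\S I.6]{Dixmier} or the comparison theorem for equivalence of projections) that if $P, Q$ are projections in $\FA$ with $P|_{Q\FH}$ injective, then $Q \precsim P$, hence $\FT(Q)\leq \FT(P)$. Applying this with $Q = P_{\FL}$ and $P = E_J(A^\sharp)$ gives $\FD(\FL) = \FT(P_{\FL})\leq \FT(E_J(A^\sharp)) = N(J;A)$. Taking the supremum over admissible $\FL$ finishes ``$\leq$''.

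Finally, the open-interval statement follows by the same argument, now with strict inequality $\|(A^\sharp - q)\phi\| < r\|\phi\|$ in the admissibility condition; the injectivity argument above in fact only needs the strict version to be clean, and for the closed interval the boundary case $|\lambda - q| = r$ is harmless since those $\lambda$ lie in $J$. One small point worth checking is that $E_J(A^\sharp)\in\FA$ and $P_{\FL}\in\FA$ so that the comparison of projections takes place inside the factor $\FA$; this is immediate from $A^\sharp$ (and $\FL$) being affiliated with $\FA$. The main obstacle is the upper bound: one must correctly invoke the comparison-of-projections lemma, being careful that ``$P|_{Q\FH}$ injective'' is genuinely the right hypothesis (it is the operator-theoretic statement that $Q\FH \cap \ker P = \{0\}$, which combined with both projections being in $\FA$ yields $Q\precsim P$), and handling the closed-versus-open endpoint subtlety so that strict or non-strict inequalities land in the right spectral region.
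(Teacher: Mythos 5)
Your proposal is correct and follows essentially the same route as the paper: the lower bound by testing with $\FL=E_J(A^\sharp)\FH$, and the upper bound by the comparison theory of projections in the factor $\FA$ (the paper cites the contrapositive form from Naimark, that $\FD(\FL)>\FD(E_J(A^\sharp)\FH)$ forces a nonzero $\phi\in\FL$ orthogonal to $E_J(A^\sharp)\FH$, which is the same fact as your Kaplansky/parallelogram argument). One small slip: you have the open/closed cases swapped in your parenthetical — for closed $J=[q-r,q+r]$ one has $|\lambda-q|>r$ strictly on $\R\setminus J$, so no extra endpoint argument is needed there, while for open $J$ only $|\lambda-q|\ge r$ holds on the complement, which suffices because the admissibility condition is then the strict inequality.
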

\begin{rem}
  Usually, variational characterisations such as \eqref{eq:variational} are
  given in terms of quadratic forms rather than norms. The reason why we cannot
  do so is because we do not assume the operator $A$ to be semi-bounded, $J$ a
  semi-infinite interval. One can interpret Lemma \ref{lem:variational} in terms
  of quadratic forms as usual for the nonnegative operator $(A - q)^2$.
\end{rem}
\begin{proof}
 Choosing $\FL := E_J(A^\sharp) \FH$, we observe that $N(J;A)$ is at most the right hand side of \eqref{eq:variational}.
Suppose that there exists a subspace $\FL$ that satisfies the assumptions on the
righthand side of \eqref{eq:variational} and $\FD(\FL) > \FD(E_J(A^\sharp)\FH)$. 
Then \cite[\S 37.1, Lemma]{Naimark} implies that $\FL$ contains an element
$\phi$ orthogonal to $E_J(A^\sharp)\FH$, implying that $\|(A^\sharp-q)\phi\|_{\FH}^2>r^2\|\phi\|_{\FH}^2$, which is a contradiction.
\end{proof}

The following lemma generalises \cite[Corollary 4.3]{ParSht2012}
to operators that are not necessarily bounded below and unbounded perturbations.

\begin{lem} \label{lem:monotonicity}
Let $A\in\BS\BE\BS^\alpha$, $\alpha>0$, and $B\in\BS^\beta$, $\beta<\alpha$,
symmetric operators. Let $J:=[q-r,q+r]\subset \R$ be the interval of length $2r
> 0$ centred at $q\in\mathbb{R}$. Then there exists
a constant $C\geq 0$ depending only on $A$ and $\beta$ such that, for
\begin{equation}\label{eq:epsa}
\epsilon:=\epsilon_{J,A,B}:=\begin{cases}
  \hfill\norm{B}\hfill & \text{if }\beta \leq 0,\\
\dfrac{\snorm{B}{\beta}{0}}{2 +\snorm{B}{\beta}{0}}\Big(r + \abs q+
C\big(1+\snorm{B}{\beta}{0}\big)^{\frac{\alpha}{\alpha-\beta}}\Big)& \text{if } \beta>0,\end{cases}
\end{equation}
the inequality
\begin{equation}\label{eq:bounds}
N(J_{-\epsilon}; A)\leq N(J; A+B)\leq N(J_{\epsilon}; A)
\end{equation}
holds.
\end{lem}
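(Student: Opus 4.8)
The plan is to derive both inequalities in \eqref{eq:bounds} from the variational characterisation of the DSM in Lemma \ref{lem:variational}, combined with the explicit relative bound for $B^\sharp$ furnished by Lemma \ref{lem:relboundsrho}. First one checks that the statement is well-posed. Since $A\in\BS\BE\BS^\alpha$ and $\beta<\alpha$, the sum $A+B$ again lies in $\BS\BE\BS^\alpha$: its diagonal part is elliptic of order $\alpha$ because $|a^\CD(\xi)+b^\CD(\xi)|\ge\kappa\ang\xi^\alpha-\snorm B\beta0\ang\xi^\beta\ge\tfrac\kappa2\ang\xi^\alpha$ for $\ang\xi$ large, while its off-diagonal part has order $\max\{\gamma,\beta\}<\alpha$, where $\gamma<\alpha$ is chosen so that $A^{\CO\CD}\in\BS^\gamma$. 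Hence, by Lemma \ref{lem:sharpWES}, $A^\sharp$ and $(A+B)^\sharp$ are self-adjoint on the common domain $\widetilde{\mathsf H}^\alpha$, on which $(A+B)^\sharp=A^\sharp+B^\sharp$ (the right side being closed since $B^\sharp$ is infinitesimally $A^\sharp$-bounded by Lemma \ref{lem:relboundsrho}), so both DSMs in \eqref{eq:bounds} are defined. When $\beta\le 0$ the proof is immediate: $B^\sharp$ is bounded with $\|B^\sharp\|_{\FH\to\FH}=\|B\|=\epsilon$, so $\big|\,\|(A^\sharp-q)\phi\|-\|((A+B)^\sharp-q)\phi\|\,\big|\le\epsilon\|\phi\|$ for $\phi\in\widetilde{\mathsf H}^\alpha$, and inserting an admissible subspace from one side of \eqref{eq:variational} into the other immediately gives \eqref{eq:bounds}.

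Assume now $\beta>0$. The aim is to turn \eqref{eq:relboundssharp} into an estimate $\|B^\sharp\phi\|_{\FH}\le\theta\|A^\sharp\phi\|_{\FH}+\theta D\|\phi\|_{\FH}$ with a small prefactor $\theta\in(0,1)$ and a constant $D$ controlled in terms of $A$, $\beta$ and $\snorm B\beta0$. Fixing $(\kappa,\er_0)\in\FE(A^\CD)$ and setting $\theta:=\snorm B\beta0/\big(2(1+\snorm B\beta0)\big)$, I would choose
\[
  \er:=\max\Big\{\er_0,\ \big(2\snorm{A^{\CO\CD}}\gamma0/\kappa\big)^{1/(\alpha-\gamma)},\ \big(2\snorm B\beta0/(\kappa\theta)\big)^{1/(\alpha-\beta)}\Big\},
\]
so that $(\kappa,\er)\in\FE(A^\CD)$ by monotonicity of $\FE$, the denominator in \eqref{eq:relboundssharp} is at least $\kappa/2$, and the ensuing prefactor $2\er^{\beta-\alpha}\snorm B\beta0/\kappa$ does not exceed $\theta$. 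This gives the desired estimate with $D:=\kappa\er^\alpha\big(1+\kappa^{-1}\snorm{A^\CD}\alpha0\big)$, and since $\snorm B\beta0/\theta=2\big(1+\snorm B\beta0\big)$ one reads off $\er^\alpha\le C_A\big(1+\snorm B\beta0\big)^{\alpha/(\alpha-\beta)}$ and hence $D\le C\big(1+\snorm B\beta0\big)^{\alpha/(\alpha-\beta)}$, with $C$ depending only on $A$ and $\beta$; this is the constant $C$ appearing in \eqref{eq:epsa}.

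For the upper bound, take any $\FL\eta\FA$ admissible in \eqref{eq:variational} for $N(J;A+B)$; then $\FL\subset\dom((A+B)^\sharp)=\widetilde{\mathsf H}^\alpha=\dom(A^\sharp)$, and for $\phi\in\FL$, the triangle inequality together with the estimate above and $\|A^\sharp\phi\|\le\|(A^\sharp-q)\phi\|+|q|\|\phi\|$ gives
\[
  \|(A^\sharp-q)\phi\|\le r\|\phi\|+\theta\|(A^\sharp-q)\phi\|+\theta\big(|q|+D\big)\|\phi\|,
\]
whence $\|(A^\sharp-q)\phi\|\le\big(r+\tfrac{\theta}{1-\theta}(r+|q|+D)\big)\|\phi\|$. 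Since $\tfrac{\theta}{1-\theta}=\snorm B\beta0/\big(2+\snorm B\beta0\big)$ and $D\le C\big(1+\snorm B\beta0\big)^{\alpha/(\alpha-\beta)}$, the radius on the right is $\le r+\epsilon$ with $\epsilon$ as in \eqref{eq:epsa}; thus $\FL$ is admissible for $N(J_\epsilon;A)$, so $\FD(\FL)\le N(J_\epsilon;A)$, and taking the supremum over $\FL$ yields $N(J;A+B)\le N(J_\epsilon;A)$. For the lower bound: if $\epsilon>r$ then $J_{-\epsilon}=\varnothing$ and nothing is to be proved; otherwise, for $\FL$ admissible for $N(J_{-\epsilon};A)$, i.e.~$\|(A^\sharp-q)\phi\|\le(r-\epsilon)\|\phi\|$ on $\FL$, the same bound on $\|B^\sharp\phi\|$ gives $\|((A+B)^\sharp-q)\phi\|\le(r-\epsilon)\|\phi\|+\theta\big(r-\epsilon+|q|+D\big)\|\phi\|\le r\|\phi\|$ because $\theta\le\snorm B\beta0/\big(2+\snorm B\beta0\big)$ and $r-\epsilon+|q|+D\le r+|q|+C\big(1+\snorm B\beta0\big)^{\alpha/(\alpha-\beta)}$; hence $N(J_{-\epsilon};A)\le N(J;A+B)$. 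The degenerate case $r=\epsilon$ (the interval $\{q\}$) is handled by applying this argument directly to $E_{\{q\}}(A^\sharp)$ instead of invoking Lemma \ref{lem:variational}.

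The only genuinely laborious part is the constant bookkeeping in the second paragraph — verifying that this specific choice of $\theta$ and $\er$ reproduces precisely the shape of $\epsilon$ prescribed by \eqref{eq:epsa}; everything else is just the triangle inequality packaged through \eqref{eq:variational} and \eqref{eq:relboundssharp}.
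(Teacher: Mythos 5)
Your proof is correct and follows essentially the same route as the paper's: the variational characterisation of Lemma \ref{lem:variational} combined with the relative bound \eqref{eq:relboundssharp}, and your choice of $\er$ (note $2\snorm{B}{\beta}{0}/(\kappa\theta)=4(1+\snorm{B}{\beta}{0})/\kappa$) coincides with the paper's, so the constant bookkeeping reproduces \eqref{eq:epsa} exactly. The only (harmless) cosmetic differences are that you run the upper bound over arbitrary admissible subspaces rather than the spectral subspace $E_J((A+B)^\sharp)\FH$, and that you flag the degenerate case $\eps=r$ explicitly.
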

\begin{proof} 
In view of Lemma \ref{lem:relboundsrho}(\ref{approx_domain},\ref{relbound}) and property \eqref{item:rhohomo} of the representation $\rho$, one has that $(A+B)^\sharp =A^\sharp+B^\sharp$ on $\dom(A^\sharp)\subseteq\dom(B^\sharp)$. Fix $\phi\in \FL := E_{J}\big((A+B)^\sharp)\FH\subseteq \dom\big((A+B)^\sharp\big)=\dom(A^\sharp)$, so that
\begin{equation}\label{A+B}
  \norm{(A^\sharp + B^\sharp - q)\phi}_\FH \le r \norm \phi_\FH.
\end{equation}
We will show that
\begin{equation}\label{eq:A-Meps}
  \norm{(A^\sharp - q) \phi}_\FH \le (r + \eps) \norm \phi_\FH
\end{equation}
holds, which in view of \eqref{eq:variational} implies the second inequality in \eqref{eq:bounds}. Since 
\begin{equation}\label{eq:A-Mxtriag}
  \begin{aligned}
  \norm{(A^\sharp-q)\phi}_\FH &\le \norm{(A^\sharp+B^\sharp - q) \phi}_\FH +
  \norm{B^\sharp \phi}_\FH\\
  &\le r \norm \phi_\FH + \norm{B^\sharp \phi}_\FH,
\end{aligned}
\end{equation}
it is sufficient to estimate $\norm{B^\sharp \phi}_\FH$. For $\beta \le 0$, Lemma
\ref{lem:sharpS0} and Corollary \ref{cor:domain} imply
\begin{equation}
  \norm{B^\sharp \phi}_\FH \le
  \norm{B^\sharp}\norm{\phi}_\FH=\norm{B}\norm{\phi}_\FH = \eps \norm \phi_\FH,
\end{equation}
and \eqref{eq:A-Meps} follows from \eqref{A+B} and \eqref{eq:A-Mxtriag}.

From now on, we consider $\beta > 0$.
By assumption we can choose $\gamma\in (\beta,\alpha)$ such that $A^{\CO\CD} \in
\BS^\gamma$. Let $(\kappa,\er)\in\FE(A^\CD)$ with
\begin{equation}
  \er\geq\max\Bigg\lbrace\bigg(\frac{4\big(1+\snorm{B}{\beta}{0}\big)}{\kappa}\bigg)^{1/(\alpha-\beta)};\quad\bigg(\frac{2\snorm{A^{\CO\CD}}{\gamma}{0}}{\kappa}\bigg)^{1/(\alpha-\gamma)}\Bigg\rbrace.
\end{equation}
As $\phi\in\mathsf \dom(A^\sharp)$, Lemma~\ref{lem:relboundsrho}\eqref{relbound} yields
\begin{equation}\label{eq:relativbdspecificR}
\begin{aligned}
\norm{B^\sharp \phi}_\FH&\leq 2\snorm{B}{\beta}{0}\Big[\er^{\beta-\alpha}\kappa^{-1}
\norm{A^\sharp \phi}_\FH +
\er^\beta\big(1+\kappa^{-1}\snorm{A^\CD}{\alpha}{0}\big)\norm{\phi}_\FH\Big]\\
&\leq \frac{\snorm{B}{\beta}{0}}{2}\Big[ \frac{\norm{(A^\sharp-q)\phi}_\FH
+|q|\|\phi\|_\FH}{1+\snorm{B}{\beta}{0}}+
C\big(1+\snorm{B}{\beta}{0}\big)^{\frac{\beta}{\alpha-\beta}}\norm{\phi}_\FH\Big],
\end{aligned}
\end{equation}
where $C$ is a constant only depending on $A$ and $\beta$. Combining \eqref{eq:A-Mxtriag} and \eqref{eq:relativbdspecificR}, we get
\begin{equation*}
\begin{aligned}
&\frac{2
+\snorm{B}{\beta}{0}}{2\big(1+\snorm{B}{\beta}{0}\big)}\|(A^\sharp-q)\phi\|_\FH
\leq
\bigg[r+\frac{\snorm{B}{\beta}{0}}{2\big(1+\snorm{B}{\beta}{0}\big)}\Big(|q|+C\big(1+\snorm{B}{\beta}{0}\big)^{\frac{\alpha}{\alpha-\beta}}\Big)\bigg]\norm{\phi}_\FH.
\end{aligned}
\end{equation*}
Hence, we arrive at \eqref{eq:A-Meps} with $\epsilon$ as in \eqref{eq:epsa}.

For the first inequality in \eqref{eq:bounds} the only non-trivial case is $\epsilon\leq r$. For all $\phi\in E_{J_{-\eps}}(A^\sharp)\FH\subset\dom(A^\sharp)$ we have
\begin{equation}\label{eq:A-Mr-eps}
\norm{(A^\sharp-q)\phi}_\FH\leq (r-\epsilon)\norm{\phi}_\FH.
\end{equation}
This implies
\begin{equation}
\norm{(A^\sharp+B^\sharp-q)\phi}_\FH\leq
(r-\epsilon)\norm{\phi}_\FH+\norm{B^\sharp \phi}_\FH,
\end{equation}
where in view of \eqref{eq:relativbdspecificR} and \eqref{eq:epsa}
\begin{equation}
\norm{B^\sharp \phi}_\FH \leq \frac{\snorm{B}{\beta}{0}}{2
+2\snorm{B}{\beta}{0}}\Big(r + \abs q+
C\big(1+\snorm{B}{\beta}{0}\big)^{\frac{\alpha}{\alpha-\beta}}\Big)\norm{\phi}_\FH
\leq\epsilon \norm{\phi}_\FH.
\end{equation}
Thus, the first inequality in \eqref{eq:bounds} follows and the lemma is proved. 
\end{proof}

The next lemma deals with perturbations that are `spectrally far' from a given interval. It is a generalisation of \cite[Lemma 11.1]{MorParSht2014} for operators that are not necessarily bounded below.

\begin{lem} \label{lem:spectralperturb}
For $\alpha>0$, $\beta < \alpha$ let $H_0 \in \BD\BE\BS^\alpha$, $B \in \BS^\beta$,
and $A\in\BS^0$ be symmetric operators and set $H:=H_0+B\in\BS\BE\BS^\alpha$. Suppose that there exists a family of orthogonal projections 
$\set{P_l}_{l=0}^L$ with $P_l\in \BS^{-\alpha}$, $0\leq l\leq L-1$, and $P_L\in \BS^0$ that all commute with $H_0$ and satisfy
\begin{equation}
  \sum_{l=0}^L P_l = I, \quad \text{and} \quad A = A P_0, \quad B_{n,l} := P_n B P_l = 0, \quad \text{for} \ \abs{n-l}>1.
  \label{eq:projections}
\end{equation}
Moreover, let 
$J = (q-r,q+r)$ be an interval such that 
\begin{equation}
D_l:=\operatorname{dist}\big(J,\sigma\big((P_l H P_l)^\sharp\big)\big)>0,\quad \text{for all}\ \  0\leq l<L.
\label{eq:distspectrum}
\end{equation}
Finally, assume that
\begin{equation}\label{eq:tecassL}
3^Lr\geq d_L:=\min\limits_{1\leq l<L} D_l
\end{equation}
and
\begin{equation}
  \max_{0 \le l < L} (\norm{B_{l,l-1}} + \norm{B_{l,l+1}})/D_l \le 1/4,
  \label{eq:projectionnorm}
\end{equation}
where we use the convention $B_{0,-1}:=0$.

Then for
\begin{equation}
  \eps := 3^{2 - \frac{L}{2}} \Big(\frac{r}{d_L}\Big)^{1/2} \|A\|
  \label{eq:epschoice}
\end{equation}
we have that
\begin{equation}
  N(J_{-\eps};H) \le N(J;H + A) \le N(J_{\eps};H).
  \label{eq:approximationcontrol}
\end{equation}
\end{lem}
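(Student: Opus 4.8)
The plan is to establish the two-sided bound \eqref{eq:approximationcontrol} via the variational characterisation of the DSM in Lemma~\ref{lem:variational}, following the strategy of \cite[Lemma~11.1]{MorParSht2014} but adapted to the non-semibounded setting. By symmetry of the argument it suffices to prove the upper bound $N(J;H+A)\le N(J_\eps;H)$; the lower bound follows by exchanging the roles of $H$ and $H+A$ (note $-A=-AP_0$ satisfies the same hypotheses, and $d_L$ for $H+A$ is comparable to that of $H$ once \eqref{eq:projectionnorm} is in force, since $P_lHP_l$ and $P_l(H+A)P_l$ differ only at $l=0$ where $A=AP_0$). So fix $\FL:=E_J((H+A)^\sharp)\FH$, a subspace affiliated with $\FA$ and contained in $\dom((H+A)^\sharp)=\dom(H^\sharp)=\widetilde{\mathsf H}^\alpha$ by Lemma~\ref{lem:relboundsrho}. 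For $\phi\in\FL$ we have $\|((H+A)^\sharp-q)\phi\|_\FH\le r\|\phi\|_\FH$, and the goal is to show $\|(H^\sharp-q)\phi\|_\FH\le(r+\eps)\|\phi\|_\FH$, which by Lemma~\ref{lem:variational} gives $\FD(\FL)\le N(J_\eps;H)$, hence the claim.

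The heart of the matter is to show that vectors $\phi$ in the spectral subspace of $(H+A)^\sharp$ near $J$ have small projection $\|P_0^\sharp\phi\|_\FH$ — intuitively, because $A=AP_0$ is supported on the block $l=0$, which sits spectrally far from $J$ as measured by $D_0$, and because the off-diagonal couplings $B_{l,l\pm1}$ linking block $0$ to the bulk are small by \eqref{eq:projectionnorm}. The standard device is an iterated Combes--Thomas / telescoping estimate: writing $\phi_l:=P_l^\sharp\phi$, one uses that $(P_lHP_l)^\sharp$ has spectrum at distance $D_l$ from $J$ (so its resolvent at any point of $J$ is bounded by $D_l^{-1}$) together with the tridiagonal structure \eqref{eq:projections} to propagate a bound from $\phi_0$ outward: schematically $\|\phi_l\|\lesssim (\|B_{l,l-1}\|/D_l)\|\phi_{l-1}\|+(\|B_{l,l+1}\|/D_l)\|\phi_{l+1}\|+(\text{genuine }J\text{-part})$. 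Feeding in \eqref{eq:projectionnorm} so that each hop contributes a factor $\le1/4$, and chaining $L$ of them, one extracts a gain of roughly $3^{-L/2}(r/d_L)^{1/2}$ on $\|\phi_0\|_\FH$ relative to $\|\phi\|_\FH$; the technical condition \eqref{eq:tecassL} is exactly what guarantees that the geometric-series tail from the ``$J$-part'' at the innermost blocks does not dominate the $3^{-L/2}$ decay. This is the step I expect to be the main obstacle: carefully bookkeeping the recursion for projections $P_l$ of order $-\alpha$ versus the bounded $P_L$, controlling the resolvents $(P_lHP_l)^\sharp-q)^{-1}$ uniformly on $J$, and getting the constants in \eqref{eq:epschoice} to come out with the stated powers of $3$, $r/d_L$, and $\|A\|$.

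Once $\|P_0^\sharp\phi\|_\FH\le \delta\|\phi\|_\FH$ with $\delta$ of the order $3^{-L/2}(r/d_L)^{1/2}$ is established, the conclusion is quick: since $A^\sharp=A^\sharp P_0^\sharp$ is bounded by $\|A\|$ (Lemma~\ref{lem:sharpS0}, Corollary~\ref{cor:domain}), we get
\begin{equation*}
  \|A^\sharp\phi\|_\FH=\|A^\sharp P_0^\sharp\phi\|_\FH\le\|A\|\,\|P_0^\sharp\phi\|_\FH\le\|A\|\,\delta\,\|\phi\|_\FH\le\eps\|\phi\|_\FH
\end{equation*}
with $\eps=3^{2-L/2}(r/d_L)^{1/2}\|A\|$ absorbing the numerical constants from the recursion. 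Then by the triangle inequality
\begin{equation*}
  \|(H^\sharp-q)\phi\|_\FH\le\|((H+A)^\sharp-q)\phi\|_\FH+\|A^\sharp\phi\|_\FH\le(r+\eps)\|\phi\|_\FH,
\end{equation*}
which is precisely the condition required by Lemma~\ref{lem:variational} to conclude $\FD(\FL)\le N(J_\eps;H)$, i.e.\ $N(J;H+A)\le N(J_\eps;H)$. Running the same argument with $H$ and $H+A$ interchanged (and $J$ replaced by $J_{-\eps}$, noting $(J_{-\eps})_\eps\subseteq J$) yields $N(J_{-\eps};H)\le N(J;H+A)$, completing the proof of \eqref{eq:approximationcontrol}.
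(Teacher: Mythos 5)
Your high-level strategy coincides with the paper's: use the variational characterisation of Lemma~\ref{lem:variational}, reduce to bounding $\|A^\sharp\phi\|_\FH=\|A^\sharp P_0^\sharp\phi\|_\FH$, and control $\|P_0^\sharp\phi\|_\FH$ through the block-tridiagonal structure \eqref{eq:projections} and the spectral-distance bounds \eqref{eq:distspectrum}. But the proposal is not a proof: the step you yourself flag as ``the main obstacle'' --- extracting the factor $3^{-L/2}(r/d_L)^{1/2}$ from the recursion --- is precisely where the idea lives, and the mechanism you sketch for it does not produce that factor. If you run the recursion directly on $\phi$ (knowing only $\|(H^\sharp-q)\phi\|_\FH\le r\|\phi\|_\FH$), the source term at each block is of size $r/d_L$, and chaining the hops with the factor $\le 1/4$ from \eqref{eq:projectionnorm} yields $\|P_0^\sharp\phi\|_\FH\lesssim (r/d_L+3^{-L})\|\phi\|_\FH$ --- a \emph{sum}, not a geometric mean, and with $3^{-L}$ rather than $3^{-L/2}$. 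The paper's device for upgrading this is to partition the interval into $2K+1$ equal subintervals $I_k$ with $2K+1\asymp 3^{L}r/d_L$, run the recursion separately on each spectral piece $\phi^k=E_{I_k}(H^\sharp)\phi$ (for which the source term improves to $r/((2K+1)d_L)$ because $\phi^k$ is an approximate eigenvector to that accuracy), and then recombine via Cauchy--Schwarz and the Pythagorean theorem at the cost of $\sqrt{2K+1}$; balancing $3r/((2K+1)d_L)$ against $3^{-L}$ gives exactly \eqref{eq:epschoice}, and \eqref{eq:tecassL} is what makes the optimal $K$ at least $1$. None of this appears in your proposal, and your heuristic (``chaining $L$ hops'') cannot recover the square root.

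A secondary problem is the direction you choose to prove first. You take $\phi\in E_J((H+A)^\sharp)\FH$ and propagate through the blocks of $H$; but applying $P_l^\sharp$ to $((H+A)^\sharp-\mu)\phi$ produces, for \emph{every} $l$, an extra term $(P_lAP_0)^\sharp\phi_0$ (the relation $A=AP_0$ constrains the right factor only), so the clean tridiagonal recursion you invoke is not available for the perturbed operator. The paper instead starts from $\phi\in E_{J_{-\eps}}(H^\sharp)\FH$, where the recursion involves only $H$, whose blocks are exactly what \eqref{eq:distspectrum} and \eqref{eq:projectionnorm} control, and shows $\|((H+A)^\sharp-q)\phi\|_\FH\le r\|\phi\|_\FH$ to conclude $N(J_{-\eps};H)\le N(J;H+A)$. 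Relatedly, your justification for deducing the other inequality by ``exchanging the roles of $H$ and $H+A$'' is not substantiated: the hypotheses are not symmetric in $H$ and $H+A$ (the block $P_0(H+A)P_0$ may sit closer to $J$ than $P_0HP_0$ by as much as $\|A\|$, and nothing in the assumptions compares $\|A\|$ with $D_0$), so the claimed comparability of the distances needs an argument.
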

\begin{proof}
We only prove the first inequality; the second inequality follows analogously. 
  It suffices to show that for any $\phi \in E_{J_{-\eps}}(H^\sharp) \FH\subseteq\dom(H^\sharp)=\widetilde{\mathsf{H}}^\alpha$, one has
  $$\norm{(H^\sharp + A^\sharp - q) \phi}_\FH \leq r \norm \phi_\FH.$$ For any $K \in \N$, we split the interval $J_{-\epsilon}$ into
  $2K+1$ subintervals of equal width: for $- K  \le k \le K - 1$, set
  \begin{equation}
    I_k := \left(q + (2k - 1)\frac{(r - \eps)}{2K+1}, q + (2k+1)\frac{(r -
      \eps)}{2K+1}\right]
    \label{eq:intIk}
  \end{equation}
  and 
  \begin{equation}
    I_K:= \left(q + (2K -1)\frac{(r-\eps)}{2K+1}, q + r - \eps\right).
    \label{eq:intIK}
  \end{equation}
  For $\phi \in E_{J_{-\eps}}(H^\sharp)\FH$ and $- K \le k \le K$ define $\phi^{k}:=
  E_{I_k} (H^\sharp)\phi\in\widetilde{\mathsf H}^\alpha$ and
  \begin{equation}
    \eta^{k} := H^\sharp \phi^{k} - \left(q + 2k \frac{(r - \eps)}{2K+1}\right) \phi^k,
    \label{eq:etak}
  \end{equation}
  so that
  \begin{equation}
  \norm{\eta^{ k}}_\FH \le \frac{r}{(2K + 1)} \norm{\phi^{ k}}_\FH
    \label{eq:normetak}
  \end{equation}
holds.
We also introduce 
$$\phi_l^k:=P_l^\sharp\phi_k \quad \text{ and } \quad
\eta_l^k:=P_l^\sharp\eta_k, \quad \text{
for } -K\leq k\leq K \, \text{ and } \, 0\leq l \leq L.$$ For $0\leq l<L$, we clearly have
$P_l^\sharp H^\sharp=(P_lH)^\sharp$ on $\widetilde{\mathsf H}^\infty$ and, since
$P_lH\in\BS^0$, this identity extends to $\widetilde{\mathsf{H}}^\alpha$.
Moreover, $P_l$ commutes with $H_0$ so that \eqref{eq:projections} implies that
on $\widetilde \RH^\alpha$
\begin{align*}
P_l^\sharp H^\sharp &= (P_lH)^\sharp=(P_lHP_l)^\sharp+ B_{l,l-1}^\sharp+B_{l,l+1}^\sharp\\
&=(P_lHP_l)^\sharp P_l^\sharp+ B_{l,l-1}^\sharp P_{l-1}^\sharp+B_{l,l+1}^\sharp
P_{l+1}^\sharp,
\end{align*}
where we use the convention $P_{-1}:=0$. Thus, applying $P_l^\sharp$ to \eqref{eq:etak}, we arrive at
  \begin{equation}
    \eta_l^{ k} = B_{l,l-1}^\sharp \phi_{l-1}^{ k} + \left( (P_lHP_l)^\sharp -
      \left(q + 2
        k\frac{(r-\eps)}{2K+1}
    \right) \right)\phi_l^{ k} + B_{l,l+1}^\sharp \phi_{l+1}^{k},
    \label{eq:etalk}
  \end{equation}
  for $0 \le l <L$, and Lemma \ref{lem:sharpS0} together with
  \eqref{eq:distspectrum} and \eqref{eq:normetak} gives for $0 \le l < L$,
  \begin{equation}
    \begin{aligned}
      \norm{\phi_l^{ k}}_\FH &\le D_l^{-1}\Big( \norm{\eta_l^{ k}}_\FH + \norm{B_{l,l-1}}
      \norm{\phi_{l-1}^{ k}}_\FH + \norm{B_{l,l+1}} \norm{\phi_{l+1}^{ k}}_\FH \Big)\\
      &\le \frac{r}{(2K+1) d_L} \norm{\phi^{k}}_\FH +
      \frac{\norm{\phi_{l-1}^{k}}_\FH +
      \norm{\phi_{l+1}^{k}}_\FH}{4}. 
  \end{aligned}
    \label{eq:normphilk}
  \end{equation}
Recursively for $0 \le l < L$ we deduce that
  \begin{equation}
    \norm{\phi^{k}_l}_\FH \le \frac{2r}{(2K+1) d_L}\norm{\phi^{k}}_\FH +
  \frac{1}{3}\norm{\phi_{l+1}^k}_\FH.
\end{equation}
Hence, employing the trivial bound $\norm{\phi_L^k}_\FH\le \norm{\phi^k}_\FH$, we get that
\begin{equation}
  \norm{\phi_0^k}_\FH \le \left( \frac{3r}{(2K+1) d_L} +
  3^{-L}\right)\norm{\phi^k}_\FH.
  \label{eq:phi0phik}
\end{equation}
In view of Lemma \ref{lem:sharpS0}, it follows that for all $- K \le k \le K$,
  \begin{equation}
    \begin{aligned}
    \norm{A^\sharp \phi^{k}}_\FH &=  \norm{(AP_0)^\sharp \phi^{k}}_\FH\\&=
    \norm{A^\sharp \phi_0^k}_\FH \\&\le  \left( \frac{3r}{(2K+1) d_L} +
    3^{-L}\right) \norm{A}\norm{\phi^k}_\FH,
  \end{aligned}
  \end{equation}
  whence the Cauchy--Schwarz inequality and the Pythagorean theorem yield
  \begin{equation}
    \begin{aligned}
    \norm{A^\sharp\phi}_\FH &\le \sum_{-K \le k \le K} \norm{A^\sharp
    \phi^k}_\FH \\&\le  
    \left( \frac{3r}{(2K+1) d_L} + 3^{-L}\right)\sqrt{2K+1}\norm{A}
    \norm{\phi}_\FH.
  \end{aligned}
  \end{equation}
  We choose
  \begin{equation}
    K = \flo{\frac{3^{L+1}r}{2d_L} - \frac 1 2} + 1,
    \label{eq:deltachoice}
  \end{equation}
  so that $\frac{3^{L+1}r}{d_L}\leq 2K+1\leq \frac{3^{L+2}r}{d_L}$. Then, by \eqref{eq:tecassL},
  we have
  \begin{equation}
    \left(\frac{3r}{(2K + 1)d_L} + 3^{-L}\right)\sqrt{2K+1} \le 3^{2 -
    \frac{L}{2}} \Big(\frac{r}{d_L}\Big)^{1/2}.
    \label{eq:boundsecterm}
  \end{equation}
Consequently, we arrive at
  \begin{equation}
    \begin{aligned}
      \norm{(H^\sharp + A^\sharp - q) \phi}_\FH &\le \norm{(H^\sharp - q )
      \phi}_\FH+ \norm{A^\sharp\phi}_\FH  \\
      &\le \Big( (r - \eps) + 3^{2 - \frac{L}{2}} \Big(\frac{r}{d_L}\Big)^{1/2}\norm{A}\Big)
      \norm{\phi}_\FH\\
      &= r \norm \phi_\FH,
    \end{aligned}
    \label{eq:therightthing}
  \end{equation}
  where we used that $\phi\in E_{J_{-\eps}}(H^\sharp)$ and the value of $\eps$
  given in \eqref{eq:epschoice}.
\end{proof}

\section{Gauge Transform}\label{sec:gt}

Let $\alpha\in \R$ and $A = \Op(a) \in \BS\BE\BS^\alpha$ be symmetric, thus extends to a self-adjoint linear operator
on $\mathsf H^{\alpha}$ by Proposition \ref{prop:closedness and self-adjointness}.
\begin{defi} For every symmetric $\Psi\in\BS^0$, the unitary transformation of $A$ into 
$$[A] := [A]_\Psi := \exp(- \mathrm i \Psi) A \exp(\mathrm i \Psi)$$ is called a
\emph{gauge transform}. 
\end{defi}
\noindent We remark here that, due to Lemma \ref{lem:product} and Corollary \ref{cor:domain}, the series
\begin{equation}
  \label{eq:exppsi}
\exp(\mathrm i\Psi)=\sum\limits_{k=0}^\infty \frac{\big(\mathrm i\Psi\big)^k}{k!}
\end{equation} 
converges both in $\BS^0$ and in the operator norm. In particular, Lemma \ref{lem:complete} implies that $\exp(\mathrm i\Psi)\in\BS^0$, whence $\exp(\mathrm i\Psi)$ is unitary and $[A]_\Psi\in\BE\BS^\alpha$ is symmetric.
The following lemma provides an expansion of $[A]_\Psi$  into a series of multiple commutators of $A$ with $\Psi$, see \eqref{eq:defadk} for the definition of $\ad^k$.
\begin{lem}\label{lem:A'series}
We have 
\begin{equation}\label{eq:H'formalseries}
  [A]_\Psi = \sum_{k = 0}^\infty \frac{1}{k!}\ad^k(A;\Psi),
\end{equation}
where the series converges absolutely in $\BS^\alpha$.
\end{lem}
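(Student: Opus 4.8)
The plan is to combine the known convergence of the exponential series with the joint continuity of the product in $\BS^\infty$ to expand $[A]_\Psi$ as an absolutely convergent double series in the operators $\Psi^n A \Psi^m$, then to regroup the terms by total degree $k=n+m$ and recognise each group as $\tfrac1{k!}\ad^k(A;\Psi)$ via an elementary binomial identity.

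First I would record the two convergence facts that are needed. By the estimate \eqref{eq:normestcommzeroorder} of Lemma \ref{lem:weakcommest}, for every $l\ge 0$ one has $\snorm{\ad^k(A;\Psi)}{\alpha}{l}\le 2^k\snorm{A}{\alpha}{l}\big(\snorm{\Psi}{0}{l+\abs\alpha}\big)^k$, so $\sum_{k\ge0}\tfrac1{k!}\snorm{\ad^k(A;\Psi)}{\alpha}{l}\le\snorm{A}{\alpha}{l}\exp\big(2\snorm{\Psi}{0}{l+\abs\alpha}\big)<\infty$; hence by Corollary \ref{cor:complete} the right-hand side of \eqref{eq:H'formalseries} converges absolutely in $\BS^\alpha$, which already justifies the last clause of the statement. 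Second, iterating the product bound \eqref{eq:product} (grouping $\Psi^n A\Psi^m$ as $\Psi^n\cdot(A\cdot\Psi^m)$ and using that $\snorm{\cdot}{0}{l}$ is increasing in $l$) gives $\snorm{\Psi^n A\Psi^m}{\alpha}{l}\le\snorm{A}{\alpha}{l}\big(\snorm{\Psi}{0}{l+\abs\alpha}\big)^{n+m}$, so $\sum_{n,m\ge0}\tfrac1{n!\,m!}\snorm{\Psi^n A\Psi^m}{\alpha}{l}\le\snorm{A}{\alpha}{l}\exp\big(2\snorm{\Psi}{0}{l+\abs\alpha}\big)<\infty$ for all $l$; thus the double series $\sum_{n,m}\tfrac{(-\mathrm i)^n\mathrm i^m}{n!\,m!}\Psi^n A\Psi^m$ converges absolutely in the Fréchet space $\BS^\alpha$ and may be summed in any order.

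Next, since $\exp(\pm\mathrm i\Psi)=\sum_k\tfrac{(\pm\mathrm i)^k}{k!}\Psi^k$ converges in $\BS^0$ (as noted after \eqref{eq:exppsi}) and the iterated product $\BS^0\times\BS^\alpha\times\BS^0\to\BS^\alpha$ is continuous by Lemma \ref{lem:product}, I can multiply the three series out termwise to obtain $[A]_\Psi=\sum_{n,m\ge0}\tfrac{(-\mathrm i)^n\mathrm i^m}{n!\,m!}\Psi^n A\Psi^m$, the absolutely convergent series of the previous step. Regrouping by $k:=n+m$, it remains to verify the finite identity $\sum_{n+m=k}\tfrac{(-\mathrm i)^n\mathrm i^m}{n!\,m!}\Psi^n A\Psi^m=\tfrac1{k!}\ad^k(A;\Psi)$. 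For this I introduce on $\BS^\infty$ the commuting linear maps $L(X):=\Psi X$ and $R(X):=X\Psi$; then $\Psi^n A\Psi^m=L^nR^m(A)$, $\ad(X,\Psi)=\mathrm i(R-L)(X)$, and the binomial theorem gives $\sum_{n+m=k}\binom{k}{n}(-\mathrm i)^n\mathrm i^m L^nR^m=\big((-\mathrm i)L+\mathrm i R\big)^k=\mathrm i^k(R-L)^k=\mathrm i^k(-\mathrm i)^k\,\ad(\,\cdot\,;\Psi)^k=\ad^k(\,\cdot\,;\Psi)$, using $\binom{k}{n}/k!=1/(n!\,m!)$ after dividing by $k!$. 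Summing over $k$ yields \eqref{eq:H'formalseries}.

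There is no genuine obstacle here: the only points requiring care are the absolute-convergence estimates that license the termwise expansion and the rearrangement of the double series — all of which reduce to the product bound \eqref{eq:product} and Corollary \ref{cor:complete} — together with correct bookkeeping of the factors of $\mathrm i$ in the binomial identity. As an alternative that sidesteps the double-series manipulation, one can argue via a one-parameter family: both $t\mapsto\exp(-\mathrm i t\Psi)A\exp(\mathrm i t\Psi)$ and $t\mapsto\sum_k\tfrac{t^k}{k!}\ad^k(A;\Psi)$ are differentiable $\BS^\alpha$-valued functions solving $X'(t)=\ad(X(t),\Psi)$ with $X(0)=A$, and since $X\mapsto\ad(X,\Psi)$ is bounded with respect to each seminorm $\snorm{\cdot}{\alpha}{l}$ by \eqref{eq:normestcomm}, a Grönwall estimate applied separately in each seminorm gives uniqueness and hence equality at $t=1$.
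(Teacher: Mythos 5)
Your argument is correct and is essentially the paper's own proof: bound $\snorm{\Psi^nA\Psi^m}{\alpha}{l}$ via \eqref{eq:product}, expand the exponentials into an absolutely convergent double series in $\BS^\alpha$, and regroup by total degree $k=n+m$. The only cosmetic difference is that you verify the identity $\tfrac1{k!}\ad^k(A;\Psi)=\sum_{n+m=k}\tfrac{(-\mathrm i)^n\mathrm i^m}{n!\,m!}\Psi^nA\Psi^m$ by the binomial theorem for the commuting left/right multiplications, where the paper obtains it recursively; both computations are correct, and your extra Gr\"onwall alternative, while not needed, is also sound.
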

\begin{proof}
Lemma \ref{lem:product} yields the bounds
\begin{equation}
\snorm{\Psi^jA\Psi^m}{\alpha}{l}\leq \big(\snorm{\Psi}{0}{l}\big)^j\snorm{A}{\alpha}{l}\big(\snorm{\Psi}{0}{l+|\alpha|}\big)^m, \quad \text{for all}\ l\geq 0.
\end{equation}
Thus, the double series 
\begin{equation}
  [A]_\Psi=\sum_{j= 0}^\infty \frac{(-\mathrm i \Psi)^j}{j!} A \sum_{m =0}^\infty \frac{(\mathrm i \Psi)^m}{m!}
  =  \sum_{j,m= 0}^\infty \frac{(-\mathrm i \Psi)^j}{j!} A \frac{(\mathrm i \Psi)^m}{m!}
\end{equation}
converges absolutely in $\BS^\alpha$. Recursively, we obtain
\begin{equation}
  \ad^k(A;\Psi) = k! \sum_{j + m = k} \frac{(-\mathrm i \Psi)^j}{j!} A \frac{(\mathrm i \Psi)^m}{m!}, \quad\text{ for all $k \ge 0$}.
\end{equation}
\end{proof}
In the remainder of this section, we look at gauge transforms that result in an
operator $[A]_\Psi$ that is closer to a diagonal operator (i.e.~an operator in $\BD\BE\BS^\alpha$) than $A$. More precisely, we construct $\Psi$ in such a way that the gauge transform removes as much of the off-diagonal part $A^{\CO\CD}$ from $A$ as possible. Let $\beta<\alpha$ such that $A^\COD\in\BS^\beta$. Then we aim at
\begin{equation}\label{eq:H'goal}
  [A]_\Psi=A^\CD + A^\CR + R,
\end{equation}
where $A^\CR\in\BS^\beta$ is an off-diagonal
\emph{resonant} part (which our transformation cannot eliminate) and $R\in\BS^\gamma$ for some $\gamma <\beta$. The exact form of the operators $A^\CR$ and $R$ depends on the choice of $\Psi$.

As a first step towards \eqref{eq:H'goal}, let us rewrite the series \eqref{eq:H'formalseries} as
\begin{equation}\label{eq:firstorder}
  [A]_\Psi = A^\CD + A^{\CO\CD} + \ad(A^\CD;\Psi) + R,
\end{equation}
with 
\begin{equation}\label{eq:defRconsecgauge}
  R:=\ad(A^{\CO\CD};\Psi)+\sum\limits_{k= 2}^\infty\frac{1}{k!}\ad^k(A;\Psi).
\end{equation}
Suppose that $\Psi \in \BS^\zeta$ with $\zeta \in \R$ and let $\kappa \in \R$. In order to achieve that $R\in\BS^\gamma$ for some $\gamma<\beta$, we can use the following estimates:
\begin{enumerate}[(1)]
\item If $\zeta <0$, then by Lemma \ref{lem:weakcommest} we get
  $\ad(X;\Psi)\in\BS^{\varkappa +\zeta}$ for all $X \in \BS^\varkappa$. We call a gauge transform that only uses these trivial bounds
on the commutator norms \emph{weak}. 
\item Sometimes the structure of the commutators allows us to prove
  $\ad(X;\Psi)\in\BS^{\varkappa +\zeta -\varepsilon}$ for some $\varepsilon >0$
  and appropriate $X \in \BS^\varkappa$.
A gauge transform exploiting this improvement shall be called \emph{strong}.
\end{enumerate} 

As we will see, the main issue with the strong gauge transform is that some
conditions under which it can be used may not be formally invariant under the
use of the gauge transform, which is in general an iterative scheme.
Furthermore, due to combinatorial issues it may be harder to verify that those
conditions are still satisfied as the number of steps increase. However, as we
will see, in many situations it is sufficient to make one step of the strong
gauge transform, and proceed from there with the weak one.

\subsection{The commutator equation}
We recall that after the gauge transform we would like to arrive at the operator $[A]_\Psi$ as in \eqref{eq:H'goal}, in the best possible case with $A^{\CR}=0$. Comparing \eqref{eq:H'goal} with \eqref{eq:firstorder} we obtain that $A^{\CR}=0$ is equivalent to the commutator equation
\begin{equation}\label{eq:comm}
  \ad(A^\CD;\Psi) + A^{\COD} = 0
\end{equation}
for $\Psi = \Op(\psi)$.

Let $A=\Op(a)$ and $\Theta$ be a frequency set for $a$. By \eqref{eq:ad symbol}, equation \eqref{eq:comm} is solved if
$\Theta$ is a frequency set for $\psi$ and
\begin{equation}
a^\CD(\theta \act \xi) \psi_\theta(\xi) -  \psi_\theta(\xi)a^\CD(\xi) = \mathrm i
a^\COD_\theta(\xi)\quad
\end{equation}
holds for all $\theta\in\Theta':=\Theta\setminus \set{\id}$ and $\xi\in\Xi$. This leads to 
\begin{equation}\label{eq:eq}
 \psi_\theta(\xi) = \frac{\mathrm ia^\COD_\theta(\xi)}{a^\CD(\theta\act\xi) - a^\CD(\xi)},
\end{equation}
for $\theta\in\Theta'$ and $\xi\in\Xi$. However, the problem of small
denominators $a^\CD(\theta\act\xi) - a^\CD(\xi)$ for some pairs $(\theta,\xi)$ generally prevents such choice of $\psi$. This motivates the following definition.
\begin{defi}\label{def:resonant}
For $\delta \in \R,\ \resc>0$, and $\theta \in G$, we call a set
$\Lambda_\theta^{\delta,\resc}\subseteq \Xi$ a $\delta$\emph{-resonant region}
generated by $\theta$ for $A^\CD$ if
it satisfies
\begin{equation}
  \Lambda_\theta^{\delta,\resc} \supseteq \set{\xi \in \Xi : \abs{a^\CD(\theta\act\xi) -
  a^\CD(\xi)} \leq \resc \ang \xi^\delta}.
  \label{eq:resonantregion}
\end{equation}
A corresponding \emph{resonance cut-off} is a function $\rco :=\rco^{\delta, s}
: G\times\Xi \to \R$, mapping $(\theta,\xi)\mapsto \rco_\theta^{\delta, s}(\xi)$, such that for all $\theta\in G$, we have
\begin{equation}\label{eq:dennonres}
  \begin{aligned}
0\leq \rco &\leq 1,\\
\rco_{\theta}^{\delta,s}(\xi)&=0,\text{ for all }\xi\in\Lambda_\theta^{\delta,s},\\
\overline{\rco_{\theta^{-1}}(\theta\act\xi)}&=\rco_\theta(\xi),\text{ for all }\xi\in\Xi.
\end{aligned}
\end{equation}
For a fixed resonance cut-off, we define the \emph{resonant} part
$B^{\CR}:=\Op(b^{\CR})$ and the \emph{non-resonant} part
$B^{\CNR}:=\Op(b^{\CNR})$ of any operator $B=\Op(b)\in\BS^\infty$ via their symbols
\begin{equation}
  \label{eq:resnonres}
  \begin{aligned}
    b^\CR &:= b^{\CO\CD}(1 - \rco^{\delta,s}), \\
    b^{\CNR} &:= b^{\CO\CD}\rco^{\delta,s}.
  \end{aligned}
\end{equation}
\end{defi}
\begin{rem}\label{rem:res}\ 
\begin{enumerate}[(i)]
\item For any $\delta\in\mathbb{R}$ and $s>0$, the only $\delta$-resonant region
  generated by $\id$ is $\Lambda_{\id}^{\delta,s}=\Xi$. Hence, every resonance cut-off $\rco$ satisfies $\rco_{\id}\equiv 0$.
\item \label{item:remrescutoff} If $\Lambda_\theta^{\delta,s}$ satisfies
\begin{equation}
\Lambda_{\theta^{-1}}^{\delta,s}=\theta\act\Lambda_{\theta}^{\delta,s},\quad  \text{ for all } \ \theta\in G,
\end{equation}
then the resonance cut-off $\rco$ can be chosen as
\begin{equation}\label{eq:rescutoffstandard}
\rco_\theta(\xi):=\bone_{\Xi\setminus\Lambda_{\theta}^{\delta,s}}(\xi), \quad \text{
for all } (\theta,\xi)\in G\times\Xi.
\end{equation}
\item \label{item:remres} If $B^{\CO\CD}\in\BS^\gamma$, $\gamma\in\R$, then 
\begin{equation}
B^{\CO\CD}=B^{\CNR}+B^{\CR},
\end{equation}
and 
\begin{equation}\label{eq:estpartbyfull}
  \snorm{B^{\CNR}}{\gamma}{l}\leq\snorm{B^{\CO\CD}}{\gamma}{l},\qquad
  \snorm{B^{\CR}}{\gamma}{l}\leq\snorm{B^{\CO\CD}}{\gamma}{l}
\end{equation}
hold for all $l\geq 0$. If $B$ is symmetric, then so are $B^{\CD}$, $B^{\CNR}$, and $B^{\CR}$.
\end{enumerate}
\end{rem}

With the help of Definition \ref{def:resonant}, the problem of small
denominators in \eqref{eq:eq} can be circumvented. Let $\delta\in\mathbb{R}$, $s>0$, and fix a resonance cut-off $\rco$ corresponding to $\delta$-resonant regions $\Lambda_\theta^{\delta,\resc}, \ \theta\in G$, for $A^\CD$. Using \eqref{eq:resnonres}, we define
\begin{equation}\label{eq:defpsiabstract}
  \psi_\theta^{\delta,s}(\xi):=\begin{cases}
    \dfrac{ia^\CNR_\theta(\xi)}{a^\CD(\theta\act\xi)-a^\CD(\xi)}&
    \text{if } \theta\in\Theta',\\[2ex]
    0 & \text{otherwise}.
\end{cases}
\end{equation}
Recall that $A^{\COD}\in\BS^\beta$ so that, in view of Remark
\ref{rem:res}\eqref{item:remres}, $A^{\CNR}\in \BS^\gamma$ for some
$\gamma\leq\beta$. 
\begin{lem}\label{lem:psi}
Let $\gamma\leq \beta$ with $A^{\CNR}\in \BS^\gamma$. Then \eqref{eq:defpsiabstract} defines a symbol $\psi^{\delta,s} \in \BS^{\gamma-\delta}$. The operator $\Psi:=\Op(\psi^{\delta,s})$ is symmetric with 
\begin{equation}\label{eq:estimatepsinorm}
  \snorm{\Psi}{\gamma - \delta}{l} \le \frac 1 \resc \snorm{A^\CNR}{\gamma}{l}, \ \text{for all}\ l\geq 0.
\end{equation}
It satisfies
\begin{equation}\label{eq:modifiedcomm}
\ad(A^\CD;\Psi)+A^{\CNR}=0.
\end{equation}
\end{lem}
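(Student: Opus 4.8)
The plan is to verify the three assertions in turn, each being an almost direct computation from the definitions, so no step should present a serious obstacle; the only point requiring care is the bookkeeping of Peetre factors.

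First I would establish the symbol estimate $\psi^{\delta,s}\in\BS^{\gamma-\delta}$ with the norm bound \eqref{eq:estimatepsinorm}. Fix a frequency set $\Theta$ for $a$, which (by convention, being closed under inversion and containing $\id$) is also a frequency set for $a^\CD$, $a^{\COD}$, $a^\CNR$, and hence for $\psi^{\delta,s}$. For $\theta\in\Theta'$ and $\xi\in\Xi$, the resonance cut-off $\rco^{\delta,s}_\theta$ vanishes on $\Lambda_\theta^{\delta,s}$, so $a^\CNR_\theta(\xi)=a^{\COD}_\theta(\xi)\rco^{\delta,s}_\theta(\xi)$ is supported on $\Xi\setminus\Lambda_\theta^{\delta,s}$, where by \eqref{eq:resonantregion} one has $\abs{a^\CD(\theta\act\xi)-a^\CD(\xi)}>\resc\ang\xi^\delta$. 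Combining with $0\le\rco\le 1$, this gives the pointwise bound
\begin{equation*}
  \abs{\psi^{\delta,s}_\theta(\xi)}=\frac{\abs{a^\CNR_\theta(\xi)}}{\abs{a^\CD(\theta\act\xi)-a^\CD(\xi)}}\le \frac 1\resc\ang\xi^{-\delta}\abs{a^\CNR_\theta(\xi)}.
\end{equation*}
Multiplying by $\ang\xi^{-(\gamma-\delta)}$, taking the supremum over $\xi$, multiplying by $\ang\theta^l$, and summing over $\theta\in\Theta'$ (the $\theta=\id$ term is zero by Remark \ref{rem:res}(i) and by \eqref{eq:defpsiabstract}) yields exactly $\snorm{\psi^{\delta,s}}{\gamma-\delta}{l}\le\resc^{-1}\snorm{A^\CNR}{\gamma}{l}$, which is finite for all $l\ge 0$ since $A^\CNR\in\BS^\gamma$; hence $\psi^{\delta,s}\in\BS^{\gamma-\delta}$ and \eqref{eq:estimatepsinorm} holds.

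Next, symmetry of $\Psi$: by Lemma \ref{lem:adjoint} it suffices to check $\psi^{\delta,s}=(\psi^{\delta,s})^\dagger$, i.e.\ $(\psi^{\delta,s})^\dagger_\theta(\xi)=\overline{\psi^{\delta,s}_{\theta^{-1}}(\theta\act\xi)}=\psi^{\delta,s}_\theta(\xi)$ for $\theta\in\Theta'$. Writing $\eta:=\theta\act\xi$, the numerator transforms as $\overline{\mathrm i\,a^\CNR_{\theta^{-1}}(\eta)}=-\mathrm i\,\overline{a^\CNR_{\theta^{-1}}(\eta)}$. Now $A^\CNR$ is symmetric by Remark \ref{rem:res}(iii) (as $A$, hence $A^{\COD}$, is symmetric, and the cut-off satisfies $\overline{\rco_{\theta^{-1}}(\theta\act\xi)}=\rco_\theta(\xi)$), so $\overline{a^\CNR_{\theta^{-1}}(\eta)}=a^\CNR_\theta(\theta^{-1}\act\eta)=a^\CNR_\theta(\xi)$. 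The denominator transforms as $\overline{a^\CD(\theta^{-1}\act\eta)-a^\CD(\eta)}=\overline{a^\CD(\xi)}-\overline{a^\CD(\eta)}=a^\CD(\xi)-a^\CD(\theta\act\xi)$, using that $a^\CD$ is real-valued because $A^\CD$ is symmetric and diagonal (so $a^\CD(\xi)=(\be_\xi,A\be_\xi)=(A^\dagger\be_\xi,\be_\xi)=(A\be_\xi,\be_\xi)=\overline{a^\CD(\xi)}$). The two sign flips cancel, giving $(\psi^{\delta,s})^\dagger_\theta(\xi)=\frac{-\mathrm i\,a^\CNR_\theta(\xi)}{a^\CD(\xi)-a^\CD(\theta\act\xi)}=\psi^{\delta,s}_\theta(\xi)$, as required.

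Finally, the commutator identity \eqref{eq:modifiedcomm}: since $A^\CD$ is diagonal, $\Theta_{a^\CD\circ\psi}=\Theta_{\psi\circ a^\CD}=\Theta$, and by \eqref{eq:ad symbol} the Fourier coefficient at $\theta$ is
\begin{equation*}
  \ad(a^\CD,\psi^{\delta,s})_\theta(\xi)=\mathrm i\bigl(a^\CD(\theta\act\xi)\,\psi^{\delta,s}_\theta(\xi)-\psi^{\delta,s}_\theta(\xi)\,a^\CD(\xi)\bigr)=\mathrm i\bigl(a^\CD(\theta\act\xi)-a^\CD(\xi)\bigr)\psi^{\delta,s}_\theta(\xi).
\end{equation*}
For $\theta=\id$ this is zero (and so is $a^\CNR_{\id}$). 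For $\theta\in\Theta'$, substituting \eqref{eq:defpsiabstract} gives $\mathrm i\bigl(a^\CD(\theta\act\xi)-a^\CD(\xi)\bigr)\cdot\frac{\mathrm i\,a^\CNR_\theta(\xi)}{a^\CD(\theta\act\xi)-a^\CD(\xi)}=-a^\CNR_\theta(\xi)$. Hence $\ad(A^\CD;\Psi)=-A^\CNR$ as operators in $\BS^\infty$, which is \eqref{eq:modifiedcomm}. (The membership $\ad(A^\CD;\Psi)\in\BS^{\gamma}$ is consistent with Lemma \ref{lem:weakcommest}: orders $\alpha$ and $\gamma-\delta$ would naively give $\alpha+\gamma-\delta$, but the explicit cancellation of the denominator recovers order $\gamma$, which is precisely the gain a resonance cut-off is designed to produce.) The main — and essentially only — obstacle is the one already dispatched in the first step: confirming that the cut-off genuinely removes the small-denominator set so that division by $a^\CD(\theta\act\xi)-a^\CD(\xi)$ is both well-defined and quantitatively controlled; everything else is formal manipulation of Fourier coefficients.
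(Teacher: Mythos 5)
Your proof is correct and follows exactly the route the paper intends: the paper's own proof is a two-line remark saying the norm bound follows from \eqref{eq:defpsiabstract} and \eqref{eq:resonantregion}--\eqref{eq:resnonres} and that \eqref{eq:modifiedcomm} follows as in \eqref{eq:comm}--\eqref{eq:eq} with $\CNR$ replacing $\COD$, and you have simply written out those computations (plus the symmetry check via $\psi=\psi^\dagger$, which the paper leaves implicit but which uses the standing assumption that $A$ is symmetric, exactly as you do).
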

\begin{proof}
 The bounds \eqref{eq:estimatepsinorm} follow directly from
 \eqref{eq:defpsiabstract} and \eqref{eq:resonantregion}--\eqref{eq:resnonres}.
 The equation \eqref{eq:modifiedcomm} follows as in
 \eqref{eq:comm}--\eqref{eq:eq} with $\CNR$ replacing $\CO\CD$.
\end{proof}

In view of \eqref{eq:modifiedcomm}, \eqref{eq:firstorder} takes the form
\begin{equation}\label{eq:A'}
[A]_\Psi=A^\CD+A^{\CR}+R,
\end{equation}
with $R$ defined in \eqref{eq:defRconsecgauge}.

\subsection{Weak gauge transform}\label{subsec:weakgauge} 

Let $\gamma\leq\beta$ such that $A^{\CNR}\in\BS^\gamma$, $A^{\COD}\in\BS^\beta$. We choose $\delta
>\gamma$, so that $\gamma -\delta< 0$ in Lemma \ref{lem:psi}. Note that $\delta$
determines the size of the resonant regions and thus the efficiency of the gauge
transform. 
\begin{lem}\label{lem:onestepweakgauge}
Let $\Psi=\Op(\psi^{\delta, s})$ be the operator defined in \eqref{eq:defpsiabstract} and $R$ be as in \eqref{eq:defRconsecgauge}. Then $\ad(A^\COD;\Psi)$, $R\in \BS^{\beta+\gamma - \delta}$ are symmetric and, for all $l\geq 0$,
\begin{equation}\label{eq:comaodpsi}
\snorm{\ad(A^\COD;\Psi)}{\beta+\gamma-\delta}{l}\le \frac{2}{\resc} \snorm{A^\COD}{\beta}{l + \abs{\gamma - \delta}}\snorm{A^\CNR}{\gamma}{l + \abs{\beta}}, 
\end{equation}
and
\begin{equation}\label{eq:Rnormboundsweak}
  \snorm{R}{\beta+\gamma-\delta}l \le  \frac{3}{\resc} \snorm{A^\COD}{\beta}{l + \abs{\gamma - \delta}}\snorm{A^\CNR}{\gamma}{l + \abs\beta+\abs{\gamma - \delta}}
  \exp\left(\frac 2 \resc \snorm{A^\CNR}{\gamma}{l + \abs\beta+\abs{\gamma - \delta}}\right).
\end{equation} 
\end{lem}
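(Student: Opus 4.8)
The plan is to estimate the three quantities $\ad(A^\COD;\Psi)$, $R$, and the individual summands $\ad^k(A;\Psi)$ term-wise using the norm bounds for composed symbols and multiple commutators established in Lemmas \ref{lem:product} and \ref{lem:weakcommest}, combined with the bound \eqref{eq:estimatepsinorm} on $\snor{\Psi}{0}$-type norms of $\Psi$. First I would record the orders: since $A^\COD\in\BS^\beta$, $A^\CNR\in\BS^\gamma$, and $\Psi\in\BS^{\gamma-\delta}$ with $\gamma-\delta<0$, Lemma \ref{lem:weakcommest} applied to $\ad(A^\COD;\Psi)$ gives membership in $\BS^{\beta+(\gamma-\delta)}$, and more generally $\ad^k(A;\Psi)\in\BS^{\alpha+k(\gamma-\delta)}$. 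The key observation for the remainder $R=\ad(A^\COD;\Psi)+\sum_{k\ge 2}\frac1{k!}\ad^k(A;\Psi)$ is that for $k\ge 2$ one splits $A=A^\CD+A^\COD$ and uses the commutator equation \eqref{eq:modifiedcomm}: one checks by induction on $k$ that $\ad^k(A;\Psi)$ can be rewritten so that the leading $A^\CD$-contribution telescopes against $A^\CNR$, effectively lowering the order by one power of $\delta-\gamma$ compared to the naive count. Concretely, $\ad(A^\CD;\Psi)=-A^\CNR\in\BS^\gamma$, so $\ad^2(A;\Psi)=\ad(\ad(A^\CD;\Psi)+\ad(A^\COD;\Psi);\Psi)=\ad(-A^\CNR+\ad(A^\COD;\Psi);\Psi)$, which lies in $\BS^{\beta+2(\gamma-\delta)}$ (not merely $\BS^{\alpha+2(\gamma-\delta)}$), and the same bootstrapping gives $\ad^k(A;\Psi)\in\BS^{\beta+k(\gamma-\delta)}\subseteq\BS^{\beta+2(\gamma-\delta)}=\BS^{\beta+\gamma-\delta}$ for all $k\ge 2$ when — wait, one must be careful, the order is $\beta+k(\gamma-\delta)$ which decreases in $k$, so all these terms lie in $\BS^{\beta+\gamma-\delta}$ once $k\ge 1$; I would state this cleanly.

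For the explicit constants, the bound \eqref{eq:comaodpsi} is immediate from \eqref{eq:normestcomm} with $k=1$, $A_0=A^\COD\in\BS^\beta$, $A_1=\Psi\in\BS^{\gamma-\delta}$, which gives the factor $2^1=2$, the norm $\snor{A^\COD}{\beta}$ evaluated at shifted index $l+|\gamma-\delta|$, and $\snor{\Psi}{\gamma-\delta}$ at index $l+|\beta|$; then \eqref{eq:estimatepsinorm} replaces $\snor{\Psi}{\gamma-\delta}{l+|\beta|}$ by $\frac1s\snor{A^\CNR}{\gamma}{l+|\beta|}$. For \eqref{eq:Rnormboundsweak}, I would bound $R$ by $\snorm{\ad(A^\COD;\Psi)}{\beta+\gamma-\delta}{l}+\sum_{k\ge 2}\frac1{k!}\snorm{\ad^k(A;\Psi)}{\beta+\gamma-\delta}{l}$ and, for each $k\ge 2$, apply the telescoped identity together with \eqref{eq:normestcommzeroorder}-type estimates: writing $\ad^k(A;\Psi)=\ad^{k-1}(\ad(A^\COD;\Psi)-A^\CNR;\Psi)$ for $k\ge 1$ (with the $k=1$ case reading $\ad(A^\COD;\Psi)-A^\CNR=\ad(A;\Psi)$, consistent via \eqref{eq:modifiedcomm}), one gets for $k\ge 2$
\begin{equation}
\snorm{\ad^k(A;\Psi)}{\beta+\gamma-\delta}{l}\le 2^{k-1}\Bigl(\snorm{\ad(A^\COD;\Psi)}{\beta+\gamma-\delta}{l'}+\snorm{A^\CNR}{\gamma}{l'}\Bigr)\bigl(\snorm{\Psi}{\gamma-\delta}{l''}\bigr)^{k-1},
\end{equation}
with appropriately shifted indices $l',l''$. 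Plugging \eqref{eq:comaodpsi} and \eqref{eq:estimatepsinorm}, summing the resulting geometric-type series $\sum_{k\ge 2}\frac{2^{k-1}}{k!}(\tfrac{1}{s}\snor{A^\CNR}{\gamma})^{k-1}\le \frac12\exp(\tfrac2s\snor{A^\CNR}{\gamma})$, and combining with the $k=1$ term $\snorm{\ad(A^\COD;\Psi)}{\beta+\gamma-\delta}{l}$ produces the stated prefactor $\frac3s\snor{A^\COD}{\beta}\snor{A^\CNR}{\gamma}$ times the exponential; the index $l+|\beta|+|\gamma-\delta|$ appears because each commutator with $\Psi$ shifts $l$ by $|\gamma-\delta|$ and each composition involving $A^\COD$ shifts by $|\beta|$. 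Symmetry of $\ad(A^\COD;\Psi)$ and of each $\ad^k(A;\Psi)$, hence of $R$, follows from the last assertion of Lemma \ref{lem:weakcommest}, since $A$, $A^\COD$, $A^\CNR$, and $\Psi$ are all symmetric by Lemma \ref{lem:psi} and Remark \ref{rem:res}\eqref{item:remres}.

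The main obstacle I anticipate is bookkeeping of the index shifts in the family of norms $\snor{\cdot}{\cdot}{l}$ so that the final indices match exactly what is claimed ($l+|\gamma-\delta|$ for the $A^\COD$ factor, and $l+|\beta|+|\gamma-\delta|$ inside the exponential and the second $A^\CNR$ factor), together with carefully justifying the telescoping identity $\ad^k(A;\Psi)=\ad^{k-1}(\ad(A;\Psi);\Psi)$ and its rewriting via \eqref{eq:modifiedcomm}; one must take care that the absolute convergence of the series (already guaranteed in $\BS^\alpha$ by Lemma \ref{lem:A'series}) transfers to convergence in the smaller space $\BS^{\beta+\gamma-\delta}$, which it does precisely because the telescoped bound shows each term of order $\le\beta+\gamma-\delta$ with summable norms. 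Everything else is a routine application of the composition and commutator estimates already in hand.
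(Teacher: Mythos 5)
Your proposal is correct and follows essentially the same route as the paper: estimate $\ad(A^{\CO\CD};\Psi)$ via \eqref{eq:normestcomm} with $k=1$ together with \eqref{eq:estimatepsinorm}, and for $k\ge 2$ split $\ad^k(A;\Psi)$ into the $A^{\CO\CD}$-part and the $A^{\CD}$-part, using \eqref{eq:modifiedcomm} to replace $\ad^k(A^{\CD};\Psi)$ by $-\ad^{k-1}(A^{\CNR};\Psi)$ (which is what lowers the order of the diagonal contribution), then sum the exponential series and absorb $\snorm{A^{\CNR}}{\beta}{\cdot}$ into $\snorm{A^{\CO\CD}}{\beta}{\cdot}$ via \eqref{eq:estpartbyfull}. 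The only blemish is the momentary misstatement that $\ad^2(A;\Psi)\in\BS^{\beta+2(\gamma-\delta)}$ (the telescoped diagonal piece only has order $\gamma+(k-1)(\gamma-\delta)\le\beta+\gamma-\delta$), which you correct yourself and which does not affect the conclusion.
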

\begin{proof}
The estimates \eqref{eq:normestcomm} and \eqref{eq:normestcommzeroorder} together with $\delta>\gamma$ and Lemma \ref{lem:psi}
imply that, for all $k\in\mathbb{N}$,
\begin{equation}
  \begin{aligned}
    \snorm{\ad^k(A^\COD;\Psi)}{\beta+\gamma - \delta}{l} &\le 2
    \snorm{\ad^{k-1}(A^\COD;\Psi)}{\beta}{l + \abs{\gamma - \delta}} \snorm{\Psi}{\gamma - \delta}{l + \abs \beta} \\
    &\le \frac{2^{k}}{\resc} \left(\snorm{\Psi}{0}{l + \abs\beta+\abs
    {\gamma-\delta}}\right)^{k-1}  \snorm{A^\COD}{\beta}{l + \abs{\gamma - \delta}}\snorm{A^\CNR}{\gamma}{l + \abs{\beta}}.
\end{aligned}
  \label{eq:estmultcomm}
\end{equation}
Thus, \eqref{eq:comaodpsi} follows by choosing $k = 1$. Moreover, \eqref{eq:estimatepsinorm} implies that for all $k\in\mathbb{N}$,
\begin{equation}\label{eq:adkODweak}
  \begin{aligned}
    \snorm{\ad^k(A^\COD;\Psi)}{\beta+\gamma - \delta}{l} 
    &\le  \left(\frac{2}{\resc}\snorm{A^{\CNR}}{\gamma}{l + \abs\beta+\abs
    {\gamma-\delta}}\right)^{k}  \snorm{A^\COD}{\beta}{l + \abs{\gamma - \delta}}.
\end{aligned}
\end{equation}
Similarly, we get from \eqref{eq:modifiedcomm} that for all $k\geq 2$,
\begin{equation}\label{eq:adkDweak}
  \begin{aligned}
    \snorm{\ad^k(A^\CD;\Psi)}{\beta+\gamma - \delta}{l} 
    &=\snorm{\ad^{k-1}(A^\CNR;\Psi)}{\beta+\gamma - \delta}{l}\\
    &\le  \left(\frac{2}{\resc}\snorm{A^{\CNR}}{\gamma}{l + \abs\beta+\abs
    {\gamma-\delta}}\right)^{k-1}  \snorm{A^\CNR}{\beta}{l + \abs{\gamma - \delta}}.
\end{aligned}
\end{equation}
Hence, the bounds \eqref{eq:Rnormboundsweak} follow from \eqref{eq:defRconsecgauge}, \eqref{eq:adkODweak}, \eqref{eq:adkDweak}, and \eqref{eq:estpartbyfull}.
\end{proof}
Lemmata \ref{lem:psi} and \ref{lem:onestepweakgauge} have the following
immediate corollary, which follows by choosing $\gamma=\beta$ and applying \eqref{eq:estpartbyfull}.
\begin{cor}\label{cor:onestepweakgauge}
Let $\Psi=\Op(\psi^{\delta, s})$ be the operator defined in \eqref{eq:defpsiabstract} and $R$ be as in \eqref{eq:defRconsecgauge}. Assume that $\delta>\beta$. Then $\Psi \in \BS^{\beta-\delta}$ is symmetric with 
\begin{equation}\label{eq:estimatepsinormcor}
  \snorm{\Psi}{\beta - \delta}{l} \le \frac 1 \resc \snorm{A^\COD}{\beta}{l}, \ \text{for all}\ l\geq 0.
\end{equation}
Moreover, $\ad(A^\COD;\Psi)$, $R\in \BS^{2\beta- \delta}$ are symmetric and for all $l\geq 0$,
\begin{equation}
\snorm{\ad(A^\COD;\Psi)}{2\beta-\delta}{l}\le \frac{2}{\resc} \big(\snorm{A^\COD}{\beta}{l + \abs\beta + \abs{\beta - \delta}}\big)^2,
\end{equation}
and
\begin{equation}\label{eq:Rnormboundsweakcor}
  \snorm{R}{2\beta-\delta}l \le  \frac{3}{\resc} \big(\snorm{A^\COD}{\beta}{l + \abs\beta + \abs{\beta - \delta}}\big)^2 \exp\left(\frac 2 \resc \snorm{A^\COD}{\beta}{l + \abs\beta+\abs{\beta - \delta}}\right).
\end{equation} 
\end{cor}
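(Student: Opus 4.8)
The plan is to derive this corollary as the special case $\gamma=\beta$ of Lemmata \ref{lem:psi} and \ref{lem:onestepweakgauge}, together with the monotonicity of $\snorm{\cdot}{\beta}{l}$ in $l$ and the estimates \eqref{eq:estpartbyfull}. First I would note that, since $A^{\COD}\in\BS^\beta$ and $\BS^{\gamma}\subseteq\BS^\beta$ for $\gamma\le\beta$, one always has $A^{\CNR}\in\BS^\beta$ with $\snorm{A^{\CNR}}{\beta}{l}\le\snorm{A^{\COD}}{\beta}{l}$ by \eqref{eq:estpartbyfull}; thus the choice $\gamma=\beta$ is admissible in both lemmata, and the hypothesis $\delta>\beta$ supplies precisely the condition $\delta>\gamma$ required in Lemma \ref{lem:onestepweakgauge}.

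Applying Lemma \ref{lem:psi} with $\gamma=\beta$ gives $\psi^{\delta,s}\in\BS^{\beta-\delta}$, the symmetry of $\Psi$, the identity $\ad(A^\CD;\Psi)+A^{\CNR}=0$, and the bound $\snorm{\Psi}{\beta-\delta}{l}\le\resc^{-1}\snorm{A^{\CNR}}{\beta}{l}$; bounding $\snorm{A^{\CNR}}{\beta}{l}$ from above by $\snorm{A^{\COD}}{\beta}{l}$ yields \eqref{eq:estimatepsinormcor}. Next, Lemma \ref{lem:onestepweakgauge} with $\gamma=\beta$ gives $\ad(A^\COD;\Psi),R\in\BS^{2\beta-\delta}$, both symmetric, together with the estimates \eqref{eq:comaodpsi} and \eqref{eq:Rnormboundsweak}, in which every factor is a symbol norm of $A^\COD$ or $A^{\CNR}$ evaluated at an $l$-index of the form $l+\abs{\beta-\delta}$, $l+\abs\beta$, or $l+\abs\beta+\abs{\beta-\delta}$.

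To finish, I would enlarge every such $l$-index to the largest one, $l+\abs\beta+\abs{\beta-\delta}$, using that $\snorm{\cdot}{\beta}{l}$ is nondecreasing in $l$, and replace each occurrence of a symbol norm of $A^{\CNR}$ by the corresponding one of $A^{\COD}$ via \eqref{eq:estpartbyfull}; this collapses the two-factor products into squares of $\snorm{A^{\COD}}{\beta}{l+\abs\beta+\abs{\beta-\delta}}$ and turns \eqref{eq:comaodpsi} and \eqref{eq:Rnormboundsweak} into the asserted inequalities, including \eqref{eq:Rnormboundsweakcor}. There is essentially no real obstacle here; the only thing to watch is the bookkeeping of the $l$-indices inside the symbol norms, which is dispatched uniformly by the monotonicity in $l$.
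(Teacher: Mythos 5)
Your proposal is correct and matches the paper's own derivation exactly: the paper presents this corollary as an immediate consequence of Lemmata \ref{lem:psi} and \ref{lem:onestepweakgauge} with $\gamma=\beta$, using \eqref{eq:estpartbyfull} to replace $A^{\CNR}$ by $A^{\COD}$. Your additional remark about uniformly enlarging the $l$-indices via monotonicity is exactly the bookkeeping the paper leaves implicit.
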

As a consequence of Lemma \ref{lem:onestepweakgauge} we have arrived at \eqref{eq:H'goal} with $R\in\BS^{\beta+\gamma-\delta}$ and $\beta+\gamma-\delta<\beta$ as desired. One may now iterate the gauge
transform to further reduce the order of the error term, starting from $[A]_\Psi$ in the next step. We call such an iterative scheme
\emph{consecutive gauge transform}. A few remarks on this iterative scheme are in order.
\begin{rem}
\begin{enumerate}[(i)]
\item At each step of the consecutive gauge transform, the resonant regions can be chosen differently. 
\item \label{item:consecsameres} Let us consider a  consecutive gauge transform consisting of $k$ steps, starting with the operator $A_0:=A$, and transforming into the operator 
\begin{equation}
A_j:=\exp(-\mathrm i\Psi_j)A_{j-1}\exp(\mathrm i\Psi_j)=\big[\dots\big[[A]_{\Psi_1}\big]_{\Psi_2}\dots\big]_{\Psi_j}
\end{equation}
at step $j=1,2,\dots, k$. Moreover, suppose for simplicity that $\delta>\beta$ and that $\Lambda_{\theta}^{\delta,s}$, $\theta\in G$, are $\delta$-resonant regions for all $A_j^{\CD}$, $j=0,1,\dots k-1$, simultaneously (so that the resonant cut-off $\rco =\rco^{\delta, s}$ can be chosen at all steps as $\rco_\theta(\xi)=\bone_{\Xi\setminus\Lambda_{\theta}^{\delta,s}}(\xi), \ (\theta,\xi)\in G\times\Xi$). Then \eqref{eq:A'} and Corollary \ref{cor:onestepweakgauge} imply that 
\begin{equation}
A_1:=[A]_{\Psi_1}=A_0^{\CD}+A_0^{\CR}+R_1
\end{equation}
with $R_1\in\BS^{2\beta-\delta}$. Repeating the procedure, we obtain
\begin{equation}
A_j=A_{j-1}^{\CD}+A_{j-1}^{\CR}+R_j,\quad j=1,2,\dots,k,
\end{equation}
with $R_j\in\BS^{\beta+j(\beta-\delta)}$.
\item A disadvantage of the consecutive gauge transform lies in the fact that
  already the operator $A_1=[A]_{\Psi_1}$ may have a frequency set as large as 
\begin{equation}
  Z(\Theta):= \bigcup_{n \in \N} \Theta^n,
\label{eq:ZTheta}
\end{equation}
where
\begin{equation}
  \Theta^n := \underbrace{\Theta \cdot \dotso \cdot \Theta}_{\text{product
  taken } n \text{ times}}.
\end{equation}
This set $Z(\Theta)$ is usually infinite, even when $\Theta$ is finite. Thus, the same holds for $A_1^{\CR}$, which
arises after the second step of gauge transform. This might be inconvenient
since one generally likes to keep the structure of the resonant operators as
simple as possible. However, one can resolve this issue by excluding the terms
belonging to $\BS^{\beta+j(\beta-\delta)}$ from $A_j^{\CNR}$ at the $j$th
step (by moving them to the remainder). Then $\Theta^{j+1}$ will be the frequency set for $A_j^{\CR}$. 
\end{enumerate}
\end{rem}
In the next sub-section, we describe a different iterative gauge transform scheme that we call the \textit{parallel gauge transform}. This is often more convenient to work with than the consecutive gauge transform.
\subsection{Parallel weak gauge transform}
Here, we perform several steps of the gauge transform at the same time, i.e.
\begin{equation}
A^{(\tilde k)}=[A]_{\Psi^{(\tilde{k})}},
\end{equation}
where 
\begin{equation}
\Psi^{(\tilde k)}=\sum\limits_{j=1}^{\tilde{k}}\Psi_j
\end{equation}
for some $\tilde{k}\in\mathbb{N}$. Fix again $\delta\in\mathbb{R}$, $s>0$ and a resonant cut-off $\rco^{\delta,s}$ satisfying \eqref{eq:dennonres} corresponding to $\delta$-resonant regions $\Lambda_\theta^{\delta,s},\ \theta\in G$, for $A^{\CD}$, see Definition \ref{def:resonant}. Following \cite[Section 9]{ParSht2012}, the operators $\Psi_l$, $B_l$, and $T_l$ are recursively defined by
\begin{equation}\label{eq:defb1}
B_1:=A^\COD,
\end{equation}
\begin{equation}\label{eq:defbltl}
  \begin{aligned}
B_l:&=\sum\limits_{j=1}^{l-1}\frac{1}{j!}\sum\limits_{k_1+k_2+\dots
+k_j=l-1}\ad(A^\COD;\Psi_{k_1},\Psi_{k_2},\dots,\Psi_{k_j}),\ l\geq 2,\\
T_l:&=\sum\limits_{j=2}^l\frac{1}{j!}\sum\limits_{k_1+k_2+\dots +k_j=l}\ad(A^\CD;\Psi_{k_1},\Psi_{k_2},\dots,\Psi_{k_j}),\ l\geq 2,
\end{aligned}
\end{equation}
and the relations
\begin{equation}\label{eq:defpsijparalell}
  \begin{aligned}
\ad(A^\CD,\Psi_1)+B_1^{\CNR}&=0,\\
\ad(A^\CD,\Psi_l)+B_l^{\CNR}+T_l^{\CNR}&=0, \ l\geq 2.
\end{aligned}
\end{equation}
More precisely, let $\Theta$ be a frequency set for $A$ and for all $l\geq 1$, let $b_l$ and $t_l$ be the symbols of $B_l$ and $T_l$, respectively. Analogously to \eqref{eq:defpsiabstract}, we solve \eqref{eq:defpsijparalell} by choosing $\Psi_l:=\Op(\psi_l)$ with
\begin{equation}
(\psi_1)_\theta(\xi):=\begin{cases}\dfrac{i(b_1^\CNR)_\theta(\xi)}{a^\CD(\theta\act\xi)-a^\CD(\xi)}&
    \text{if } \theta\in\Theta',\\[2ex]
    0 & \text{otherwise}
    \end{cases}
\end{equation}
and
\begin{equation}
(\psi_l)_\theta(\xi):=\begin{cases}\dfrac{\mathrm i(b_l^\CNR)_\theta(\xi)+\mathrm i(t_l^{\CNR})_\theta(\xi)}{a^\CD(\theta\act\xi)-a^\CD(\xi)}&
    \text{if } \theta\in(\Theta^l)',\\[2ex]
     0 & \text{otherwise}
    \end{cases}
\end{equation}
for $l\geq 2$. Note that for all $l\geq 1$, $\Theta^l$ is a frequency set for $B_l$, $T_l$, and $\Psi_l$.
Finally, put 
\begin{equation}
\begin{aligned}\label{eq:defpsiytildek}
Y_{\tilde{k}}:&=\sum\limits_{l=1}^{\tilde{k}}B_l+\sum\limits_{l=2}^{\tilde{k}}T_l,
\end{aligned}
\end{equation}
and
\begin{equation}\label{eq:defR}
R_{\tilde{k}+1}:=B_{\tilde{k}+1}+R_{\tilde{k}+1}^{(1)}+R_{\tilde{k}+1}^{(2)},
\end{equation}
with
\begin{equation}
\begin{aligned}
R_{\tilde{k}+1}^{(1)}:&=\sum\limits_{j\geq \tilde{k}+1}\frac{1}{j!}\ad^j(A;\Psi^{(\tilde k)})
,\\
R_{\tilde{k}+1}^{(2)}:&=\sum\limits_{j=1}^{\tilde{k}}\frac{1}{j!}\sum\limits_{k_1+k_2+\dots + k_j\geq \tilde{k}+1}\ad(A;\Psi_{k_1},\Psi_{k_2},\dots,\Psi_{k_j}).
\end{aligned}
\end{equation}
Then we arrive at 
\begin{equation}\label{eq:afterparallelgauge}
A^{(\tilde k)}=A^\CD+Y_{\tilde{k}}^{\CD}+Y_{\tilde{k}}^{\CR}+R_{\tilde{k}+1},
\end{equation}
see Lemma \ref{lem:A'series}, where $Y_{\tilde{k}}^{\CR}$ is an operator
with frequency set $\Theta_{\tilde{k}}$. The following Proposition provides norm estimates for the operators after the parallel gauge transform. In particular, it shows that if 
$\delta>\beta$, then we can assure that the error term $R_{\tilde{k}}$ belongs
to classes of arbitrarily small order by choosing $\tilde{k}$ sufficiently large.
\begin{prop}\label{prop:symbolestweakgauge}
Let $A^{\CO\CD}\in\mathbf S^\beta$ with $\delta>\beta$. Then we have for all $l\geq 0$,
\begin{equation}
\begin{aligned}\label{eq:boundspsilbltlparallelweakgauge}
\snorm{\Psi_k}{k(\beta-\delta)}{l}&\ll \big(\snorm{A^\COD}{\beta}{l+n_k}\big)^k,\ k\geq 1\\
\snorm{B_k}{k(\beta-\delta)+\delta}{l}+\snorm{T_k}{k(\beta-\delta)+\delta}{l}&
\ll \big(\snorm{A^\COD}{\beta}{l+n_k}\big)^k,\ k\geq 2,
\end{aligned}
\end{equation}
where $n_k$ is an increasing function of $k$, depending on $k$, $\beta$ and $\delta$, and the
implied constants depend only on $k$, $\beta$, $\delta$, and $s$ in
\eqref{eq:resonantregion}. Moreover, the operators $\Psi^{(\tilde k)}\in\BS^{\beta-\delta}$, $Y_{\tilde{k}}\in\BS^{\beta}$,
$R_{\tilde{k}+1}\in \BS^{\tilde{k}(\beta-\delta)+\beta}$ are symmetric and satisfy the bounds
\begin{equation}\label{eq:estpsiafterparallelgauge}
\begin{aligned}
\big\|\Psi^{(\tilde k)}\big\|^{(\beta-\delta)}_{l}+\snorm{Y_{\tilde{k}}}{\beta}{l}&\ll\big(1+\snorm{A^\COD}{\beta}{l+n_{\tilde{k}}}\big)^{\tilde{k}},\\
\snorm{R_{\tilde{k}+1}}{\tilde{k}(\beta-\delta)+\beta}{l}&\leq C_{A,\tilde{k},\beta,\delta,l},
\end{aligned}
\end{equation}
for all $l\geq 0$, where the implied constants only depend on $\tilde{k}, \beta, \delta$, and $s$; and $C_{A,\tilde{k},\beta,\delta,l}$ is a bounded function
of the symbol norms $\Big\lbrace \snorm{A^\COD}{\beta}{l}\Big\rbrace_{l\geq 0}$, $\tilde{k}$, $\beta$, $\delta$, and $l$.
\end{prop}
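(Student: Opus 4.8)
The plan is to prove the family of bounds \eqref{eq:boundspsilbltlparallelweakgauge} by strong induction on $k$ (with $\Psi_l$, $B_l$, $T_l$ simultaneously shown to be symmetric), and then to read off the estimates for $\Psi^{(\tilde k)}$, $Y_{\tilde k}$, and $R_{\tilde k+1}$. The only inputs are Lemma~\ref{lem:weakcommest}, the resolution of the commutator equation as in Lemma~\ref{lem:psi}, and the defining relations \eqref{eq:defpsijparalell}. For the base case $k=1$, the operator $\Psi_1$ solves $\ad(A^\CD,\Psi_1)+B_1^\CNR=0$ with $B_1=A^\COD\in\BS^\beta$; since $\delta>\beta$, the argument of Lemma~\ref{lem:psi} applies verbatim with $B_1^\CNR$ in place of $A^\CNR$ (it is off-diagonal, symmetric, vanishes on every resonant region, and lies in $\BS^\beta$), yielding a symmetric $\Psi_1\in\BS^{\beta-\delta}$ with $\snorm{\Psi_1}{\beta-\delta}{l}\le s^{-1}\snorm{A^\COD}{\beta}{l}$, so $n_1:=0$ works; we also set $T_1:=0$ by convention. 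For the inductive step fix $k\ge2$ and assume \eqref{eq:boundspsilbltlparallelweakgauge} and symmetry for all indices $<k$. Each summand of $B_k$ is $\frac1{j!}\ad(A^\COD;\Psi_{k_1},\dots,\Psi_{k_j})$ with $k_1+\dots+k_j=k-1$; since $\beta-\delta<0$ it has order $\beta+(k-1)(\beta-\delta)=k(\beta-\delta)+\delta$, and Lemma~\ref{lem:weakcommest} bounds its norm by a constant depending on $k,\beta,\delta$ times $\snorm{A^\COD}{\beta}{\,\cdot\,}\prod_i\snorm{\Psi_{k_i}}{k_i(\beta-\delta)}{\,\cdot\,}$ with finite shifts in the subscripts depending only on $k,\beta,\delta$. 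Inserting the induction hypothesis, using monotonicity of $\snorm{\,\cdot\,}{\beta}{l}$ in $l$ and summing over the finitely many partitions of $k-1$ gives $\snorm{B_k}{k(\beta-\delta)+\delta}{l}\ll(\snorm{A^\COD}{\beta}{l+n_k})^k$ (the power being $1+(k-1)=k$), and symmetry of $B_k$ also follows from Lemma~\ref{lem:weakcommest}.

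The term $T_k$ is where the structure of the parallel scheme enters decisively. In every summand $\frac1{j!}\ad(A^\CD;\Psi_{k_1},\dots,\Psi_{k_j})$ of $T_k$ one has $j\ge2$, hence $k_1\le k-1$, and we write it as $\ad\big(\ad(A^\CD;\Psi_{k_1});\Psi_{k_2},\dots,\Psi_{k_j}\big)$ and replace the innermost commutator, by \eqref{eq:defpsijparalell}, with $\ad(A^\CD;\Psi_{k_1})=-(B_{k_1}+T_{k_1})^\CNR\in\BS^{k_1(\beta-\delta)+\delta}$; this replacement is essential both to obtain the correct order $k(\beta-\delta)+\delta$ for $T_k$ and to keep the estimate independent of the (a priori much larger) norms of $A^\CD$. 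Since $k_1<k$, the induction hypothesis gives $\snorm{(B_{k_1}+T_{k_1})^\CNR}{k_1(\beta-\delta)+\delta}{l}\ll(\snorm{A^\COD}{\beta}{l+n_{k_1}})^{k_1}$, and commuting this with $\Psi_{k_2},\dots,\Psi_{k_j}$ via Lemma~\ref{lem:weakcommest} multiplies in a further power $\sum_{i\ge2}k_i=k-k_1$ of $\snorm{A^\COD}{\beta}{\,\cdot\,}$; summing over the finitely many partitions yields $\snorm{T_k}{k(\beta-\delta)+\delta}{l}\ll(\snorm{A^\COD}{\beta}{l+n_k})^k$ (after enlarging $n_k$) and $T_k=T_k^\dagger$. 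Finally $\Psi_k$ solves $\ad(A^\CD,\Psi_k)+B_k^\CNR+T_k^\CNR=0$; the division \eqref{eq:defpsiabstract} is legitimate because $(B_k^\CNR+T_k^\CNR)_\theta$ vanishes on the resonant region $\Lambda_\theta^{\delta,s}$, off which $|a^\CD(\theta\act\xi)-a^\CD(\xi)|>s\ang\xi^\delta$, giving a symmetric $\Psi_k\in\BS^{k(\beta-\delta)}$ with $\snorm{\Psi_k}{k(\beta-\delta)}{l}\le s^{-1}\big(\snorm{B_k}{k(\beta-\delta)+\delta}{l}+\snorm{T_k}{k(\beta-\delta)+\delta}{l}\big)$ by \eqref{eq:estpartbyfull}. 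Replacing $n_k$ by $\max_{j\le k}n_j$ makes it increasing in $k$, all implied constants depend only on $k,\beta,\delta,s$, and this closes the induction.

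Because $\beta-\delta<0$, every $\Psi_l$ lies in $\BS^{l(\beta-\delta)}\subseteq\BS^{\beta-\delta}$ and every $B_l,T_l$ in $\BS^{l(\beta-\delta)+\delta}\subseteq\BS^{\beta}$, with the relevant symbol norms only decreasing as the order is relaxed. Hence $\snorm{\Psi^{(\tilde k)}}{\beta-\delta}{l}\le\sum_{l'=1}^{\tilde k}\snorm{\Psi_{l'}}{l'(\beta-\delta)}{l}\ll\sum_{l'=1}^{\tilde k}\big(\snorm{A^\COD}{\beta}{l+n_{l'}}\big)^{l'}\ll\big(1+\snorm{A^\COD}{\beta}{l+n_{\tilde k}}\big)^{\tilde k}$, using $n_{l'}\le n_{\tilde k}$ and $x^{l'}\le(1+x)^{\tilde k}$ for $x\ge0$ and $l'\le\tilde k$; the same estimate for $\snorm{Y_{\tilde k}}{\beta}{l}$ follows identically from the bounds on $B_l$ and $T_l$. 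Symmetry of $\Psi^{(\tilde k)}$ and $Y_{\tilde k}$ is inherited from that of the $\Psi_l,B_l,T_l$, and since $\Psi^{(\tilde k)}\in\BS^{\beta-\delta}\subseteq\BS^0$, Lemma~\ref{lem:A'series} shows $[A]_{\Psi^{(\tilde k)}}$ is the sum of an absolutely convergent series in $\BS^\alpha$, which justifies the regrouping producing \eqref{eq:afterparallelgauge} and the well-definedness of $R_{\tilde k+1}$.

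It remains to estimate $R_{\tilde k+1}=B_{\tilde k+1}+R^{(1)}_{\tilde k+1}+R^{(2)}_{\tilde k+1}$ in $\BS^{\tilde k(\beta-\delta)+\beta}$. The term $B_{\tilde k+1}\in\BS^{(\tilde k+1)(\beta-\delta)+\delta}=\BS^{\tilde k(\beta-\delta)+\beta}$ is controlled by the bound just proved (with $k=\tilde k+1$). For $R^{(2)}_{\tilde k+1}$, which is a finite sum since the indices $k_i$ there range over $\{1,\dots,\tilde k\}$, we again split $A=A^\CD+A^\COD$ by bilinearity of $\ad$, bound the $A^\COD$-parts directly by Lemma~\ref{lem:weakcommest} (the constraint $\sum k_i\ge\tilde k+1$ forces order $<\tilde k(\beta-\delta)+\beta$), and treat the $A^\CD$-parts by the same innermost replacement $\ad(A^\CD;\Psi_{k_1},\dots,\Psi_{k_j})=-\ad\big((B_{k_1}+T_{k_1})^\CNR;\Psi_{k_2},\dots,\Psi_{k_j}\big)$, after which the order is again $\le\tilde k(\beta-\delta)+\beta$ and the norm is bounded by finitely many of the $\snorm{A^\COD}{\beta}{m}$. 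For the infinite sum $R^{(1)}_{\tilde k+1}=\sum_{j\ge\tilde k+1}\frac1{j!}\ad^j(A;\Psi^{(\tilde k)})$ one writes $\ad^j(A;\Psi^{(\tilde k)})=-\ad^{j-1}(Y_{\tilde k}^\CNR;\Psi^{(\tilde k)})+\ad^j(A^\COD;\Psi^{(\tilde k)})$ and, for each of the two sums, first performs $\tilde k$ (resp. $\tilde k+1$) commutators, tracking orders via Lemma~\ref{lem:weakcommest}, to reach a fixed operator of order $\le\tilde k(\beta-\delta)+\beta$ with norm bounded by finitely many of the $\snorm{A^\COD}{\beta}{m}$ and $\snorm{\Psi^{(\tilde k)}}{\beta-\delta}{m}$; the remaining commutators are then with $\Psi^{(\tilde k)}\in\BS^0$, so by \eqref{eq:normestcommzeroorder} they neither raise the order nor inflate the shift in the symbol norm, and the resulting geometric-over-factorial series converges. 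Assembling these contributions via Corollary~\ref{cor:complete} gives $R_{\tilde k+1}\in\BS^{\tilde k(\beta-\delta)+\beta}$ with $\snorm{R_{\tilde k+1}}{\tilde k(\beta-\delta)+\beta}{l}$ bounded by an explicit finite expression in finitely many of the $\snorm{A^\COD}{\beta}{m}$, $\tilde k$, $\beta$, $\delta$ and $l$. The two genuine obstacles are: (i) organising every multicommutator expansion so that the diagonal operator $A^\CD$ is systematically removed via \eqref{eq:defpsijparalell} in favour of the already-estimated lower-order operators $(B_{k_1}+T_{k_1})^\CNR$, which is what both yields the asserted orders and confines the dependence on $A$ to $\snorm{A^\COD}{\beta}{\,\cdot\,}$; and (ii) the convergence of the infinite series in $R^{(1)}_{\tilde k+1}$, which forces one to reach the target order within finitely many commutators before exploiting the order-$0$ bound \eqref{eq:normestcommzeroorder} for the tail.
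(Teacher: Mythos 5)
Your proposal is correct and follows essentially the same route as the paper: an induction on $k$ using the commutator norm estimates of Lemma \ref{lem:weakcommest} and Corollary \ref{cor:onestepweakgauge} for \eqref{eq:boundspsilbltlparallelweakgauge}, and, for $R_{\tilde k+1}$, the splitting $\ad^j(A;\Psi^{(\tilde k)})=\ad^{j-1}(\ad(A^{\CD};\Psi^{(\tilde k)});\Psi^{(\tilde k)})+\ad^j(A^{\CO\CD};\Psi^{(\tilde k)})$ with $\ad(A^{\CD};\Psi^{(\tilde k)})=-Y_{\tilde k}^{\CNR}$, reaching the target order in $\tilde k$ commutators and controlling the tail via \eqref{eq:normestcommzeroorder}. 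Your write-up in fact supplies the detail the paper leaves implicit — namely that in each multicommutator the innermost $\ad(A^{\CD};\Psi_{k_1})$ must be eliminated via \eqref{eq:defpsijparalell} to get the stated orders and to confine the dependence to the norms of $A^{\CO\CD}$ — so no gap.
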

\begin{proof}
The bounds \eqref{eq:boundspsilbltlparallelweakgauge} are easily deduced from Corollary
\ref{cor:onestepweakgauge} by induction in $k$, estimating all involved commutators
using \eqref{eq:normestcomm}. The estimates on the symbol norms of $\Psi^{(\tilde k)}$ and 
$Y_{\tilde{k}}$ follow readily. 

Let us prove the estimates on the norms of $R_{\tilde{k}+1}$. Starting with $R_{\tilde{k}+1}^{(1)}$ we note that, for $m\geq \tilde{k}+1$ and $\Psi:= \Psi^{(\tilde k)}$
\begin{equation}
\begin{aligned}
&\snorm{\ad^m(A;\Psi)}{\tilde{k}(\beta-\delta)+\beta}{l}\\ &\leq\snorm{\ad^m(A^\CD;\Psi)}{\tilde{k}(\beta-\delta)
+\beta}{l}+\snorm{\ad^m(A^\COD;\Psi)}{\tilde{k}(\beta-\delta)+\beta}{l}\\
&=\snorm{\ad^{m-1}(Y_{\tilde{k}}^{\CNR};\Psi)}{\tilde{k}(\beta-\delta)+\beta}{l}+\snorm{\ad^m(A^\COD;\Psi)}{\tilde{k}(\beta-\delta)+\beta}{l}\\
&\leq 2^{m-\tilde{k}-1}\snorm{\ad^{\tilde{k}}(Y_{\tilde{k}}^{\CNR};\Psi)}{\tilde{k}(\beta-\delta)+\beta}{l}\big(\snorm{\Psi}{0}{l+|\tilde{k}(\beta-\delta)+\beta|}\big)^{m-\tilde{k}-1} \\
& \ \ \ +2^{m-\tilde{k}}\snorm{\ad^{\tilde{k}}(A^\COD;\Psi)}{\tilde{k}(\beta-\delta)+\beta}{l}\big(\snorm{\Psi}{0}{l+|\tilde{k}(\beta-\delta)+\beta|}\big)^{m-\tilde{k}},
\end{aligned}
\end{equation}
where we apply \eqref{eq:normestcommzeroorder} in the second inequality. Dividing by $m!$ and summing over $m\geq \tilde{k}+1$ we obtain a convergent sum, for which we use the estimates on the norms of $Y_{\tilde{k}}$ and $\Psi^{(\tilde k)}$. Estimating $\snorm{R_{\tilde{k}+1}^{(2)}}{\tilde{k}(\beta-\delta)+\beta}{l}$ is somewhat easier since there are no convergence issues. This finishes the proof of the proposition.
\end{proof}

\subsection{Strong gauge transform} \label{sec:sgt}
The aim of any (iterative) gauge transform scheme is to force the error term $R$ after the gauge transform, see e.g. \eqref{eq:H'goal} or \eqref{eq:afterparallelgauge}, into a class of (relatively) small order. For instance, in  \eqref{eq:firstorder} and \eqref{eq:defRconsecgauge}, we aim at $R \in \BS^\gamma$ for some $\gamma < \beta$. 
If $\Psi$ belongs to a class of negative order, as is the case if
one can choose $\delta > \beta$ in Definition \ref{def:resonant},  for example, 
it can be trivially
satisfied for some $\eta<\beta$, as was seen to be the case with the weak gauge
transform. In some cases, however, one can not guarantee more
than $\Psi \in \BS^0$, whether it be by choosing $\delta = \beta$, or by
introducing additional cut-offs in the definition of $\Psi$.
This is notably the case for Schrödinger-type operators,
whenever the perturbation is not in $\BS^\beta$ for $\beta < \alpha - 1$. In
such a case, one can no longer rely on the trivial product
estimates for $\ad(A^{\COD},\Psi)$ to get that $\ad(A^{\COD},\Psi)\in\BS^\eta$
for some $\eta<\beta$. In the next lemma, we give a sufficient condition that,
nevertheless, yields the required improvement through commuting with $\Psi$.

\begin{lem}
\label{lem:sgt}
 Suppose that $A \in \BE\BS^\alpha$ and that $\Psi \in \BS^0$ is defined as in \eqref{eq:defpsiabstract}. 
If both $\ad(A^{\CO\CD};\Psi) \in \BS^{\gamma}$ and
 $\ad(A^{\CNR};\Psi) \in \BS^{\gamma}$, then, with 
 $$[A]_\Psi = \exp(-i\Psi) A \exp(i\Psi),$$
 we have that
 $$R := [A]_\Psi - A^{\CD} - A^{\CR} \in \BS^{\gamma}.$$
\end{lem}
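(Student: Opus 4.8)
The plan is to identify $R$ with the consecutive-scheme remainder \eqref{eq:defRconsecgauge} and then to bound it term by term, using the two hypotheses together with the trivial commutator estimate \eqref{eq:normestcommzeroorder}; no order-improving commutator bound is needed.

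First I would record that, since $\Psi$ is defined by \eqref{eq:defpsiabstract}, Lemma \ref{lem:psi} furnishes the identity \eqref{eq:modifiedcomm}, $\ad(A^\CD;\Psi)=-A^\CNR$. Combining this with the splitting $A^{\CO\CD}=A^\CNR+A^\CR$ of Remark \ref{rem:res}\eqref{item:remres} and the absolutely convergent expansion of Lemma \ref{lem:A'series}, exactly as in the derivation of \eqref{eq:A'}, one gets
\begin{equation*}
  R \;=\; [A]_\Psi-A^\CD-A^\CR \;=\; \ad(A^{\CO\CD};\Psi)\;+\;\sum_{k=2}^\infty\frac{1}{k!}\,\ad^k(A;\Psi),
\end{equation*}
the series converging absolutely in $\BS^\alpha$. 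The first summand lies in $\BS^\gamma$ by hypothesis, so it suffices to prove the same for the tail series.

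For the tail I would use that $\ad^k(A;\Psi)=\ad^{k-1}\!\big(\ad(A;\Psi);\Psi\big)$ for $k\ge 1$ and, again by \eqref{eq:modifiedcomm},
\begin{equation*}
  \ad(A;\Psi)=\ad(A^\CD;\Psi)+\ad(A^{\CO\CD};\Psi)=-A^\CNR+\ad(A^{\CO\CD};\Psi),
\end{equation*}
so that, by linearity of $\ad$, for every $k\ge 2$,
\begin{equation*}
  \ad^k(A;\Psi)=-\ad^{k-2}\!\big(\ad(A^\CNR;\Psi);\Psi\big)+\ad^{k-1}\!\big(\ad(A^{\CO\CD};\Psi);\Psi\big).
\end{equation*}
By assumption $\ad(A^\CNR;\Psi)$ and $\ad(A^{\CO\CD};\Psi)$ lie in $\BS^\gamma$, and $\Psi\in\BS^0$, so the estimate \eqref{eq:normestcommzeroorder} of Lemma \ref{lem:weakcommest} bounds $\snorm{\ad^k(A;\Psi)}{\gamma}{l}$, for each $l\ge 0$, by $2^{k-2}\snorm{\ad(A^\CNR;\Psi)}{\gamma}{l}\big(\snorm{\Psi}{0}{l+|\gamma|}\big)^{k-2}+2^{k-1}\snorm{\ad(A^{\CO\CD};\Psi)}{\gamma}{l}\big(\snorm{\Psi}{0}{l+|\gamma|}\big)^{k-1}$. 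Dividing by $k!$ and summing over $k\ge 2$ produces, for each $l$, a convergent exponential-type majorant, so Corollary \ref{cor:complete} gives that $\sum_{k\ge 2}\frac{1}{k!}\ad^k(A;\Psi)$ converges in $\BS^\gamma$; as its partial sums coincide term by term with those of the $\BS^\alpha$-convergent series above, the two limits agree, hence the tail lies in $\BS^\gamma$ and therefore $R\in\BS^\gamma$.

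The only point that needs care is the regrouping of the double series — splitting $\sum_k\frac{1}{k!}\ad^k(A;\Psi)$ into an $A^\CNR$-part and an $\ad(A^{\CO\CD};\Psi)$-part — which is legitimate precisely because both resulting series converge absolutely in the Fréchet space $\BS^\gamma$ (Lemma \ref{lem:complete} and Corollary \ref{cor:complete}). Beyond this bookkeeping the argument is a direct application of the algebraic identities \eqref{eq:modifiedcomm} and \eqref{eq:defRconsecgauge} and the elementary bound \eqref{eq:normestcommzeroorder}.
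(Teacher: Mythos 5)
Your proof is correct and follows essentially the same route as the paper: rewrite $R$ via \eqref{eq:defRconsecgauge}, use \eqref{eq:modifiedcomm} to replace $\ad^k(A^\CD;\Psi)$ by iterated commutators of $\ad(A^\CNR;\Psi)$, and control both resulting series with the trivial bound \eqref{eq:normestcommzeroorder}. Your index bookkeeping is in fact slightly more careful than the paper's, and the extra remark on the legitimacy of regrouping the absolutely convergent series is a harmless bonus.
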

\begin{rem}
 If, in addition to the assumptions of this lemma, we have $A^{\CO\CD} \not \in
 \BS^{\gamma}$, this would mean that commuting with $\Psi$ has improved order and we can
 therefore call this gauge transform strong. Note that we do not require
 improvements in order to happen at every iteration of the commutator, but only
 at the first step.
\end{rem}

\begin{proof}
  It follows from Lemma \ref{lem:A'series} and equation \eqref{eq:modifiedcomm} that
 \begin{equation}
 \begin{aligned}
  R &= \sum_{k\ge 1} \frac{1}{k!} \ad^k(A^{\CO\CD};\Psi) + \sum_{k \ge 2} \frac{1}{k!} \ad^{k}(A^{\CD},\Psi) \\
  &= \sum_{k\ge 1} \frac{1}{k!} \ad^{k-1}(\ad(A^{\CO\CD};\Psi);\Psi) + \sum_{k \ge 2} \frac{1}{k!} \ad^{k-1}(\ad(A^{\CNR};\Psi),\Psi).
  \end{aligned}
 \end{equation}
As in the proof of Lemma \ref{lem:A'series}, both series converge absolutely in $\BS^{\gamma}$.
\end{proof}

\begin{rem}
 The hypotheses of the previous lemma can be achieved in many ways. The most common one does not depend on the operator $A$, but only on the algebraic structure
 of $\BS^\infty$: it is when commutators naturally improve order. The principal
 example is pseudo-differential operators in $\mathrm{L}^2(\R^d)$ that are
 almost periodic with respect to the translation group $\mathbb{R}^d$. To obtain
 the commutator estimates in this case one requires some limited smoothness in
 $\xi$. We refer to \cite[Lemma 3.4]{Sobolev2005} for a proof, and
 \cite{Sobolev2006,ParSob2010,ParSht2012,MorParSht2014,ParSht2016} for examples
 of further
 applications. In all of these cases, the smooth structure of functions on $\R^d$ was
 used, and the resonance cut-off functions were taken to be smooth
 approximations
 to indicator functions of the non-resonant regions rather the indicators
 themselves.

 It is also possible that one cannot reach $\Psi \in \BS^0$ through only
 non-resonant cut-offs. In such a case, in order to achieve convergence of the
 series for $\exp(i\Psi)$ and $[A]_\Psi$ achieved in \eqref{eq:exppsi} and Lemma
 \ref{lem:A'series} we will need energy cut-offs, i.e. cutting
 off large $\xi$. See \cite{ParSob2010,MorParSht2014} where this idea is being used. 
\end{rem}

\section{Systems of Almost Periodic Operators} \label{sec:mapo}

In this section, we provide a construction suitable to describe almost periodic
operators with matrix-valued symbols within the framework of Sections \ref{sec:operators} -- \ref{sec:gt}.
\subsection{Symbol formalism for systems of almost periodic operators} 
Let the index set $\Xi$ and the group $G$ be as in Section \ref{sec:operators}. Let $m\in\mathbb{N}$ and $\bbb:G\times\Xi\to\CL(\C^m)$ be a function such that there exists a countable frequency set $\Theta=\Theta^{-1}\subseteq G$ with $\bbb_\theta(\xi)=0$ for all $\theta\in G\setminus \Theta$ and $\xi\in\Xi$. Furthermore, assume that
\begin{equation}
\sum\limits_{\theta\in\Theta}\|\bbb_\theta(\xi)\|^2 <\infty,\ \text{for all } \xi\in\Xi,
\end{equation}
where $\|\,\cdot\,\|$ is the operator norm on $\CL(\C^m)$. 
For every $\xi\in\Xi$, let $\set{v_j(\xi):\ j\in \Z/m\Z}$ be an orthonormal basis for $\C^m$ so that $\big\lbrace\be_\xi\otimes v_j(\xi)\big\rbrace_{\xi\in\Xi,j\in \Z/m\Z}$ is an orthonormal basis for $\ell^2(\Xi;\C^m)=\ell^2(\Xi)\otimes\C^m$. In analogy to \eqref{eq:defopB}, an almost periodic operator $\BB$ in $\ell^2(\Xi;\C^m)$ with symbol $\bb$ is defined by
\begin{equation}\label{eq:Basmatrixop}
\BB\big(\be_\xi \otimes v_j(\xi)\big):=\sum_{\theta\in\Theta}\be_{\theta\act\xi} \otimes \big[\bbb_{\theta}(\xi)v_j(\xi)\big].
\end{equation}
We introduce the index set $\Xi_m := \Xi \times \Z/m\Z$ equipped with the
weight function 
\begin{equation}\label{eq:angxim}
 \ang{(\xi,j)}_m:= \ang \xi
\end{equation}
and define the group 
$G_m:= G \times \Z/m\Z$, and its (free) action on $\Xi_m$ by
\begin{equation}
  (g,k)\act(\xi,j):= (g\act \xi,k+j).
  \label{eq:actiongprime}
\end{equation}
Applying the unitary map $T_m:\ell^2(\Xi_m)\to\ell^2(\Xi;\C^m)$ defined by 
\begin{equation}\label{eq:defTm}
T_m\big(\be_{(\xi,j)}\big):= \be_\xi\otimes v_j(\xi),\ (\xi,j)\in \Xi_m,
\end{equation}
we can relate the operator $\BB$ to the operator
\begin{equation}\label{eq:defBfromBB}
B:=T_m^\ast \BB T_m
\end{equation}
in $\ell^2(\Xi_m)$. For every $g\in G$ and $\xi\in\Xi$, let $\left[\bbb_g(\xi)\right]$ be the matrix representation of $\bbb_g(\xi):\C^m\to\C^m$ with
respect to the pair of bases $\set{v_j(\xi)}_j$ and
$\set{v_j(g\act \xi)}_j$ on the domain and codomain, respectively, i.e.
\begin{equation}\label{eq:bbbonbasis}
\bbb_g(\xi)v_j(\xi)=\sum\limits_{k\in\Z/m\Z}\big[\bb_g(\xi)\big]_{kj}v_k(g\act\xi), \quad \text{for all} \ j\in \Z/m\Z.
\end{equation}
Define the scalar symbol $b:G_m\times\Xi_m\to\C$ by
\begin{equation}
b_{(g,k)}(\xi,j):=\left[\bbb_g(\xi)\right]_{k+j\; j} 
  \label{eq:symbolgprime}
\end{equation}
for $(g,k)\in G_m$ and $(\xi,j)\in\Xi_m$, and note that $\Theta\times \Z/m\Z$ is a frequency set for $b$. In view of \eqref{eq:defopB}, \eqref{eq:actiongprime}, and \eqref{eq:symbolgprime}, we have
\begin{equation}
\Op(b) \be_{(\xi,j)} = \sum_{(\theta,k) \in \Theta \times \Z/m\Z}  \left[\bbb_\theta(\xi)\right]_{kj} \be_{(\theta\act\xi,k)}, \ \text{for} \ (\xi,j)\in\Xi_m.
  \label{eq:actionmapo}
\end{equation}
Hence, \eqref{eq:Basmatrixop}, \eqref{eq:defTm}, \eqref{eq:defBfromBB},
\eqref{eq:bbbonbasis}, and \eqref{eq:actionmapo} yield $T_m^* \BB T_m = B = \Op(b)$.
This justifies calling the operators from $\BS^\infty(G_m,\Xi_m)$ \emph{systems
of almost periodic operators}, also known as matrix-valued operators. We shall use the notation
\begin{equation}\label{eq:defsystemclasses}
\BT_m^\gamma:=\BT^\gamma(G_m,\Xi_m),\ \BT\in\{\BS, \BD\BS, \BD\BE\BS, \BS\BE\BS, \BE\BS \}, \ \gamma\in\R\cup\set{\pm\infty},
\end{equation}
and 
\begin{equation}
 \mathsf H_m^\gamma:=\mathsf H^\gamma(\Xi_m),\ \gamma\in\R\cup\set{\pm\infty}.
\end{equation}

Since the map of symbols $\bbb\mapsto b$ is one-to-one, for those $\bbb$ which
are mapped to $b\in\BS_m^\infty$ we write $\Op(\bbb):=\Op(b)$. We use this identification to apply the results of Sections
\ref{sec:operators}--\ref{sec:gt} without always making explicit the conjugation
by the operators $T_m$. 
Note that $\ang{(g,k)}_m=\ang{g}$, for all $(g,k)\in G_m$, see \eqref{eq:modulusg} and \eqref{eq:angxim}. Hence, for $b\in\BS_m^\beta$, we have the equivalence of norms,
\begin{equation}
  \mathfrak{c}_m^{-1}\snorm{b}{\beta}{l}\leq\sum_{\theta \in \Theta} \ang \theta^l \sup_{\xi \in \Xi} \ang\xi^{-\beta}\norm{\bbb_{\theta}(\xi)}\leq \mathfrak{c}_m\snorm{b}{\beta}{l},
  \label{eq:matrixseminorm}
\end{equation}
where
the constant $\mathfrak{c}_m>0$ only depends on $m$.
Two sub-algebras of $\BS_m^\infty$ will be of particular interest in the sequel:
uncoupled operators and diagonal operators.
\begin{defi}\label{def:uncoupled}
The \emph{uncoupled operators} in $\BS_m^\beta$, $\beta\in\R\cup\set{\pm\infty}$, are defined by 
\begin{equation}
  \begin{aligned}
\BU\BS_m^\beta:=\big\{\BB\in\BS_m^\beta: \ &\text{the matrix } \big[\bbb_g(\xi)\big], \text{ see \eqref{eq:bbbonbasis},} \text{is diagonal for all }g\in G,\ \xi\in\Xi\big\}\\ =\big\{\BB\in\BS_m^\beta: \ &\text{the frequency set for }b\text{ can be}\text{chosen as a subset of }G\times\{0\}\big\}.
\end{aligned}
\end{equation}
For any operator $\BA =\Op(\ba)\in \BS_m^\alpha$,
$\alpha\in\R\cup\set{\pm\infty}$, we define the symbol
\begin{equation}
  \label{eq:uncoupled}
\big[\ba_g^\CU(\xi)\big]_{k j}:= \begin{cases}
                                           \big[\ba_g(\xi)\big]_{k j} & \text{if
                                           }k =j\\ 0 & \text{if } k\neq j
                                          \end{cases},\quad\text{ for all }g\in
                                          G, \xi\in \Xi, k,j\in\Z/m\Z.
\end{equation}
We write $\BA^\CU := \Op(\ba^\CU)$ for the projection of $\BA$ onto
$\BU\BS_m^\alpha$ which we call the uncoupled part. We also denote $\BA^{\CC} := \BA - \BA^{\CU}$ its
coupled part. It can easily be seen that if $\BA \in
\BS_m^\gamma$, $\gamma \in \R$ then for all $l \ge 0$
\begin{equation}
  \label{eq:estpartuncbyfull}
  \snorm{\BA^\CU}{\gamma}{l} \le \snorm{\BA}{\gamma}{l}, \quad
  \snorm{\BA^\CC}{\gamma}{l} \le \snorm{\BA}{\gamma}{l}
\end{equation}
and that if $\BA$ is symmetric so are $\BA^\CU$ and $\BA^\CC$.
\end{defi}
The second sub-algebra is $\BD\BS_m^\infty$, see \eqref{eq:defsystemclasses} and Definition \ref{def:dapo}. Noting that $\id_{G_m}=(\id_G,0)$, we infer from \eqref{eq:symbolgprime} that 
\begin{equation}
\BD\BS_m^\infty=\big\{B\in\BU\BS_m^\infty:\ \bbb_g(\xi)=0\ \text{for all} \ g\in G\setminus \set{\id_G}\big\},
\end{equation}
so that $\BD\BS_m^\infty \subseteq \BU\BS_m^\infty \subseteq \BS_m^\infty$. As
in the scalar case,
for any operator $\BA =\Op(\ba)\in \BS_m^\alpha$, $\alpha\in\R\cup\set{\pm\infty}$, we denote by
\begin{equation}
  \label{eq:diagonalmatrix}
  \begin{split}
\BA^\CD &:=\Op(\ba^\CD)\quad\\  \big[\ba_g^\CU(\xi)\big]_{k j}&:= \begin{cases}
                                           \big[\ba_g(\xi)\big]_{k j} & \text{if
                                           }k =j
                                           \text{ and } g = \id\\ 0 &
                                           \text{otherwise}.
                                          \end{cases}
\end{split}\end{equation}
and $\BA^{\CO\CD} = \BA - \BA^\CD$. This definition makes it so that $\BA^\CD
= T_m A^\CD T_m^*$. Similarly, if resonant and non-resonant regions are defined
in terms of $\Xi_m$, we set $\BA^\CR$ = $T_mA^\CR T_m^*$ and $\BA^{\CNR} = T_m A^{\CNR}
T_m^*$. We can also combine notions of coupling and resonance; we set for
instance $\BA^{\CR,\CU} = (\BA^{\CR})^\CU$, and proceed similarly for other
combinations of the indices.

The following lemma is useful when changing the reference orthonormal
basis of $\C^m$. Nevertheless, for the rest of this section the reference basis of
$\ell^2(\Xi;\C^m)$ will remain fixed as $\{\be_\xi\otimes v_j(\xi)\}_{(\xi,j)\in\Xi_m}$.
\begin{lem}
Assume that, for every $\xi\in\Xi$, the set $\set{u_j(\xi):\ j\in \Z/m\Z}$ is an orthonormal basis for $\C^m$. Then the unitary operator
\begin{equation}\label{eq:defBU}
\BU:\ell^2(\Xi;\C^m) \to \ell^2(\Xi;\C^m), \quad \be_\xi\otimes v_j(\xi)\mapsto \be_\xi\otimes u_j(\xi)
\end{equation}
satisfies $T_m^*\BU T_m\in\BS_m^0$.
\end{lem}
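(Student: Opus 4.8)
The plan is to identify $T_m^*\BU T_m$ explicitly as a quasi-periodic operator in the sense of Section~\ref{sec:mapo} and then read off directly that its scalar symbol lies in $\BS_m^0$.

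First I would compute the matrix symbol $\bbb$ of $\BU$ from \eqref{eq:Basmatrixop}. Since $\BU$ sends $\be_\xi\otimes v_j(\xi)$ to $\be_\xi\otimes u_j(\xi)$ without moving the $\Xi$-component, only the frequency $\id_G$ occurs: $\bbb_g(\xi)=0$ for all $g\neq\id_G$, while $\bbb_{\id_G}(\xi)$ is the linear map on $\C^m$ taking the orthonormal basis $\set{v_j(\xi)}_j$ to the orthonormal basis $\set{u_j(\xi)}_j$. In particular $\bbb_{\id_G}(\xi)$ is unitary with $\|\bbb_{\id_G}(\xi)\|=1$ for every $\xi$, so $\set{\id_G}$ is a finite frequency set, and by \eqref{eq:bbbonbasis} the matrix $\big[\bbb_{\id_G}(\xi)\big]$ is, with respect to the single basis $\set{v_k(\xi)}_k$, a unitary matrix; hence all its entries have modulus at most $1$, uniformly in $\xi$.

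Next I would pass to the associated scalar symbol $b$ on $G_m\times\Xi_m$ via \eqref{eq:symbolgprime}: one obtains $b_{(g,k)}(\xi,j)=0$ unless $g=\id_G$, and $b_{(\id_G,k)}(\xi,j)=\big[\bbb_{\id_G}(\xi)\big]_{k+j\; j}$, so that $\set{\id_G}\times\Z/m\Z$ is a frequency set of cardinality $m$ (it is finite, contains $\id_{G_m}=(\id_G,0)$, and is closed under inversion). It then remains to estimate $\snorm{b}{0}{l}$ from \eqref{eq:defsnorm}. Here I would use that $\ang{(\id_G,k)}_m=\ang{\id_G}$ by \eqref{eq:angxim}, and that $\ang{\id_G}=1$ by \eqref{eq:modulusg} because $\id_G$ acts trivially on $\Xi$; combined with the uniform bound $\big|\big[\bbb_{\id_G}(\xi)\big]_{k+j\; j}\big|\le 1$ this yields $\snorm{b}{0}{l}\le\sum_{k\in\Z/m\Z}1=m$ for every $l\ge 0$. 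Hence $b\in\BS^0(G_m,\Xi_m)=\BS_m^0$, i.e.~$T_m^*\BU T_m=\Op(b)\in\BS_m^0$.

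There is no deep obstacle here; the only point requiring care is to translate the geometric description of $\BU$ correctly through the two layers of bookkeeping --- first into the matrix symbol $\bbb$ via \eqref{eq:Basmatrixop}--\eqref{eq:bbbonbasis}, then into the scalar symbol $b$ via \eqref{eq:symbolgprime} --- so that one is certain the frequency set of $b$ is finite and that the relevant supremum is controlled by $\|\bbb_{\id_G}(\xi)\|$ uniformly in $\xi$ (equivalently, via the norm equivalence \eqref{eq:matrixseminorm}). Once that is in place, the required seminorm bounds are immediate.
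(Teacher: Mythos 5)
Your proof is correct and follows essentially the same route as the paper: identify $T_m^*\BU T_m$ as a quasi-periodic operator with frequency set $\set{\id_G}$ whose matrix symbol is unitary at every $\xi$, and then conclude the order-$0$ symbol bounds (the paper invokes the norm equivalence \eqref{eq:matrixseminorm} at this point, whereas you carry out the scalar seminorm estimate by hand, obtaining $\snorm{b}{0}{l}\le m$ — the same content in slightly more detail). No gaps.
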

\begin{proof}
It is clear from \eqref{eq:defBU} that $T_m^*\BU T_m=\Op(\bu)$ where
$\bu_g(\xi)\in \operatorname{U}(m)$ is unitary for all $g\in G$, $\xi\in\Xi$,
and $\set{\id_G}$ is a frequency set for $\bu$. Thus the equivalence of norms
\eqref{eq:matrixseminorm} implies that $T_m^*\BU T_m\in\BS_m^0$. 
\end{proof}
The previous lemma has the following corollary, justifying our terminology of
uncoupled operators.
\begin{cor}
  Let $\set{v_j:\ j\in\Z/m\Z}$ be a fixed basis for $\C^m$. Then any operator $A \in \BU\BS_m^\infty$ is unitarily equivalent to an orthogonal sum $\bigoplus\limits_{j\in \Z/m\Z}A_j$
 where for every $j\in\Z/m\Z$, $A_j$ acts in $\ell^2(\Xi) \otimes \spann\set{v_j}$.
\end{cor}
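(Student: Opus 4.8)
The plan is to read the direct-sum decomposition straight off the definition of uncoupledness and then transport it through the unitary $T_m$; there is essentially no analytic content here, only bookkeeping.

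First I would unwind the definitions. Writing $A=\Op(b)\in\BU\BS_m^\infty$, so that $A\in\BS^\gamma(G_m,\Xi_m)$ for some $\gamma\in\R$, Definition \ref{def:uncoupled} tells us the frequency set of $b$ may be chosen inside $G\times\set0$, equivalently $\big[\bbb_\theta(\xi)\big]_{kj}=0$ whenever $k\neq j$. Since $(\theta,0)\act(\xi,j)=(\theta\act\xi,j)$ by \eqref{eq:actiongprime}, formula \eqref{eq:actionmapo} collapses to $A\,\be_{(\xi,j)}=\sum_{\theta\in\Theta}\big[\bbb_\theta(\xi)\big]_{jj}\,\be_{(\theta\act\xi,j)}$, so the second index $j\in\Z/m\Z$ is left untouched. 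I would then record that, for each fixed $j$, the scalars $(a_j)_\theta(\xi):=\big[\bbb_\theta(\xi)\big]_{jj}$ form a symbol in $\BS^\gamma(G,\Xi)$: this follows from $\abs{[\bbb_\theta(\xi)]_{jj}}\leq\norm{\bbb_\theta(\xi)}$ together with the norm equivalence \eqref{eq:matrixseminorm}, so $A_j':=\Op(a_j)$ is a genuine operator on $\ell^2(\Xi)$.

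Next I would introduce the closed subspaces $\CH_j:=\overline{\spann\set{\be_{(\eta,j)}:\eta\in\Xi}}\subset\ell^2(\Xi_m)$, note the orthogonal decomposition $\ell^2(\Xi_m)=\bigoplus_{j\in\Z/m\Z}\CH_j$, and use the display above to see that $A$ sends finitely supported vectors of $\CH_j$ into $\CH_j$. Since $A$ extends to a bounded operator $\RH^\beta(\Xi_m)\to\RH^{\beta-\gamma}(\Xi_m)$ for every $\beta$ by Lemma \ref{lem:domain}, and finitely supported vectors are dense in each graded piece of $\CH_j$, the subspace $\CH_j$ is $A$-invariant and the restriction $A|_{\CH_j}$ is unitarily $\Op(a_j)$ under the tautological identification $\CH_j\cong\ell^2(\Xi)$, $\be_{(\eta,j)}\leftrightarrow\be_\eta$. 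Hence $A=\bigoplus_j A|_{\CH_j}$. Finally, taking $T_m$ as in \eqref{eq:defTm} for the constant reference basis $v_j(\cdot)\equiv v_j$, the map $T_m$ restricts to a unitary of $\CH_j$ onto $\ell^2(\Xi)\otimes\spann\set{v_j}$, so $T_m A T_m^\ast=\bigoplus_j A_j$ with $A_j:=T_m\,A|_{\CH_j}\,T_m^\ast$ acting in $\ell^2(\Xi)\otimes\spann\set{v_j}$, which is the assertion.

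The only point requiring any care — and the ``main obstacle'', such as it is — is the passage from the dense set of finitely supported vectors to the full subspaces $\CH_j$: one must check that invariance of the action on basis elements genuinely yields an orthogonal-sum decomposition of the (possibly unbounded) operator $A$ on its natural domain, rather than merely on a core. This is exactly what the degree-by-degree boundedness in Lemma \ref{lem:domain} provides, since it identifies the natural domain with a graded Sobolev-type space in which the finitely supported vectors of $\CH_j$ are dense.
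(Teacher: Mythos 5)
Your proof is correct and follows essentially the same (indeed the only natural) route the paper intends: the paper states this corollary without proof as an immediate consequence of the definition of $\BU\BS_m^\infty$ and the preceding change-of-basis lemma, and your argument is just the careful unwinding of that — reading invariance of the subspaces $\CH_j$ off \eqref{eq:actionmapo}, identifying each restriction with a scalar operator via \eqref{eq:matrixseminorm}, and transporting through $T_m$ for the fixed basis. The attention you pay to extending the decomposition from $\spann(\CE)$ to the graded Sobolev scale via Lemma \ref{lem:domain} is appropriate and closes the only point one could quibble about.
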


\subsection{Gauge transform in \texorpdfstring{$\BS_m^\infty$}{Sm}: the reduction to
uncoupled operators} \label{sec:reducscalar}

We would like to apply a weak gauge transform to an operator in the class
$\BS\BE\BS^\infty_m$ --- cf. \eqref{eq:defsystemclasses} ---
in order to obtain an operator of the same ordrer with an
uncoupled principal symbol. In this section, we give two sufficient conditions
that allow us to do this.
The first one is more restrictive on the
off-diagonal part and gives a non-trivial remainder, but allows for a principal
symbol with multiple eigenvalue.
The second one requires the principal symbol to have only simple eigenvalues,
in which case the procedure is more efficient and the restrictions on the
off-diagonal symbol are much milder.

\begin{thm}
  \label{thm:systemonestep}
  Let $\BA = \Op(\ba) \in \BS\BE\BS_m^\alpha$ be symmetric and let $\beta <
  \alpha$ be such that $\BA^{\CC}:= \Op(\ba^{\CC}) \in \BS^\beta_m$. Assume that
  \begin{equation}
    \big[\ba_{\id}\big]\left( \xi \right) = \ang \xi^\alpha
    \diag(a_1(\xi),\dotsc,a_m(\xi))
  \end{equation}
  and that $\Theta$ is a frequency set for $\ba^\COD$. Here, for $j \in \Z/m\Z$,
  $a_j : \Xi \to \R$ are bounded functions such that for all $\theta \in
  Z(\Theta) = \bigcup_{k=1}^\infty \Theta^k$,
  \begin{equation}
    \label{eq:shftinfty}
    \lim_{\ang \xi \to \infty} \frac{a_j(\theta \act \xi)}{a_j(\xi)} = 1.
  \end{equation}
  Suppose finally that
  there exists $C, c> 0$ such that for every $j \in \Z/m\Z$ and $k \in \Z / m \Z
  \setminus \set{0}$, either
  \begin{equation}
  \label{eq:ajboundfrombelow}
  \inf_{\ang \xi > C} \abs{a_j(\xi) - a_{j+k}(\xi)} \ge c > 0,
  \end{equation}
  or
  \begin{equation}
    \label{eq:hypresonant}
    \big[\ba^{\COD}]_{j,j+k} \in \BS^{2\beta - \alpha}.
  \end{equation}
  Then, for all $\eps > 0$ and $N \in \N$ there exists a symmetric operator
  $\BPsi \in \BS_m^{\beta - \alpha}$ such that
  \begin{equation}
    \label{eq:desiredafteronestep}
    [\BA]_{\BPsi} = \exp(-i\BPsi) \BA \exp(i\BPsi) = \BA^\CD + \BY + \BR_1 + \BR_2
  \end{equation}
  where $\BY \in \BU\BS_m^\beta$, $\BR_1 \in \BS^{2 \beta - \alpha}$,
  $\norm{\BR_2}_{\mathsf H^\beta_m \to \mathsf H_m^0} < \eps$ and $\BY, \BR_1,
  \BR_2$
  are symmetric. If $\BA$ is quasi-periodic, one can choose $\BR_2 = 0$.
\end{thm}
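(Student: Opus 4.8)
The plan is to realize $[\BA]_\BPsi$ via the weak gauge transform scheme of Section~\ref{subsec:weakgauge}, applied in the scalar picture on $\ell^2(\Xi_m)$ after conjugation by $T_m$, and then to peel off the part of the resonant operator that obstructs uncoupling. First I would choose the resonant regions: for each pair $(\theta,k)\in G_m$ with $k\neq 0$ I set
\begin{equation*}
  \Lambda_{(\theta,k)}^{\delta,s} := \set{(\xi,j)\in\Xi_m : \abs{\ang{\theta\act\xi}^\alpha a_{j+k}(\theta\act\xi) - \ang\xi^\alpha a_j(\xi)} \le s\ang\xi^\delta}
\end{equation*}
with $\delta = \beta$ and $s$ a small fixed constant, while for $k=0$ the only resonant region is all of $\Xi_m$ (as it must be, since $\Lambda_{\id}=\Xi$ by Remark~\ref{rem:res}; note $(\theta,0)$ with $\theta\neq\id$ is \emph{not} forced to be resonant). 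The hypothesis \eqref{eq:ajboundfrombelow} guarantees that for the `well-separated' index pairs $(j,j+k)$ the denominator $\ang{\theta\act\xi}^\alpha a_{j+k}(\theta\act\xi) - \ang\xi^\alpha a_j(\xi)$ is bounded below by $c\ang\xi^\alpha$ up to lower-order corrections; using \eqref{eq:shftinfty} to compare $a_{j+k}(\theta\act\xi)$ with $a_{j+k}(\xi)$ and then \eqref{eq:ajboundfrombelow}, for $\ang\xi$ large the quantity is $\gtrsim \ang\xi^\alpha$, so those pairs are never resonant when $\delta=\beta<\alpha$ and $\ang\xi$ is large. Consequently the non-resonant part $\BA^{\CNR}$ contains all entries of $\BA^\COD$ of well-separated type (modulo a quasi-periodic, bounded, low-$\xi$ correction), and the resonant part $\BA^\CR$ consists only of: (a) diagonal-in-$G$ off-diagonal-in-$\C^m$ entries, i.e.\ the uncoupled-picture coupling coming from $g=\id$; (b) the genuinely coupled entries $[\ba^\COD]_{j,j+k}$ of type \eqref{eq:hypresonant}, which lie in $\BS^{2\beta-\alpha}_m$; and (c) a bounded quasi-periodic low-energy remainder.

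\textbf{The iteration.} Next I would run the parallel weak gauge transform of Proposition~\ref{prop:symbolestweakgauge} for $\tilde k$ steps, with $\Psi = \Op(\psi^{\beta,s})$ as in \eqref{eq:defpsiabstract}. Since $\delta = \beta$ we get $\BPsi\in\BS_m^{\beta-\alpha}$ with $\beta-\alpha<0$, and Proposition~\ref{prop:symbolestweakgauge} gives
\begin{equation*}
  [\BA]_\BPsi = \BA^\CD + \BY^\CD_{\tilde k} + \BY^\CR_{\tilde k} + \BR_{\tilde k+1},
  \qquad \BR_{\tilde k+1}\in\BS_m^{\tilde k(\beta-\alpha)+\beta}.
\end{equation*}
Choosing $\tilde k$ large enough that $\tilde k(\beta-\alpha)+\beta \le 2\beta-\alpha$ forces $\BR_{\tilde k+1}\in\BS_m^{2\beta-\alpha}$. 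Now I decompose the resonant output. The operator $\BY_{\tilde k}$ has frequency set contained in $Z(\Theta)\times\Z/m\Z$; I split $\BY^\CR_{\tilde k} = (\BY^\CR_{\tilde k})^\CU + (\BY^\CR_{\tilde k})^\CC$ into its uncoupled and coupled parts using \eqref{eq:estpartuncbyfull}. I then set
\begin{equation*}
  \BY := \BA^{\CD} \text{-complement from }\BY_{\tilde k}^\CD + (\BY^\CR_{\tilde k})^\CU \in \BU\BS_m^\beta,
\end{equation*}
more precisely $\BY := \BY_{\tilde k}^\CD + (\BY_{\tilde k}^\CR)^\CU$, which is uncoupled of order $\beta$. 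It remains to absorb $(\BY^\CR_{\tilde k})^\CC$. By construction the only surviving coupled resonant entries are those of type \eqref{eq:hypresonant}, which sit in $\BS_m^{2\beta-\alpha}$; hence $(\BY^\CR_{\tilde k})^\CC\in\BS_m^{2\beta-\alpha}$, up to the bounded quasi-periodic low-energy correction $\BR_2$ coming from the finitely many pairs $(\xi,j)$ with $\ang\xi\le C$. Setting $\BR_1 := \BR_{\tilde k+1} + (\BY^\CR_{\tilde k})^\CC|_{\text{high energy}}\in\BS_m^{2\beta-\alpha}$ and collecting the low-energy remainder into $\BR_2$ yields \eqref{eq:desiredafteronestep}. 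Symmetry of $\BY,\BR_1,\BR_2$ follows since each construction step preserves symmetry (Lemma~\ref{lem:weakcommest}, Remark~\ref{rem:res}(iii), \eqref{eq:estpartuncbyfull}).

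\textbf{The $\eps$ and the quasi-periodic case.} The operator $\BR_2$ is, by the above, supported on $\set{(\xi,j):\ang\xi\le C}$ in the sense that it is a quasi-periodic operator whose symbol vanishes for $\ang\xi > C'$ for some $C'$ depending on $C$ and the a priori constants; by Proposition~\ref{prop:normorder} such an operator, viewed as a map $\mathsf H^\beta_m\to\mathsf H^0_m$, has norm controlled by the symbol sup taken over the bounded set $\set{\ang\xi\le C'}$ times $\ang{C'}^{-\beta}$ if $\beta\ge 0$ or a fixed power otherwise --- but crucially we may \emph{enlarge} the separation threshold: redefining the resonant regions with $C$ replaced by a large $C(\eps)$ does not affect \eqref{eq:ajboundfrombelow} (which holds for all $\ang\xi>C$, hence for all $\ang\xi>C(\eps)\ge C$), and it shrinks the set on which $\BR_2$ is supported's contribution, giving $\norm{\BR_2}_{\mathsf H^\beta_m\to\mathsf H^0_m}<\eps$ once $C(\eps)$ is large enough. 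If $\BA$ is quasi-periodic then $\Theta$ is finite, $Z(\Theta)$ is finite, and — more importantly — one can truncate $\BPsi$ at finitely many steps with no low-energy leftover, because the finitely many coupled resonant entries are already in $\BS_m^{2\beta-\alpha}$ \emph{including} at low energy (the low-energy obstruction was an artifact of comparing $a_{j+k}(\theta\act\xi)$ to $a_{j+k}(\xi)$ for infinitely many $\theta$); with $\Theta$ finite and \eqref{eq:shftinfty} one simply enlarges $C$ to cover all of $Z(\Theta)$ at once, so $\BR_2 = 0$.

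\textbf{Main obstacle.} The delicate point is the bookkeeping in the iteration: after the first gauge step the frequency set swells to $Z(\Theta)\times\Z/m\Z$, so one must check that the hypotheses \eqref{eq:shftinfty} and \eqref{eq:ajboundfrombelow}–\eqref{eq:hypresonant} —which are stated for $\theta\in Z(\Theta)$ precisely to survive iteration— still imply that every \emph{newly created} coupled resonant entry lands in $\BS_m^{2\beta-\alpha}$ and not merely in $\BS_m^\beta$. This is where one needs that the commutator of a well-separated (hence non-resonant, order-$(\beta-\alpha)$ after division) symbol with $\BA^\COD$ either stays uncoupled or, where it couples, inherits the extra order gain $\beta-\alpha<0$; tracking which matrix entries of the multiple commutators $\ad(\BA^\COD;\BPsi_{k_1},\dots)$ are coupled-resonant, and verifying they all carry at least one factor of order $\le 2\beta-\alpha$, is the combinatorial heart of the argument. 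Everything else is an application of the estimates already assembled in Sections~\ref{sec:gt} and the first half of Section~\ref{sec:mapo}.
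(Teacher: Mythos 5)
Your overall architecture (resonant regions built from the size of the denominators, a weak gauge transform, then splitting the resonant output into uncoupled and coupled parts) matches the paper's, but two of your steps fail as written. First, you never truncate the frequency set, and your mechanism for making $\BR_2$ small does not work. Hypothesis \eqref{eq:shftinfty} is a limit for each \emph{fixed} $\theta$, so there is no single energy threshold above which the well-separated denominators are bounded below by $\tfrac{c}{2}\ang\xi^\alpha$ uniformly over an infinite $\Theta$; this is precisely why the paper first replaces $\BA^{\CO\CD}$ by an operator $\BB$ with finite frequency set $\tilde\Theta$ and symbol-norm tail $<\eps'$ as in \eqref{eq:smallnorm}. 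The operator $\BR_2$ is the conjugate of that tail, and it is small as a map $\RH_m^\beta\to\RH_m^0$ because its \emph{symbol norm} is small, not because it is supported at low energies. Your proposed remedy --- enlarging the threshold $C$ --- goes the wrong way: by Proposition \ref{prop:normorder} the $\ell^2$-bound for an order-$\beta$ operator supported on $\set{\ang\xi\le C}$ deteriorates like $C^\beta$, and its $\RH^\beta\to\RH^0$ norm is controlled only by its symbol norm, which does not shrink with $C$. (The genuinely bounded-support pieces are in $\BS_m^{-\infty}$ and belong in $\BR_1$; they are not the reason $\BR_2$ exists.)

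Second, your choice $\delta=\beta$ is inconsistent with the orders you claim: Lemma \ref{lem:psi} with $\delta=\beta$ gives only $\Psi\in\BS^0$, and Proposition \ref{prop:symbolestweakgauge} requires $\delta>\beta$ and would yield a remainder of order $\tilde k(\beta-\delta)+\beta=\beta$, which never improves. The conclusions you write down ($\BPsi\in\BS_m^{\beta-\alpha}$, remainder in $\BS_m^{\tilde k(\beta-\alpha)+\beta}$) are those for $\delta=\alpha$, which is what the paper uses ($\alpha$-resonant regions with parameter $c/2$). Moreover, the multi-step iteration is both unnecessary and the source of the gap you flag at the end: the target order for $\BR_1$ is only $2\beta-\alpha$, which a single weak step already delivers (Corollary \ref{cor:onestepweakgauge}), and after one step the resonant output is just $\tilde\BA^{\CR}$ itself, whose coupled part is handled directly --- entries with $k\notin I_j$ lie in $\BS^{2\beta-\alpha}$ by \eqref{eq:hypresonant}, while entries with $k\in I_j$ have bounded support and are therefore smoothing. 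The ``combinatorial heart'' you describe (showing that the coupled resonant entries of the multiple commutators carry an extra order gain) is left unproven; with your $\delta=\beta$ it is not available, whereas with $\delta=\alpha$ it is moot, since all terms $B_k,T_k$ with $k\ge 2$ already lie in $\BS_m^{2\beta-\alpha}$ by order counting alone and can be dumped into $\BR_1$ without inspecting their matrix structure.
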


\begin{rem}
  The conditions \eqref{eq:ajboundfrombelow} and \eqref{eq:shftinfty} are
  satisfied in the simple case of constant functions $a_j(\xi)=a_j\in\R\setminus
  \{0\}$.
\end{rem}
\begin{proof}
Fix $\epsilon' > 0$. We first eliminate the long-range coupling. Since
$\snorm{\BA^{\COD}}{\beta}{0} < \infty$, there exists a
finite subset $\tilde{\Theta} \subseteq \Theta$, closed under inversion and containing the identity, such that
\begin{equation}
  \sum_{\theta \in \Theta \setminus \tilde \Theta}
  \sup_{\xi \in \Xi} \ang \xi^{-\beta}\norm{\ba_\theta^{\COD}(\xi)} < \epsilon'.
  \label{eq:smallnorm}
\end{equation}
Let $\BB:= \Op(\bbb)$ with the symbol
\begin{equation}
  \bbb_\theta(\xi) := \begin{cases}
    \ba_\theta^{\COD}(\xi) & \text{if } \theta \in \tilde \Theta, \\
    0 & \text{otherwise.}
  \end{cases}
  \label{eq:btilde}
\end{equation}
For $\tilde \BR: = \BA^{\COD} - \BB$, \eqref{eq:matrixseminorm} implies
\begin{equation}
  \label{eq:estRtilde}
\big\|\widetilde{\BR}\big\|_{0}^{(\beta)} < \mathfrak{c}_m\epsilon',
\end{equation}
and we write $\widetilde{\BA}:= \BA^\CD + \BB$ so that
$\BA=\widetilde{\BA}+\widetilde{\BR}$ and $\tilde \BA^\CD = \BA^\CD$. For every $j \in \Z/m\Z$, define the set
\begin{equation}
  \label{eq:specificconstants}
  I_j := \set{k \in \Z/m\Z : \eqref{eq:ajboundfrombelow} \text{ holds}}.
\end{equation}
Finiteness of $\tilde \Theta$ and bounded range of action imply that
\begin{equation}
  \lim_{\ang \xi \to \infty} \sup_{\theta \in \tilde \Theta} \abs{\frac{\ang{\theta
  \act \xi}}{\ang \xi} -1} = 0.
\end{equation}
Combining this with \eqref{eq:shftinfty} and
\eqref{eq:ajboundfrombelow}, as well as boundedness of the functions $a_j$
implies the existence of $s'$ depending on $\eps'$ and the constants $c, C$ in
\eqref{eq:ajboundfrombelow} such that
\begin{equation}
  \inf_{j \in \Z / m \Z} \, \, \inf_{k  \in I_j} \, \,  \inf_{\theta \in
  \tilde \Theta} \, \,
  \inf_{\ang \xi > s'} \Big|a_{j+k}(\theta \act \xi) \ang{\theta \act \xi}^\alpha
  - a_j(\xi) \ang \xi^\alpha\Big| > \frac c 2 \ang \xi^\alpha.
\end{equation}
Thus, for $(g,k) \in G_m$, the sets
\begin{equation}
  \label{eq:resregionsystem}
  \Lambda_{(g,k)}^{\alpha,c/2} := \begin{cases}
    \set{(\xi,j) \in \Xi_m : \min\set{\ang \xi, \ang{g \act \xi}} \le s'}, &
    \text{if } g \in \tilde \Theta \text{ and } k \in I_j, \\
  \Xi_m, & \text{otherwise}
  \end{cases}
\end{equation}
are $\alpha$-resonant regions for the operator $\tilde A = T_m^* \BA T_m$, 
cf.~\eqref{eq:resonantregion}, and we choose the corresponding (scalar) resonance cut-off function
\begin{equation}
  \chi_{(g,k)}(\xi,j):=\bone_{\Xi_m\setminus\Lambda_{(g,k)}^{\alpha,c/2}(\xi,j)},
\end{equation}
see Remark \ref{rem:res}\eqref{item:remrescutoff}. Thus, taking $\Psi$ as in Lemma \ref{lem:psi}, we have that $\BPsi = T_m \Psi
T_m^* \in \BS_m^{\beta
- \alpha}$. In view of \eqref{eq:A'}, we deduce that
\begin{equation}
  [\tilde \BA]_{\BPsi} = \exp(-i \BPsi) \tilde \BA \exp(i \BPsi) =  \BA^\CD +
  \tilde \BA^\CR + \BR,
\end{equation}
where Corollary \ref{cor:onestepweakgauge} and conjugation by $T_m$ give $\BR \in \BS_m^{2 \beta -
\alpha}$.
We turn our attention to $\tilde \BA^\CR$. We decompose it as $\tilde \BA^\CR =
\tilde \BA^{\CR,\CU} + \tilde \BA^{\CR,\CC}$. By definition of the resonant region
$\Lambda^{\alpha,c/2}_{(g,k)}$, we have that
\begin{equation}
  [\tilde \ba^{\CR,\CC}]_{j,j+k} =
    [\tilde \ba^{\CO\CD,\CC}]_{j,j+k} \quad \text{if } k \not \in I_j 
\end{equation}
and
\begin{equation}
\supp([\ba^{\CR,\CC}]_{j,j+k}) \subset \set{\xi \in \Xi : \min_{\theta \in
\tilde \Theta} \ang{\theta \act \xi} \le s'} \quad \text{if } k \in I_j.
\end{equation}
By \eqref{eq:hypresonant}, for every $k \not \in I_j$ we have $[\tilde
\ba^{\CR,\CC}]_{j,j+k} \in \BS^{2\beta - \alpha}$. 
Finiteness of $\tilde
\Theta$ and bounded range of action imply that 
the support
of $[\ba^{\CR,\CC}]_{j,j+k}$ is bounded for $k \in I_j$.
Together, along with Proposition
\ref{prop:normorder}, this gives $\tilde \BA^{\CR,\CC} \in \BS_m^{2 \beta - \alpha}$.

All of this implies
\begin{equation}
  \label{eq:decompo}
  [\BA]_{\BPsi} = \BA^\CD + \tilde \BA^{\CR,\CU} + \tilde \BA^{\CR,\CC} + \BR +
  \exp(-i\BPsi) \tilde
  \BR \exp(i \BPsi).
\end{equation}
We claim that this has the desired form \eqref{eq:desiredafteronestep} with 
\begin{equation}
  \label{eq:decompodetails}
\BY =
\tilde \BA^{\CR, \CU},  \qquad \BR_1 = \tilde \BA^{\CR,\CC} + \BR, \qquad
\text{and} \qquad \BR_2 =
\exp(-i \BPsi)
\tilde \BR \exp(i\BPsi).
\end{equation}
Indeed, it follows from \eqref{eq:estpartbyfull} and
\eqref{eq:estpartuncbyfull} that $\tilde \BA^{\CR,\CU} \in \BU\BS^\beta_m$. We
have that $\tilde \BA^{\CR,\CC}, \BR \in \BS_m^{2\beta - \alpha}$ so that their sum
$\BR_1$ also is. 

Finally, recall that $\exp(i\BPsi)\in\BS^0_m$ by Corollary \ref{cor:complete}. In particular, we have
\begin{equation}
\snorm{\exp(i\BPsi)}{0}{|\beta|}\leq \sum\limits_{k=0}^\infty
\frac{\big(\snorm{\BPsi}{0}{|\beta|}\big)^k}{k!},
\end{equation}
where by Corollary \ref{cor:onestepweakgauge} and conjugation with $T_m$ we have
\begin{equation}
\begin{aligned}
  \snorm{\BPsi}{0}{|\beta|}\leq \snorm{\BPsi}{\beta-\alpha}{|\beta|}\le
  \frac{4}{c}
\big\|\BB\big\|_{|\beta|}^{(\beta)} \leq \frac 4 c
\snorm{\BA^{\COD}}{\beta}{|\beta|}
\end{aligned},
\end{equation}
with $c$ the constant in \eqref{eq:ajboundfrombelow}. Consequently,
$\snorm{\exp(i\BPsi)}{0}{|\beta|}$ is bounded uniformly in $\epsilon'\searrow
0$. Hence, for any $\epsilon>0$, choosing
\begin{equation}
  0< \eps' < \frac{\eps}{\left(\snorm{\exp(\mathrm i\BPsi)}{0}{|\beta|}\right)^2
  \mathfrak c_m},
\end{equation}
we obtain by Lemma \ref{lem:domain} and \eqref{eq:estRtilde} that
\begin{equation}\label{eq:normremainder}
\begin{aligned}
\norm{\BR_2}_{\RH_m^{\beta} \to \RH_m^0}&\leq \norm{\exp(
i\BPsi)}_{\RH_m^0\to\RH_m^0}\big\|\widetilde{\BR}\big\|_{\RH_m^{\beta} \to
\RH_m^0}\norm{\exp(i\BPsi)}_{\RH_m^{\beta}
\to \RH_m^{\beta}}\\
&\leq \big\|\widetilde{\BR}\big\|_{0}^{(\beta)}\left(\snorm{\exp(
i\BPsi)}{0}{|\beta|}\right)^2<\epsilon.
\end{aligned}
\end{equation}
This finishes the proof.
\end{proof}

\begin{thm}
  \label{thm:gtsystem}
  Let $\BA = \Op(\ba) \in \BS\BE\BS_{m}^\alpha$ be symmetric and let
  $\beta<\alpha$ such that $\BA^{\CO\CD}:=\Op(\ba^{\COD}) \in
  \BS^\beta_m$ with frequency set $\Theta \subset G$. Assume that
\begin{equation}
  \big[\ba_{\id}(\xi)\big]=\ang{\xi}^\alpha\diag(a_1(\xi),\dotsc,a_m(\xi))
\end{equation}
for some bounded functions $a_j:\Xi\to\mathbb{R}$. Moreover, suppose that there exist $C, c>0$ such that
\begin{equation}
 \label{eq:ajboundfrombelowbis}
\inf\limits_{\ang \xi >C}\min\limits_{j\neq k} |a_j(\xi)-a_k(\xi)|\geq c>0,
\end{equation} 
and that, for all $j=1,2,\dots,m$, and $\theta\in Z(\Theta)=\bigcup\limits_{k=1}^\infty \Theta^k$,
\begin{equation}\label{eq:shiftsatinfinity}
\lim\limits_{\ang{\xi}\to\infty} \frac{a_j(\theta\act\xi)}{a_j(\xi)}=1.
\end{equation}
Then, for all $\epsilon>0$ and $N \in \N$ there exists a symmetric operator
$\BPsi\in\BS^{\beta-\alpha}_m$ such that 
\begin{equation}
  [\BA]_{\BPsi} = \exp( i \BPsi) \BA \exp(i\BPsi) = \BA^{\CD} + \BY^{\CU} +
  \BR_1 + \BR_2
  \label{eq:matgt}
\end{equation}
with $\BY\in\BS^\beta_m$, $\BR_1 \in \BS_m^{-N}$, $\norm{\BR_2}_{\mathsf H_m^{\beta}\to \mathsf H_m^{0}} < \epsilon$
, and $\BY$, $\BR_1$, $\BR_2$ symmetric. If $\BA^{\COD}$ is quasi-periodic, then
one can choose $\BR_2=0$.
\end{thm}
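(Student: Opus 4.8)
The plan is to deduce the statement from the parallel weak gauge transform of Proposition~\ref{prop:symbolestweakgauge}, after the same frequency truncation used in the proof of Theorem~\ref{thm:systemonestep}. Fix $N$, $\eps$, and an auxiliary $\eps'>0$. Since $\snorm{\BA^{\COD}}{\beta}{0}<\infty$, there is a finite, inversion‑closed $\tilde\Theta\subseteq\Theta$ with $\id\in\tilde\Theta$ such that the operator $\BB$ obtained from $\BA^{\COD}$ by discarding all Fourier coefficients outside $\tilde\Theta$ satisfies $\snorm{\tilde\BR}{\beta}{0}<\mathfrak c_m\eps'$ for $\tilde\BR:=\BA^{\COD}-\BB$; as in Theorem~\ref{thm:systemonestep}, $\BB$ is symmetric because $\tilde\Theta=\tilde\Theta^{-1}$. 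Put $\tilde\BA:=\BA^{\CD}+\BB$, a quasi‑periodic symmetric operator with $\tilde\BA^{\CD}=\BA^{\CD}$, so that $\BA=\tilde\BA+\tilde\BR$.

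The preparatory step is to build, for the parameter $\delta:=\alpha>\beta$, a resonance cut‑off for $\BA^{\CD}$ tailored to \eqref{eq:ajboundfrombelowbis}--\eqref{eq:shiftsatinfinity}. Choose $\tilde k\in\N$ with $\tilde k(\alpha-\beta)\ge N+\beta$, so that the $\tilde k$‑step parallel scheme produces a remainder of order $\le-N$; it only ever involves frequencies in the \emph{finite} set $\Omega:=\bigcup_{l=1}^{\tilde k+1}(\tilde\Theta)^l\times\Z/m\Z\subseteq G_m$. For $\theta=(g,k)$ with $k=0$ I set $\Lambda^{\alpha,s}_\theta:=\Xi_m$; these uncoupled frequencies are genuinely resonant, since $|\langle g\act\xi\rangle^\alpha a_j(g\act\xi)-\langle\xi\rangle^\alpha a_j(\xi)|=o(\langle\xi\rangle^\alpha)$ by \eqref{eq:shiftsatinfinity} and boundedness of $a_j$, but cannot be guaranteed non‑resonant for any exponent strictly below $\alpha$. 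For $\theta=(g,k)\in\Omega$ with $k\neq0$ I use that the Peetre bound $|\langle g\act\xi\rangle-\langle\xi\rangle|\le\langle g\rangle-1$ together with \eqref{eq:shiftsatinfinity}, boundedness of the $a_j$, and \eqref{eq:ajboundfrombelowbis} gives, for each such $\theta$ and each $j$, a $\langle\xi\rangle$‑threshold beyond which $|\langle g\act\xi\rangle^\alpha a_{j+k}(g\act\xi)-\langle\xi\rangle^\alpha a_j(\xi)|\ge\tfrac c2\langle\xi\rangle^\alpha$; the maximum $s'$ of these finitely many thresholds lets me set $\Lambda^{\alpha,s}_\theta:=\{(\xi,j):\min(\langle\xi\rangle,\langle g\act\xi\rangle)\le s'\}$, and I take $\Lambda^{\alpha,s}_\theta:=\Xi_m$ for all remaining $\theta$. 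With any $0<s<c/2$ these are $\alpha$‑resonant regions for $\BA^{\CD}$ satisfying $\Lambda^{\alpha,s}_{\theta^{-1}}=\theta\act\Lambda^{\alpha,s}_\theta$, so Remark~\ref{rem:res}\eqref{item:remrescutoff} permits the cut‑off $\chi_\theta:=\bone_{\Xi_m\setminus\Lambda^{\alpha,s}_\theta}$.

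Applying Proposition~\ref{prop:symbolestweakgauge} to $\tilde\BA$ with this cut‑off and $\delta=\alpha$ produces a symmetric $\BPsi\in\BS^{\beta-\alpha}_m$ (hence $\exp(\mathrm i\BPsi)\in\BS^0_m$ unitary), symmetric $\BY_{\tilde k}\in\BS^\beta_m$ and $\BR_{\tilde k+1}\in\BS^{\tilde k(\beta-\alpha)+\beta}_m\subseteq\BS^{-N}_m$, together with the identity
\begin{equation*}
  [\tilde\BA]_{\BPsi}=\BA^{\CD}+\BY_{\tilde k}^{\CD}+\BY_{\tilde k}^{\CR}+\BR_{\tilde k+1}
\end{equation*}
from \eqref{eq:afterparallelgauge}. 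By construction of $\chi$, the uncoupled part $\BY_{\tilde k}^{\CR,\CU}$ of $\BY_{\tilde k}^{\CR}$ collects precisely the $k=0$ frequencies, while the coupled part $\BY_{\tilde k}^{\CR,\CC}$ has a finite frequency set and symbol supported in $\{\langle\xi\rangle\le s'\max\{\langle g\rangle:(g,k)\in\Omega\}\}$, so $\BY_{\tilde k}^{\CR,\CC}\in\BS^{-\infty}_m\subseteq\BS^{-N}_m$ by Proposition~\ref{prop:normorder}. Setting $\BY:=\BY_{\tilde k}^{\CD}+\BY_{\tilde k}^{\CR}\in\BS^\beta_m$ and $\BR_1:=\BY_{\tilde k}^{\CR,\CC}+\BR_{\tilde k+1}\in\BS^{-N}_m$ gives $[\tilde\BA]_{\BPsi}=\BA^{\CD}+\BY^{\CU}+\BR_1$; finally, since $[\,\cdot\,]_{\BPsi}$ is linear, $[\BA]_{\BPsi}=[\tilde\BA]_{\BPsi}+\BR_2$ with $\BR_2:=\exp(-\mathrm i\BPsi)\tilde\BR\exp(\mathrm i\BPsi)$, which is symmetric and, as in \eqref{eq:normremainder}, satisfies $\norm{\BR_2}_{\RH^\beta_m\to\RH^0_m}\le\mathfrak c_m\eps'\big(\snorm{\exp(\mathrm i\BPsi)}{0}{|\beta|}\big)^2$. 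Because $\snorm{\exp(\mathrm i\BPsi)}{0}{|\beta|}\le\exp\big(\snorm{\BPsi}{\beta-\alpha}{|\beta|}\big)$ is bounded uniformly in $\eps'\searrow0$ by Proposition~\ref{prop:symbolestweakgauge}, a sufficiently small $\eps'$ yields $\norm{\BR_2}_{\RH^\beta_m\to\RH^0_m}<\eps$, which is \eqref{eq:matgt}. When $\BA^{\COD}$ is quasi‑periodic one takes $\tilde\Theta=\Theta$, so $\tilde\BR=0$ and $\BR_2=0$.

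The main obstacle is controlling the frequency set through the iteration: the first truncation is exactly what confines the $\tilde k$‑step parallel scheme to the finite set $\Omega$, so one common threshold $s'$ suffices in the definition of the resonant regions; and hypotheses \eqref{eq:ajboundfrombelowbis}--\eqref{eq:shiftsatinfinity} are what force the coupled resonant regions to be bounded in $\langle\xi\rangle$, which is precisely why $\BY_{\tilde k}^{\CR,\CC}$ turns out to be smoothing rather than merely of order $\beta$. The remaining work is bookkeeping: one checks that $\BB$, $\BPsi$, $\BY$, $\BR_1$, $\BR_2$ are all symmetric (from $\tilde\Theta=\tilde\Theta^{-1}$, Proposition~\ref{prop:symbolestweakgauge}, Remark~\ref{rem:res}\eqref{item:remres}, and symmetry of uncoupled and diagonal parts of symmetric operators) and that the stated orders hold, all of which is routine given the cited results.
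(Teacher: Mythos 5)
Your proof is correct and follows essentially the same route as the paper's: truncate the off-diagonal frequencies to a finite set, run the parallel weak gauge transform of Proposition~\ref{prop:symbolestweakgauge} with $\delta=\alpha$ and the $\alpha$-resonant regions dictated by \eqref{eq:ajboundfrombelowbis}--\eqref{eq:shiftsatinfinity}, absorb the bounded-support coupled resonant part into $\BR_1$, and control $\BR_2$ via the bound on $\snorm{\exp(i\BPsi)}{0}{\abs\beta}$ that is uniform as $\eps'\searrow 0$. If anything, you are slightly more careful than the paper in restricting the resonant-region construction to the finite frequency set $\Omega$ actually generated by the $\tilde k$-step scheme.
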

\begin{proof}
  The proof essentially follows the scheme of the proof of Theorem
  \ref{thm:systemonestep}. We first eliminate
  long-range coupling and find $\BB \in \BS_m^\beta$ and $\widetilde \BR$ such that $\BA = \BA^\CD  +
  \BB +
  \widetilde \BR$ and $$\snorm{\widetilde \BR}{\beta}{0} < \eps'.$$ 
  Assumption \eqref{eq:ajboundfrombelowbis} leads this time to $\alpha$-resonant regions
  \begin{equation}
    \label{eq:resmultstep}
    \Lambda^{\alpha,c/2}_{(g,k)} = \begin{cases}
      \set{(\xi,j) \in \Xi_m : \min\set{\ang \xi, \ang{g \act \xi} } \le s'}, &
      \text{if } k \ne 0, \\
      \Xi_m, & \text{if } k = 0
    \end{cases}
  \end{equation}
  for some $s'$ depending on $\eps'$. 
  Put
  \begin{equation}
    K := \frac{N + \beta}{\alpha - \beta}.
  \end{equation}
  We apply a parallel weak gauge
  transform according to \eqref{eq:afterparallelgauge}. We have from Proposition
\ref{prop:symbolestweakgauge} and conjugating by $T_m$ that there exists symmetric operators $\BPsi \in \BS_m^{\beta - \alpha}$, $\BY \in \BS_m^\beta$ and $\BR
\in \BS_m^{-N}$ such that
\begin{equation}
  [\tilde \BA]_{\BPsi} = \exp(-i \BPsi) \tilde A \exp(i \BPsi) =  \BA^\CD + \BY^\CD + \BY^\CR + \BR
\end{equation}
and
\begin{equation}
  \label{eq:boundsnormpsi}
  \snorm{\BPsi}{\beta - \alpha}{\abs{\beta}} \ll \left(1 +
  \snorm{\BA^{\COD}}{\beta}{n_K}\right)^K.
\end{equation}
where the implicit constant depends only on $c$ and $K$.
Inequality \eqref{eq:boundsnormpsi} implies that $\|\exp(i\BPsi)\|_{\abs \beta}^{(0)}$ is
uniformly bounded as $\eps' \searrow0$.

With the resonant region as in \eqref{eq:resmultstep}, for every $\BY
  \in \BS_m^\gamma$, $\gamma < \alpha$ we have that $
\BY^\CD+   \BY^\CR = \BY^\CU + \BR_\BY$, where the symbol of $\BR_\BY$ has bounded support,
  implying $\BR_\BY \in \BS_m^{-\infty}$. We put $\BR_1 = \BR + \BR_\BY \in
  \BS^{-N}_m$ and
  $\BR_2 = \exp(-i\BPsi) \tilde \BR \exp(i \BPsi)$ gives us
  \begin{equation}
    \norm{\BR_2}_{\RH_m^\beta \to \RH_m^0} \le \eps' 
    \left( \snorm{\exp(i\BPsi)}{\abs
    \beta}{0}\right)^2.
  \end{equation}
  Therefore for any $\eps > 0$ by choosing $ 0 < \eps' < \eps
  \left( \snorm{\exp(i\BPsi)}{\abs
  \beta}{0}\right)^{-2}$
  we obtain
  \begin{equation}
  [\BA]_{\BPsi} = \BA^\CD + \BY^\CU + \BR_1 + \BR_2
  \end{equation}
  with the claimed properties.

\end{proof}

\section*{\textsc{Part II : Applications to asymptotic properties of systems}}

In this second part, we consider some specific examples where the methods and
results developed in the first half are applicable. As was mentioned earlier,
these methods work very well for operators
$$
H = H_0 + B
$$ 
of Schrödinger type acting on $\RL^2(\R^d)$. Here, $H_0 = (-\Delta)^{\alpha/2}$ and
$B$ is a pseudo-differential perturbation of order $\beta < \alpha$. In
particulars, the gauge transform method allows us to solve the following two
types of problems, see
\cite{barbpar,MorParSht2014,parnovski,ParSht2016,ParSht2012,ParSob2010,Sobolev2006} :
\begin{itemize}
  \item obtain a complete asymptotic expansion
    for the integrated density of states of almost periodic operators, as the
    spectral parameter goes to infinity;
  \item Prove that some
    elliptic periodic operators have the Bethe--Sommerfeld property, which
 asserts that the spectrum of such operators
contains a half-line $[\lambda;\infty)$ for some $\lambda \in \R$. 
\end{itemize}

We now consider these questions in the setting of elliptic systems of operators. We establish answers to both of these
problems in the case where symbols are
periodic, for the Bethe-Sommerfeld property, and almost periodic, for the
integrated density of states. We will do so by using the tools developed in Part I
of this paper to reduce these operators to uncoupled operators. We will show
that such a reduction cannot change the integrated density of states too much,
and we will show that it cannot open infinitely many gaps in the spectrum.
Since elliptic systems of operators do not have to be semi-bounded,
we will obtain these results as the spectral parameter goes to $\pm
\infty$. In order to do this, we will establish quantitative estimates based
upon the
results of Sections \ref{sec:ids}, \ref{sec:gt} and \ref{sec:mapo} under generic
assumptions about the perturbations. 

In Section
\ref{sec:besicovitch}, we describe the Besicovitch space of almost periodic
functions, and the operators acting on it. We also describe the structure of
operators that are periodic rather than simply almost periodic, interpreting the Floquet-Bloch decomposition through the lens of
almost periodic functions. 

In Section \ref{sec:prelim}, we describe the approach and the conditions
required to prove the existence of complete asymptotics for the integrated
density of states (IDS), and we state Theorem \ref{thm:aexpconcrete}, which describes
the asymptotic behaviour of the IDS for elliptic systems of operators. We
reduce the problem to obtaining asymptotics for the IDS in families of
intervals, which is the statement of Theorem \ref{thm:ainint}.

In Section \ref{sec:asyexp}, we prove Theorem \ref{thm:ainint} by showing
that it holds for uncoupled operators. We then use the gauge transform to show
that it is sufficient to obtain a complete asymptotic expansion for uncoupled
operators to get one for general systems.

In Section \ref{sec:bs}, we change perspective and we study periodic operators.
In Theorem \ref{thm:bs}, we give conditions under which elliptic systems of
periodic operators enjoy the Bethe--Sommerfeld property. We then
use the reduction to uncoupled operators and bounds for the density of states
obtained in Section \ref{sec:asyexp} to show that it is sufficient to prove
that the spectral overlap function is sufficiently bounded away from $0$ for uncoupled
operators. This will be done by reusing the results of Section \ref{sec:ids},
but interpreting fibrewise eigenvalue counting functions as instances of the
IDS.

We prove those lower bounds in Section \ref{sec:cg} by refining arguments based
on combinatorial geometry that were previously used in proving the
Bethe-Sommerfeld conjecture for Schrödinger-type operators.

Finally in Section \ref{sec:dirac}, we spend a few words to show that periodic
and almost periodic perturbations of the Dirac operator fit in the framework that
we have described in this part.

\section{Besicovitch space and systems of operators} \label{sec:besicovitch}

In this section, we turn back to the space $\RB^2(\R^d;\C^m)$ of almost periodic
vector-valued functions, corresponding to the case where $G = \Xi = \R^d$  and 
\begin{equation}
\bg_1\bg_2:=\bg_1+\bg_2, \quad \bg\act\bxi:=\bg+\bxi,
\end{equation}
for all $\bg_1,\bg_2,\bxi\in\mathbb{R}^d$. The weight function is $\ang \bxi = 1
+ \abs \bxi$.  From \eqref{eq:modulusg} we also get that $\ang \bg = 1 + \abs
\bg$ and that $G$ has bounded range of action. The case $m = 1$ corresponds to the usual
Besicovitch space. We now offer a concrete description of this space, along with
a few results relating the properties of operators acting on $\RL^2$ and
$\RB^2$. These results can be found in
\cite{CoburnMoyerSinger,Shubin1978,Shubin1979}.

Let $\set{v_1,\dotsc,v_m}$ be an orthonormal basis for $\C^m$ and for $1 \le j
  \le m$ let 
\begin{equation}
  \be_{\bxi,j}(\bx):= \exp(i \bxi \cdot \bx) \otimes v_j,
\end{equation}
  The space $\RB^2(\R^d;\C^m)$ is the closure of
\begin{equation}
  \spann \set{\be_{\bxi,j}:\ \bxi\in\R^d,\ j=1,\dots ,m},
\end{equation}
taken with respect to the
inner product 
\begin{equation}
  (f,g)_{\RB} = \lim_{L \to \infty} \frac{1}{(2L)^d}\int_{[-L,L]^d} f \cdot \bar g \de \bx.
\end{equation}

For the remainder of this article, we will use $\BS_m^\infty$,
$\BD\BS_m^\infty$, etc. to refer to the
spaces of almost periodic operators acting on $\RB^2(\R^d;\C^m)$. 
Let $A$ be an operator in $\BS_m^\alpha$ with symbol $\ba(\bx,\bxi)$.  
The action of $\BA$ in $\RL^2(\R^d;\C^m$ as an operator in the H\"ormander class
  $\Psi^\alpha(\R^d;\C^m)$ with almost periodic symbol is defined through the
  usual Fourier integral representation of pseudo-differential operators (see
  e.g. \cite{HormanderIII}) as
$$
\widetilde {\operatorname{Op}}(\ba)f(\bx) = \frac{1}{(2\pi)^d}\iint_{\R^d \times \R^d} \exp(i \bxi \cdot( \bx-\by))
\ba(\bx,\bxi) f(\by) \de \by \de \bxi.
$$
The following
proposition 
links its properties as an operator in $\RL^2$ and $\RB^2$, respectively.

\begin{prop} \label{prop:normspec}
 If $\BA \in \BS_m^\infty$ is bounded or elliptic, then
 \begin{equation}
  \spec_{\RB^2}(\BA) = \spec_{\RL^2}(\BA)
\end{equation}
as a set. In particular, if $\BA$ is bounded, its norm in $\RL^2$ and $\RB^2$
coincide.
\end{prop}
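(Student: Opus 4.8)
The plan is to establish the spectral equality by a two-step argument: first, reduce the elliptic case to the bounded case via a parametrix, and second, prove the bounded case by directly comparing resolvent norms in the two representations. Throughout, the key structural input is that an almost periodic operator $\BA = \Op(\ba)$ has \emph{two} incarnations — its action on Besicovitch space $\RB^2(\R^d;\C^m) \cong \ell^2(\R^d;\C^m)$ (which is the $\sharp$-construction, or equivalently the left component of the $\mathrm{II}_\infty$ factor in Remark \ref{rem:Shubin}), and its action as a Hörmander pseudo-differential operator on $\RL^2(\R^d;\C^m)$. The whole point is that these are affiliated to the same factor, so that a spectral projection that is nontrivial in one picture is nontrivial — with the same (relative) dimension — in the other.

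First I would treat the bounded case, $\BA \in \BS^0_m$. For $\lambda \notin \spec_{\RB^2}(\BA)$ (say $\BA$ symmetric; the general case follows by considering $(\BA-\lambda)^*(\BA-\lambda)$ as in Lemma \ref{lem:variational}), the resolvent $(\BA - \lambda)^{-1}$ is bounded on $\RB^2$, and since the inverse of an operator in $\BS^0_m$ is again in $\BS^0_m$ (one checks this via the Neumann series / the fact that $\BS^0_m$ is a Banach algebra under $\snorm{\cdot}{0}{0}$, using Corollary \ref{cor:domain}), there is a symbol $\bc \in \BS^0_m$ with $\Op(\bc) = (\BA-\lambda)^{-1}$ on $\RB^2$. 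Then $\widetilde{\Op}(\bc)$ is a bounded operator on $\RL^2$, and by the composition formula for Hörmander symbols together with the one-to-one symbol correspondence (which respects composition, Lemma \ref{lem:product}), $\widetilde{\Op}(\bc)(\widetilde{\Op}(\ba) - \lambda) = (\widetilde{\Op}(\ba)-\lambda)\widetilde{\Op}(\bc) = \Id$ on $\RL^2$; hence $\lambda \notin \spec_{\RL^2}(\BA)$. The reverse inclusion is the more delicate direction: one uses that a nontrivial spectral projection $E_{(\lambda-\delta,\lambda+\delta)}(\BA)$ on $\RL^2$ lies in the von Neumann algebra generated by the translations and the multiplications by characters (the $\mathrm{II}_\infty$ factor $\FA$ of Remark \ref{rem:Shubin}), hence has positive relative dimension; by Corollary \ref{cor:unitary}-type invariance and the faithfulness of the trace $\FT$, the corresponding projection $E_{(\lambda-\delta,\lambda+\delta)}(\BA^\sharp)$ on $\RB^2$ is also nontrivial, so $\lambda$ is in the $\RB^2$-spectrum. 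The norm equality for bounded $\BA$ is then immediate: $\norm{\BA}_{\RL^2} = \sup|\spec_{\RL^2}(|\BA|)|^{1/2}$-type reasoning, or more directly from \eqref{eq:normpreserving} in property \eqref{item:rhoS0bounded} of the representation $\rho$, which already asserts $\norm{\rho(\BA)}_{\FH\to\FH} = \norm{\BA}_{\ell^2\to\ell^2}$; one notes $\RB^2 \cong \ell^2(\R^d)$ as the fibre and $\RL^2$ as $\FH$ up to the tensor factor, so the norms coincide.

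Second, for elliptic $\BA \in \BE\BS^\alpha_m$ with $\alpha > 0$, I would use Proposition \ref{prop:globell}: there is a parametrix $\tilde\BA_{\kappa,\er}^{-1} \in \BD\BS^{-\alpha}_m$ with $\BA\tilde\BA_{\kappa,\er}^{-1} - \Id$ and $\tilde\BA_{\kappa,\er}^{-1}\BA - \Id$ of negative order, hence bounded (indeed compact-like in the relative sense, though we only need boundedness). Then $\lambda \in \spec(\BA)$ in either picture iff $0 \in \spec$ of the corresponding Fredholm-type combination; since the parametrix construction and the error bounds \eqref{eq:AA_R -Id}, \eqref{eq:A_RA -Id} are purely symbolic (they hold at the level of $\BS^\bullet_m$ and are therefore transported identically to both $\RL^2$ and $\RB^2$), and since $(\BA-\lambda)$ is, for $\alpha>0$, elliptic of order $\alpha$ for every $\lambda$ (the principal symbol is unchanged), one reduces $\spec(\BA-\lambda)$ to the invertibility of a $\BS^0_m$ operator plus a relatively compact error; applying the bounded case and the stability of the relative-dimension argument under such perturbations (Lemma \ref{lem:spectralperturb} / Lemma \ref{lem:monotonicity} give the quantitative version) yields $\spec_{\RB^2}(\BA) = \spec_{\RL^2}(\BA)$.

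The main obstacle is the reverse inclusion $\spec_{\RL^2}(\BA) \subseteq \spec_{\RB^2}(\BA)$ in the bounded case: an approximate eigenfunction on $\RL^2$ need not be almost periodic, so one cannot naively push it to $\RB^2$. The clean way around this is precisely the von Neumann algebra affiliation — the full machinery of Section \ref{sec:ids} is built exactly so that spectral projections of $\BA^\sharp$ (the $\RB^2$ side) and of $\widetilde\Op(\ba)$ (the $\RL^2$ side) both live in, and have equal trace in, the same factor $\FA$; faithfulness of $\FT$ then forbids one projection from vanishing while the other does not. I would cite \cite{Shubin1978,Shubin1979a,CoburnMoyerSinger} for this identification and spend the bulk of the written proof making the affiliation and the trace-preservation explicit in our notation, the rest being the routine parametrix reduction sketched above.
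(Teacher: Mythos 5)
Your route is genuinely different from the paper's, which simply observes that Shubin's proof for the scalar case \cite[Lemmata 4.1 and 4.2]{Shubin1978} --- the approximation lemmas that convert an $\RL^2$ approximate eigenfunction into a $\RB^2$ one and conversely --- applies coordinatewise when $m>1$. Unfortunately your argument has a genuine gap at its first step. You claim that if $\lambda\notin\spec_{\RB^2}(\BA)$ for $\BA\in\BS^0_m$, then $(\BA-\lambda)^{-1}=\Op(\bc)$ for some $\bc\in\BS^0_m$, "by Neumann series / the fact that $\BS^0_m$ is a Banach algebra under $\snorm{\cdot}{0}{0}$." This is an inverse-closedness (spectral invariance) assertion, and it does not follow from a Neumann series: invertibility on $\ell^2$ only controls the operator norm of the resolvent, whereas $\snorm{\cdot}{0}{0}$ is an $\ell^1$-type (Wiener-algebra) norm over the frequency set, and $\norm{T}_{\ell^2\to\ell^2}<1$ does not imply $\snorm{T}{0}{0}<1$. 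Statements of this kind are noncommutative Wiener-lemma theorems (Bochner--Phillips type); they are not proved or cited in the paper, and even granting one, you would still have to control every weighted seminorm $\snorm{\cdot}{0}{l}$, $l>0$, to land back in $\BS^0_m$. Without $\bc\in\BS^0_m$ you cannot transport the inverse to $\RL^2$ via the symbol calculus, so the inclusion $\spec_{\RL^2}\subseteq\spec_{\RB^2}$ is not established.

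The reverse inclusion as you sketch it is also shakier than you suggest. The spectral projection $E_J$ of the H\"ormander operator on $\RL^2$ is not itself an element of the factor $\FA\subseteq\CB(\RB^2\otimes\RL^2\otimes\C^m)$; one must first pass to $\BA^\sharp=\ba(\bx+\by,D_\by)$ on $\FH$ and identify $\spec_{\FH}(\BA^\sharp)$ with $\spec_{\RL^2}(\BA)$ (a direct-integral argument using almost periodicity), and then relate nonvanishing of $E_J(\BA^\sharp)$ to nonvanishing of $E_J$ of $\BA$ acting on $\RB^2$. That last link is essentially the content of the proposition you are trying to prove, so the trace-faithfulness argument as written risks circularity. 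The clean fix is the one the paper (following Shubin) uses: given a Weyl sequence for $\BA-\lambda$ in one of the two spaces, the approximation lemmas produce a Weyl sequence in the other, coordinatewise in $\C^m$; the elliptic case then reduces to the bounded one exactly as in your parametrix paragraph, which is fine.
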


The proof of this proposition is exactly the same as the one in
\cite{Shubin1978} for the case $m = 1$. Indeed, it relies on
some facts about function approximation proven in \cite[Lemmata 4.1 and 4.2]{Shubin1978}
which remain true as $m > 1$ since they
apply coordinatewise. 
Boundedness or ellipticity then implies Proposition
\ref{prop:normspec}. When we refer to the norm of an operator, we will not distinguish whether that
operator is acting in $\RL^2(\R^d;\C^m)$ or $\RB^2(\R^d;\C^m)$ since those norms
are the same.

As mentioned in Remark \ref{rem:Shubin}, there is a faithful,
norm-preserving
$*$-representation $\BA \mapsto \BA^\sharp$ of almost periodic operators  given by 
$\BA^\sharp := \ba(\bx + \by,D_\by)$
acting in $$\FH_m := \RB^2(\R^d) \otimes \RL^2(\R^d) \otimes \C^m.$$ 
Here $\bx$ is a variable of functions in $\RB^2(\R^d;\C^m)$ and $\by$
is a variable of functions in $\RL^2(\R^d;\C^m)$. The operator $\BA^\sharp$ is interpreted as a direct integral over $\bx$
of operators acting in $\RL^2(\R^d;\C^m)$. We denote by $e_J(x,y)$
the Schwartz kernel of the spectral projection $E_J(\BA)$. Note that in view of
Proposition~\ref{prop:normspec} and \cite{CoburnMoyerSinger}, if $\BA \le \BB$ as operators,
then $\BA^\sharp \le \BB^\sharp$ and
$\norm \BA = \norm{\BA^\sharp}$.

Finally, the operator $\BA^\sharp$ is affiliated to the $\text{II}_\infty$ factor $\FA$ generated by the two families of operators
\begin{equation}\nonumber
 \set{\be_{\bxi} \otimes \be_{\bxi}\otimes M: \bxi \in \R^d,\ M\in \CM_m} \quad \text{and} \quad \set{I \otimes T_{\bxi}\otimes M: \bxi \in \R^d,\ M\in \CM_m},
\end{equation}
where $\be_{\bxi}$ is the operator of multiplication by $e^{i \bxi \cdot \bx}$, $T_{{\bxi}}$ is the operator of translation $T_{\bxi} f(\bx) = f(\bx - \bxi)$
and $\CM_m$ is the algebra of $m\times m$ matrices with complex entries. This means that the results
of Section \ref{sec:ids}--\ref{sec:mapo} on the density of states measure (DSM)
also called the integrated density of states
(IDS) apply to this algebra of operators and this representation.

In the classical setting, the IDS is defined for differential operators using
the large box limit and for pseudo-differential operators as the trace of the
Schwartz kernel
\begin{equation}
  N(J;\BA) = M_\bx(\operatorname{tr} e_J(\bx,\bx)),
\end{equation}
where $M$ is the almost periodic mean. Note that this kernel is actually a
smooth integral kernel
whenever $J$ is a bounded interval, see \cite{ParSht2016}. 

Our terminology for the IDS is justified in \cite[Remark 3.1]{Shubin1979},
where it is shown that the IDS as defined in Section
\ref{sec:ids} is the same as the one obtained from the classical definition for
either differential or pseudo-differential operators.

\subsection{Concrete systems of operators}

From now on, we turn our attention to almost periodic pseudo-differential
operators whose principal symbol is diagonal and nondegenerate.
\begin{defi}
  A \emph{\pd} operator is an operator $\BA \in \BE\BS_m^\alpha$ for which there
  exists an unitary operator $\BU \in \BS_m^0$ so that $\BU^* \BA \BU = \BA_0 + \BB \in
  \BS\BE\BS_m^\alpha$ has the following properties.
  \begin{itemize}
    \item The \emph{principal part} $\BA_0 \in \BD \BE \BS_m^\alpha$, with symbol
\begin{equation}
  \ba_0(\bxi) = \diag\left(a_1 \abs \bxi^\alpha, \dotsc, a_m \abs
  \bxi^\alpha\right),
  \label{eq:defh0}
\end{equation}
with $a_j \ne 0$ and without loss of generality $a_1 \ge \dotso \ge a_m$. We set
$m^+ = \max{j : a_j > 0}$, where by convention $m_+ = 0$ if $a_1 < 0$. 
\item The \emph{subprincipal part} $\BB \in \BS_m^\beta$ for $\beta < \alpha$ and has
  frequency set $\Theta$. We also suppose that $\BB$ is formally self-adjoint,
  i.e. that its symbol satisfies
\begin{equation}
  \bbb_{\btheta}(\bxi) = \bbb_{-\btheta}(\bxi + \btheta)^*,
  \label{eq:selfadjb}
\end{equation}
for all $\bxi \in \RD$ and $\btheta \in \Theta$, where for a matrix $\ba$,
$\ba^*$ is its conjugate transpose.
\end{itemize}
If $a_j \ne a_k$ for $j \ne k$, we say that $\BA$ is a \emph{\mpu} operator.
\end{defi}
\begin{rem}
Since we are interested only in spectral properties of elliptic operators, for
the remainder of this paper we can always assume that the operators are already in $\BS\BE\BS_m^\alpha$.
\end{rem}

Without loss of generality, we assume that the frequency set $\Theta$ spans $\R^d$, contains
$\mathbf 0$, and is symmetric about $\mathbf 0$. Recall from
\eqref{eq:ZTheta} that, using sum rather than product notations for the
group of shifts in $\R^d$, that we also put
\begin{equation}
  \Theta^k = \Theta + \dotso + \Theta,
  \label{eq:thetasum}
\end{equation}
where the sum is taken $k$ times, and 
\begin{equation}
  \label{eq:ZTh}
  Z(\Theta) = \bigcup_{k\in \N}
\Theta^k
\end{equation}
The set $Z(\Theta)$ is countable and non-discrete, unless $\Theta$ generates a
lattice.

\subsection{Conditions on the perturbation and its frequency set}
\label{sec:prelim}

In this section, we state the exact conditions under which we can obtain
asymptotics for the integrated density of states for a system of operators acting
in $\RB^2(\R^d;\C^m)$. We also
show how we can reduce the problem to computing the IDS 
solely on some intervals contained in a large enough range of energies.

We are interested in the asymptotics for the positive energy and negative energy integrated
densities of states for a \pd{} operator $\BA = \BA_0 + \BB$, defined as
\begin{equation}
 N^+(\lambda):=  N^+(\lambda;\BA) := N([0,\lambda);\BA),
  \label{eq:npositive}
\end{equation}
and
\begin{equation}
  N^-(\lambda):= N^-(\lambda;\BA) := N( (-\lambda,0];\BA),
  \label{eq:nnegative}
\end{equation}
as $\lambda \to \infty$. For this, we will need some conditions on the
frequency set of the perturbation $\BB$.
  In Section \ref{sec:asyexp}, we reduce the
operator $\BA$ to a direct sum of operators of the type appearing in
\cite{MorParSht2014}. In that paper, the perturbations are required to satisfy
some conditions, which we describe for completeness. Conditions \ref{condI} and
\ref{condIV} correspond to Conditions \textbf A and \textbf C in
\cite{MorParSht2014} and we do not use them explicitly. Condition \ref{condII}
addresses \cite[equation 2.4]{MorParSht2014}, while Condition \ref{condIII} addresses
\cite[Condition \textbf B]{MorParSht2014}. We
refer the reader to \cite{MorParSht2014}, as well as
\cite{ParSht2012} for a more detailed discussion around these conditions and
their genericity.

We first need the following generic condition on the set $Z(\Theta)$ defined in
\eqref{eq:ZTh}.
\begin{cond}
  \label{condI}
  Suppose that $\btheta_1,\dotsc,\btheta_d \in Z(\Theta)$. Then,
  $Z(\set{\btheta_1,\dotsc,\btheta_d})$ is discrete.
\end{cond}
This condition is clearly satisfied for periodic $\BB$, but for quasi-periodic or
almost periodic $\BB$ it is meaningful. The next two conditions describe how well
$\BB$ is approximated by finite sums of homogeneous functions of $\bxi$, and by
quasi-periodic operators. 

\begin{cond}
  \label{condII}
  There exists a constant $C_0 > 1$ and 
  a discrete
  subset $J \subset (-\infty,\beta]$ such that for all $\btheta \in \R^d$ and $\abs \bxi \ge C_0$,
  \begin{equation}
    (1 - \bone_{C_0}(\bxi)) \bbb_{\btheta}(\bxi) = \sum_{\iota \in J}
    \abs{\bxi}^\iota
    \bbb_{\btheta}^{(\iota)}\left( \frac{\bxi}{\abs \bxi} \right),
  \end{equation}
  where $\bbb_{\btheta}^{(\iota)} \in \BS_m^0$ is positively homogeneous of
  degree $0$. We also suppose that for $\boeta \in \mathbb S^{d-1}$,
  $\bbb_{\btheta}^{(\iota)}(\boeta)$ has a series representation (written in
  multi-index notation)
  \begin{equation}
    \bbb_{\btheta}^{(\iota)}(\boeta) = \sum_{\bn \in \N_0^d}
    \bbb_{\btheta}^{(\iota,\bn)} \boeta^\bn,
  \end{equation}
  which converges absolutely in a ball of radius greater than one of $\R^d$. 
\end{cond}
If $\BB$ is quasi-periodic and $J_0$ is finite, these are the only conditions that
we need. Otherwise, we need to find a quasi-periodic approximation of $\BB$.
In view of \eqref{eq:smallnorm}, such an approximation will always exist, but we
need a quantitative version of it.
\begin{cond}
  \label{condIII}
  For every $k \in \N$, there exists $C_k > C_0$ such that for each $\rho > C_k$,
  there exists a finite symmetric
  $\tilde \Theta \subset (\Theta \cap \CB(\rho^{1/k}))$ and a finite subset $J_k
  \subset (-\infty,\beta]$ with 
  \begin{equation}
    \# J_k \le \rho^{1/k}
    \label{eq:cardtildej}
  \end{equation}
  such that the symbol
  \begin{equation}
    \label{eq:symbolrb2}
    \br_{\btheta}^{(k)}(\bxi) := \begin{cases}
      \bbb_{\btheta(\bxi)} & \text{if } \theta \not \in \tilde \Theta, \\
      \bbb_{\btheta(\bxi)} - \sum_{\iota \in J_k} \abs{\bxi}^{\iota}
      \bbb_{\btheta}^{(\iota)}\left( \frac{\bxi}{\abs \bxi} \right) & \text{if
      } \theta \in \tilde \Theta,
    \end{cases}
  \end{equation}
  satisfies, for all $\ell \in \N$,
  \begin{equation}
    \snorm{\br^{(k)}}{\beta}{\ell} \le c_{\ell,k} \rho^{-k},
    \label{eq:smallseminorm}
  \end{equation}
  for some $c_{\ell,k} > 0$.
\end{cond}
Finally, we need a Diophantine condition on the frequencies of $B$, for which we need some definitions. Fix $\tilde k \in \N$ (which will depend on the
order of the remainder in the asymptotic expansion, but not on $k$ as in
Condition \ref{condIII}). We say that $\FV$ is a quasi-lattice subspace of
dimension $q$ if there are linearly independent
$\btheta_1,\dotsc,\btheta_q \in \tilde \Theta^{\tilde k}$
such that $\FV = \spann(\btheta_1,\dotsc,\btheta_q)$. We denote by $\CV$ the collection
of all quasi-lattice subspaces. For $\FU,\FV \in \CV$, we write
$\phi(\FU,\FV)\in [0,\pi/2]$ to denote the angle between them, that is the angle
between $\FU \ominus (\FU \cap \FV)$ and $\FV \ominus (\FU \cap \FV)$, where for
a subspace $\FW \subset \FV$, $\FV \ominus \FW$ is the orthogonal complement of
$\FW$ in $\FV$. This angle is non-zero if and only
if $\FU$ and $\FV$ are strongly distinct, \emph{i.e.} if neither of them is a subspace of the other. 
Recalling that for any $k$ the choice of $\tilde \Theta$ depends on $\rho$, we put
\begin{equation}
  R(\rho) = \sup_{\btheta \in \tilde \Theta^{\tilde k}} \abs \btheta, \qquad
r(\rho) = \inf_{\btheta \in \tilde (\Theta^{\tilde k})'} \abs \btheta,
  \label{eq:defnrR}
\end{equation}
as well as 
\begin{equation}
s := s(\rho) = s(\tilde \Theta^{\tilde k})
:= \inf \sin(\phi(\FU,\FV)),
\end{equation}  
  where the infimum is over all strongly distinct pairs of subspaces in $\CV$. It is clear that
  \begin{equation}
    R(\rho) = \bigo{\rho^{1/k}},
    \label{eq:boundonR}
  \end{equation}
  where the implicit constant might depend on $k$ and $\tilde k$; however, we need the following condition for $r$ and $s$.
  \begin{cond}
    \label{condIV}
    For each fixed $k$ and $\tilde k$, the sets $\tilde \Theta$ can
    be chosen in such  way that for sufficiently large $\rho$, depending on $k$
    and $\tilde k$, the
    number of elements of $\tilde \Theta^{\tilde k}$ satisfies $\#\tilde \Theta^{\tilde k} \le \rho^{1/k}$ and we have that
    \begin{equation}
      s(\rho) \ge \rho^{-1/k}
      \label{eq:boundons}
    \end{equation}
    and
    \begin{equation}
      r(\rho) \ge \rho^{-1/k}.
      \label{eq:boundonr}
    \end{equation}
  \end{cond}
  \begin{rem}
    Condition \ref{condIV} is automatically satisfied for quasi-periodic and
    smooth periodic $\BB$. See \cite{ParSht2012} for further discussion of this condition.
  \end{rem}

\section{Asymptotic expansions for the IDS} \label{sec:asyexp}
  We now suppose that the perturbation $\BB$ satisfies Conditions
  \ref{condI}--\ref{condIV} and we set $\rho = \lambda^{1/\alpha}$, where $\alpha$
  is the order of $\BA_0$. We prove the two following theorems, depending on whether all the $a_j$
  in \eqref{eq:defh0} are distinct or not. Recall that $m_+ = \max \set{j : a_j
  > 0}$, with $m_+ = 0$ if $a_j < 0$ for all $j$. 
  \begin{thm} \label{thm:aexpconcretecut}
    Let $\BA$ be a \pd{} operator with subprincipal part $\BB \in \BS^{\beta}_m$,
    $\beta \le \alpha/2$ satisfying
    Conditions \ref{condI}--\ref{condIV}. Suppose that there exists $\gamma \le
    0$ such that whenever $a_j =
    a_k$ for some $1 \le j \ne k \le m$, then $[\BB]_{j,k} \in \BS^{\gamma}$ and
    put $\gamma^* = \max(2 \beta - \alpha, \gamma)$. Then, there exists a discrete set
    $L \subset (0,1 - \gamma^*)$ and constants $C_0^\pm$ and $C_{q,j}^\pm$, $0 \le q \le d-1$, $j \in
    L$ such that
    \begin{equation}
      \begin{aligned}
        N^\pm\left(\BA; \rho^\alpha \right) = C_0^\pm \rho^d  
    +  \sum_{j\in L}\sum_{q =0}^{d-1} 
        C_{j,q}^\pm \rho^{d-j} \log^q(\rho) + \bigo{\rho^{d - 1 + \gamma^*}},
      \end{aligned}
      \label{eq:asymptoticsconcretecut}
    \end{equation}
    as $\rho \to \infty$. If $m^+ = m$ (resp. if $m^+ = 0$), then $C_0^- = C^-_{j,q} =
    0$ (resp. $C_0^+ = C^+_{j,q}$ = 0) except for $(j,q) = (d,0)$. 
  \end{thm}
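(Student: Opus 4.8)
The plan is to prove Theorem \ref{thm:aexpconcretecut} by reducing the system $\BA$ to an uncoupled operator via the gauge transform machinery of Part I, and then invoking the known scalar asymptotics from \cite{MorParSht2014}. First I would apply Theorem \ref{thm:gtsystem} in the case where all $a_j$ are distinct, and Theorem \ref{thm:systemonestep} in the case with multiplicities: in either case, since $\BA \in \BE\BS_m^\alpha$ is symmetric and (after conjugating by a unitary in $\BS_m^0$) lies in $\BS\BE\BS_m^\alpha$ with principal symbol $\ba_0(\bxi)=\diag(a_1|\bxi|^\alpha,\dots,a_m|\bxi|^\alpha)$, there is a symmetric $\BPsi \in \BS_m^{\beta-\alpha}$ with
\begin{equation*}
  [\BA]_{\BPsi} = \BA^\CD + \BY^\CU + \BR_1 + \BR_2,
\end{equation*}
where $\BY \in \BS_m^\beta$, $\BR_1 \in \BS_m^{-N}$ for any prescribed $N$ (in the simple-eigenvalue case) or $\BR_1 \in \BS_m^{2\beta-\alpha}$ (in the multiplicity case, using \eqref{eq:hypresonant} with the hypothesis $[\BB]_{j,k}\in\BS^\gamma$), and $\|\BR_2\|_{\RH_m^\beta\to\RH_m^0}$ arbitrarily small (zero if $\BB$ is quasi-periodic). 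By Corollary \ref{cor:unitary}, $N(J;\BA)=N(J;[\BA]_{\BPsi})$ for every Borel $J$, so it suffices to compute the IDS of $\BA^\CD + \BY^\CU + \BR_1 + \BR_2$.

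The core of the argument is then to show that the remainders $\BR_1$ and $\BR_2$ do not affect the asymptotic expansion up to the claimed order $\bigo{\rho^{d-1+\gamma^*}}$, where $\gamma^* = \max(2\beta-\alpha,\gamma)$. For $\BR_1$: in the simple-eigenvalue case it is of order $-N$ with $N$ as large as we like, so Lemma \ref{lem:monotonicity} applied on intervals of size $\sim 1$ around the spectral parameter $\lambda=\rho^\alpha$ shows its effect on $N(J;\cdot)$ is $\bigo{\rho^{-N'}}$ for any $N'$; in the multiplicity case $\BR_1 \in \BS_m^{2\beta-\alpha}$ and the same lemma gives a perturbation of size controlled by a power of $\rho$ that translates to an error matching $\rho^{d-1+\gamma^*}$ after integrating the monotonicity estimate across the relevant energy range. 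For $\BR_2$: since it is bounded $\RH_m^\beta \to \RH_m^0$ with arbitrarily small norm, but this does not immediately give an $O$-term, I would instead follow the strategy of \cite{MorParSht2014} --- run the gauge transform only finitely many times at each dyadic energy scale, absorbing the tail of the frequency set into $\BR_2$ using Condition \ref{condIII}, so that at energy $\rho^\alpha$ the norm of $\BR_2$ is $\bigo{\rho^{-k\alpha}}$ for the $k$ attached to that scale; Lemma \ref{lem:monotonicity} then bounds its contribution. This reduces the theorem to the assertion (to be stated as Theorem \ref{thm:ainint}) that the IDS of the uncoupled operator $\BA^\CD + \BY^\CU$ has the claimed expansion on families of intervals.

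Finally, $\BA^\CD + \BY^\CU \in \BU\BS_m^\infty$ decomposes, after a unitary, as an orthogonal sum $\bigoplus_{j} \BA_j$ with each $\BA_j$ a scalar almost periodic pseudo-differential operator of the form $a_j|\bxi|^\alpha + (\text{order }\beta)$ acting in $\RB^2(\R^d)$, by the corollary to the basis-change lemma in Section \ref{sec:mapo}. Here is where the sign of $a_j$ matters: writing $N(J;\BA) = \sum_j N(J;\BA_j)$, for $j \le m^+$ the operator $\BA_j$ is (asymptotically) positive and contributes only to $N^+$, while for $j > m^+$ it is negative and contributes only to $N^-$ --- this is precisely the mechanism behind the claim that $C_0^- = C^-_{j,q}=0$ (except the trivial $(d,0)$ term) when $m^+ = m$, and symmetrically. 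For each $\BA_j$, the asymptotic expansion of $N^\pm(\rho^\alpha;\BA_j)$ with a remainder $\bigo{\rho^{d-1+\gamma^*}}$ --- noting $2\beta - \alpha \ge \gamma^*$ is exactly the hypothesis $\beta\le\alpha/2$ here --- follows from the main theorem of \cite{MorParSht2014} once one checks that Conditions \ref{condI}--\ref{condIV} for $\BB$ descend to the corresponding conditions on the scalar symbols of $\BY^\CU$; the homogeneous expansion in Condition \ref{condII} is preserved because the gauge transform is built from operators whose symbols inherit the homogeneous structure, and the Diophantine Conditions \ref{condI} and \ref{condIV} depend only on $Z(\Theta)$ and $\tilde\Theta^{\tilde k}$, which are unchanged. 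Summing the finitely many scalar expansions and collecting terms gives \eqref{eq:asymptoticsconcretecut} with $L$ the union of the scalar exponent sets. The main obstacle I anticipate is the bookkeeping in the second paragraph: making the finitely-many-steps-per-scale argument uniform so that the accumulated error from $\BR_2$ across all dyadic scales below $\rho^\alpha$ is genuinely $\bigo{\rho^{d-1+\gamma^*}}$ rather than merely $o(\rho^d)$, which requires carefully coupling the choice of $N$, $\tilde k$, and $k$ to the target remainder exponent exactly as in \cite{MorParSht2014,ParSht2012}.
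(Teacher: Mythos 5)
Your overall architecture coincides with the paper's: a gauge transform (Theorem \ref{thm:systemonestep} in the multiplicity case, Theorem \ref{thm:gtsystem} in the simple case) reducing to an uncoupled operator, unitary invariance of the DSM (Corollary \ref{cor:unitary}), Lemma \ref{lem:monotonicity} to absorb remainders, quasi-periodic truncation at each dyadic scale via Condition \ref{condIII}, decomposition of the uncoupled operator into scalar summands handled by \cite{MorParSht2014}, and gluing over dyadic intervals together with the sign bookkeeping for $m^+$. That is exactly the paper's route (Proposition \ref{thm:aexpconcreteu}, Lemma \ref{lem:qpenough}, Theorems \ref{thm:ctoucut}/\ref{thm:ctou}, Theorem \ref{thm:ainint}).

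There is, however, one genuine gap in how you dispose of $\BR_1$. Membership $\BR_1 \in \BS_m^{2\beta-\alpha}$ (or $\BS_m^{-N}$) is a statement about symbol classes; since $2\beta-\alpha\le 0$, Lemma \ref{lem:monotonicity} applied to such an $\BR_1$ only yields $\eps=\norm{\BR_1}\le\snorm{\BR_1}{2\beta-\alpha}{0}$, a constant independent of $\rho$. A constant shift of the spectral parameter near $\lambda=\rho^\alpha$ perturbs the counting function by roughly $\rho^{d-\alpha}$, which is not $\bigo{\rho^{d-1+\gamma^*}}$ in general. To convert negative symbol order into actual operator-norm decay $\norm{\BR_1}\ll\rho^{\gamma^*}$ one needs the symbols of $\BPsi$, of the commutator terms, and of the resonant coupled part $\BB^{\CR,\CC}$ to be supported in $\set{\abs\bxi\gtrsim\rho}$, so that Proposition \ref{prop:normorder} applies. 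The paper arranges this in Lemma \ref{lem:qpenough} by first replacing $\BA$ with $\BA'=\BA^\CD+\BA^{\COD}(1-P_{c_0\rho_n})$; the cost of deleting $\BA^{\COD}P_{c_0\rho_n}$ --- which is \emph{not} small in norm --- is controlled not by Lemma \ref{lem:monotonicity} but by the ``spectrally far'' perturbation Lemma \ref{lem:spectralperturb}, with a chain of projections $P_l$ interpolating between the low-energy region and the spectral window. Your proposal never invokes Lemma \ref{lem:spectralperturb} or any substitute for it, and this is also the real reason for the dyadic decomposition (that lemma needs the endpoints of the interval comparable to $\rho_n^\alpha$); as written, the reduction to uncoupled operators loses too much to reach the claimed remainder.
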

  \begin{thm} \label{thm:aexpconcrete}
    Let $\BA$ be a \mpu{} operator satisfying
    Conditions \ref{condI}--\ref{condIV}. 
    Then, for every $K \in \R$ there exists a discrete set
    $L \subset (0,d+K)$ and constants $C_0^\pm, C_{q,j}^\pm$, $0 \le q \le d-1$, $j \in
    L$, such that
    \begin{equation}
      \begin{aligned}
        N^\pm\left(\BA; \rho^\alpha \right) = C_0^\pm \rho^d + 
        \sum_{j\in L}\sum_{q =0}^{d-1} 
        C_{j,q}^\pm \rho^{d-j} \log^q(\rho) + \bigo{\rho^{-K}},
      \end{aligned}
      \label{eq:asymptoticsconcrete}
    \end{equation}
    as $\rho \to \infty$. If $m^+ = m$ (resp. if $m^+ = 0$), then $C_0^- = C^-_{j,q} =
    0$ (resp. $C_0^+ = C^+_{j,q}$ = 0) except for $(j,q) = (d,0)$. 
  \end{thm}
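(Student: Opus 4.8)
\emph{Plan.} The idea is to gauge-transform $\BA$ into an uncoupled operator plus a controllably small remainder, apply the scalar asymptotics of \cite{MorParSht2014} blockwise, and transfer the result back using the perturbation estimates of Section~\ref{sec:ids}. As a first step, since $N^\pm(\rho^\alpha;\BA)=N(J;\BA)$ for the semi-infinite interval $J=[0,\rho^\alpha)$, resp.\ $(-\rho^\alpha,0]$, and $\BA_0$ is elliptic of positive order $\alpha$, I would reduce --- this is Theorem~\ref{thm:ainint} --- the claimed expansion of $N^\pm$ to an asymptotic expansion of $N(J';\BA)$, with remainder uniform over the relevant energy window, for families of short intervals $J'$ centred at energies comparable to $\rho^\alpha$. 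Summing such an expansion over a partition of $[0,\rho^\alpha)$ into $\bigo{\rho}$ short intervals and passing to the continuum limit recovers the semi-infinite statement; the termwise integration is what promotes a $\rho^{d-1-j}\log^q\rho$ contribution to $N(J')$ into a $\rho^{d-j}\log^q\rho$ contribution to $N^\pm$. From here on I work with short intervals.

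\emph{The uncoupled case.} If $\BA$ is uncoupled, the corollary following Definition~\ref{def:uncoupled} writes it, up to a unitary in $\BS_m^0$, as $\bigoplus_{j\in\Z/m\Z}A_j$, where each $A_j$ is a scalar almost periodic pseudo-differential operator on $\RB^2(\R^d)$ with principal symbol $a_j|\bxi|^\alpha$ and a subprincipal part inheriting Conditions~\ref{condI}--\ref{condIV} from $\BB$. Additivity of the trace $\FT$ gives $N(J';\BA)=\sum_j N(J';A_j)$. After dividing $A_j$ by $|a_j|$ and, when $a_j<0$, applying the reflection $\bxi\mapsto-\bxi$ (which interchanges $N^+$ and $N^-$), each $A_j$ lies in the class treated in \cite{MorParSht2014}, whose complete expansion of the IDS I would quote; collecting the blocks gives the stated form, the blocks with $a_j>0$ (i.e.\ $j\le m^+$) contributing only to $N^+$ and those with $a_j<0$ only to $N^-$ --- precisely the asserted vanishing of the coefficients when $m^+=m$ or $m^+=0$. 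Discreteness of the exponent set $L\subset(0,d+K)$ is inherited from the discrete exponent sets of the finitely many blocks.

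\emph{From \mpu{} operators to uncoupled ones.} As the $a_j$ are pairwise distinct, \eqref{eq:ajboundfrombelowbis} holds with the constant functions $a_j$ and \eqref{eq:shiftsatinfinity} is automatic, so Theorem~\ref{thm:gtsystem} applies. Given $K$, I would fix $N$ large (in terms of $K$, $d$, $\alpha$) and $\eps>0$ small, obtaining a symmetric $\BPsi\in\BS_m^{\beta-\alpha}$ with
\[
 \exp(-i\BPsi)\,\BA\,\exp(i\BPsi)=\BA^\CD+\BY^\CU+\BR_1+\BR_2,
\]
where $\BY\in\BS_m^\beta$, $\BR_1\in\BS_m^{-N}$, $\norm{\BR_2}_{\mathsf H_m^\beta\to\mathsf H_m^0}<\eps$, all symmetric, and $\BR_2=0$ for quasi-periodic $\BB$. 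By Corollary~\ref{cor:unitary}, $N(J';\BA)=N(J';\BA^\CD+\BY^\CU+\BR_1+\BR_2)$, and $\BA^\CD+\BY^\CU$ is uncoupled; granting that its subprincipal part $\BY^\CU$ still satisfies Conditions~\ref{condI}--\ref{condIV} (this uses that the frequency set only grows to a subset of $Z(\Theta)$ under the weak gauge transform and that the Conditions are stable under this enlargement) it is of the form handled in the previous paragraph. It then remains to show that adjoining $\BR_1+\BR_2$ changes $N(J';\,\cdot\,)$ by at most $\bigo{\rho^{-K}}$ over the window. For this I would decompose $\R^d$ into a bounded family of spectral corridors $\set{P_l}_{l=0}^L$ --- diagonal operators commuting with $\BA^\CD$ --- with $P_0$ a neighbourhood of the energy shell $\set{|\bxi|\sim\rho}$, and verify the distance and coupling hypotheses \eqref{eq:distspectrum}--\eqref{eq:projectionnorm} of Lemma~\ref{lem:spectralperturb} using Conditions~\ref{condI}--\ref{condIV}, as in \cite{MorParSht2014}. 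Proposition~\ref{prop:normorder} then yields $\norm{\BR_1P_0}=\bigo{\rho^{-N}}$, while the part of $\BR_1+\BR_2$ off $P_0$ is spectrally far from $J'$ and hence irrelevant for $N(J';\,\cdot\,)$; Lemma~\ref{lem:spectralperturb} moves the endpoints of $J'$ by $\bigo{\rho^{c-N/2}}+\bigo{\eps^{1/2}\rho^{c}}$, which, together with Lemma~\ref{lem:monotonicity} for any remaining lower-order pieces and the smoothness of the expansion from the uncoupled case, is absorbed into the $\bigo{\rho^{-K}}$ remainder once $N$ is large and $\eps$ small.

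\emph{Expected main obstacle.} The crux is the last step. In contrast with $\BR_2$, whose operator norm is controlled directly, $\BR_1$ has a fixed operator norm and becomes small only after restriction to the energy shell, so its effect on the DSM can be handled solely through the spectral-localisation estimate of Lemma~\ref{lem:spectralperturb}; this forces the corridor decomposition and, crucially, the verification of the combinatorial-geometric Conditions~\ref{condI}--\ref{condIV} for $\BA^\CD+\BY^\CU$ rather than merely for the original $\BA$ --- i.e.\ that they survive the gauge transform (here one exploits that the resonant regions in Theorem~\ref{thm:gtsystem} are defined only through the separation of the $a_j$, so they behave well). The rest --- tracking which homogeneity degrees and powers of $\log\rho$ arise and confirming that the resulting exponent set $L$ is discrete and contained in $(0,d+K)$ --- is bookkeeping of the type already carried out in \cite{MorParSht2014,ParSht2012}.
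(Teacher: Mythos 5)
Your proposal follows the paper's architecture in all essentials: reduce to spectral intervals inside a dyadic energy window (Theorem~\ref{thm:ainint}), uncouple via the parallel gauge transform of Theorem~\ref{thm:gtsystem}, quote \cite{MorParSht2014} blockwise for the uncoupled operator (Proposition~\ref{thm:aexpconcreteu}), and transfer back with Lemmas~\ref{lem:monotonicity} and~\ref{lem:spectralperturb}. The one real difference is the order of the reductions. The paper first passes to a quasi-periodic approximation at scale $\rho_n$ via Condition~\ref{condIII} \emph{and} removes the low-energy part of the off-diagonal symbol (Lemma~\ref{lem:qpenough}); it is there, before any gauge transform, that the corridor decomposition and Lemma~\ref{lem:spectralperturb} are used. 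The gauge transform is then applied to an operator whose off-diagonal symbol is supported on $\set{\abs\bxi>c_0\rho_n}$, so $\BPsi$ and all remainders inherit that support and Proposition~\ref{prop:normorder} converts their negative order into a genuinely small operator norm; only the elementary Lemma~\ref{lem:monotonicity} is needed afterwards. You transform first and localise $\BR_1$ afterwards, which pushes the corridor argument onto the post-transform remainder. That can be made to work, but note two things you gloss over: (i) the tridiagonal condition $P_nBP_l=0$ for $\abs{n-l}>1$ in Lemma~\ref{lem:spectralperturb} requires the frequency set to sit in a ball of radius $\rho_n^{1/k}$, so the Condition~\ref{condIII} truncation must be done before (or inside) the transform anyway, and your ``fixed $\eps$'' cannot be fixed --- Lemma~\ref{lem:monotonicity} turns $\norm{\BR_2}_{\RH_m^\beta\to\RH_m^0}<\eps$ into an endpoint shift of size $\sim\eps\rho_n^{\alpha}$, so $\eps$ must decay polynomially in $\rho_n$, at the rate Condition~\ref{condIII} supplies; (ii) Conditions~\ref{condII}--\ref{condIII} do \emph{not} literally survive the gauge transform (see the remark after Theorem~\ref{thm:ctou}) --- what survives is the structural expansion of $\bbb_{\btheta_1}(\bxi+\btheta_2)$ that \cite{MorParSht2014} actually uses, so the conditions are imposed on the symbol before uncoupling rather than re-verified afterwards. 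Finally, your description of the interval gluing ($O(\rho)$ short intervals, ``termwise integration'' promoting $\rho^{d-1-j}$ to $\rho^{d-j}$) is not what Theorem~\ref{thm:ainint} does: the expansion there is already the difference of the full expansion at the two endpoints, and the gluing is a telescoping sum over $O(\log\rho)$ dyadic windows with geometrically controlled errors.
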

  \begin{rem}
  Note that the statement for $m^+ \in \set{0,m}$ follows from the operator being semi-bounded either above or below,
  respectively.

  Note as well that if $J \subset \Z$, i.e. if the symbol of $\BA$ is a
  classical symbol, see \cite[Chapter 7]{TaylorII}, then $L=
  \set{0,\dotsc,K + d - 1}$. 

    The set $L$ of allowable exponents can be made explicit, depending on $J$
    and $K$, see \cite[Remark 2.7]{MorParSht2014}.
\end{rem}

The proof of Theorems \ref{thm:aexpconcretecut} and \ref{thm:aexpconcrete} are
obtained after many reductions to simpler cases. Recall that they are the
general versions of
Theorems \ref{thm:dirac2dids} and \ref{thm:dirac3dids} in the introduction.

\subsection{IDS for uncoupled operators} \label{sec:uncoupled}

In this subsection, we prove that the conclusion of Theorem \ref{thm:aexpconcrete}
holds in the special case where $\BA \in \BU\BS_m^\alpha$, regardless of whether
an operator is \pd{} or \mpu{}. 
This means that in addition of satisfying the
conditions of Section \ref{sec:prelim}, its symbol is given by
\begin{equation}
  \ba(\bx,\bxi) = \ba_0(\bxi) + \bbb(\bx,\bxi),
  \label{eq:modeldmapo}
\end{equation}
where $\bbb(\bx,\bxi)$ is a diagonal matrix.
  \begin{prop}\label{thm:aexpconcreteu}
    Let $\BA \in \BU\BS^{\alpha}_m$ be an \pd{} operator satisfying conditions
    \ref{condI}--\ref{condIV}.  
    Then, for every $K \in \R$ there exists a discrete set
    $L \subset (0,d+K)$ and constants $C_0^\pm, C_{q,j}^\pm$, $0 \le q \le d-1$, $j \in
    L$, such that
    \begin{equation}
      \begin{aligned}
        N^\pm\left(\BA; \rho^\alpha \right) = C_0^\pm \rho^d +
        \sum_{j\in L}\sum_{q =0}^{d-1}
        C_{j,q}^\pm \rho^{d-j} \log^q(\rho) + \bigo{\rho^{-K}},
      \end{aligned}
      \label{eq:asymptoticsconcreteunc}
    \end{equation}
    as $\rho \to \infty$. If $m^+ = m$ (resp. if $m^+ = 0$), then $C^-_0 = C^-_{j,q} =
    0$ (resp. $C_0^+ = C^+_{j,q} = 0$) except for $(j,q) = (d,0)$. 
  \end{prop}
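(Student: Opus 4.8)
The plan is to reduce Proposition \ref{thm:aexpconcreteu} to the scalar asymptotic expansion result of \cite{MorParSht2014} applied componentwise, exploiting the fact that an uncoupled operator splits as an orthogonal sum of scalar pseudodifferential operators. The first step is structural: by the corollary following the change-of-basis lemma in Section \ref{sec:mapo}, since $\BA \in \BU\BS_m^\alpha$, there is a unitary conjugation realising $\BA$ as $\bigoplus_{j \in \Z/m\Z} A_j$, where each $A_j$ acts in $\RB^2(\R^d) \otimes \spann\set{v_j} \cong \RB^2(\R^d)$ and has symbol $a_j \abs\bxi^\alpha + [\bbb]_{j,j}(\bx,\bxi)$, a \emph{scalar} almost periodic pseudodifferential operator of principal order $\alpha$ with subprincipal part of order $\beta < \alpha$. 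Because the density of states is additive over orthogonal sums and the trace $\FT$ is additive over the corresponding decomposition of $\FA$, one gets
\begin{equation}
  N^\pm(\BA;\rho^\alpha) = \sum_{j \in \Z/m\Z} N^\pm(A_j;\rho^\alpha),
\end{equation}
and the constant $\mathfrak c_m$ relating matrix and scalar symbol norms, see \eqref{eq:matrixseminorm}, guarantees that each $A_j$ inherits Conditions \ref{condI}--\ref{condIV} from $\BA$ with comparable constants.

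The second step is to handle the sign of $a_j$. If $a_j > 0$, then $A_j$ is (after the lower-order perturbation, which is infinitesimally $A_0$-bounded by Lemma \ref{lem:relative boundedness}) bounded below, so $N^-(A_j;\rho^\alpha)$ is $\bigo{\rho^{-K}}$ for every $K$ --- indeed it is eventually constant --- and $N^+(A_j;\rho^\alpha)$ is precisely the quantity for which \cite[Theorem 2.6]{MorParSht2014} (or its analogue for positive-order elliptic operators, after the substitution $\rho = \lambda^{1/\alpha}$) provides the complete asymptotic expansion $C_{0,j}^+ \rho^d + \sum_{\iota} C_{\iota,q,j}^+ \rho^{d-\iota}\log^q\rho + \bigo{\rho^{-K}}$. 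If $a_j < 0$, one applies the same result to $-A_j$, whose principal symbol is $(-a_j)\abs\bxi^\alpha$ with $-a_j > 0$; then $N^-(A_j;\lambda) = N(\{-A_j < \lambda\}) $ reduces to the positive-energy count for $-A_j$, and $N^+(A_j;\cdot)$ is $\bigo{\rho^{-K}}$. Summing over $j$ and collecting terms by the exponent of $\rho$ and the power of $\log\rho$ produces \eqref{eq:asymptoticsconcreteunc}, with $L$ the (discrete, hence in any bounded range finite) union of the index sets arising from the finitely many scalar problems; the last sentence of the proposition is immediate, since $m^+ = m$ means every $a_j > 0$, so no term survives in $N^-$ except the Weyl term coming from the trivial constant, and dually for $m^+ = 0$.

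There is one genuine point that needs care rather than bookkeeping: the results of \cite{MorParSht2014} are stated for operators of a fixed model form $(-\Delta)^{\alpha/2} + B$ with $B$ of order $\beta$, whereas our $A_j$ has principal symbol $a_j\abs\bxi^\alpha$ with $a_j$ a general nonzero constant (and, in the \pd{} case with nonconstant $a_j(\xi)$, even a bounded function of $\bxi/\abs\bxi$). The rescaling by the constant $a_j$ is harmless --- it only rescales $\lambda$ --- but to invoke \cite{MorParSht2014} verbatim I would, strictly speaking, first conjugate $A_0$ to the flat form using the diagonal symbol $\diag(a_j^{1/\alpha})$ as a Fourier multiplier, or else simply observe that the gauge-transform machinery of \cite{MorParSht2014} only ever uses homogeneity and ellipticity of the principal symbol, both of which hold here. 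I expect this identification --- checking that the hypotheses of the scalar theorem are met by each $A_j$ after rescaling, including the form of Conditions \ref{condII}--\ref{condIV} --- to be the main (though routine) obstacle; everything else is additivity of $N^\pm$ and reindexing of the asymptotic series.
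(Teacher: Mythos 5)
Your proposal is correct and follows essentially the same route as the paper: split the uncoupled operator as $A_1 \oplus \dotso \oplus A_m$, use additivity of the density of states, apply the scalar result of \cite{MorParSht2014} to each $A_j$ (or to $-A_j$ when $a_j<0$), and absorb the $\rho$-independent contributions of the ``wrong-sign'' components into the constant term. One tiny slip: for $a_j>0$ the quantity $N^-(A_j;\rho^\alpha)$ is eventually constant, not $\bigo{\rho^{-K}}$, so it contributes to the $(j,q)=(d,0)$ coefficient rather than vanishing --- which is exactly how you (and the paper) treat it in the final accounting anyway.
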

\begin{proof}
  Since $\BA \in \BU\BS_m^\alpha$, it can be split as a direct sum of operators 
  $A_1 \oplus \dotso \oplus A_m$ acting in the mutually orthogonal subspaces $\RB^2(\R^d) \otimes v_j$. As such, we have that on any interval $J$,
  \begin{equation}
    N(J;\BA) = \sum_{j=1}^m N(J;A_j).
    \label{eq:sumIDS}
  \end{equation}
  This means that, for $j\le m^+$, $A_j$ is semi-bounded below and acts
   invariantly on $\RB^2(\R^d) \otimes v_j$ 
  as the operator considered in \cite{MorParSht2014}.  For $j > m^+$, it is the
  operator $-\BA_j$ that acts in such a way. 
  From \cite[Theorem 2.5]{MorParSht2014}, this means that $N( (-
  \infty,\lambda);\BA_j)$ (resp. $N( (\lambda,\infty); \BA_j)$)
  enjoys an asymptotic expansion of the form \eqref{eq:asymptoticsconcrete} for
  $1 \le j \le m^+$ (resp. $m^+ < j \le m$). Observe that we have
\begin{equation}
  \begin{aligned}
  N^+\left(\rho^\alpha;\BA\right) &= \sum_{j=1}^{m^+} N(
  (-\infty,\rho^\alpha);A_j) - \sum_{j=1}^{m^+} N( (-\infty,0];A_j) \\ 
  &\qquad + \sum_{j=m^++1}^m N( (0,\rho^\alpha);A_j).
\end{aligned}
  \label{eq:sumIDSsplit}
\end{equation}
The terms in the first sum have the required asymptotic expansion. The terms in the second sum do not depend on $\rho$,
hence they might only change the constant term in \eqref{eq:asymptoticsconcrete}. Finally, the operators in the third sum are semi-bounded above, hence for
$\rho$ large enough the terms are constant and once again only affect the constant term. 
This proves the existence of the asymptotic expansion \eqref{eq:asymptoticsconcrete} for $N^+$. The proof for $N^-$ is the same,
interchanging the role of the semi-bounded below and above operators.
\end{proof}

  \subsection{Reduction to a finite interval}
 
  The strategy in this subsection is an adaptation of the one found in
  \cite{MorParSht2014,ParSht2012}. It consists in showing that an 
  asymptotic expansion holds in overlapping dyadic intervals $I_n$.
  
  For $K > -d$, we choose $\rho_0$ sufficiently large, to be fixed later. For
  every $n \in \N$, we put $\rho_n := 2 \rho_{n-1} = 2^n \rho_0$. We also define
  the intervals $I_n := \left[\rho_{n-1},\rho_{n+1}\right]$. We prove the
  following theorem, which implies Theorems \ref{thm:aexpconcretecut} and \ref{thm:aexpconcrete} as a
  corollary.
  \begin{thm} \label{thm:ainint}
    Let $\BA$ be an operator satisfying the conditions of either Theorem
    \ref{thm:aexpconcretecut} or \ref{thm:aexpconcrete}. Then, for either $K =
    -d + 1 - \gamma^*$ in the former case or any $K \in \R$ in the
    latter, there exists $\rho_0$
    large enough, a discrete
    set $L \subset (0,d+K)$ and constants $C_0^\pm, C^\pm_{j,q}$ for every $j \in L$ and $0
    \le q \le d-1$ such that for every $n \in \N$ and every $0 < \mu < \nu$ with $\mu, \nu \in I_n$, 
    \begin{equation}
      \begin{aligned}
      N((\mu^\alpha,\nu^\alpha);\BA) &= C_0^+(\nu^d - \mu^d) + \\ & \quad + 
      \sum_{j\in L}\sum_{q =0}^{d-1} 
      C_{j,q}^+
      \left(\nu^{d-j} \log^q(\nu) - \mu^{d-j} \log^q(\mu)\right)  + \bigo{\rho_n^{-K}},
      \label{eq:ainint}
    \end{aligned}
    \end{equation}
    where the implicit constants might depend on $K$, but not on $n$. Similarly,
    \begin{equation}
      \begin{aligned}
      N((-\nu^\alpha,-\mu^\alpha);\BA) &=  C_0^- (\nu^d - \mu^d) + \\ & \quad + 
      \sum_{j\in L}\sum_{q =0}^{d-1} 
      C_{j,q}^- 
      \left(\nu^{d-j} \log^q(\nu) - \mu^{d-j} \log^q(\mu)\right)  +
      \bigo{\rho_n^{-K}}.
      \label{eq:ainintneg}
    \end{aligned}
    \end{equation}
  \end{thm}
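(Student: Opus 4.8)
The plan is to deduce Theorem \ref{thm:ainint} from the uncoupled case (Proposition \ref{thm:aexpconcreteu}) by reducing, at each dyadic energy scale $\rho_n$ separately, the operator $\BA$ to an uncoupled one by a single application of the matrix gauge transform of Section \ref{sec:mapo}, and then controlling the remainder with the perturbative results of Section \ref{sec:ids}. The statement is phrased on the overlapping family $\{I_n\}_{n\in\N}$, rather than globally, because the Diophantine data of Condition \ref{condIV} --- the finite truncated frequency set $\tilde\Theta$ and the quantities $r(\rho),s(\rho)$ --- are attached to the scale $\rho$, so the gauge transform must be redone on each $I_n$; the overlap $I_n\cap I_{n+1}\neq\varnothing$ then forces the constants produced on consecutive intervals to agree. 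I will discuss only \eqref{eq:ainint}, since \eqref{eq:ainintneg} follows verbatim after interchanging the diagonal blocks with $a_j>0$ and those with $a_j<0$.

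First I would pass, at the scale $\rho=\rho_n$, to a quasi-periodic problem. By Condition \ref{condIII} I would replace $\BB$ by the quasi-periodic operator $\BB^{\mathrm{qp}}$ built from a finite frequency set $\tilde\Theta$ and finitely many homogeneous terms, whose difference with $\BB$ has $\BS_m^\beta$-norm $\lesssim\rho^{-k}$ with $k$ as large as I like (depending only on $K$ and $d$); by Lemma \ref{lem:monotonicity} this changes $N(J;\BA)$, for any subinterval $J$ of $I_n^\alpha:=[\rho_{n-1}^\alpha,\rho_{n+1}^\alpha]$, by $O(\rho^{-K})$. Having a finite frequency set, $\BB^{\mathrm{qp}}$ has bounded range of action, so $\Xi_m$ splits into dyadic shells with the band structure required by Lemma \ref{lem:spectralperturb}; iterating that lemma along chains of shells as in \cite{MorParSht2014}, and using that $\BB$ is relatively $\BA_0$-bounded with $\beta<\alpha$ (Lemma \ref{lem:relative boundedness}) --- so that the spectrum of any block of $\BA$ on a shell with $\ang\xi$ far from $\rho$ misses $I_n^\alpha$ --- one sees that cutting $\BB^{\mathrm{qp}}$ to the shells with $\ang\xi$ comparable to $\rho$ changes $N(J;\BA)$ by a further $O(\rho^{-K})$. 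Denote by $\BA^{(n)}\in\BS\BE\BS_m^\alpha$ the resulting operator; its perturbation is quasi-periodic and supported in $\{\ang\xi\sim\rho\}$.

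Next I would decouple $\BA^{(n)}$ by a gauge transform. Since the principal symbol $\ba_0(\bxi)=\diag(a_1|\bxi|^\alpha,\dotsc,a_m|\bxi|^\alpha)$ has constant coefficients, hypotheses \eqref{eq:shiftsatinfinity} and \eqref{eq:ajboundfrombelowbis} (resp. \eqref{eq:ajboundfrombelow}) hold automatically, so Theorem \ref{thm:gtsystem} applies in the setting of Theorem \ref{thm:aexpconcrete} and Theorem \ref{thm:systemonestep} in the setting of Theorem \ref{thm:aexpconcretecut}, the latter using \eqref{eq:hypresonant}, which is precisely the hypothesis $[\BB]_{j,k}\in\BS^\gamma$ on the resonant blocks; and since the perturbation of $\BA^{(n)}$ is quasi-periodic, both apply with $\BR_2=0$. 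One thus obtains a symmetric $\BPsi^{(n)}\in\BS_m^{\beta-\alpha}$ with
\[ \exp(-\mathrm i\BPsi^{(n)})\,\BA^{(n)}\,\exp(\mathrm i\BPsi^{(n)}) = \BA^{\CD}+\BY^{\CU}+\BR_1, \]
where $\BY^{\CU}\in\BU\BS_m^\beta$ is quasi-periodic with frequency set inside $\tilde\Theta^{\tilde k}$ (for a fixed $\tilde k$), and $\BR_1\in\BS_m^{-N}$ with $N$ as large as wished (setting of Theorem \ref{thm:aexpconcrete}) or $\BR_1\in\BS_m^{2\beta-\alpha}$ (setting of Theorem \ref{thm:aexpconcretecut}); both summands are symmetric and, since $\BPsi^{(n)}$ and the perturbation are supported in $\{\ang\xi\sim\rho\}$ while commutators translate frequencies only by the bounded set $Z(\tilde\Theta)$, so are $\BY^{\CU}$ and $\BR_1$, whence Proposition \ref{prop:normorder} gives $\|\BR_1\|\lesssim\rho^{-N}$ (resp. $\lesssim\rho^{2\beta-\alpha}$). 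By Corollary \ref{cor:unitary}, $N(J;\BA^{(n)})=N(J;\BA^{\CD}+\BY^{\CU}+\BR_1)$. Now $\BA^{\CD}+\BY^{\CU}\in\BU\BS_m^\alpha$ splits as an orthogonal sum of scalar quasi-periodic operators $a_j|\bxi|^\alpha+B_j$, each $B_j$ of order $\beta$, with frequency set in $\tilde\Theta^{\tilde k}$, whose symbol inherits the homogeneous expansions of Conditions \ref{condII}--\ref{condIII} (the gauge corrections being explicit universal combinations of $\BB$ with the homogeneous resolvent of $\BA_0^{\CD}$, hence of the same class, as in \cite{MorParSht2014}). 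Applying to each block the interval form of the scalar expansion of \cite[Theorem 2.5]{MorParSht2014} --- which is what already underlies Proposition \ref{thm:aexpconcreteu} --- and summing over $j$ as in the proof of that proposition yields \eqref{eq:ainint} for $\BA^{\CD}+\BY^{\CU}$ with constants explicit in the homogeneous coefficients of $\BB$, hence independent of $n$, and with the stated vanishing when $m^+\in\{0,m\}$. Finally, re-inserting $\BR_1$ by one more use of Lemma \ref{lem:monotonicity} changes $N(J;\cdot)$ by $O(\rho^{d-\alpha}\|\BR_1\|)$; this is $O(\rho^{-K})$ in the setting of Theorem \ref{thm:aexpconcrete} upon choosing $N$ large, and, using $\alpha\ge1$ so that $d-2\alpha+2\beta\le d-1+\gamma^*$ with $\gamma^*=\max(2\beta-\alpha,\gamma)$, it is $O(\rho^{d-1+\gamma^*})=O(\rho^{-K})$ in the setting of Theorem \ref{thm:aexpconcretecut}. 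Together with the $n$-independence of the constants, this gives \eqref{eq:ainint}--\eqref{eq:ainintneg} with all constants uniform in $n$.

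The hard part will be the symbol-calculus bookkeeping concealed in the previous step: verifying that the scalar blocks $B_j$ produced by the matrix gauge transform genuinely satisfy Conditions \ref{condI}--\ref{condIV}, i.e. that the gauge transform preserves the class of quasi-periodic symbols admitting the homogeneous expansions of Conditions \ref{condII}--\ref{condIII} together with the Diophantine bounds of Condition \ref{condIV} on the (now bounded) frequency set $\tilde\Theta^{\tilde k}$. One has to follow the symbols through the construction --- the resolvent of $\BA_0^{\CD}$ is homogeneous in $\bxi$, products and commutators of such symbols with those of $\BB$ remain in the class, and the Diophantine estimates survive because one only enlarges $\tilde\Theta$ to $\tilde\Theta^{\tilde k}$ with $\tilde k$ fixed --- which amounts to transcribing the arguments of \cite{MorParSht2014} into the block-diagonal setting and tracking orders against powers of $\rho_n$. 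A secondary point to watch is that the truncation level $k$, the number $N$, and the error bounds above are permitted to depend on $n$, so one must check at the end that the constants in \eqref{eq:ainint} come only from the $n$-independent universal expansion of the blocks, the $n$-dependence entering solely through terms that are uniformly $O(\rho_n^{-K})$.
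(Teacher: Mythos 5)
Your proposal is correct and follows essentially the same route as the paper: the quasi-periodic approximation and shell localisation are the content of Lemma \ref{lem:qpenough}, the decoupling via Theorems \ref{thm:systemonestep}/\ref{thm:gtsystem} together with the perturbative control of the remainder is packaged as Theorems \ref{thm:ctoucut} and \ref{thm:ctou}, the blockwise appeal to \cite{MorParSht2014} is Proposition \ref{thm:aexpconcreteu}, and the $n$-independence of the coefficients is obtained exactly as you indicate, by matching the expansions on the overlaps $I_n\cap I_{n+1}$. The "hard part" you flag (that the uncoupled blocks still admit the homogeneous expansions after the gauge transform) is the same point the paper dispatches in the remark following Theorem \ref{thm:ctou}.
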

  \begin{rem}
    The reader familiar with previous works on the integrated density of states
    for almost periodic operators can notice that the roles of the dyadic
    decomposition in intervals $I_n$ is slightly different here. In previous
    work, this decomposition was necessary because the resonant zones were
    significantly different for different spectral intervals were different.
    This yielded coefficients $C^\pm$ depending possibly on $n$. It was however
    shown that the asymptotics had to match if the coefficients didn't grow too
    fast. 

    In our case, we need this decomposition in order to apply Theorem
    \ref{thm:ctou} when the perturbation is unbounded. Indeed, it relies on
    Lemma \ref{lem:spectralperturb} which can only be applied for some interval
    with control on how far away the endpoints can be. We will therefore obtain
    asymptotics when both endpoints belong to a specific dyadic interval, then
    glue the intervals together. We end up comparing the
    density of states with the one obtained in \cite{MorParSht2014} for
    operators acting on scalar functions, i.e. the case $m = 1$. In such a case, the dependence on $n$
    of the coefficients has already been removed.
  \end{rem}
  \begin{proof}[Proof of Theorems \ref{thm:aexpconcretecut} and \ref{thm:aexpconcrete} assuming Theorem
    \ref{thm:ainint}]
    We prove the theorem for $N^+$, the proof for $N^-$ is the same.
    For $K \in \R$, suppose that $\rho_0$ is large enough for Theorem
    \ref{thm:ainint} to hold. Suppose without loss of generality that for all
    $n$, $\rho_n$ is a point of continuity of $N^+$. For $\rho \in I_n$, we have that
    \begin{equation}
      \begin{aligned}
      N^+(\rho^\alpha) &= N^+(\rho_0^\alpha) + \sum_{j=1}^{n-1} N(
      (\rho_{j-1}^\alpha, \rho_j^\alpha);\BA) + N(
      (\rho_{n-1}^\alpha,\rho^\alpha);\BA) \\
      &= N^+(\rho_0^\alpha) +  
      \sum_{j\in L} \sum_{q =0}^{d-1} 
      C_{j,q}^+ 
      \left(\rho^{d-j} \log^q(\rho) - \rho_0^{d-j} \log^q(\rho_0)\right)  +
      \sum_{j=1}^n S_j,
    \end{aligned}
      \label{eq:foldingsum}
    \end{equation}
    where $S_j = \bigo{\rho_j^{-K}}$. This implies that
    \begin{equation}
      \sum_{j=1}^n S_j \ll \rho_0^{-K} \sum_{j = 1}^n 2^{-Kj} \ll \rho_n^{-K} \ll \rho^{-K}
      \label{eq:glued}
    \end{equation}
    since $\rho \in I_n$. One can see that the term depending on $\rho_0$ is
    $\bigo{1}$, so that it can be included either in the error term
    $\bigo{\rho^{-K}}$ when $K \le 0$
    or in the constant term in
    \eqref{eq:asymptoticsconcrete} and \eqref{eq:asymptoticsconcretecut}
    otherwise.
  \end{proof}

  \subsection{Reduction to a quasiperiodic operator}

  We now show in the following lemma that it is sufficient to prove
  Theorem \ref{thm:ainint} for quasiperiodic operators.
\begin{lem}
  \label{lem:qpenough}
  Let $\BA \in \BS\BE\BS_m^\alpha$ be an \pd{} operator with subprincipal part
  $\BB
  \in \BS^\beta_m$ satisfying
  Condition \ref{condIII} and $k \ge 2$. There exists $\rho_0 > 0$ and $0 < c_0 <
  1$ so that for every $n \in \N$ there exists a quasi-periodic \pd{}
  operator $\BA' \in \BS\BE \BS_m^\alpha$ with frequency set $\tilde \Theta \subset
  \CB(\rho_n^{1/k})$ such that
  \begin{itemize}
      \setlength\itemsep{1em}
    \item  $\BA - \BA' \in \BS_m^\beta$;
    \item $\supp(\ba'{}^{\CO\CD}) \subset \set{\abs{\bxi} > c_0 \rho_n}$;
      \item there is $\eps \ll \rho_n^{\alpha-k}$ such that for all $J \subset I_n^\alpha$,
  \begin{equation}
    \label{eq:qpapprox}
    N((\pm J)_{-\eps};\BA') \le N(\pm J;\BA) \le N( (\pm J)_\eps;\BA').
  \end{equation}
  \end{itemize}
\end{lem}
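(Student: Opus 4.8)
The plan is to produce $\BA'$ in two steps --- first the quasi-periodic homogeneous approximation of $\BA$ supplied by Condition \ref{condIII}, then a low-frequency cut-off of its off-diagonal part --- controlling how much the density of states moves at each step with Lemma \ref{lem:monotonicity} and Lemma \ref{lem:spectralperturb} respectively; write $\BA=\BA_0+\BB$ with $\BA_0$ the given principal part. First I would fix $\rho_0$ larger than the constant $C_k$ from Condition \ref{condIII}, so that $\rho_n=2^n\rho_0>C_k$ for all $n$, and set $I_n=[\rho_{n-1},\rho_{n+1}]$. For each $n$, Condition \ref{condIII} applied with $\rho=\rho_n$ gives a finite symmetric $\tilde\Theta\subset\Theta\cap\CB(\rho_n^{1/k})$ (enlarge it by $\set{\id}$ if necessary) and a symbol $\br^{(k)}$ with $\snorm{\br^{(k)}}{\beta}{\ell}\le c_{\ell,k}\rho_n^{-k}$ and $\br^{(k)}_{\btheta}=\bbb_{\btheta}$ for $\btheta\notin\tilde\Theta$; since $\BB$ is formally self-adjoint and $\tilde\Theta=\tilde\Theta^{-1}$, passing to the self-adjoint part of $\Op(\br^{(k)})$ preserves both properties (up to a factor $2$ in the first), so I take $\BR^{(k)}:=\Op(\br^{(k)})$ symmetric. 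Then $\BA^{(1)}:=\BA-\BR^{(k)}=\BA_0+\BB^{(1)}$ is symmetric, quasi-periodic with frequency set inside $\tilde\Theta$, has the same principal symbol as $\BA$, and $\snorm{\BB^{(1)}}{\beta}{\ell}\le\snorm{\BB}{\beta}{\ell}+c_{\ell,k}\rho_0^{-k}$, so the ellipticity data of $(\BA^{(1)})^{\CD}$ and all symbol norms of $\BA^{(1)}$ are uniform in $n$; thus $\BA^{(1)}\in\BS\BE\BS_m^\alpha$ is a quasi-periodic \pd{} operator (and is \mpu{} if $\BA$ is). Applying Lemma \ref{lem:monotonicity} with $A=\BA^{(1)}$ and $B=\BR^{(k)}$, and using $\abs{q}+r\le 2\rho_{n+1}^\alpha$ for $J=[q-r,q+r]\subset\pm I_n^\alpha$ together with the uniform bounds and $\snorm{\BR^{(k)}}{\beta}{0}\le c_{0,k}\rho_n^{-k}$ in \eqref{eq:epsa}, I get a $J$-independent $\eps_1\le C\rho_n^{\alpha-k}$ with $N(J_{-\eps_1};\BA^{(1)})\le N(J;\BA)\le N(J_{\eps_1};\BA^{(1)})$.

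Next I would fix $c_0\in(0,1)$ so small that $\max_j\abs{a_j}(2c_0)^\alpha<2^{-\alpha-2}$ and $c_0\rho_0$ exceeds the ellipticity radius of $(\BA^{(1)})^{\CD}$, and define $\BA'=\Op(\ba')$ by $\ba'{}^{\CD}:=(\ba^{(1)})^{\CD}$, $\ba'_{\btheta}:=0$ for $\btheta\notin\tilde\Theta$, and, for $\btheta\in\tilde\Theta$, by the symmetric double cut-off $\ba'{}^{\COD}_{\btheta}(\bxi):=\bbb^{(1),\COD}_{\btheta}(\bxi)\,\bone_{\set{\abs{\bxi}>c_0\rho_n}}\bone_{\set{\abs{\bxi+\btheta}>c_0\rho_n}}$. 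As $\BB^{(1),\COD}$ is formally self-adjoint and $\tilde\Theta$ symmetric, $\BA'$ is formally self-adjoint; it is quasi-periodic with frequency set $\tilde\Theta$, shares the diagonal elliptic part of $\BA^{(1)}$ and has off-diagonal part in $\BS_m^\beta$, hence $\BA'\in\BS\BE\BS_m^\alpha$ is again a quasi-periodic \pd{} operator; by construction $\supp(\ba'{}^{\COD})\subset\set{\abs{\bxi}>c_0\rho_n}$, which is (ii), and $\BA-\BA'=\BR^{(k)}+\BB^{(2)}\in\BS_m^\beta$ with $\BB^{(2)}:=\BA^{(1)}-\BA'$, which is (i). The correction $\BB^{(2)}$ has symbol $\bbb^{(1),\COD}_{\btheta}\,\bone_{\set{\abs{\bxi}\le c_0\rho_n}\cup\set{\abs{\bxi+\btheta}\le c_0\rho_n}}$, supported in $\set{\abs{\bxi}\le c_0\rho_n+\rho_n^{1/k}}$; since $\beta<\alpha$, Proposition \ref{prop:normorder} shows $\BB^{(2)}\in\BS_m^0$ is bounded, symmetric, with $\norm{\BB^{(2)}}\lesssim\rho_n^{\max\set{\beta,0}}$, and every basis vector it touches carries $(\BA^{(1)})^{\CD}$-energy of modulus at most $2^{-\alpha-1}\rho_n^\alpha$ once $n$ is large.

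Finally I would compare $N(J;\BA^{(1)})$ with $N(J;\BA')$ for $J\subset I_n^\alpha$ using Lemma \ref{lem:spectralperturb} (the case of $-J$ is obtained by reflecting the construction below about $0$, and when one of the two half-axes lies outside the spectra of $\BA$ and $\BA'$ the corresponding comparison is trivial). Take $H_0=\BA'{}^{\CD}\in\BD\BE\BS_m^\alpha$, $B=\BA'{}^{\COD}\in\BS_m^\beta$, $H=\BA'$, perturbation $\BB^{(2)}\in\BS_m^0$, and a \emph{radial} family of projections: $P_0$ onto $\set{\abs{\bxi}\le 2c_0\rho_n}$ (which contains the support of $\BB^{(2)}$, so $\BB^{(2)}=\BB^{(2)}P_0$), consecutive annuli $P_1,\dots,P_{L-1}$ of width $2\rho_n^{1/k}$, and $P_L$ onto $\set{\abs{\bxi}>R_{L-1}}$, where $R_{L-1}$ is chosen so that $\max_{j:a_j>0}a_jR_{L-1}^\alpha=q-r-G$ with guard $G=C_1\rho_n^{\max\set{\beta,0}}$. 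These commute with the diagonal $H_0$; since $B$ has frequency set in $\CB(\rho_n^{1/k})$ it changes $\abs{\bxi}$ by at most $\rho_n^{1/k}$, so $P_nBP_l=0$ for $\abs{n-l}>1$; with $c_0$ small the spectrum of each $(P_lHP_l)^\sharp$, $l<L$, lies (for $n$ large) in $(-\infty,q-r-G/2]$, giving $D_l\ge G/2\asymp\rho_n^{\max\set{\beta,0}}$, and choosing $C_1$ large enough forces $\norm{B_{l,l-1}}+\norm{B_{l,l+1}}\le\norm{\BA'{}^{\COD}}\le D_l/4$, i.e.\ \eqref{eq:projectionnorm}. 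Here $L\asymp\rho_n^{1-1/k}\to\infty$, so $d_L=\min_{1\le l<L}D_l\ge G/2$ and \eqref{eq:tecassL} holds for $n$ large; Lemma \ref{lem:spectralperturb} then gives $N(J_{-\eps_2};\BA')\le N(J;\BA^{(1)})\le N(J_{\eps_2};\BA')$ with $\eps_2=3^{2-L/2}(r/d_L)^{1/2}\norm{\BB^{(2)}}\le 3^{2-L/2}\bigl(2^\alpha\rho_n^\alpha/d_L\bigr)^{1/2}\norm{\BB^{(2)}}=:\bar\eps_2$, and the factor $3^{-L/2}$ with $L\asymp\rho_n^{1-1/k}$ makes $\bar\eps_2$ decay faster than any power of $\rho_n^{-1}$. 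Chaining the two comparisons, replacing $\eps_1$ and $\eps_2$ by their uniform upper bounds, and using monotonicity of $N$ in the interval gives \eqref{eq:qpapprox} with $\eps:=\eps_1+2\bar\eps_2\ll\rho_n^{\alpha-k}$; the few $J$ whose half-width is below $\bar\eps_2$ are handled by applying the above to the interval with the same centre and half-width $\bar\eps_2$ and invoking monotonicity, the slack $2\bar\eps_2$ absorbing the enlargement. The hard part will be the verification of the hypotheses of Lemma \ref{lem:spectralperturb}: choosing $c_0$ small enough that $\BB^{(2)}$ is genuinely spectrally far from $I_n^\alpha$ uniformly over the matrix channels $j$, where the energies $a_j\abs{\bxi}^\alpha$ are spread widely, and sizing $G$ so that both \eqref{eq:distspectrum} and \eqref{eq:projectionnorm} hold while $d_L$ stays only polynomially large against the super-exponentially large $L$, all with constants uniform in $n$.
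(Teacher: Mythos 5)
Your argument is correct and its first half coincides with the paper's: Condition \ref{condIII} plus Lemma \ref{lem:monotonicity} produces the quasi-periodic approximant with an error $\eps_1\ll\rho_n^{\alpha-k}$, and the low-frequency cut-off of the off-diagonal symbol is then controlled by Lemma \ref{lem:spectralperturb}. Where you genuinely diverge is in the choice of the insulating projections for that second step. The paper keeps the whole chain inside the low-energy ball: it takes $X\asymp\log_3\rho_n$ annuli of width $\rho_n^{2/3}$ between $c_0\rho_n$ and $2c_0\rho_n$, with $c_0$ calibrated so that every block $(P_lHP_l)^\sharp$, $l<X$, has spectrum of modulus at most $\rho_n^\alpha/2$; then $D_l\ge\rho_n^\alpha/2$, so \eqref{eq:distspectrum} and \eqref{eq:projectionnorm} are nearly automatic, and the exponent in $X$ is tuned so that $3^{-X}$ times the norm $\ll\rho_n^{\beta_0}$ of the chopped-off part lands exactly at $\rho_n^{\alpha-k}$. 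You instead stretch $L\asymp\rho_n^{1-1/k}$ annuli of width $2\rho_n^{1/k}$ from $2c_0\rho_n$ up to just below the energy shell of $J$, accepting the much smaller gaps $D_l\asymp\rho_n^{\beta_0}$ and letting the super-polynomial factor $3^{-L/2}$ do the work. This is valid --- the consecutive-coupling condition holds because the annulus width exceeds the frequency radius $\rho_n^{1/k}$, and your guard $G=C_1\rho_n^{\beta_0}$ correctly dominates the off-diagonal block norms --- but it forces you to track, channel by channel, where the spectrum of each annular block sits relative to $J$, and to patch condition \eqref{eq:tecassL} for intervals of tiny half-width by the enlargement-and-monotonicity trick. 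The paper's low-energy confinement buys a simpler hypothesis check at the cost of a precisely tuned logarithmic depth; your version overshoots the required $\rho_n^{\alpha-k}$ decay by a wide margin but pays for it in the verification of the hypotheses, which you correctly identify as the delicate part.
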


\begin{proof}
  For $k \in \N$ let $\tilde \Theta \subset \Theta \cap \CB(\rho_n^{1/k})$ be the
  frequency set given by Condition \ref{condIII} with $\rho = \rho_n$, and $\BR \in \BS^\beta_m$ be the operator with
  symbol given in \eqref{eq:symbolrb2}, which by \eqref{eq:smallseminorm}
  satisfies $\snorm{\BR}{\beta}{0}
  \ll \rho_n^{-k}$.
  Setting $\BA'' = \BA - \BR$ we have that $\BA - \BA'' \in
  \BS_m^{\beta}$, and that
  $\tilde \Theta$ is a frequency set for $\BA''$ and that as long as $\rho_0$ is
  large enough,
  \begin{equation}
    \label{eq:snormcomp}
    \snorm{\BA''}{\gamma}{0} \le 2 \snorm{\BA}{\gamma}{0}
  \end{equation}
  for all $\beta \le \gamma \le \alpha$.

  Writing any interval $J \subset \pm I_n^\alpha$ in the form $[M - r,M+r]$, it is easy to
  see that $\abs M + r \le (2 \rho_n)^\alpha$. 
  Put $\beta_0 = \max\set{\beta,0}$. 
  By Lemma \ref{lem:monotonicity}, estimate
  \eqref{eq:qpapprox} holds with $\BA''$ instead of $\BA'$ and
  \begin{equation}
    \eps_1 = \frac{\snorm{\BR}{\beta_0}{0}}{2 + \snorm{\BR}{\beta_0}{0}}\left(\abs M + r + C(1 +
    \snorm{\BR}{\beta_0}{0})^\frac{\alpha}{\alpha - \beta_0}\right) \ll  \rho_n^{\alpha -
  k}
  \end{equation}
  instead of $\eps$.
  Let us now define 
  \begin{equation}
  \BA' = \BA'' - \BA''{}^{\CO\CD}P_{c_0\rho_n} = (\BA'')^\CD +
  \BA''{}^{\CO\CD}(1 - P_{c_0 \rho_n}),
\end{equation}
where $0 < c_0 < 1$ is to be determined later. By \eqref{eq:snormcomp} and
\eqref{eq:partition}
\begin{equation}
  \snorm{\BA'}{\gamma}{0} \le 4 \snorm{\BA}{\gamma}{0}
\end{equation}
for all $\beta \le \gamma \le \alpha$. We apply Lemma
  \ref{lem:spectralperturb} with
  \begin{equation}
  H_0 = (\BA'')^{\CD}, \qquad B = \BA''{}^{\CO\CD}(1 - P_{c_0\rho_n}), \qquad A =
  \BA''{}^{\CO\CD}P_{c_0\rho_n}, \qquad H = \BA'.
  \end{equation}
  By Proposition \ref{prop:normorder},
  \begin{equation}
    \norm{\BA''{}^{\CO\CD}P_{c_0\rho_n}} \le (c_0 \rho_n)^{\beta_0}
    \snorm{\BA''{}^{\CO\CD}}{\beta_0}{0}.
  \end{equation}
  Set $X = \lfloor (2 - \alpha + k + \beta_0) \log_3 \rho_n \rfloor$, and 
  let
  \begin{equation}
    \label{eq:increasingproj}
    Z_l := c_0 \rho_n +  l  \rho_n^{2/3}, \qquad 0 \le l \le X
    - 1,
  \end{equation}
  so that if $\rho_0$ is large enough, $Z_{X - 1} \le 2 c_0 \rho_n$.
  For $0 \le l \le X$ introduce the family of projections
  \begin{equation}
    P_l := \begin{cases}
      P_{Z_0} & \text{for } l = 0, \\
      P_{Z_l} - P_{Z_{l -1}} & \text{for } 0 < l < X, \\
      1 - P_{Z_{X - 1}} & \text{for } l = X.
    \end{cases}
  \end{equation}
  We now verify that the conditions of Lemma \ref{lem:spectralperturb} are
  satisfied. It is
  clear that $\BB P_{c_0\rho_n} P_{Z_0} = B P_{c_0\rho_n}$, and relations \eqref{eq:projections}
  follow from \eqref{eq:boundonR} and \eqref{eq:increasingproj} as long as $k
  \ge
  2$ and $\rho_0$ is large enough. By Proposition \ref{prop:normorder},
  for $0 \le l < X$,
  \begin{equation}
    \label{eq:dproj}
    \norm{P_l \BA' P_l} \le 
    Z_{X-1}^\alpha \snorm{\BA'}{\alpha}{0} 
    \le 4(2 c_0
    \rho_n)^\alpha \snorm{\BA}{\alpha}{0}.
  \end{equation}
  We also have that
  \begin{equation}
    \label{eq:odproj}
    \norm{P_l \BA''{}^{\CO\CD} P_{l-1}} + 
    \norm{P_l \BA''{}^{\CO\CD} P_{l+1}} \le 2 Z_{X-1}^{\beta}
    \snorm{\BA''}{\beta_0}{0}  \le 4 (2c_0\rho_n)^{\beta_0}
    \snorm{\BA}{\beta_0}{0}
  \end{equation}
For $0 \le l < X$, set
  \begin{equation}
    D_l = \dist(J,\spec(P_l \BA' P_l)^\sharp).
  \end{equation}
  By  \eqref{eq:dproj} and Lemma \ref{lem:sharpS0} for $l \le X - 1$
  \begin{equation}
    \spec((P_l \BA' P_l)^\sharp) \subset \left[- 4(2 c_0
      \rho_n)^\alpha\snorm{\BA}{\alpha}{0},
    4(2 c_0 \rho_n)^\alpha\snorm{A}{\alpha}{0}
  \right],
  \end{equation}
  so that setting 
  $  c_0^{-\alpha} = 2^{\alpha+ 3} \snorm{\BA}{\alpha}{0}$
  gives
  \begin{equation}
    D_l \ge  \frac{\rho_n^\alpha}{2},
  \end{equation}
    in particular
    \eqref{eq:distspectrum} holds. Combining with
  \eqref{eq:odproj} we have that
  \begin{equation}
    \max_{0 \le l < X} \set{\frac{
    \norm{P_l \BA''{}^{\CO\CD} P_{l-1}}+ 
\norm{P_l \BA''{}^{\CO\CD} P_{l+1}}}{D_l}} \le 2 (2c_0)^\beta \rho_n^{\beta -
\alpha} \snorm{\BA}{\beta_0}{0},
  \end{equation}
  so that for $\rho_0$ large enough, \eqref{eq:projectionnorm} is satisfied.
  
  Since the conditions of Lemma
  \ref{lem:spectralperturb} are satisfied,  for
  \begin{equation}
    \eps_2 = 3^{2 - X} \norm{\BA''{}^{\CO\CD}P_{c_0\rho_n}} \le  2\rho_n^{\alpha -
    k} \snorm{\BA}{\beta_0}{0},
  \end{equation}
  we have that
  \begin{equation}
    N\left( I_{- \eps_1 - \eps_2}; \BA' \right) \le N(I_{-\eps_1};\BA'') \le
    N(I;\BA);
  \end{equation}
  and 
  \begin{equation}
    N(I_{+ \eps_1 + \eps_2};\BA') \ge N(I_{+\eps_1};\BA'') \ge N(I;\BA).
  \end{equation}
  Our claim therefore holds with $\eps = \eps_1 + \eps_2$.
\end{proof}

\subsection{Reduction to uncoupled operators} \label{sec:redtouncoupled}
In this subsection, we show that it is sufficient to prove Theorem
\ref{thm:ainint} for quasi-periodic operators, it is sufficient to do so for
uncoupled operators. At the end of the section, we finally prove Theorem
\ref{thm:ainint} after all those reductions, which completes the proof of
Theorems \ref{thm:aexpconcretecut} and \ref{thm:aexpconcrete}.

\begin{thm}
  \label{thm:ctoucut}
  Let $\BA$ be an operator satisfying the conditions of Theorem
  \ref{thm:aexpconcretecut}. Then, for every $n \in \N$ there is an operator $A' \in \BU\BS^\alpha_m$ and $\eps \ll
  \rho_n^{\gamma^*}$ such that for all $\mu,\nu \in I_n$ and $I = (\mu^\alpha,
  \nu^\alpha)$,
  \begin{equation}
    \label{eq:errorestinterval}
    N(\pm I_{- \eps}; \BA') \le N(\pm I;\BA) \le N(\pm I_\eps;\BA').
  \end{equation}
  In particular,
  \begin{equation}
    \label{eq:errorest}
    N(\pm I;\BA) = N(\pm I;\BA') + \bigo{\rho_n^{d - 1 + \gamma^*}}.
  \end{equation}
\end{thm}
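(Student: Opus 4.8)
The plan is to conjugate $\BA$ to an almost-uncoupled operator by a single step of the weak gauge transform for systems and to control the effect on the density of states using the perturbation lemmas of Section \ref{sec:ids}. By the remark after Definition \ref{def:uncoupled} and Corollary \ref{cor:unitary} we may assume $\BA\in\BS\BE\BS_m^\alpha$, and, absorbing the matrix-diagonal of $\bbb_{\id}$ into the principal part, that $[\ba_{\id}(\bxi)]=\ang\bxi^\alpha\diag(a_1(\bxi),\dotsc,a_m(\bxi))$ with $a_j$ real, bounded and $a_j(\bxi)\to a_j$ as $|\bxi|\to\infty$ (the surviving matrix-off-diagonal of $\bbb_{\id}$, of order $\beta$, being absorbed into the coupled perturbation). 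Fix $n$ and apply Lemma \ref{lem:qpenough} with $k$ large enough that $\rho_n^{\alpha-k}\ll\rho_n^{\gamma^*}$ (possible since $\gamma^*\le 0$): this yields a quasi-periodic \pd{} operator $\tilde\BA\in\BS\BE\BS_m^\alpha$ with finite frequency set $\tilde\Theta\subset\CB(\rho_n^{1/k})$ such that $\BA-\tilde\BA\in\BS_m^\beta$, $\supp(\tilde\ba^{\COD})\subset\{|\bxi|>c_0\rho_n\}$, and $N((\pm I)_{-\eps_0};\tilde\BA)\le N(\pm I;\BA)\le N((\pm I)_{\eps_0};\tilde\BA)$ with $\eps_0\ll\rho_n^{\alpha-k}$, for all $\mu,\nu\in I_n$ and $I=(\mu^\alpha,\nu^\alpha)$. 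Since the homogeneous structure of Condition \ref{condII} passes to $\tilde\BA$, it still satisfies $[\tilde\ba^{\COD}]_{j,j+k}\in\BS^\gamma$ whenever $a_j=a_{j+k}$, and being quasi-periodic it satisfies Conditions \ref{condI}--\ref{condIV} trivially.

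Next I would apply Theorem \ref{thm:systemonestep} to $\tilde\BA$: \eqref{eq:shftinfty} follows from $a_j(\bxi)\to a_j$ and $\ang{\bxi+\btheta}/\ang\bxi\to 1$; \eqref{eq:ajboundfrombelow} holds for $(j,k)$ with $a_j\ne a_{j+k}$ since the gaps are bounded away from $0$ at infinity; and for $a_j=a_{j+k}$ the entry $[\tilde\ba^{\COD}]_{j,j+k}$ lies in $\BS^\gamma$. Running the proof of Theorem \ref{thm:systemonestep}, but keeping the $\BS^\gamma$-entries in the resonant coupled part even when $\gamma>2\beta-\alpha$, produces a symmetric $\BPsi\in\BS_m^{\beta-\alpha}$ with $[\tilde\BA]_\BPsi=\exp(-\mathrm i\BPsi)\,\tilde\BA\,\exp(\mathrm i\BPsi)=\tilde\BA^{\CD}+\BY+\BR_1$, where $\BR_2=0$ (as $\tilde\BA$ is quasi-periodic), $\BY\in\BU\BS_m^\beta$, $\BR_1\in\BS_m^{\gamma^*}$, all symmetric. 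Two facts will be used: since $\BPsi$, $\BY$, $\BR_1$ are built by composition from $\tilde\ba^{\COD}$ and $\tilde\BA^{\CNR}=-\ad(\tilde\BA^{\CD};\BPsi)$ --- all supported in $\{|\bxi|>c_0\rho_n\}$ --- a routine support-tracking argument (the unshifted factor in each composition being inductively in this class) shows $\BR_1$ is supported in $\{|\bxi|>c_0\rho_n\}$ once $\rho_0$ is large; and Corollary \ref{cor:onestepweakgauge} bounds the symbol norms of $\BPsi$, $\BY$, $\BR_1$ by those of $\tilde\ba^{\COD}$, which, since $\tilde\Theta\subset\Theta$, are bounded uniformly in $n$.

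Set $A':=\tilde\BA^{\CD}+\BY\in\BU\BS_m^\alpha$: symmetric and elliptic, being a lower-order perturbation of $\tilde\BA^{\CD}\in\BD\BE\BS_m^\alpha$. By Corollary \ref{cor:unitary}, $N(J;\tilde\BA)=N(J;A'+\BR_1)$ for every Borel $J$. Since $\BR_1\in\BS_m^{\gamma^*}$ with $\gamma^*\le 0$ is supported in $\{\ang\bxi\ge c_0\rho_n\}$, Proposition \ref{prop:normorder} gives $\|\BR_1\|\le(c_0\rho_n)^{\gamma^*}\snorm{\BR_1}{\gamma^*}{0}\ll\rho_n^{\gamma^*}$, so $\BR_1$ is a small \emph{bounded} perturbation; Lemma \ref{lem:monotonicity} (case $\beta\le 0$, with $\eps_1:=\|\BR_1\|$) then gives $N(K_{-\eps_1};A')\le N(K;A'+\BR_1)\le N(K_{\eps_1};A')$ for any interval $K$. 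Chaining this with the sandwich of the first paragraph and putting $\eps:=\eps_0+\eps_1\ll\rho_n^{\gamma^*}$ proves \eqref{eq:errorestinterval}. Finally, $N((\pm I)_\eps;A')-N((\pm I)_{-\eps};A')$ is at most the density of states of $A'$ on the two shells of width $2\eps$ at $\pm\mu^\alpha$, $\pm\nu^\alpha$; since $A'$ splits as a direct sum of scalar elliptic operators $A'_j$ with $N(\lambda^\alpha;A'_j)=c_j\lambda^d+\bigo{\lambda^{d-1}}$ (by \cite{MorParSht2014}, or by comparison with the explicit diagonal part via Lemma \ref{lem:monotonicity}), this is $\bigo{\rho_n^{d-\alpha}\eps}=\bigo{\rho_n^{d-\alpha+\gamma^*}}$, which is $\subseteq\bigo{\rho_n^{d-1+\gamma^*}}$ since $\alpha\ge1$, yielding \eqref{eq:errorest}; the negative-energy statement is identical with $-I$ in place of $I$.

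The main difficulty is not the \emph{order} of the gauge-transform remainder --- that is handled by Corollary \ref{cor:onestepweakgauge} --- but making it small in \emph{operator norm}: an operator of order $\gamma^*\le 0$ is generically only $\bigo{1}$-bounded, and it is precisely the high-frequency localisation $\supp(\tilde\ba^{\COD})\subset\{|\bxi|>c_0\rho_n\}$ supplied by Lemma \ref{lem:qpenough} that, via Proposition \ref{prop:normorder}, turns ``order $\gamma^*$, supported at frequencies $\gtrsim\rho_n$'' into ``norm $\ll\rho_n^{\gamma^*}$''. A secondary subtlety is to check that the symbol norms entering these estimates remain bounded uniformly in $n$ even though $\tilde\Theta$ grows with $n$, which works because $\tilde\Theta\subset\Theta$ and only the $l=0$ symbol norm feeds into Proposition \ref{prop:normorder}.
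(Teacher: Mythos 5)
Your proposal is correct and follows essentially the same route as the paper's proof: reduce to a quasi-periodic operator with high-frequency off-diagonal support via Lemma \ref{lem:qpenough}, apply the one-step gauge transform of Theorem \ref{thm:systemonestep}, convert the order-$\gamma^*$ remainder into an operator-norm bound $\ll\rho_n^{\gamma^*}$ via its high-frequency support and Proposition \ref{prop:normorder}, and conclude with Lemma \ref{lem:monotonicity} and the uncoupled asymptotics of Proposition \ref{thm:aexpconcreteu}. The only differences are presentational (you track the support of the remainder and the uniformity in $n$ slightly more explicitly than the paper does).
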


\begin{proof}
  We only prove this theorem for $I$, the proof for $-I$ follows from the same
  argument, making the relevant sign changes. By Proposition
  \ref{thm:aexpconcreteu}, we see that
  \eqref{eq:errorest} follows from \eqref{eq:errorestinterval}. We therefore
  only prove the latter. 

  We keep a quantitative track of the estimates found in Section
  \ref{sec:reducscalar}. By Lemma \ref{lem:qpenough}, we can without loss of
  generality assume for some fixed $C_0 > 0$ that $\BA$ is a quasi-periodic operator whose frequency set
  $\Theta$
  lies in the ball $B(\rho_n^{1/k})$ for some $k \in \N$ and such that the support of
$\ba^{\CO\CD}$ lies in $\set{\abs \xi > C_0 \rho_n}$. In particular, we can
  assume that there is $s > 0$ such that for all $\xi \in \supp(\ba^{\CO\CD})$,
  all $\theta \in \Theta$ and all $j,k$ such that $a_j \ne a_k$,
  \begin{equation}
    \Big|a_j \abs{\btheta + \bxi}^\alpha - a_k \abs{\bxi}^\alpha\Big| > s \abs
    \bxi^\alpha.
  \end{equation}
  By Theorem \ref{thm:systemonestep}, since $\BA$ is quasi-periodic there are
  symmetric operators $\BY \in
\BU\BS^{\beta}_m$, $\BR \in \BS_m^{\gamma^*}$ and $\BPsi \in \BS_m^{\beta
  - \alpha}$ such that $\BA$ is unitarily equivalent through conjugation with
  $\exp(i\BPsi)$ to
  \begin{equation}
    \BA' + \BR = \BA^{\CD} + \BY + \BR.
  \end{equation}
  Here, the symbol of $\BPsi$ is given by
  \begin{equation}
    [\boldsymbol \psi_{\btheta}(\bxi)]_{j,k} =
    \frac{i[{\bbb^\CC}_{\btheta}(\bxi)]_{j,k} \chi_{j,k}}{a_j \abs{\bxi +
      \btheta}^\alpha -
    a_k \abs \bxi^\alpha},
  \end{equation}
  where $\chi_{j,k} = 1$ if $a_j \ne a_k$ and $0$ otherwise.
  Using the fact that $$\ad(\BA;\BPsi) = \ad(\BA^{\CO\CD};\BPsi) +
  \ad(\BA^\CD;\BPsi) = \ad(\BA^{\CO\CD};\BPsi) - \BA^{\CNR},$$
  the operator $\BR$ is
  obtained from equations \eqref{eq:decompo}, \eqref{eq:decompodetails} with
  $\widetilde \BR = 0$, and \eqref{eq:defRconsecgauge} by
  \begin{equation}
    \label{eq:defRonestepconcrete}
    \BR = \BB^{\CR,\CC} + \ad(\BA^{\CO\CD};\BPsi) + \sum_{k=2}^\infty
    \frac{1}{k!}\ad^k(\BA^{\CO\CD};\BPsi) - \sum_{k=2}^\infty \frac{1}{k!}
    \ad^{k-1}(\BA^{\CNR};\BPsi).
  \end{equation}
  By Lemma \ref{lem:monotonicity},
  \begin{equation}
    N(I_{-\eps};\BA') \le N(I;\BA' + \BR) \le N(I_\eps;\BA')
  \end{equation}
  for $\eps = \norm{\BR}$. Since $\BPsi$ has order $\beta - \alpha$ and is
  supported on $\set{\abs \xi > c \rho}$, by Corollary \ref{cor:domain} and
  Lemma \ref{lem:product} we have as in  Proposition
  \ref{prop:normorder} that
  \begin{equation}
    \norm{\ad(\BA^{\CO\CD};\BPsi)} \ll \rho_n^{2 \beta - \alpha} \snorm{\BA^{\CO\CD}}{\beta}{0}
      \snorm{\BPsi}{-\beta}{\abs{\beta}},
  \end{equation}
  so this gives the contribution from the second term in
  \eqref{eq:defRonestepconcrete}. The third and fourth terms uses the same estimate and the
  fact that this sum is absolutely convergent. Finally, for the first term we
  supposed that $\BB^{\CR,\CC} \in \BS_m^{\gamma}$, and it is also
  supported on $\set{\abs \xi > c \rho}$ so that by Proposition \ref{prop:normorder},
  \begin{equation}
    \norm{\BB^{\CR,\CC}} \ll \rho_n^{\gamma}.
  \end{equation}
  Together, this completes the proof.
\end{proof}
When $a_j \ne a_k$ whenever $j \ne k$, we get the following stronger statement.
\begin{thm} \label{thm:ctou}
  Let $\BA$ be an operator satisfying the hypotheses of Theorem
  \ref{thm:aexpconcrete}. There is a decreasing sequence $\set{\gamma_K}_{K \in
  \N}$, $\gamma_K \to - \infty$ such that for all $K, n \in \N$
  there is an operator $\BA_K \in \BU\BS_m^\alpha$ and some $\eps \ll \rho_n^{-\alpha - d - K}$ such that
  for all $\mu,\nu \in I_n$, and $I = (\mu^\alpha,\nu^\alpha)$, 
   \begin{equation}
  N(I_{-\eps};\BA_K) \le N(I;\BA) \le N(I_\eps;\BA_K)
    \label{eq:desiredeps}
  \end{equation}
    and such that if $K_1 < K_2$, then
  \begin{equation}
    \BA_{K_1} \equiv \BA_{K_2} \mod \BS^{\gamma_{K_1}}_m.
  \end{equation}
  In particular,
  \begin{equation}
  N(\pm I;\BA) = N(\pm I; \BA_K) + \bigo{\rho_n^{-K}}.
    \label{eq:approxbydiagonal}
  \end{equation}

\end{thm}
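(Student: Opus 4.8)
The plan is to mirror the proof of Theorem~\ref{thm:ctoucut}, the only structural change being that the hypothesis $a_j\neq a_k$ for $j\neq k$ now permits the full \emph{parallel} weak gauge transform of Theorem~\ref{thm:gtsystem} in place of the single step of Theorem~\ref{thm:systemonestep}, so that the uncancelled off-diagonal remainder lands in a class of arbitrarily negative order.

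Fix $K,n\in\N$ and choose integers $k>2\alpha+d+K$ and $\tilde k$ with $\tilde k(\alpha-\beta)-\beta>\alpha+d+K$; put $\gamma_R:=\tilde k(\beta-\alpha)+\beta<-(\alpha+d+K)$. Applying Lemma~\ref{lem:qpenough} with parameter $k$ replaces $\BA$ by a quasi-periodic \pd{} operator $\BA'\in\BS\BE\BS_m^\alpha$ with finite frequency set $\tilde\Theta\subset\CB(\rho_n^{1/k})$, with $\BA-\BA'\in\BS_m^\beta$, with $\supp(\ba'{}^{\COD})\subset\{\abs\bxi>c_0\rho_n\}$, and — as in the proof of that lemma — with seminorms $\snorm{\BA'}{\gamma}{l}$ bounded uniformly in $n$ for $\beta\le\gamma\le\alpha$; moreover $N((\pm J)_{-\eps_1};\BA')\le N(\pm J;\BA)\le N((\pm J)_{\eps_1};\BA')$ for every interval $J\subset I_n^\alpha$, with $\eps_1\ll\rho_n^{\alpha-k}\ll\rho_n^{-\alpha-d-K}$. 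Now apply Theorem~\ref{thm:gtsystem} to $\BA'$: since $\BA'$ is \mpu{}, the $a_j$ are distinct and nonzero, so \eqref{eq:ajboundfrombelowbis} holds with $c=\min_{j\neq k}\abs{a_j-a_k}$ and \eqref{eq:shiftsatinfinity} is trivial. As $\tilde\Theta$ is finite and lies in $\CB(\rho_n^{1/k})$, the scale $s'$ of the resonant regions \eqref{eq:resmultstep} satisfies $s'\ll\rho_n^{1/k}$, so for $\rho_0$ large the set $\{\min\{\abs\bxi,\abs{g\act\bxi}\}\le s'\}$ misses $\supp(\ba'{}^{\COD})$; consequently $\ba'{}^{\CNR}=\ba'{}^{\COD}$, the resonant part $\ba'{}^{\CR}$ reduces to its matrix-diagonal ($k=0$) shifts, the term $\BR_\BY$ in the proof of Theorem~\ref{thm:gtsystem} vanishes, and $\BR_2=0$ because $\ba'{}^{\COD}$ is quasi-periodic. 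Running the parallel scheme with $\tilde k$ steps, Theorem~\ref{thm:gtsystem} (via Proposition~\ref{prop:symbolestweakgauge}) yields a symmetric $\BPsi^{(\tilde k)}\in\BS_m^{\beta-\alpha}$ and symmetric $\BY\in\BS_m^\beta$, $\BR\in\BS_m^{\gamma_R}$ with $\BY^\CD+\BY^\CR=\BY^\CU\in\BU\BS_m^\beta$ and $[\BA']_{\BPsi^{(\tilde k)}}=\BA_K+\BR$, where $\BA_K:=\BA'{}^\CD+\BY^\CU\in\BU\BS_m^\alpha\cap\BS\BE\BS_m^\alpha$ is symmetric. Moreover $\BPsi^{(\tilde k)},\BY,\BR$ have symbols supported in $\{\abs\bxi>c\rho_n\}$ for some $c>0$ (up to the bounded range of the action) with seminorms bounded uniformly in $n$, so Proposition~\ref{prop:normorder} gives $\norm\BR\ll\rho_n^{\gamma_R}\ll\rho_n^{-\alpha-d-K}$.

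To assemble: Corollary~\ref{cor:unitary} gives $N(\pm J;\BA')=N(\pm J;\BA_K+\BR)$, and as $\BR$ has negative order, Lemma~\ref{lem:monotonicity} applied with the elliptic operator $\BA_K$ sandwiches $N((\pm J)_{-\norm\BR};\BA_K)\le N(\pm J;\BA_K+\BR)\le N((\pm J)_{\norm\BR};\BA_K)$. Chaining this with the sandwich of Lemma~\ref{lem:qpenough} and using $(J_a)_b=J_{a+b}$ for $a,b\ge0$ yields \eqref{eq:desiredeps} for $I$ (and its analogue for $-I$) with $\eps:=\eps_1+\norm\BR\ll\rho_n^{-\alpha-d-K}$. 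Finally, $\BA_K\in\BU\BS_m^\alpha$ inherits Conditions~\ref{condI}--\ref{condIV}, so Proposition~\ref{thm:aexpconcreteu} supplies for $N(\cdot;\BA_K)$ an expansion of the form \eqref{eq:asymptoticsconcreteunc} with remainder $\bigo{\rho_n^{-K'}}$ for every $K'$; comparing it at the endpoints of $\pm I$ and of $(\pm I)_{\pm\eps}$ gives $\abs{N((\pm I)_{\pm\eps};\BA_K)-N(\pm I;\BA_K)}\ll\eps\rho_n^{d-\alpha}+\rho_n^{-K}\ll\rho_n^{-K}$, which with \eqref{eq:desiredeps} is \eqref{eq:approxbydiagonal}.

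It remains to produce a sequence $\gamma_K\to-\infty$ realizing the congruence $\BA_{K_1}\equiv\BA_{K_2}\bmod\BS_m^{\gamma_{K_1}}$ for $K_1<K_2$. This is done by running the reductions above for every $K$ with consistent choices — a common (or nested) quasi-periodic approximation, and an increasing number of parallel-transform steps — and tracking orders: since the generators $\Psi_l$ and the operators $B_l,T_l$ of the parallel scheme (see \eqref{eq:defbltl}--\eqref{eq:defpsijparalell}) are fixed recursively by their index alone and do not depend on the total step count, raising the step count from $\tilde k_1$ to $\tilde k_2$ adds only terms of order $\le(\tilde k_1+1)(\beta-\alpha)$, which tends to $-\infty$; with $\gamma_{K_1}$ taken to be the minimum of these orders, of $\gamma_R$ at stage $K_1$, and of the order of the discrepancy between the two quasi-periodic approximations, the congruence follows. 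Carrying out this bookkeeping consistently across all $K$ while keeping the estimate \eqref{eq:desiredeps} valid is the only delicate point of the proof; everything preceding it is a direct adaptation of the proof of Theorem~\ref{thm:ctoucut} with the parallel weak gauge transform replacing the one-step transform.
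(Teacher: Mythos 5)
Your proposal follows exactly the route the paper takes: its entire proof of this theorem is the remark that one repeats the proof of Theorem~\ref{thm:ctoucut} with the one-step transform of Theorem~\ref{thm:systemonestep} replaced by the parallel weak gauge transform of Theorem~\ref{thm:gtsystem} (the condition on coupling terms for $a_j=a_k$ being vacuous), run for a number of steps depending on $K$ so that the remainder lands in $\BS_m^{-N}$ for any $N$. Your write-up supplies considerably more detail than the paper does (the paper does not even address the congruence $\BA_{K_1}\equiv\BA_{K_2}\bmod\BS_m^{\gamma_{K_1}}$, which you at least identify as the delicate bookkeeping point), and the quantitative choices of $k$ and $\tilde k$ and the chaining of the two sandwich estimates are all consistent with the argument given for Theorem~\ref{thm:ctoucut}.
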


\begin{proof}
  This statement is proven in the same way as the previous one, replacing the
  use of Theorem \ref{thm:systemonestep} with the parallel gauge transform
  Theorem \ref{thm:gtsystem}, with a number of steps depending on $K$. This
  is possible because the condition on the terms coupling $a_j=a_k$ for $j
  \ne k$ is vacuously verified, so that it is assuredly preserved after each step of gauge
  transform. This yields a remainder $\BR \in \BS_m^{-N}$ for any $N$, allowing
  for the arbitrary precision in the approximation for the density of states.
\end{proof}
\begin{rem}
  Note that after this reduction, Conditions \ref{condII} and \ref{condIII},
  corresponding to \cite[Equation 2.4 and Condition
  B]{MorParSht2014}, do not hold anymore. However, the reason why these
  conditions are needed is to have a specific form for the functions
  $\bbb_{\btheta_1}(\bxi + \btheta_2)$ $\btheta_1,\btheta_2 \in \tilde
  \Theta$, see \cite[Equation 10.5]{MorParSht2014}. This expansion still holds
  if these conditions are imposed on the symbol prior to reduction to uncoupled
  operators.
\end{rem}

\begin{proof}[Proof of Theorem \ref{thm:ainint}]
  By Theorems \ref{thm:ctoucut} and \ref{thm:ctou}, there is $\rho_0$ large enough so that
for any $n$ there is an operator $\BA_K \in \BU\BS_m^\alpha$ such that for $\mu,\nu \in I_n$,
\begin{equation}
 N((\mu^\alpha,\nu^\alpha);\BA) = N((\mu^\alpha,\nu^\alpha);\BA_K) +
 \bigo{\rho_n^{-K}},
\end{equation}
where $K = 1- d - \gamma^*$ if $\BA$ satisfies the hypotheses of Theorem
\ref{thm:aexpconcretecut} and $K \in \R$ if $\BA$ satisfies the hypotheses of
Theorem \ref{thm:aexpconcrete}.
Equation \eqref{eq:ainint} (with coefficients $C^\pm_{j,q}$ depending on $n$) then follows by Proposition \ref{thm:aexpconcreteu}
for $\BA_K$ and the fact that
\begin{equation}
  \label{eq:munu}
N((\mu^\alpha,\nu^\alpha);\BA_K) = N^+(\nu^\alpha;\BA_K) - N^+(\mu^\alpha;\BA_K).
\end{equation}

In order to remove the dependence on $n$ of the coefficients, it is sufficient
for every $K \in \N$  to prove that they must agree for all $n$ large enough.
Since the coefficients obtained in Proposition \ref{thm:aexpconcrete} do not
depend on $n$, this means that as soon as $\mu, \nu \in I_{n} \cap I_{n+1}$,
\eqref{eq:munu} gives the same coefficients for the asymptotic expansion up to
terms of order $\rho_n^{-K}$, which means that the coefficients need to agree
for all $n$ large enough. 
\end{proof}

\section{The structure of periodic operators}
\label{sec:periodic}

We now turn our attention to periodic operators. In this section, we describe the structure of operators that are periodic with respect to
some lattice $\Lambda$, and we give a quantitative
approach to the study of the Bethe--Sommerfeld property. We realise the usual
Bloch--Floquet decomposition through Besicovitch spaces. 

\subsection{Description of periodic operators}
For  periodic operators we assume that $G$
is not $\R^d$ but rather the dual lattice $\Theta := \Lambda^\dagger \subset \R^d$. We note that in this
case $Z(\Theta) = \Theta$. 

Invariance of $\BA$ under the action of $\Lambda$ means that for all $\bk \in \R^d$, the subspace
\begin{equation}
  \label{eq:invsubspace}
  \ell^2_\bk(\Theta;\C^m) :=  \overline{\spann \set{\be_{\bxi,j} : 1 \le j \le m, \bxi \in
  \Theta + \bk} } \subset \RB^2(\R^d;\C^m)
\end{equation}
is an invariant subspace for $\BA$, and we denote by $\BA(\bk)$ the restriction
of $\BA$
to this subspace. It is clear from the definition that we can restrict ourselves
to $\bk \in \CO = \R^d/\Theta$, and we call $\bk$ a
\emph{quasimomentum}. For any $\bxi \in \R^d$, its fractional part $\set \bxi \in
\CO$ is the
image of $\bxi$ under the quotient map.  The spectrum
of $\BA$ can be
obtained as
\begin{equation}
  \spec(\BA) = \bigcup_{\bk \in \CO} \spec(\BA(\bk)),
\end{equation}
see \cite[Theorem 4.5.1]{
Kuchment}. For every $\bk \in \CO$ the spectrum of $\BA(\bk)$ in $\ell^2_\bk(\Theta;\C^m)$ is discrete.

The usual approach to studying the $\RL^2(\R^d;\C^m)$ theory of periodic operators is through the Floquet-Bloch
decomposition, see \emph{e.g.} \cite{Kuchment}, where we represent $\BA$ as a
direct integral over $\CO$ of the fibre operators $\BA(\bk)$. This would
require us to introduce a considerable amount of machinery. However, the 
Bethe-Sommerfeld property is strictly about the spectrum as a set,
and to every elliptic periodic operator acting in $\RL^2(\R^d;\C^m)$ there
corresponds an elliptic operator acting in $\RB^2(\R^d;\C^m)$ with the same
spectrum. We therefore consider periodic operators as operators on Besicovitch
space where we decompose them according to the invariant subspaces
\eqref{eq:invsubspace}. This makes our statements and proofs more 
direct.
\begin{rem}\label{rem:Iinfinitybs}
The subspaces $\ell^2_\bk(\Theta;\C^m)$ can be realised as
$$\RH^0(\Theta + \bk;\C^m) = \overline{\spann\left(\set{\be_{\btheta + \bk,j},
\btheta \in \Theta, 1 \le j \le m}\right)}.$$ The group $G$ is $\Theta$ acting on
$\Theta + \bk$ by translation. In this case,
consider the $\text{I}_\infty$ factor $\FA$ generated by $
\set{\be_{\btheta} \otimes M: \btheta \in \Theta, M \in \CM_m}.
$

It is clear that the restriction of the subalgebra of periodic operators to $\ell^2_\bk(\Theta;\C^m)$ is
affiliated to $\FA$, and that it respects the conditions described at the beginning
of Section \ref{sec:ids}. The associated trace of the spectral projection over an
interval $J$ is simply
$N(J;\BA(\bk)) := \#\set{j : \lambda_j(\BA(\bk)) \in J}$, the number 
of eigenvalues of $\BA(\bk)$ in that interval.

We also make the observation that for a bounded, periodic self-adjoint
operator $\BPsi$,
the restriction to $\ell^2_\bk(\Theta;\C^m)$
of the unitary operator $\exp(i \BPsi)$ is still unitary, since
$\ell^2_\bk(\Theta;\C^m)$ is an invariant subspace.
This means that we can simultaneously use the gauge transform on each of the
 fibre operators and that the estimates from Section \ref{sec:ids} hold uniformly for the counting
function of the fibre operators.
\end{rem}
Let us now describe the structure of the spectrum of $\BA$ in terms of the
spectra of the fibre operators $\BA(\bk)$. Since $\BA$ is self-adjoint, it has
Fredholm index $0$, this implies that the Bloch variety
\begin{equation}
  \set{(\bk,\lambda) \subset \CO \times \R : \lambda \in \spec(\BA(\bk))}
  \end{equation}
  is a principal analytic set \cite[Corollary 3.1.6 and Section
  3.4.C]{Kuchment}. As such, if $\BA$ is semi-bounded below, we can
naturally label the eigenvalues of $\BA(\bk)$ in non-decreasing order, counting
multiplicity. Then, the functions
$\lambda_j(\bk):= \lambda_j(\BA(\bk))$ are piecewise
analytic functions of $\bk$. If $\BA$ is not semi-bounded, we can label the
eigenvalues in non-decreasing order by $j \in \Z$ and it is possible to choose the
labelling so that the functions $\lambda_j(\bk)$ are piecewise
analytic. This requirement determines the labelling uniquely up to a uniform shift of the indices.
Note that continuity in $\bk$ of the functions $\lambda_j$ and discreteness of
the spectrum imply that labelling the eigenvalues at one quasimomentum $\bk$
induces a labelling everywhere in $\CO$. The interval
\begin{equation}
  \iota_j:= \iota_j(\BA) := \bigcup_{\bk \in \CO^\dagger} \lambda_j(\BA(\bk))
\end{equation}
is called the $j$th spectral band of $\BA$. 

\subsection{The overlap function}
In order to prove that an operator has the Bethe-Sommerfeld
property we study the band overlap, characterized by the \emph{overlap
function} $\zeta(\lambda;\BA)$, $\lambda \in \R$, introduced by M. Skriganov
\cite{Skriganov}. The overlap function is defined as the maximal
number $t$ such that the symmetric interval $[\lambda-t,\lambda+t]$ is entirely
contained in one band, \emph{i.e.}
\begin{equation}
  \label{eq:defnoverlap}
  \zeta(\lambda;\BA) := \begin{cases}
    \max_j \max\set{t \ge 0 : [\lambda-t,\lambda+t] \subset \iota_j} & \text{if } \lambda \in
 \spec(\BA)\\
 0 & \text{if }\lambda \not \in \spec(\BA).
  \end{cases}
\end{equation}
It is not hard to see that $\zeta$ is a continuous function of $\lambda$. 
In
order to use our machinery we will relate the overlap function to the
eigenvalue counting functions of the operators $\BA(\bk)$. This type of idea has been used
in the past but crucially relied on the fact that $\BA$
was semi-bounded below. In the following proposition we find an equivalent
formulation that is
robust under perturbations yet works for operators that are
not semi-bounded. Recall that for an interval $I = [s,t] \subset \R$ and $\eps
\in \R$, we define
\begin{equation}
  I_\eps := \begin{cases}
    \varnothing & \text{for } \eps < \frac{s-t}{2}, \\
    [s - \eps,t + \eps] & \text{otherwise.}
  \end{cases}
\end{equation}

\begin{lem} \label{lem:overlapidscont}
  Suppose that $\BA_1$, $\BA_2$ are self-adjoint periodic operators. Suppose that
  for all $p \in \set{1,2}$, $\bk \in \CO$, $\BA_p(\bk)$
  has discrete spectrum. For $\lambda \in \R$ and $t > 0$, let
  \begin{equation}
    \label{eq:smalldist}
    \delta = \min_{\bk\in \CO} \max\set{\dist(\mu;\spec(\BA_1(\bk))) :
      \mu \in [\lambda
    - t, \lambda + t]}.
  \end{equation}
  
  Suppose that there is $0 \le \eps \le \delta/4$ such that for all $\bk \in
  \CO$
  and any interval $I \subset [\lambda - t, \lambda + t]$
  \begin{equation}
  \label{eq:condids}
  N(I; \BA_2(\bk)) \le N(I_\eps;\BA_1(\bk))
  \quad \text{and} \quad N(I;\BA_1(\bk)) \le N(I_\eps;\BA_2(\bk)).
  \end{equation}
  
  Then, for $p \in \set{1,2}$ there exist sets of consecutive integers $J_p \subset \Z$ and
  surjective maps
  $$\lambda(\BA_p(\bk)) : J_p \to \spec(\BA_p(\bk))$$
      such that for all $j \in J_p$, $\lambda_j(\BA_p(\bk))$ are
    continuous in $\bk$ and such that
\begin{equation}
  \label{eq:evclose}
  \abs{\lambda_j(\BA_1(\bk)) - \lambda_j(\BA_2(\bk))} \le \eps
\end{equation}
for all $\bk$ and $j$ such that $\lambda_j(\BA_p(\bk)) \in [\lambda-t,\lambda+t]$.
\end{lem}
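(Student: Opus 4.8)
The plan is to prove Lemma~\ref{lem:overlapidscont} by first setting up a uniform labelling of the eigenvalues of each fibre operator, then using the two-sided counting function bound \eqref{eq:condids} together with the spectral gap $\delta$ to force the eigenvalues in the window $[\lambda-t,\lambda+t]$ to be paired off in a way that is compatible across $\bk$, and finally to deduce the pointwise estimate \eqref{eq:evclose}.

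\textbf{Step 1: Labelling the fibre eigenvalues.} For each $p \in \set{1,2}$, since $\BA_p$ is self-adjoint periodic and each $\BA_p(\bk)$ has discrete spectrum, I would invoke the discussion in Section~\ref{sec:periodic}: the Bloch variety is a principal analytic set, so the eigenvalues $\lambda_j(\BA_p(\bk))$ can be chosen as piecewise analytic (in particular continuous) functions of $\bk$, indexed by a set $J_p$ of consecutive integers, with multiplicity counted. This labelling is unique up to a uniform index shift, and — this will matter — continuity plus discreteness of the spectrum means that fixing the labelling at one quasimomentum $\bk_0$ propagates it consistently over all of $\CO$. I will need to arrange, by an appropriate shift of the index set $J_2$ relative to $J_1$, that the labellings of $\BA_1$ and $\BA_2$ are \emph{aligned} inside the window; Step~3 explains how \eqref{eq:condids} makes this possible.

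\textbf{Step 2: Counting functions determine the number of eigenvalues in subintervals.} The key observation is that for a fibre operator $\BA_p(\bk)$ with purely discrete spectrum, $N(I;\BA_p(\bk))$ is exactly the number of eigenvalues (with multiplicity) in $I$, and hence for any $a < b$ in the continuity range, $\#\set{j \in J_p : \lambda_j(\BA_p(\bk)) \in (a,b]} = N((a,b];\BA_p(\bk))$. I would fix a quasimomentum $\bk$ and run the following argument at that $\bk$, uniformly. Because $\delta$ is the min–max distance in \eqref{eq:smalldist}, there is, for every $\mu \in [\lambda-t,\lambda+t]$, a point of $\spec(\BA_1(\bk))$ within distance $\delta$; but more usefully, the \emph{complement} of the $\delta/2$-neighbourhoods of $\spec(\BA_1(\bk))$ is dense enough that one can find ``test points'' $c$ in $[\lambda-t,\lambda+t]$ at distance $> \delta/4 \ge \eps$ from all eigenvalues of $\BA_1(\bk)$ — actually I should be careful here: what \eqref{eq:smalldist} gives is that one cannot have a gap of length $> 2\delta$ free of $\BA_1(\bk)$-spectrum inside the window, so between any two consecutive eigenvalues of $\BA_1(\bk)$ in the window (or at the window edges) there is room, and since $\eps \le \delta/4$, the $\eps$-fattening $I \mapsto I_\eps$ in \eqref{eq:condids} cannot jump across an eigenvalue of $\BA_1$ that is ``well inside'' a gap. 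The plan is to use test points $c$ lying in $(\lambda-t,\lambda+t)$ with $\dist(c,\spec(\BA_1(\bk))) > \eps$ and apply \eqref{eq:condids} to intervals $I = (c_1,c_2]$ with such endpoints: then $(c_1,c_2]_\eps = [c_1-\eps,c_2+\eps]$ contains exactly the same $\BA_1(\bk)$-eigenvalues as $(c_1,c_2]$, so both inequalities in \eqref{eq:condids} squeeze to give
\begin{equation}
  N((c_1,c_2];\BA_1(\bk)) = N((c_1,c_2];\BA_2(\bk)),
\end{equation}
i.e. $\BA_1(\bk)$ and $\BA_2(\bk)$ have the same number of eigenvalues strictly between any two such ``safe'' test points.

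\textbf{Step 3: Pairing eigenvalues and concluding.} From Step~2, walking through the window from left to right along a sequence of safe test points (one between each consecutive pair of $\BA_1(\bk)$-eigenvalues, which exist because the gaps are bounded by $2\delta$ and $\eps \le \delta/4 < \delta$, leaving a safe subinterval of length $\ge 2\delta - 2\eps \ge \tfrac{3}{2}\delta > 0$), I get that the $j$-th eigenvalue of $\BA_1(\bk)$ in the window and the $j$-th eigenvalue of $\BA_2(\bk)$ in the window — under a common index shift fixed once and for all, valid uniformly in $\bk$ by the continuity/propagation remark — lie in the same ``cell'' delimited by consecutive safe test points straddling $\lambda_j(\BA_1(\bk))$; since those test points are within $\eps$ on either side of $\lambda_j(\BA_1(\bk))$ (this is exactly where $\eps \le \delta/4$ is used to guarantee a safe point within $\eps$ of each eigenvalue from both sides), we conclude $|\lambda_j(\BA_1(\bk)) - \lambda_j(\BA_2(\bk))| \le \eps$, which is \eqref{eq:evclose}. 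Symmetrising the roles of $\BA_1$ and $\BA_2$ (using the second inequality in \eqref{eq:condids}) handles any eigenvalue of $\BA_2(\bk)$ in the window that might not have been caught by the $\BA_1$-based test points near the window edges. The uniform index shift is legitimate because the count $N((\lambda-t,c];\BA_p(\bk))$ at a fixed safe point $c$ near the left edge is independent of $\bk$ by the same argument applied at $\bk$ versus a reference $\bk_0$ — one has to check the left edge test point can be chosen uniformly, which follows from $\delta$ being a \emph{minimum} over $\bk$ in \eqref{eq:smalldist}.

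\textbf{Main obstacle.} The delicate point is the bookkeeping that makes the index sets $J_1, J_2$ and the pairing \emph{uniform in $\bk$} — i.e. that one fixed index shift works simultaneously for all quasimomenta. This rests on combining (i) continuity of $\lambda_j(\BA_p(\cdot))$, (ii) the fact that $\delta$ in \eqref{eq:smalldist} is a min over $\bk$ so there is a genuinely $\bk$-independent safe test point near each window edge, and (iii) the counting identity from Step~2 being an equality of integers that therefore cannot vary continuously with $\bk$ and hence is locally constant, thus globally constant on connected $\CO$. Threading these together carefully — especially handling the window edges, where the nearest $\BA_1(\bk)$-eigenvalue might lie just outside $[\lambda-t,\lambda+t]$ — is the part that requires care; the rest is routine once the safe-test-point machinery is in place.
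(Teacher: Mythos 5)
Your overall strategy (align the two labellings via eigenvalue counts taken at points far from the spectrum, then propagate by integrality and connectedness of $\CO$) is the same as the paper's, but it rests on a misreading of \eqref{eq:smalldist} that breaks the central step. The quantity $\delta$ is a minimum over $\bk$ of a \emph{maximum} over $\mu$: it guarantees that for each $\bk$ there exists \emph{at least one} point of $[\lambda-t,\lambda+t]$ at distance $\geq\delta$ from $\spec(\BA_1(\bk))$. It does \emph{not} say that gaps of $\BA_1(\bk)$ inside the window are bounded above by $2\delta$ --- that is the opposite inequality, and it is false in general. Consequently your ``safe test points'' do not exist where you need them: between two consecutive eigenvalues of $\BA_1(\bk)$ lying within $2\eps$ of each other (or at a multiple eigenvalue) there is no point at distance $>\eps$ from the spectrum, so you cannot place a test point in every gap, and the claim that each eigenvalue has a safe point within $\eps$ on both sides fails precisely in the clustered case that the lemma must handle. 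The same misreading undermines your uniformity argument: $N((\lambda-t,c];\BA_p(\bk))$ is not independent of $\bk$ (branches enter and leave the window through its endpoints as $\bk$ varies), and the one safe point guaranteed by \eqref{eq:smalldist} moves with $\bk$, so there is no fixed left-edge anchor.

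The paper's proof uses exactly one anchor: a point $\mu_0$ that is $\delta$-distant at a single reference quasimomentum $\bk_0$, which normalises both labellings simultaneously. The key intermediate statement is not a cell-by-cell count but the weaker ``same side'' property: for every $\bk$ and every $\mu$ that is $\delta/2$-distant at $\bk$, the eigenvalues $\lambda_j(\BA_1(\bk))$ and $\lambda_j(\BA_2(\bk))$ lie on the same side of $\mu$. This is first shown to be independent of the choice of such $\mu$ at fixed $\bk$ (by your Step~2 counting identity applied between two safe points), and is then transported from $\bk_0$ to all of $\CO$ by a compactness--connectedness covering argument exploiting continuity of the spectrum in $\bk$. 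The estimate \eqref{eq:evclose} is then obtained by counting eigenvalues between the anchor $\mu$ and $\lambda_j(\BA_2(\bk))$ and invoking \eqref{eq:condids} once more to reach a contradiction. If you replace your dense family of test points by this single-anchor propagation, the argument goes through; as written, it does not.
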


\begin{rem}
We do not ask in the previous lemma that both operators share the properties of
being either bounded, semi-bounded above or below, or unbounded in both directions.
\end{rem}

\begin{proof}
  For any $\bk \in \CO$, $0 < \eta \le \delta$, we say that $\mu \in [\lambda - t,
  \lambda + t]$ is $\eta$-distant (from the spectrum of $\BA_1(\bk)$) at $\bk$ if 
\begin{equation}
  \label{eq:choicemu}
  \dist(\mu,\spec(\BA_1(\bk))) \ge \eta.
\end{equation}
By \eqref{eq:smalldist}, for every $\bk \in \CO^\dagger$ there exists $\mu \in
[\lambda - t, \lambda + t]$ which is $\delta$-distant at $\bk$. 
 By the second inequality in \eqref{eq:condids}, if $\mu$ is
 $\eta$-distant at $\bk$ for some $\eta > \delta/4$, then for $p \in \set{1,2}$
  \begin{equation}
    \label{eq:muintempty}
    (\mu - \eps,\mu + \eps) \cap \spec(\BA_p(\bk)) = \varnothing.
\end{equation}
 Choose $\bk_0 \in \CO$ and $\mu_0$ a point $\delta$-distant at $\bk_0$.
 Maps $j \mapsto \lambda_j(\BA_p(\bk))$ can be uniquely defined from the
 properties that they are nondecreasing, mapping to continuous functions in
 $\bk$, and that $\lambda_0(\BA_p(\bk_0))$ is the
smallest eigenvalue larger than $\mu_0$. Note that the sets $J_1$ and $J_2$
are both defined uniquely from these properties, in particular if $\BA_p$ is
unbounded both above and below then $J_p = \Z$. 

We now prove that, for all $\bk \in \CO$, if $\mu$ is $\delta/2$-distant at $\bk$, then
for all $j \in J_1 \cap J_2$, then
\begin{equation}
  \label{eq:sameside}
  \big(\lambda_j(\BA_1(\bk)) - \mu\big) \big(\lambda_j(\BA_2(\bk)) - \mu\big) > 0,
  \end{equation}
  in other words, for $p \in \set{1,2}$, $\lambda_j(\BA_p(\bk))$ are both on the
  same side of $\mu$. The functions $\lambda_j(\BA_p(\bk))$ were constructed
  specifically so that \eqref{eq:sameside} holds at $\bk_0$ and $\mu_0$, our
  goal is to show that this property propagates to other $\mu$ and $\bk$. 
  
  We first prove that if \eqref{eq:sameside} holds for some
  $\mu$ $\delta/2$-distant at $\bk$, then it holds for all other
  $\nu$ $\delta/2$-distant at $\bk$. This is a direct consequence of
  \eqref{eq:condids} and \eqref{eq:muintempty}, which imply that
\begin{equation}
  N\big([\mu,\nu];\BA_1(\bk)\big) = N\big([\mu,\nu];\BA_2(\bk)\big).
\end{equation}

By continuity, for every $\bk \in \CO$ there is $s_\bk > 0$ so that
whenever $\mu$ is $\delta$-distant at $\bk$, $\mu$ is also $\delta/2$-distant at
every $\bk' \in \CB(\bk,s_\bk)$. 
This also implies that if
\eqref{eq:sameside} holds at $\bk$ for one of those $\mu$, it also holds for
that $\mu$ at every $\bk' \in \CB(\bk,s_\bk)$, and therefore at every $\nu$
$\delta/2$-distant at $\bk'$.

By compactness of $\CO$, there are $\bk_1,\dotsc,\bk_\ell$ such that
$\CO$ is covered by the balls
$U_j = \CB(\bk_j,s_{\bk_j}/2)$, with $0 \le j \le \ell$. If $U_{j} \cap U_{j'} \ne
\emptyset$, we have that $\bk_{j'} \in \CB(\bk_j,s_{\bk_j})$, so that if
\eqref{eq:sameside} holds for some $\mu$ $\delta$-distant at $\bk_j$ then it
also holds for all $\nu$ $\delta/2$-distant at $\bk_{j'}$, and therefore also at
any $\bk' \in U_{j'}$. By
connectedness of $\CO$, this means that \eqref{eq:sameside}
only needs to be verified for some $\bk_j$, $0 \le j \le \ell$ and one $\mu$
$\delta/2$-distant at $\bk_j$. Choosing $\bk_0$ and $\mu_0$, this means that
\eqref{eq:sameside} holds everywhere.

Suppose now that for some $p \in \set{1,2}$
there is some $j \in J_p$ and $\bk \in \CO$ such that
$\lambda_j(\BA_p(\bk)) \in [\lambda-t,\lambda+t]$ and
\begin{equation}
\lambda_j(\BA_1(\bk)) - \lambda_j(\BA_2(\bk)) > \eps.
\end{equation}
Let $\mu$ be $\delta$-distant at $\bk$. Without loss of generality
assume that $[\mu,\infty) \cap \spec(\BA_p) \ne \varnothing$ and let
\begin{equation}
  j' = \min \set{\ell : \lambda_\ell(\BA_p)(\bk) > \mu}.
  \end{equation}
 Supposing that $j \ge j'$, and using
\eqref{eq:muintempty} we obtain
\begin{equation}
  N\left(\left[\mu,\lambda_j(\BA_2(\bk))\right];\BA_2(\bk)\right) \ge j + 1 - j' >
  N\left(\left[\mu,\lambda_j(\BA_2(\bk))\right]_{\eps};\BA_1(\bk)\right),
\end{equation}
which contradicts the second inequality in \eqref{eq:condids}. Similarly,
supposing that $\lambda_j(\BA_2(\bk)) - \lambda_j(\BA_1(\bk)) > \eps$
contradicts the first inequality in \eqref{eq:condids}. The case $j < j'$ is
treated analogously. We can therefore
deduce that \eqref{eq:evclose} holds at every $\bk \in \CO^\dagger$.

\end{proof}

The previous lemma admits the following corollary in the situation
where the difference $\BA_1 - \BA_2$ is a bounded operator. The reader interested solely
in this case might notice that the proofs of the statement could have been more
direct on its own.

\begin{cor}
  Let $\BA = \BA_0 + \BB$ be a self-adjoint unbounded periodic operator such
  that $\BA(\bk)$ 
  has discrete spectrum for all $\bk \in \CO$ and such that $\BB$ is
  bounded. Then, there exist
  labelings $\lambda_j(\BA_0(\bk))$ and $\lambda_j(\BA(\bk))$ of the eigenvalues of the
fibre operators such that the functions $\lambda_j(\BA_0(\cdot))$ and
    $\lambda_j(\BA(\cdot))$ are both continuous on $\CO$ and such that for
    every $\bk \in \CO$
    \begin{equation}
      \abs{\lambda_j(\BA_0(\bk) - \lambda_j(\BA(\bk))} \le \norm{\BB}.
    \end{equation}
\end{cor}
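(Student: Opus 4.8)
The plan is to deduce this corollary as a direct application of Lemma \ref{lem:overlapidscont} with $\BA_1 = \BA_0$ and $\BA_2 = \BA = \BA_0 + \BB$. The key observation is that when $\BB$ is bounded, perturbation theory for the fibre operators gives the spectral inclusion $\spec(\BA(\bk)) \subset \spec(\BA_0(\bk)) + [-\norm\BB,\norm\BB]$ and vice versa, uniformly in $\bk$, which is exactly what is needed to verify hypothesis \eqref{eq:condids} with $\eps = \norm\BB$.

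First I would fix $\lambda \in \spec(\BA)$ and $t > 0$ arbitrary (eventually we let $t \to \infty$), and set $\eps = \norm\BB$. The quantity $\delta$ in \eqref{eq:smalldist} is a number depending on $\lambda$ and $t$ measuring how far into a spectral gap of $\BA_0(\bk)$ one can get inside $[\lambda-t,\lambda+t]$, minimised over $\bk$. Since $\BA_0$ is unbounded, as we take $t$ large this $\delta$ will be bounded below by the maximal gap length of $\BA_0$ near $\lambda$... but in fact we do not need a lower bound on $\delta$ uniformly — we need $\eps \le \delta/4$, so the argument will work on the sublevel of $t$ for which $\delta \ge 4\norm\BB$; however since the conclusion \eqref{eq:evclose} of the lemma is local to $[\lambda-t,\lambda+t]$ and we want it everywhere, the cleaner route is: apply the lemma on each interval $[\lambda_n - t_n, \lambda_n + t_n]$ for a sequence exhausting $\R$, using the uniqueness of the labelling (determined by a choice of $\bk_0,\mu_0$) to patch the resulting labelings into global ones $\lambda_j(\BA_0(\cdot))$, $\lambda_j(\BA(\cdot))$. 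Alternatively — and this is what I would actually write — since $\BB$ is bounded, $\delta$ can be made as large as we wish by taking $t$ large (because in a fixed large window around $\lambda$, $\BA_0(\bk)$ being unbounded with discrete spectrum forces spectral gaps, and one can always find a point $\delta$-distant with $\delta$ growing); hence for $t$ large enough, $\eps = \norm\BB \le \delta/4$, and the lemma applies on $[\lambda-t,\lambda+t]$ to give labelings satisfying \eqref{eq:evclose} there.

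To verify \eqref{eq:condids}: for each $\bk$, $\BA(\bk) = \BA_0(\bk) + \BB(\bk)$ with $\norm{\BB(\bk)} \le \norm\BB$ (the restriction of a bounded operator to an invariant subspace has no larger norm), so by the spectral theorem / min-max, for any interval $I = [s,t']$ one has $E_I(\BA(\bk)) \le E_{I_{\norm\BB}}(\BA_0(\bk))$ in the sense of counting functions, i.e. $N(I;\BA(\bk)) \le N(I_{\eps};\BA_0(\bk))$, and symmetrically with the roles of $\BA_0$ and $\BA$ interchanged. This is precisely \eqref{eq:condids} with $\eps = \norm\BB$. Lemma \ref{lem:overlapidscont} then yields, for $p \in \set{1,2}$, sets of consecutive integers $J_p$ and continuous labelings with $\abs{\lambda_j(\BA_0(\bk)) - \lambda_j(\BA(\bk))} \le \norm\BB$ for all $\bk$ and all $j$ with $\lambda_j$ landing in $[\lambda-t,\lambda+t]$. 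Letting $t \to \infty$ along a sequence and using that the labelings are uniquely pinned down (up to a global index shift fixed once and for all by the choice of $\bk_0,\mu_0$), we obtain global continuous labelings $\lambda_j(\BA_0(\cdot))$ and $\lambda_j(\BA(\cdot))$ on $\CO$ with $\abs{\lambda_j(\BA_0(\bk)) - \lambda_j(\BA(\bk))} \le \norm\BB$ for every $j$ and every $\bk$, which is the claim.

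The main obstacle — really a bookkeeping point rather than a genuine difficulty — is the passage from the local statement of Lemma \ref{lem:overlapidscont} (valid on a bounded window $[\lambda-t,\lambda+t]$) to the global statement over all $j \in \Z$, and making sure the labelings of $\BA_0(\cdot)$ and $\BA(\cdot)$ are compatible as $t$ grows; this is handled by the uniqueness clause in the lemma's proof, which fixes the labeling once a reference pair $(\bk_0,\mu_0)$ with $\mu_0$ $\delta$-distant is chosen, and by noting that enlarging $t$ only enlarges the window on which \eqref{eq:evclose} is asserted without changing the already-fixed labeling. A secondary point to check is that $\delta \to \infty$ as $t \to \infty$ so that the constraint $\eps \le \delta/4$ is eventually satisfied regardless of $\norm\BB$; this follows from $\BA_0$ being unbounded (in whichever direction — here both, since $\BA$ is assumed unbounded, though one-sided would suffice for a one-sided conclusion) together with discreteness of $\spec(\BA_0(\bk))$ forcing arbitrarily wide gaps somewhere in any sufficiently long window, uniformly in $\bk$ by a standard compactness argument on $\CO$.
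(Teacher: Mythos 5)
Your verification of hypothesis \eqref{eq:condids} with $\eps=\norm\BB$ is fine (it is what the paper does, via Lemma \ref{lem:monotonicity} and $\norm{\BB(\bk)}\le\norm\BB$), and the patching of windows is a legitimate bookkeeping point. But the argument has a genuine gap at the step you yourself flag as ``secondary'': the claim that $\delta\to\infty$ as $t\to\infty$, so that $\norm\BB\le\delta/4$ eventually holds, is false. The quantity $\delta$ in \eqref{eq:smalldist} is controlled by the \emph{gap lengths} of $\spec(\BA_0(\bk))$ in the window, and discreteness of the spectrum does not force wide gaps --- it only forces local finiteness. For the operators this paper cares about the gaps of a fibre operator typically stay bounded or even shrink: e.g.\ for $\BA_0(\bk)$ with eigenvalues $a_j\abs{\btheta+\bk}^\alpha$, $\btheta\in\Theta$, in dimension $d\ge 2$ the eigenvalues become denser at high energy, so $\max_\mu\dist(\mu,\spec(\BA_0(\bk)))$ is bounded uniformly in $t$. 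Hence if $\norm\BB$ exceeds $\delta/4$ --- which is the generic situation --- Lemma \ref{lem:overlapidscont} simply cannot be applied in one shot with $\BA_1=\BA_0$, $\BA_2=\BA$ and $\eps=\norm\BB$, no matter how large you take $t$.

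The paper's proof avoids this by interpolating: it sets $\BA_t=\BA_0+t\BB$, notes $\norm{\BA_t-\BA_s}=\abs{t-s}\norm\BB$ and that the corresponding $\delta_t$ is continuous and positive in $t\in[0,1]$, chooses $N$ so that $\norm\BB/N\le\min_{0\le t\le1}\delta_t/4$, and applies Lemma \ref{lem:overlapidscont} recursively to the consecutive pairs $\BA_{j/N}$, $\BA_{(j+1)/N}$. Each step satisfies the $\eps\le\delta/4$ constraint because the perturbation at each step is only $\norm\BB/N$, and the triangle inequality over the $N$ steps accumulates to the stated bound $\norm\BB$. If you want to repair your write-up, this interpolation (or some equivalent way of splitting $\BB$ into pieces small relative to $\min_t\delta_t$) is the missing idea; the rest of your argument can stay.
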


\begin{proof}
  It suffices to observe that $\norm{\BB(\bk)} \le \norm{\BB}$ for all $\bk \in
  \CO$.
  Defining the continuous family of operators $\BA_t = \BA_0 + t \BB$, it is easy to
  see that $\BA_1 = \BA$ and $\norm{\BA_t - \BA_s} = \abs{t-s}\norm \BB$.  From Lemma
  \ref{lem:monotonicity}, we know that for all $I \subset \R$,
  \eqref{eq:condids} holds for $\BA_s,\BA_t$ with $\eps = \abs{t-s}\norm \BB$. It is
  also clear that $\delta$ defined in \eqref{eq:smalldist} is continuous in the parameter $t$. 
  Setting
  \begin{equation}
  N = \ceil{\frac{\norm \BB}{\min_{0 \le t \le 1} \eps_t}},
  \end{equation}
  and applying recursively Lemma \ref{lem:overlapidscont} to the operators
  $\BA_{j/N}$ and $\BA_{(j+1)/N}$, with $0 \le j < N$ yields the result we seek.
\end{proof}

The previous lemma and corollary provide us with an explicit way to compare the overlap
function. This is made precise in the following proposition.

\begin{prop}
  \label{prop:overlapcontev}
  Suppose that $\BA_1,\BA_2$ are self-adjoint, periodic operators such that for all
  $p \in \set{1,2}$, $\bk \in \CO$ $\BA_p(\bk)$ has discrete spectrum.
  Suppose that for $\eps  > 0$ there is a non-decreasing labelling of their eigenvalues so that whenever
  $\lambda_j(\BA_p(\bk)) \in [\lambda - 4\zeta(\lambda;\BA_1), \lambda +
  4\zeta(\lambda;\BA_1)]$ we have
  \begin{equation}
    \label{eq:evlabelclose}
    \abs{\lambda_j(\BA_1(\bk)) - \lambda_j(\BA_2(\bk))} \le \eps.
  \end{equation}
  Then, 
  \begin{equation}
    \zeta(\lambda;\BA_2) \ge \zeta(\lambda;\BA_1) - 2\eps.
  \end{equation}
\end{prop}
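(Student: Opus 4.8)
The plan is to argue directly at the level of bands: I would show that the closed symmetric interval of half-width $\zeta(\lambda;\BA_1)-2\eps$ about $\lambda$ is contained in a single band of $\BA_2$. Write $\zeta_1:=\zeta(\lambda;\BA_1)$. First I would dispose of the trivial case $\zeta_1\le 2\eps$, where the conclusion holds because $\zeta(\,\cdot\,;\BA_2)\ge 0$ by \eqref{eq:defnoverlap}; so assume $\zeta_1>2\eps$. Next, using the definition \eqref{eq:defnoverlap} of $\zeta$ together with the fact that each band $\iota_j(\BA_1)=\bigcup_{\bk\in\CO}\lambda_j(\BA_1(\bk))$ is a closed interval --- being the continuous image of the compact connected set $\CO$ under $\bk\mapsto\lambda_j(\BA_1(\bk))$, for the continuous labelling constructed in Lemma \ref{lem:overlapidscont} and described in Section \ref{sec:periodic} --- I would fix a band index $j$ with $[\lambda-\zeta_1,\lambda+\zeta_1]\subset\iota_j(\BA_1)$.

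The heart of the argument is then a one-step intermediate value computation. I would take an arbitrary $\mu\in[\lambda-(\zeta_1-2\eps),\lambda+(\zeta_1-2\eps)]$ and show $\mu\in\iota_j(\BA_2)$. Both $\mu+2\eps$ and $\mu-2\eps$ lie in $[\lambda-\zeta_1,\lambda+\zeta_1]\subset\iota_j(\BA_1)$, so there are quasimomenta $\bk_+,\bk_-\in\CO$ with $\lambda_j(\BA_1(\bk_+))=\mu+2\eps$ and $\lambda_j(\BA_1(\bk_-))=\mu-2\eps$. Since $[\lambda-\zeta_1,\lambda+\zeta_1]\subset[\lambda-4\zeta_1,\lambda+4\zeta_1]$, hypothesis \eqref{eq:evlabelclose} applies at $\bk_\pm$, giving $\lambda_j(\BA_2(\bk_+))\ge\mu+\eps>\mu$ and $\lambda_j(\BA_2(\bk_-))\le\mu-\eps<\mu$. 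Because $\bk\mapsto\lambda_j(\BA_2(\bk))$ is continuous and $\CO$ is connected, the intermediate value theorem produces some $\bk$ with $\lambda_j(\BA_2(\bk))=\mu$, i.e.\ $\mu\in\iota_j(\BA_2)$.

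As $\mu$ ranges over the whole interval, I conclude $[\lambda-(\zeta_1-2\eps),\lambda+(\zeta_1-2\eps)]\subset\iota_j(\BA_2)$; in particular $\lambda\in\iota_j(\BA_2)\subset\spec(\BA_2)$, and a second appeal to \eqref{eq:defnoverlap} gives $\zeta(\lambda;\BA_2)\ge\zeta_1-2\eps$, which is the claim.

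I do not expect a genuine obstacle here: the proof is essentially a connectedness repackaging of the hypothesis, and the factor $4$ in the window $[\lambda-4\zeta_1,\lambda+4\zeta_1]$ of \eqref{eq:evlabelclose} provides far more room than is needed, since the values $\mu\pm2\eps$ that get fed into \eqref{eq:evlabelclose} already sit inside $[\lambda-\zeta_1,\lambda+\zeta_1]$. The only two points deserving a sentence of justification are the existence of an honest band index $j$ realizing the overlap (closedness of the bands together with local finiteness of the set of bands meeting a neighbourhood of $\lambda$) and the continuity in $\bk$ of the chosen eigenvalue branches, both of which are supplied by the discussion preceding and within Lemma \ref{lem:overlapidscont}.
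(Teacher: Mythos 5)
Your proof is correct and is essentially the paper's argument: dispose of the case $\zeta(\lambda;\BA_1)\le 2\eps$, fix a band index $j$ realising the overlap, and use \eqref{eq:evlabelclose} together with continuity of the eigenvalue branches to show the shrunk symmetric interval lies in $\iota_j(\BA_2)$. Your intermediate-value unpacking is just the explicit form of the paper's containment $\iota_j(\BA_1)_{-\eps}\subset\iota_j(\BA_2)$ (which in fact yields the slightly sharper loss $\eps$ rather than $2\eps$, though both suffice for the stated bound).
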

\begin{proof}
  If $2\eps > \zeta(\lambda;\BA_1)$, the result follows trivially from
  nonnegativity of the overlap function. Otherwise, choose $j \in \Z$ such that 
  $$[\lambda -
  \zeta(\lambda;\BA_1),\lambda + \zeta(\lambda;\BA_1)] \subset \iota_j(\BA_1).$$ Then
  \eqref{eq:evlabelclose} implies
  \begin{equation}
  [\lambda - \zeta(\lambda;\BA_1) + \eps,\lambda + \zeta(\lambda;\BA_1) - \eps]
  \subset \iota_j(\BA_1)_{-\eps} \subset \iota_j(\BA_2).
  \end{equation}
  The claim then follows from inspection of the definition \eqref{eq:defnoverlap} of the
  overlap function.
\end{proof}

\section{Systems of periodic operators -- The Bethe-Sommerfeld property}
\label{sec:bs}

In this section we prove that certain systems of periodic operators enjoy the
Bethe-Sommerfeld property in a quantitative way. This will imply that the spectrum of an elliptic periodic
operator $\BA$ in some classes contains a half-line.  Our proof is again based on a
reduction of the problem to uncoupled operators.

It is clear that if we show that the overlap function \eqref{eq:defnoverlap} is bounded away from $0$ at sufficiently
large $\lambda$ for some operator $\BA$, then $\BA$ has the Bethe--Sommerfeld
property.
This is  the strategy employed in \cite{ParSob2010}, where the
self-adjoint operators of the form
\begin{equation}
  A = (-\Delta)^\alpha + B,
\end{equation}
with $B\in \BS^\beta$, $\beta < 2\alpha$, and $B$ is $\Lambda$-periodic are
studied. 
 It is shown in \cite{ParSob2010} that there
are $S, c$ and $\lambda_0$, depending only on $\Theta$ and the symbol norms of
$B$, such that for all $\lambda \ge \lambda_0$, $\zeta(\lambda;H) \ge c
\lambda^S$. However, unlike in the situation of Proposition \ref{thm:aexpconcreteu},
it is no longer possible to simply extend the result to uncoupled operators,
i.e. direct sums of operators acting on scalar functions. While it is certainly
true that such operators enjoy the Bethe--Sommerfeld property, it could be possible \emph{a priori}
that the overlap function does not stay bounded away from $0$ when direct sums
are taken. This would imply
that the reduction to uncoupled operators would be able to open gaps. Our aim is to show that such a situation
is impossible for our class of operators.

\subsection{The Bethe-Sommerfeld property}
Our main theorem concerning systems of periodic operators is the following.
\begin{thm} \label{thm:bs}
  Suppose that $\BA \in \BE\BS^\alpha_m$, $\alpha > 0$ is periodic, self-adjoint
  \mpu{} operator
  with $\BA^\CD$ of the form \eqref{eq:defh0}
  and $a_j \ne a_k$ whenever $j \ne k$.
  Then, there exist positive $\tilde \lambda, S,c$ such
  that
  \begin{enumerate}
    \item  \label{it:bspos} if $\BA$ is unbounded above, $[\tilde \lambda,\infty) \subset
      \spec(\BA)$ and for every $\lambda \ge \tilde \lambda$, $\zeta(\lambda;\BA) \ge c
      \lambda^{-S}$;
    \item \label{it:bsneg} if $A$ is unbounded below, $(- \infty,-\tilde \lambda] \subset
      \spec(\BA)$ and for every $\lambda \ge \tilde \lambda$, $\zeta(-\lambda;\BA) \ge c
      \lambda^{-S}$.
  \end{enumerate}
  The overlap exponent $S$ depends only on $\alpha$ and the dimension $d$. The parameters
  $\tilde \lambda$ and $c$ can be chosen uniformly in the symbol norms of $\BA$ and
  $\BA^{\CO\CD}$.
\end{thm}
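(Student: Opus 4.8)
The plan is to reduce the Bethe--Sommerfeld property of $\BA$, at each sufficiently large energy, to that of an uncoupled operator, and then to combine the scalar overlap lower bound (established in Section~\ref{sec:cg}) with the stability of the overlap function proved in Section~\ref{sec:periodic}. I prove the statement assuming $\BA$ is unbounded above; the case of $\BA$ unbounded below follows by applying it to $-\BA$. First, by Definition~\ref{def:WES} there is a unitary $\BU\in\BS^0_m$ with $\BU\BA\BU^\dagger\in\BS\BE\BS^\alpha_m$, and since the principal symbol of a periodic operator is a Fourier multiplier, $\BU$ may be taken to be a periodic Fourier multiplier, hence preserves the fibre decomposition \eqref{eq:invsubspace}, the eigenvalues $\lambda_j(\BA(\bk))$, and the overlap function; so we may assume $\BA=\BA^\CD+\BA^{\COD}\in\BS\BE\BS^\alpha_m$ with $\BA^\CD$ as in \eqref{eq:defh0} and $\BA^{\COD}\in\BS^\beta_m$ for some $\beta<\alpha$.

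Fix a large $\lambda>0$ and set $\rho=\lambda^{1/\alpha}$. Repeating the reductions of Section~\ref{sec:asyexp} --- truncation to a quasi-periodic operator whose off-diagonal part is supported in $\{\abs\bxi>c_0\rho\}$ as in Lemma~\ref{lem:qpenough}, followed by the parallel gauge transform to uncoupled form as in the proof of Theorem~\ref{thm:ctou} --- but carried out on the fibre operators $\BA(\bk)$ via the $\mathrm{I}_\infty$-factor of Remark~\ref{rem:Iinfinitybs}, I would produce for each $\lambda$ an uncoupled periodic operator $\BA'=\BA'_\lambda=A_1\oplus\dots\oplus A_m\in\BU\BS^\alpha_m$ together with an $\eps=\eps_\lambda$, smaller than any prescribed negative power of $\rho$, such that for every $\bk\in\CO$ and every interval $I\subset[\lambda-t,\lambda+t]$ (with $t>0$ a fixed negative power of $\rho$ fixed below)
\begin{equation*}
  N(I;\BA(\bk))\le N(I_\eps;\BA'(\bk)),\qquad N(I;\BA'(\bk))\le N(I_\eps;\BA(\bk)).
\end{equation*}
The point that makes the gauge transform to uncoupled form so effective is that the $a_j$ are distinct constants while the truncated frequency set has $\abs\btheta\ll\rho$, so that $\abs{a_j\abs{\bxi+\btheta}^\alpha-a_k\abs\bxi^\alpha}\ge\tfrac c2\rho^\alpha$ on $\abs\bxi\sim\rho$ for $j\ne k$: the coupling resonant regions are empty on the energy shell, no coupled resonant part survives, and one is left only with a remainder of arbitrarily negative order and a ``spectrally far'' remainder, which are controlled two-sidedly and uniformly in $\bk$ by Lemmata~\ref{lem:monotonicity} and~\ref{lem:spectralperturb}. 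Moreover the symbol norms of $\BA'_\lambda$ remain bounded as $\lambda\to\infty$ in terms of those of $\BA$ and $\BA^{\COD}$.

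For $j\le m^+$ --- and $m^+\ge1$ because $\BA$ is unbounded above --- the summand $A_j$ is a scalar elliptic periodic operator that is unbounded above, and the combinatorial-geometric argument of Section~\ref{sec:cg} provides $c,S>0$ and $\lambda_0>0$, with $S$ depending only on $\alpha$ and $d$ and $c,\lambda_0$ only on $d$, $\Theta$ and the symbol norms, such that $\zeta(\lambda;A_j)\ge c\lambda^{-S}$ for $\lambda\ge\lambda_0$; since $\BA'$ is a direct sum, $\zeta(\lambda;\BA')\ge c\lambda^{-S}$. Take $t=4\zeta(\lambda;\BA')$. A uniform-in-$\bk$ Weyl-type bound on the number of eigenvalues of $\BA'(\bk)$ in the window $[\lambda-t,\lambda+t]$ shows that the quantity $\delta$ of \eqref{eq:smalldist}, formed with $\BA_1=\BA'$, is bounded below by a fixed negative power of $\rho$; hence for $\rho$ large we have $\eps_\lambda\le\delta/4$, Lemma~\ref{lem:overlapidscont} applies with $\BA_1=\BA'$ and $\BA_2=\BA$, and Proposition~\ref{prop:overlapcontev} yields
\begin{equation*}
  \zeta(\lambda;\BA)\ge\zeta(\lambda;\BA')-2\eps_\lambda\ge c\lambda^{-S}-2\eps_\lambda\ge\tfrac{c}{2}\lambda^{-S}
\end{equation*}
for every $\lambda\ge\tilde\lambda$, some $\tilde\lambda\ge\lambda_0$ depending only on $d$, $\Theta$ and the symbol norms. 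In particular $\zeta(\lambda;\BA)>0$, so every such $\lambda$ lies in the interior of a spectral band; thus $[\tilde\lambda,\infty)\subset\spec(\BA)$, which is the claim.

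The main obstacle is the fibrewise reduction of the second paragraph: one has to check that every step of the argument of Section~\ref{sec:asyexp} --- the quasi-periodic truncation, the emptiness of the coupling resonant zones at energies $\sim\rho^\alpha$, the parallel gauge transform to an uncoupled operator, and the perturbation estimates of Lemmata~\ref{lem:monotonicity} and~\ref{lem:spectralperturb} --- goes through verbatim for the fibre operators $\BA(\bk)$ \emph{with all constants uniform in $\bk$}, which is precisely what the $\mathrm{I}_\infty$-representation of Remark~\ref{rem:Iinfinitybs} makes possible, and that the accumulated error $\eps_\lambda$ beats the fixed negative power $\lambda^{-S}$ delivered by Section~\ref{sec:cg}. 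The scalar overlap estimate itself, which is the source of the overlap exponent $S=S(\alpha,d)$, is left to Section~\ref{sec:cg} and is taken for granted here.
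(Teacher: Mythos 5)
Your overall architecture matches the paper's: reduce to an uncoupled operator via the fibrewise gauge transform of Theorem \ref{thm:ctou} in the $\mathrm{I}_\infty$-representation of Remark \ref{rem:Iinfinitybs}, then transfer the overlap bound back to $\BA$ via Lemma \ref{lem:overlapidscont}, Proposition \ref{prop:overlapcontev}, a Weyl bound on the fibre counting functions, and a choice of $K>S+d/\alpha$ so that $\eps_\lambda\le\delta/4$. That part is essentially the paper's proof.

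The gap is the step ``since $\BA'$ is a direct sum, $\zeta(\lambda;\BA')\ge c\lambda^{-S}$'' deduced from $\zeta(\lambda;A_j)\ge c\lambda^{-S}$ for each scalar summand. The overlap function is not superadditive (or even monotone) under direct sums. In Skriganov's characterisation \eqref{eq:overlap} one needs $\min_{\bk}\sum_j N((-\infty,\lambda+t];A_j(\bk))<\max_{\bk}\sum_j N((-\infty,\lambda-t);A_j(\bk))$, and the quasimomenta realising the min and max for $A_1$ need not do so for the other summands; the eigenvalue branches of $A_2,\dots,A_m$ can interleave with those of $A_1$ and cut the band of $\BA'$ containing $\lambda$ even though each $A_j$ separately has a wide band there. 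The paper flags exactly this: ``it could be possible \emph{a priori} that the overlap function does not stay bounded away from $0$ when direct sums are taken.'' Overcoming it is the actual content of Section \ref{sec:cg}: Proposition \ref{prop:reformulation} replaces the scalar band-overlap argument by a counting argument along a single radial segment $[\bk_1,\bk_2]$ involving the three families $\bmu,\bnu,\btau$ drawn from \emph{all} components simultaneously, and Proposition \ref{prop:counting} produces such a segment by showing that a point $\bxi_0$ exists with $n(\bxi_0;\CG)>m\,n(\bxi_0;\CZ')$ and with all relevant translates $\bxi_0+\btheta$ landing in the non-resonant sets at angles at most $\pi/4$ — estimates on the crossing sets $\CY_{ij,\omega}$ for all pairs $i,j$ being essential. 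So Proposition \ref{thm:uncoupledbs} is not ``the scalar overlap estimate taken for granted''; it is the system statement, and your proof cannot be completed without supplying this argument (or an equivalent one).
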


\begin{rem}
Saying that the parameters are chosen uniformly in the symbol norms means that
if $\BA$ and $\BA'$ are operators satisfying the conditions of Theorem \ref{thm:bs}
and for all $s,\ell$, $\snorm{\BA'}{s}{\ell} \le \snorm{\BA}{s}{\ell}$ and
$\snorm{\BA'^{\CO\CD}}{s}{\ell} \le \snorm{\BA^{\CO\CD}}{s}{\ell}$ then the
parameters obtained for $\BA$ also work for $\BA'$.
\end{rem}
Let us first observe that, as was the case in Sections \ref{sec:prelim} and
\ref{sec:asyexp},
it is sufficient to prove case (\ref{it:bspos}), case
(\ref{it:bsneg}) will then follow from the former applied to the operator $-A$. 

To prove Theorem \ref{thm:bs}, we proceed in two steps. The first one is the
following proposition, the proof of which is delayed until Section
\ref{sec:cg}.

\begin{prop}
  \label{thm:uncoupledbs}
  Let $\BA \in \BU\BS^\alpha_m \cap \BE\BS_m^\alpha$, $\alpha > 0$ be periodic,
  essentially self-adjoint, with $\BA^\CD$ of the form
  \eqref{eq:defh0}.
  Then, there exist $\tilde \lambda, S,c > 0$ such
  that
  \begin{enumerate}
    \item  \label{it:bsposuc} if $\BA$ is unbounded above, the interval $[\tilde \lambda,\infty) \subset
      \spec(\BA)$ and for every $\lambda \ge \tilde \lambda$, $\zeta(\lambda;\BA) \ge c
      \lambda^{-S}$;
    \item \label{it:bsneguc} if $\BA$ is unbounded below, the interval $(- \infty,-\tilde \lambda] \subset
      \spec(\BA)$ and for every $\lambda \ge \tilde \lambda$, $\zeta(-\lambda;\BA) \ge c
      \lambda^{-S}$.
  \end{enumerate}
  The overlap exponent $S$ depends only on $\alpha$ and $d$. The parameters
  $\tilde \lambda$ and $c$ can be chosen uniformly in the symbol norms of $\BA$ and
  $\BA^{\CO\CD}$.
\end{prop}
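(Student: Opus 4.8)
The plan is to reduce, step by step, to a combinatorial‑geometry statement about lattice–point counts, which is then the content of Section~\ref{sec:cg} proper; I will flag at the end which part is the real obstacle. First, since $\BA\in\BU\BS_m^\alpha$, I would use the decomposition recorded after Definition~\ref{def:uncoupled}: after a ($\bk$‑independent) unitary change of the reference basis of $\C^m$, $\BA = A_1\oplus\dots\oplus A_m$, where each $A_p\in\BE\BS^\alpha$ is a \emph{scalar} periodic operator on $\RB^2(\R^d)$ with principal part $a_p|\bxi|^\alpha$ ($a_p\ne 0$, and after reordering $a_1\ge\dots\ge a_m$) and a perturbation of order $\beta<\alpha$; periodicity of the blocks is inherited because the change of basis is $\bk$‑independent. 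It suffices to prove \ref{it:bsposuc}, as \ref{it:bsneguc} follows by applying it to $-\BA$; in case \ref{it:bsposuc}, $a_1>0$, i.e. $1\le m^+\le m$.

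Next, I would dispose of the blocks with $a_p<0$. By ellipticity and Lemma~\ref{lem:relative boundedness} each such $A_p$ is bounded above, so for $\lambda$ large these blocks have no spectrum near $\lambda$; working fibrewise over $\bk\in\CO$ as permitted by Remark~\ref{rem:Iinfinitybs}, their sole effect on the labelling $\lambda_j(\BA(\bk))$ near $\lambda$ is a $\bk$‑dependent but uniformly bounded shift of the index (the relative eigenvalue count in a fixed window below $\lambda$ is a lattice–point count in a ball of fixed radius), contributing only a harmless additive constant below. Then I would reduce each remaining block $A_p$, $p\le m^+$, to a model: up to the positive factor $a_p$, $A_p$ is a scalar periodic Schrödinger‑type operator treated in \cite{ParSob2010}, and conjugating by a unitary $\exp(\mathrm i\BPsi_p)\in\BS^0$ obtained from the gauge transform of Section~\ref{sec:gt} — legitimate fibrewise since each $\ell^2_\bk(\Theta;\C^m)$ is invariant under $\exp(\mathrm i\BPsi_p)$ — brings $A_p$ to $A_p^{\CD}+\BR_p$ with $\BR_p$ of order $\le 0$ (in fact arbitrarily negative after several steps) and $A_p^{\CD}$ diagonal with fibre eigenvalues $a_p|\btheta+\bk|^\alpha$ corrected, only inside thin resonant regions, by a term of order $\le\beta$. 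Applying Lemma~\ref{lem:monotonicity} (and, where a perturbation must be split off by spectral location, Lemma~\ref{lem:spectralperturb}) fibrewise together with Proposition~\ref{prop:overlapcontev} and the continuity of $\zeta$, it then suffices to establish the overlap lower bound for the block‑diagonal model $\bigoplus_{p\le m^+}A_p^{\CD}$.

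The decisive step — carried out in Section~\ref{sec:cg} — is combinatorial‑geometric, and I expect it to be the main obstacle. The subtlety is that the overlap function of a direct sum is \emph{not} in general bounded below by the minimum of the overlaps of the summands: eigenvalue branches from different blocks can interleave and shift band indices, potentially splitting a band, so one cannot simply quote \cite{ParSob2010} block by block. What saves the day is that all blocks share the same principal symbol up to the scalars $a_p$, so the unperturbed model $\bigoplus_p A_p^{\CD}(\bk)$ has fibre spectrum $\{a_p|\btheta+\bk|^\alpha\}$, synchronized rescalings of a single lattice–distance function of $\bk$. I would translate the desired bound $\zeta(\lambda;\BA)\ge c\lambda^{-S}$, via the band‑structure analysis of Section~\ref{sec:periodic}, into the statement that with $t=c\lambda^{-S}$ and a fixed reference level $\mu_0$ one has $\min_{\bk}N([\mu_0,\lambda+t);\BA(\bk))<\max_{\bk}N([\mu_0,\lambda-t];\BA(\bk))$, which forces $[\lambda-t,\lambda+t]$ into a single band; here $N([\mu_0,\mu];\BA(\bk))=\sum_{p\le m^+}\#\{\btheta\in\Theta:\ a_p|\btheta+\bk|^\alpha+(\text{resonant corr.})\le\mu\}+O(1)$, a superposition over $p$ of counts of points of $\Theta+\bk$ in balls of radius comparable to $(\mu/a_p)^{1/\alpha}$. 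Refining the lattice–point/combinatorial‑geometry estimates of \cite{Skriganov} and \cite{ParSob2010} so that they control this superposition at the several scales $a_1,\dots,a_{m^+}$ simultaneously, I would show it changes by at least one as $\bk$ sweeps $\CO$ across an energy window of width at least a constant times $\lambda^{-S}$, with $S=S(d,\alpha)$; the resonant corrections, confined to thin phase‑space regions, are absorbed exactly as in \cite{ParSob2010}. Finally I would track that the resulting $\tilde\lambda$ and $c$ depend on $\BA$ only through $\Theta$ and the symbol norms of $\BA$ and $\BA^{\CO\CD}$ — the latter bounding $\BPsi_p$ and $\BR_p$ via Corollary~\ref{cor:onestepweakgauge} and Proposition~\ref{prop:symbolestweakgauge} — and that all these dependences are monotone, which yields the asserted uniformity.
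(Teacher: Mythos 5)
Your reductions are sound and match the paper's: case (2) follows from case (1) applied to $-\BA$, the semi-bounded-above blocks are discarded (Lemma \ref{lem:semibounded}), and the gauge transform brings each remaining block to diagonal-plus-resonant form, with the remainder controlled fibrewise via Remark \ref{rem:Iinfinitybs} and Lemma \ref{lem:monotonicity}. You have also correctly identified the genuine obstruction: the overlap function of a direct sum is not controlled by the overlaps of the summands, so one cannot invoke \cite{ParSob2010} block by block, and the fibre eigenvalues are never simple for, e.g., $H\oplus H$.

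However, the step you defer --- ``refining the lattice-point estimates so that they control the superposition at the several scales $a_1,\dots,a_{m^+}$ simultaneously'' --- is the entire content of the proof, and your sketch does not contain the idea that makes it work. The paper's mechanism is twofold. First, Proposition \ref{prop:reformulation} replaces the simplicity requirement of \cite{ParSob2010} by a counting criterion: along a suitable segment $[\bk_1,\bk_2]$ the eigenvalue branches are partitioned into a family $\bmu$ of continuous \emph{increasing} branches satisfying \eqref{eq:mucrossenough}, a finite family $\bnu$ of uncontrolled branches, and a family $\btau$ of branches avoiding $[\rho^\alpha-\delta,\rho^\alpha+\delta]$ entirely; the overlap bound follows provided strictly more branches of $\bmu$ fully cross the window than $\#\bnu$. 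Second, producing such a segment requires a point $\bxi_0$ in the non-resonant set $\CG$ with two properties: (i) $n(\bxi_0;\CG)>m\,n(\bxi_0;\CZ')$, so the $\Theta$-translates of $\bxi_0$ land in the good regions of the various blocks strictly more often than in the enlarged resonant ones, and (ii) every good translate $\bxi_0+\btheta\in\CG'$ makes angle at most $\pi/4$ with $\bxi_0$, so that \emph{all} the corresponding branches $g_j(\bxi+\btheta)$ increase simultaneously along the radial interval through $\bxi_0$ (Lemma \ref{lem:increasing}). The existence of $\bxi_0$ is a volume comparison of $\vol(\CG)\asymp\delta\rho^{d-\alpha}$ against the bounds of Lemmas \ref{lem:numinter} and \ref{lem:goodcrossings} on $\CN_\CG$ and $\CU$, and condition (ii) is exactly where the interaction between distinct blocks $i\ne j$ enters, through the angular crossing sets $\CY_{ij,\omega}(\btheta)$ --- not through a multi-radius lattice-point count as you suggest. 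As written, your proposal establishes the framework but leaves the decisive step unproved.
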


The second step on the way to proving Theorem \ref{thm:bs} is to show that small
perturbations of an operator (in the sense of perturbations that
do not change the integrated density of states much) cannot open a gap if the
overlap function is large enough.

\begin{lem}
  Suppose that $\BA \in \BE\BS^\alpha_m$ satisfies the hypotheses of Theorem
  \ref{thm:bs}. Then, for every $K \in \mathbb R$, there exists an
  operator $\BA_K \in \BU\BS^\alpha_m \cap \BS\BE\BS^\alpha_m$ satisfying the
  hypothesis of Proposition \ref{thm:uncoupledbs} such that
  for every $\lambda$ large enough, we have that
  \begin{equation}\label{eq:bsequiv}
    \zeta(\lambda;\BA) 
  \ge
\zeta(\lambda;\BA_K) + \bigo{\lambda^{-K}}.\end{equation}
\end{lem}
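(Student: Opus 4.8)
The plan is to gauge-transform $\BA$ into a block-diagonal (i.e.\ $\C^m$-uncoupled) operator, up to a remainder that is negligible at energies near $\lambda$, and then to pass the comparison through the density-of-states and overlap machinery of Sections~\ref{sec:ids} and~\ref{sec:periodic}, applied fibrewise as explained in Remark~\ref{rem:Iinfinitybs}.

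\emph{Step 1 (gauge transform).} Since $\BA$ is \mpu{}, its principal symbol has the form \eqref{eq:defh0} with pairwise distinct $a_j$, so \eqref{eq:ajboundfrombelowbis} holds, and as the $a_j$ are constant, \eqref{eq:shiftsatinfinity} is trivial. Given $K$, I would apply Theorem~\ref{thm:gtsystem} with $N$ large and $\epsilon$ small (to be fixed at the end in terms of $K,\alpha,d$): there is a symmetric $\BPsi\in\BS^{\beta-\alpha}_m$ with
\begin{equation*}
\exp(-\mathrm i\BPsi)\,\BA\,\exp(\mathrm i\BPsi)=\BA^\CD+\BY^\CU+\BR_1+\BR_2,
\end{equation*}
where $\BY^\CU\in\BU\BS^\beta_m$ and $\BR_1\in\BS^{-N}_m$ are symmetric and $\norm{\BR_2}_{\mathsf H^\beta_m\to\mathsf H^0_m}<\epsilon$. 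Set $\BA_K:=\BA^\CD+\BY^\CU$. Then $\BA_K\in\BU\BS^\alpha_m\cap\BS\BE\BS^\alpha_m$ (as $\BA^\CD\in\BD\BE\BS^\alpha_m$ and $\BY^\CU\in\BS^\beta_m$ with $\beta<\alpha$), it is periodic, and symmetric, hence self-adjoint by Proposition~\ref{prop:closedness and self-adjointness}; its principal symbol is still $\diag(a_1|\bxi|^\alpha,\dotsc,a_m|\bxi|^\alpha)$, so $\BA_K$ satisfies the hypotheses of Proposition~\ref{thm:uncoupledbs}. By Remark~\ref{rem:Iinfinitybs}, $\exp(\mathrm i\BPsi)$ restricts to a unitary on each fibre $\ell^2_\bk(\Theta;\C^m)$, so $\BA(\bk)$ and $(\BA_K+\BR_1+\BR_2)(\bk)$ are unitarily equivalent for every $\bk$; hence the band functions and overlap functions of $\BA$ and of $\BA_K+\BR_1+\BR_2$ coincide.

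\emph{Step 2 (control of the remainder near $\lambda$).} Write $\BR:=\BR_1+\BR_2$. On a fibre $\ell^2_\bk(\Theta;\C^m)$ the eigenvalues of $\BA_K(\bk)$ in a window of radius $t=\bigo{\lambda}$ about $\lambda$ come from frequencies $\bxi$ with $|\bxi|\asymp\rho:=\lambda^{1/\alpha}$. Decomposing the fibre into frequency annuli of width $\rho^{2/3}$ and arguing exactly as in the proofs of Lemma~\ref{lem:qpenough} and Theorem~\ref{thm:ctoucut} --- using Lemma~\ref{lem:spectralperturb} for the part of $\BR$ spectrally far from the window, and Proposition~\ref{prop:normorder} (for $\BR_1\in\BS^{-N}_m$ the norm on frequencies $\asymp\rho$ is $\ll\rho^{-N}\snorm{\BR_1}{-N}{0}$) --- I would obtain, uniformly in $\bk$ and for every interval $I$ in the window,
\begin{equation*}
N\big(I_{-\epsilon'};\BA_K(\bk)\big)\le N\big(I;(\BA_K+\BR)(\bk)\big)\le N\big(I_{\epsilon'};\BA_K(\bk)\big),
\end{equation*}
with $\epsilon'\ll\rho^{-N}+\rho^\beta\epsilon$. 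When $\BB$ is quasi-periodic one takes $\BR_2=0$; in general $\BB$ is a periodic pseudo-differential operator, so its Fourier coefficients decay rapidly in $\btheta$, and truncating the frequency set at a $\rho$-dependent radius and discarding the low-frequency part via Lemma~\ref{lem:spectralperturb} reduces to the quasi-periodic case, the uncoupled operators for different truncations agreeing modulo arbitrarily low order so that a single $\BA_K$ is extracted --- the same device as in Theorems~\ref{thm:ctou} and~\ref{thm:ctoucut}. Choosing $N$ large and $\epsilon$ small makes $\epsilon'$ smaller than any prescribed power $\lambda^{-M}$; fix $\epsilon'\ll\lambda^{-K}$ and also $\epsilon'\le\delta/4$, where $\delta$ is the quantity \eqref{eq:smalldist} for $\BA_1=\BA_K$ and $t=4\zeta(\lambda;\BA_K)$. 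That $\delta\ge c'\lambda^{-S'}$ for some $S'$ depending only on $\alpha,d$ follows from the uniform-in-$\bk$ Weyl bound $N(\lambda;\BA_K(\bk))=\bigo{\lambda^{d/\alpha}}$ (so the window contains $\bigo{\lambda^{(d-1)/\alpha}}$ eigenvalues, hence a gap of size $\gtrsim\zeta(\lambda;\BA_K)\lambda^{-(d-1)/\alpha}$) together with $\zeta(\lambda;\BA_K)\ge c\lambda^{-S}$ from Proposition~\ref{thm:uncoupledbs}.

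\emph{Step 3 (from counting functions to the overlap, and the obstacle).} Applying the estimate of Step~2 with $\BA_1=\BA_K$, $\BA_2=\BA_K+\BR$, $t=4\zeta(\lambda;\BA_K)$, Lemma~\ref{lem:overlapidscont} yields continuous labellings of the fibre eigenvalues with $\big|\lambda_j(\BA_K(\bk))-\lambda_j((\BA_K+\BR)(\bk))\big|\le\epsilon'$ whenever either eigenvalue lies in $[\lambda-4\zeta(\lambda;\BA_K),\lambda+4\zeta(\lambda;\BA_K)]$, and Proposition~\ref{prop:overlapcontev} gives $\zeta(\lambda;\BA_K+\BR)\ge\zeta(\lambda;\BA_K)-2\epsilon'$. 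Since $\zeta(\lambda;\BA)=\zeta(\lambda;\BA_K+\BR)$ by Step~1 and $2\epsilon'\ll\lambda^{-K}$, this is the asserted $\zeta(\lambda;\BA)\ge\zeta(\lambda;\BA_K)+\bigo{\lambda^{-K}}$ for all $\lambda$ large. The main obstacle is Step~2: making the bound on $\BR$ uniform in $\lambda$ and, when $\BB$ is genuinely periodic rather than quasi-periodic, reducing to the quasi-periodic setting while still producing one $\lambda$-independent uncoupled operator $\BA_K$; this is exactly where the annulus decomposition, Lemma~\ref{lem:spectralperturb}, and the stabilisation-modulo-low-order argument do the work, together with the non-degeneracy of $\delta$ supplied by Proposition~\ref{thm:uncoupledbs}.
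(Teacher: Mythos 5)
Your proposal is correct and follows essentially the same route as the paper: the paper simply invokes Theorem \ref{thm:ctou} together with Remark \ref{rem:Iinfinitybs} to get the fibrewise two-sided counting-function comparison with $\eps=\lambda^{-\alpha-K}$, lower-bounds $\delta$ by $\zeta(\lambda;\BA)$ divided by the Weyl bound $\bigo{\lambda^{d/\alpha}}$ on the number of fibre eigenvalues in the window, and concludes via Lemma \ref{lem:overlapidscont} and Proposition \ref{prop:overlapcontev} by taking $K>S+d/\alpha$. The one caution is that your Step 1, with a single $\lambda$-independent $\BPsi$ and a fixed $\eps$, cannot by itself yield $\eps'\ll\lambda^{-K}$ for all large $\lambda$ (the term $\rho^{\beta}\eps$ does not decay when $\beta\ge 0$); you correctly flag that the $\rho$-dependent truncation and annulus decomposition of Theorem \ref{thm:ctou} is what actually does the work, which is precisely the result the paper's proof cites.
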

\begin{proof}
  From Remark \ref{rem:Iinfinitybs} and Theorem \ref{thm:ctou}, it is possible
  for any $K$ to find an operator $\BA_K \in \BU\BS_m^\alpha$ such that for $\eps = \lambda^{-\alpha - K}$ and
  any interval $I \subset [\lambda - 2 \zeta(\lambda;\BA),\lambda + 2
  \zeta(\lambda;\BA)]$ we
  have that for every $\bk \in \CO^\dagger$,
  \begin{equation} \label{eq:idsclose2}
    N(I;\BA(\bk)) \le N(I_\eps;\BA_K(\bk)) \quad \text{and} \quad N(I;\BA_K(\bk)) \le
    N(I_\eps;\BA(\bk)).
  \end{equation}
  As in \eqref{eq:smalldist}, put
  \begin{equation}
    \begin{aligned}
        \delta &= \min_{\bk \in \CO} \max\set{\dist(\mu;\spec(\BA(\bk)))
          : \mu \in [\lambda - 2\zeta(\lambda;\BA),\lambda +
        2\zeta(\lambda;\BA)]} \\
        & \ge C \frac{\zeta(\lambda;\BA)}{\max_\bk N([\lambda -
          2\zeta(\lambda;\BA),\lambda + 2 \zeta(\lambda;\BA)];\BA(\bk))}
    \end{aligned}
  \end{equation}
  for some $C> 0$. By Lemma \ref{lem:overlapidscont} and Proposition \ref{prop:overlapcontev},  if $\eps
  < \delta/4$, then \eqref{eq:idsclose2} implies \eqref{eq:bsequiv}. 
  It follows from Proposition \ref{thm:uncoupledbs} that $\zeta(\lambda;\BA) \ge
  \lambda^{-S}$ for some $S$. Weyl's law implies that
  \begin{equation}
  \max_\bk N([\lambda - 2
  \zeta(\lambda;\BA),\lambda + 2\zeta(\lambda;\BA)];\BA(\bk)) =
  \bigo{\lambda^{d/\alpha}}.
  \end{equation}
  It follows that by choosing $K > S + \frac d \alpha$, we have $\eps < \delta/4$ for
  $\lambda$ large enough, finishing the proof.
\end{proof}

Before making our constructions explicit in the next section, we prove the following lemma which
indicates that for uncoupled operators, we may suppose without loss of
generality that they are semi-bounded below.
\begin{lem}
  \label{lem:semibounded}
  Let $\BA \in \BU\BS^\alpha_m \cap \BS\BE\BS^\alpha_m$ is 
  self-adjoint and periodic, and suppose that
  \begin{equation}
    \BA = \BA_+ \oplus \BA_-
  \end{equation}
  with $\BA_+$ semi-bounded below and $\BA_-$ semi-bounded above. Then, Theorem \ref{thm:bs}, part \eqref{it:bspos} holds for
$\BA$ if and only if it holds for $\BA_+$.
  Similarly, Theorem \ref{thm:bs}, part \eqref{it:bsneg} holds if and only if it
  holds for $\BA_-$.
\end{lem}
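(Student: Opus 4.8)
The plan is to deduce the statement from the fact that at high energy the spectra of $\BA$ and of $\BA_+$ coincide, together with the comparison tools of Lemma~\ref{lem:overlapidscont} and Proposition~\ref{prop:overlapcontev}. First I would dispose of the degenerate cases. Since $\BA\in\BU\BS^\alpha_m$ decomposes as $\bigoplus_i A_i$ with principal symbols $a_i|\bxi|^\alpha$, $a_i\neq 0$, we may take $\BA_+=\bigoplus_{a_i>0}A_i$ and $\BA_-=\bigoplus_{a_i<0}A_i$; if either summand is absent then $\BA$ equals the other one and the assertion is vacuous (for part~\ref{it:bspos}, $\BA$ is then either equal to $\BA_+$ or is semi-bounded above). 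So assume both summands are non-trivial. Then $\BA_+$ is elliptic of positive order and semi-bounded below, hence unbounded above, and $\lambda_-:=\sup\spec(\BA_-)<\infty$; consequently $\BA$ is unbounded above, and on each fibre $\BA(\bk)=\BA_+(\bk)\oplus\BA_-(\bk)$ with $\spec(\BA_-(\bk))\subset(-\infty,\lambda_-]$. Hence, using the fibrewise counting functions of Remark~\ref{rem:Iinfinitybs}, for every interval $I\subset(\lambda_-,\infty)$ and every $\bk\in\CO$ one has
\begin{equation}
  N(I;\BA(\bk))=N(I;\BA_+(\bk)).
\end{equation}
In particular $\spec(\BA)\cap(\lambda_-,\infty)=\spec(\BA_+)\cap(\lambda_-,\infty)$, so $[\tilde\lambda,\infty)\subset\spec(\BA)$ holds for some $\tilde\lambda>\lambda_-$ if and only if $[\tilde\lambda,\infty)\subset\spec(\BA_+)$; what remains is to compare the overlap functions at high energy, part~\ref{it:bsneg} following from part~\ref{it:bspos} applied to $-\BA=(-\BA_-)\oplus(-\BA_+)$ exactly as in the other reductions of this section.

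The second ingredient is an a priori sub-linear bound: there is $\lambda_1$ such that $\zeta(\lambda;\BA)<\tfrac14(\lambda-\lambda_-)$ and $\zeta(\lambda;\BA_+)<\tfrac14(\lambda-\lambda_-)$ for all $\lambda\ge\lambda_1$. I would prove this from the uniform Weyl law for the fibre operators: for $\BA_+(\bk)$ (and for $\BA(\bk)$ restricted to the part of the spectrum above $\lambda_-$, where only the branches with $a_i>0$ contribute) one has $N((\lambda_-+1,\mu];\BA_+(\bk))=c_+\mu^{d/\alpha}\,(1+o(1))$ uniformly in $\bk$ as $\mu\to\infty$. If a band $\iota_j(\BA_+)$ contained an interval $[\lambda-t,\lambda+t]$ with $t\ge\tfrac14(\lambda-\lambda_-)$, then evaluating the band function $\lambda_j(\BA_+(\cdot))$ at quasimomenta realising the two endpoints, this uniform law would force $c_+(\lambda-t)^{d/\alpha}(1+o(1))$ and $c_+(\lambda+t)^{d/\alpha}(1+o(1))$ to differ by no more than the number of band functions crossing the level $\lambda_-+1$ --- a finite number, since only finitely many bands of $\BA_+$, resp.\ of $\BA$, lie near $\lambda_-$ --- which is impossible for $\lambda$ large, as $(\lambda+t)/(\lambda-t)$ is then bounded away from $1$. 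I expect this bookkeeping, and in particular the control of the bi-infinite band labelling of the non-semi-bounded operator $\BA$ near $\lambda_-$, to be the main technical point; it is nonetheless standard.

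With these in hand the conclusion is immediate. Fix $\lambda\ge\max\{\lambda_1,\lambda_-+1\}$ and apply Lemma~\ref{lem:overlapidscont} with $\BA_1=\BA_+$, $\BA_2=\BA$, on the interval $[\lambda-t,\lambda+t]$ with $t:=4\zeta(\lambda;\BA_+)$: by the a priori bound this interval lies in $(\lambda_-,\infty)$, so the hypothesis \eqref{eq:condids} holds with $\epsilon=0$ --- it is precisely the identity $N(I;\BA(\bk))=N(I;\BA_+(\bk))$ for $I\subset[\lambda-t,\lambda+t]$ --- and the quantity $\delta$ there is positive by discreteness of the fibre spectra. This yields labellings of the eigenvalues of $\BA_+(\bk)$ and $\BA(\bk)$ coinciding on that window, and Proposition~\ref{prop:overlapcontev} (again with $\epsilon=0$) then gives $\zeta(\lambda;\BA)\ge\zeta(\lambda;\BA_+)$. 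Interchanging the roles of $\BA$ and $\BA_+$ and taking instead $t=4\zeta(\lambda;\BA)$, together with the a priori bound for $\BA$, gives the reverse inequality; hence $\zeta(\lambda;\BA)=\zeta(\lambda;\BA_+)$ for all large $\lambda$. Combined with the spectral identity of the first paragraph, this shows part~\ref{it:bspos} of Theorem~\ref{thm:bs} holds for $\BA$ if and only if it holds for $\BA_+$, and part~\ref{it:bsneg} follows by passing to $-\BA$. (For the use made of this lemma in proving Proposition~\ref{thm:uncoupledbs} only the implication ``$\BA_+\Rightarrow\BA$'' is needed, and that direction invokes the a priori bound only for the semi-bounded operator $\BA_+$, for which the Weyl law carries no labelling subtleties.)
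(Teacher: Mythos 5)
Your proof is correct and rests on the same observation as the paper's: since $\BA_-$ is bounded above, $\spec(\BA)$, $\spec(\BA_+)$, and the fibre counting functions on intervals all coincide above $\sup\spec(\BA_-)$. The paper's own proof is essentially just this one line and treats the transfer of the overlap-function bound as immediate, whereas you spell that transfer out via Lemma~\ref{lem:overlapidscont} and Proposition~\ref{prop:overlapcontev} with $\eps=0$ together with an a priori Weyl-law bound on $\zeta$ — additional but consistent detail rather than a different route.
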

\begin{proof}
  We prove it for the implication of Theorem \ref{thm:bs} part \eqref{it:bspos},
  the other one follows by replacing $\BA$ by $-\BA$. Since $\BA_-$ is semi-bounded
  above, there is some $M > 0$ such that $\spec(\BA_-) \cap (M,\infty) =
  \varnothing$. It therefore follows that for $\lambda > M$, 
  \begin{equation}
    \spec(\BA_+) \cap [\lambda,\infty) = \spec(\BA) \cap [\lambda,\infty)
  \end{equation}
  and our claim follows.
\end{proof}

\section{Bethe--Sommerfeld for uncoupled operators} \label{sec:cg}

In this section we prove Proposition \ref{thm:uncoupledbs}. 
Remark \ref{rem:Iinfinitybs}, Lemmas \ref{lem:overlapidscont} and
\ref{lem:semibounded}; and
Theorem \ref{thm:ctou} tell us that it is sufficient to prove Proposition
\ref{thm:uncoupledbs} after a few simplifying assumptions.
First, from now on we assume that $\BA \in \BU\BS^\alpha_m \cap
\BS\BE\BS_m^\alpha$ is semi-bounded below, in other words $a_j > 0$ for all $1
\le j \le m$. Then, we
also suppose that we have already applied the gauge transform in each component,
and that we have removed the remainder that changes the IDS by
$\bigo{\rho^{-K}}$ for some large $K$. In other words, we suppose that the
symbol of $\BA$ is given by
\begin{equation}
  \ba(\bx,\bxi) = \ba^\CD(\bxi) + \bbb^\CR(\bx,\bxi),
\end{equation}
where we have abused notation and used the same symbol for $\BA$ and the operator
after gauge transform. 

The study of the Bethe-Sommerfeld property in
high dimensions
\cite{barbpar,parnovski,ParSob2010} relies on variations of a certain combinatorial
geometric argument. The strategy goes as follows, using the notation from
\cite{ParSob2010}. First, find a function $g : \R^d \to
\spec(\BA)$ such that on fibres of the quotient map $\R^d \to
\R^d/\Theta$, $g$ is a bijection onto $\spec(\BA(\bk))$. Such a function
can be chosen in such a way that $$\abs{g(\bxi) - \abs \bxi^\alpha} =
\smallo{\abs \bxi^\alpha}.$$ 
The goal is then to show that for any large energy $\rho^\alpha$ the pre-image of some small
interval $[\rho^\alpha - \delta,\rho^\alpha + \delta]$ contains a curve connecting 
$g^{-1}(\rho^\alpha - \delta)$ and $g^{-1}(\rho^\alpha + \delta)$, and that $g$ is continuous along that curve.
Here, $\delta$ is a parameter that depends
on $\rho$ and corresponds to the overlap length at energy level $\rho^\alpha$. 

In order to show that such a curve exists we split the pre-image
$g^{-1}([\rho^\alpha - \delta, \rho^\alpha + \delta])$ into  bad (resonant) and good
(non-resonant) regions. The latter are defined in such a way that $g$ is
continuous and radially increasing within
them. The idea in \cite{barbpar,parnovski,ParSob2010} was to find a small radial interval in the
non-resonant region where the eigenvalues of $\BA(\bk)$ are all simple along that
small interval, which in turn gives a lower bound for the overlap function. 

This is done by showing that not only is the angular measure of the
resonant region small, but that the total measure of intersections of 
translates of the resonant regions by elements of $\Theta$ with the
non-resonant region is also small, in comparison to the volume of the
non-resonant region. This implies the existence of a radial interval along which
translations by $\btheta \in \Theta$ stay inside the non-resonant region,
which gives us the desired interval where the eigenvalues of $A(\bk)$ are
simple and increasing.

In the case of an operator $\BA \in \BU\BS^\alpha_m$ acting in $\RL^2(\R^d;\C^m)$, hoping for the
eigenvalues of $\BA(\bk)$ to be simple is unrealistic. However, such a requirement
is not necessary. Indeed, suppose that the eigenvalues of operator $H$ acting in
$\RL^2(\R^d;\C)$ are simple.
Then, since the spectrum of $H \oplus H$ is the same as for $H$, 
$H \oplus H$ enjoys the Bethe--Sommerfeld property if and only if it $H$ also does. However, the eigenvalues of $(H \oplus H) (\bk)$ are always at least double.
This means that we need a new way of reasoning. We obtain the following
conditions, which are sufficient to obtain lower bounds on the overlap function.

\begin{prop}
  \label{prop:reformulation}
  Suppose that there is an interval $[\bk_1,\bk_2]\subset \CO^\dagger$ and three
  families of real-valued continuous functions $\bmu$, $\bnu$
  and $\btau$ on $[\bk_1,\bk_2]$, satisfying the following properties.
  \begin{itemize}
    \item The equality of multisets (i.e. taking multiplicity into account)
      \begin{equation}
        \label{eq:firstbp}
        \begin{aligned}
          \set{\lambda_p(\BA(\bk)) ; p \in \N} &= \set{\mu(\bk) : \mu \in \bmu} \cup
        \set{\nu(\bk) : \nu \in \bnu} \cup \\ &\qquad \cup \set{\tau(\bk)
        : \tau \in \btau}
      \end{aligned}
    \end{equation}
      holds for all $\bk \in [\bk_1,\bk_2]$.
    \item For every $\mu \in \bmu$, $\mu$ is increasing as $\bk$ goes from
      $\bk_1$ to $\bk_2$. Furthermore, either
      \begin{equation}
        \label{eq:mucrossenough}
        \mu(\bk_1) < \rho^\alpha - \delta \qquad \text{or} \qquad \mu(\bk_2) >
        \rho^\alpha + \delta.
      \end{equation}
    \item We have that $\#\bnu < \infty$. Furthermore,
      \begin{equation} \label{eq:secondbp} \#\set{\mu \in \bmu : [\rho^\alpha - \delta, \rho^\alpha + \delta] \subset
      \mu([\bk_1,\bk_2])} > \#\bnu.\end{equation}
    \item For all $\tau \in \btau$, we have that
      \begin{equation} \label{eq:thirdbp} \tau([\bk_1,\bk_2]) \cap [\rho^\alpha - \delta, \rho^\alpha +
      \delta] = \varnothing.\end{equation}
  \end{itemize}
  Then, $\zeta(\rho^\alpha;\BA) \ge \delta$.
\end{prop}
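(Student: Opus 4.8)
The plan is to reduce the conclusion $\zeta(\rho^\alpha;\BA)\ge\delta$ to a single inequality between eigenvalue counting functions evaluated at the two endpoints $\bk_1,\bk_2$, and then to establish that inequality by inspecting the multiset decomposition \eqref{eq:firstbp} one family at a time.

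First I would recall from Section~\ref{sec:periodic} that each band $\iota_j=\bigcup_{\bk\in\CO^\dagger}\lambda_j(\BA(\bk))$ is a closed bounded interval, being the image of the compact connected torus $\CO=\R^d/\Theta$ under the continuous map $\lambda_j(\BA(\cdot))$. By the definition \eqref{eq:defnoverlap} of the overlap function it therefore suffices to produce a single index $j^\ast\in\N$ with $\lambda_{j^\ast}(\BA(\bk_1))\le\rho^\alpha-\delta$ and $\lambda_{j^\ast}(\BA(\bk_2))\ge\rho^\alpha+\delta$: then $[\rho^\alpha-\delta,\rho^\alpha+\delta]$ lies between these two values of $\lambda_{j^\ast}$ and hence inside the interval $\iota_{j^\ast}$. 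Set $n_1:=\#\{p\in\N:\lambda_p(\BA(\bk_1))\le\rho^\alpha-\delta\}$ and $n_2:=\#\{p\in\N:\lambda_p(\BA(\bk_2))<\rho^\alpha+\delta\}$; both are finite because the fibre spectra are discrete. The natural choice is $j^\ast:=n_1$, for which $\lambda_{n_1}(\BA(\bk_1))\le\rho^\alpha-\delta$ by construction; and provided one can show $n_1>n_2$, one gets $n_1\ge n_2+1\ge1$ and $\lambda_{n_1}(\BA(\bk_2))\ge\lambda_{n_2+1}(\BA(\bk_2))\ge\rho^\alpha+\delta$, the last inequality holding because $\lambda_{n_2+1}(\BA(\bk_2))$ is the smallest eigenvalue of $\BA(\bk_2)$ that is not strictly below $\rho^\alpha+\delta$. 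So the whole proof reduces to proving $n_1>n_2$.

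The key step is to evaluate $n_1-n_2$ using \eqref{eq:firstbp}. At a fixed $\bk$ the multiset $\{\lambda_p(\BA(\bk))\}_p$ is the disjoint multiset union of $\{\mu(\bk)\}_{\mu\in\bmu}$, $\{\nu(\bk)\}_{\nu\in\bnu}$ and $\{\tau(\bk)\}_{\tau\in\btau}$, so counting elements with a prescribed value is additive along this partition and $n_1-n_2$ splits into three contributions. For the $\btau$-contribution I would use \eqref{eq:thirdbp}: since $\tau([\bk_1,\bk_2])$ is connected and disjoint from $[\rho^\alpha-\delta,\rho^\alpha+\delta]$, it lies entirely below $\rho^\alpha-\delta$ or entirely above $\rho^\alpha+\delta$; either way $\tau$ is counted in $n_1$ exactly when it is counted in $n_2$, so the $\btau$-contribution vanishes. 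For the $\bnu$-contribution, I discard it crudely: it is bounded below by $-\#\bnu$, finite by hypothesis, since each of the finitely many $\nu$'s contributes at least $-1$. For the $\bmu$-contribution, monotonicity does the work: each $\mu$ is continuous and non-decreasing on $[\bk_1,\bk_2]$, so $\mu([\bk_1,\bk_2])=[\mu(\bk_1),\mu(\bk_2)]$, and the $\mu$-term is $+1$ exactly when $\mu(\bk_1)\le\rho^\alpha-\delta$ and $\mu(\bk_2)\ge\rho^\alpha+\delta$, i.e. exactly when $[\rho^\alpha-\delta,\rho^\alpha+\delta]\subset\mu([\bk_1,\bk_2])$, while it is $-1$ exactly when $\rho^\alpha-\delta<\mu(\bk_1)\le\mu(\bk_2)<\rho^\alpha+\delta$ — and that is excluded by \eqref{eq:mucrossenough}. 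Hence the $\bmu$-contribution equals precisely $\#\{\mu\in\bmu:[\rho^\alpha-\delta,\rho^\alpha+\delta]\subset\mu([\bk_1,\bk_2])\}$. Summing, $n_1-n_2\ge\#\{\mu\in\bmu:[\rho^\alpha-\delta,\rho^\alpha+\delta]\subset\mu([\bk_1,\bk_2])\}-\#\bnu>0$ by \eqref{eq:secondbp}, which gives $n_1>n_2$ and closes the argument.

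The argument is short and entirely combinatorial; the only genuinely delicate part is the bookkeeping of strict versus non-strict inequalities. One is forced to count $\le\rho^\alpha-\delta$ at $\bk_1$ but $<\rho^\alpha+\delta$ at $\bk_2$ for the candidate band $\iota_{j^\ast}$ to straddle the closed window, and it must be checked that this asymmetric choice meshes exactly with the strict inequalities in \eqref{eq:mucrossenough} (needed to rule out the $-1$ case for the $\bmu$-terms) and with the open/closed dichotomy used for the $\btau$-terms; everything else — additivity of the multiset counts, finiteness of all cardinalities involved, and the interval property of $\iota_{j^\ast}$ — is routine. It is worth noting that no quantitative estimate enters here at all: the real work of the Bethe--Sommerfeld argument lies in the construction, carried out in the rest of this section, of the segment $[\bk_1,\bk_2]$ and of the families $\bmu,\bnu,\btau$ obeying the four conditions \eqref{eq:firstbp}, \eqref{eq:mucrossenough}, \eqref{eq:secondbp} and \eqref{eq:thirdbp}.
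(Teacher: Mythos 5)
Your proof is correct and follows essentially the same route as the paper: reduce the claim to the counting inequality $N(\rho^\alpha-\delta;\BA(\bk_1))-N(\rho^\alpha+\delta;\BA(\bk_2))\ge 1$ and then verify it family by family using \eqref{eq:thirdbp} for $\btau$, the trivial bound $\#\bnu$ for $\bnu$, and monotonicity plus \eqref{eq:mucrossenough}, \eqref{eq:secondbp} for $\bmu$. The only cosmetic difference is that the paper quotes Skriganov's characterisation \eqref{eq:overlap} of the overlap function, whereas you re-derive the needed consequence directly from the band structure (and are, if anything, more careful with the strict/non-strict inequality bookkeeping).
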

\begin{proof}
  Since $\BA$ is semi-bounded below, we can use Skriganov's characterisation
  \cite{Skriganov} of the overlap function as
  \begin{equation}
    \label{eq:overlap}
    \zeta(\rho^\alpha,\BA) = \sup\set{t : \min_\bk N( (-\infty,\rho^\alpha +
    t];\BA(\bk)) < \max_\bk N( (-\infty,\rho^\alpha - t);\BA(\bk))}.
  \end{equation}
  It  then follows that if
  \begin{equation}
    \label{eq:goalovlp}
    N(\rho^\alpha - \delta;\BA(\bk_1)) - N(\rho^\alpha + \delta;\BA(\bk_2)) \ge 1,
  \end{equation}
  then $\zeta(\rho^\alpha;\BA) \ge \delta$. Equation \eqref{eq:firstbp} implies that we
  can separate the differences of each family
  $\bmu$, $\bnu$ and $\btau$ in \eqref{eq:goalovlp}. Equation
  \eqref{eq:thirdbp} ensures that 
  \begin{equation}
    \label{eq:diff1}
    \#\set{\tau \in \btau : \tau(\bk_1) \le \rho^\alpha - \delta} - 
    \#\set{\tau \in \btau : \tau(\bk_2) \le \rho^\alpha + \delta} = 0.
  \end{equation}
  We also have that
  \begin{equation} 
    \label{eq:diff2}
   \bigg| \#\set{\nu \in \bnu :  \nu(\bk_1) < \rho^\alpha - \delta} - 
   \#\set{\nu \in \bnu :  \nu(\bk_2) \le \rho^\alpha + \delta}\bigg| \le \#\bnu.
  \end{equation}
  Finally, from \eqref{eq:mucrossenough}, \eqref{eq:secondbp} and the fact that all functions
  $\mu \in \bmu$ are increasing, we have that
  \begin{equation}
    \label{eq:diff3}
  \#\set{\mu \in \bmu :  \mu(\bk_1) \le \rho^\alpha - \delta} - 
  \#\set{\mu \in \bmu :  \mu(\bk_2) \le \rho^\alpha + \delta} > \#\bnu.
  \end{equation}
  Indeed, assumption \eqref{eq:mucrossenough} ensures that every $\mu \in \bmu$
  contributes at least $0$ to the lefthand side of \eqref{eq:diff3}, whereas
  \eqref{eq:secondbp} ensures that there at least $\#\bnu + 1$ functions in
  $\bmu$ contributing $1$.  
  Combining \eqref{eq:diff1}--\eqref{eq:diff3} yields \eqref{eq:goalovlp}.
\end{proof}

In the remainder of this section, we therefore set out to prove the existence of
the interval $[\bk_1,\bk_2]$ as well as of the families of functions
$\bmu,$ $\bnu$ and $\btau$ satisfying the hypotheses
of Proposition \ref{prop:reformulation}. We will, as we go along, use relevant results from \cite{ParSob2010},
mostly about properties of the function $g$ and volume of various sets
associated with the preimage.

\subsection{A description of the spectrum}

In \cite{ParSob2010}, Parnovski and Sobolev have constructed functions $g_j : \R^d \to
\spec(A_j)$. When restricted to a coset $\set \bxi + \Theta$, $g_j$ is a
bijection (respecting multiplicity) with $\spec(A_j(\set\bxi))$. It is useful to
use $g_j$ to define a family of functions on the fundamental cell $\CO$
indexed by $\bp := (j,\btheta) \in \set{1,\dotsc,m} \times \Theta =:
\tilde \Theta$ as
\begin{equation}
  g_\bp(\bk) := g_j(\bk + \btheta).
\end{equation}
Since we assume in this section that 
\begin{equation}
  \BA = A_1 \oplus \dotso \oplus A_m \in \BU\BS_m^\alpha,
\end{equation}
and that is is a semi-boudned below operator, we have by construction that
$\set{\lambda_p(\BA(\bk)) : p \in \N}$ is a relabeling of
$\set{g_{(j,\btheta)}(\bk) : (j,\btheta) \in \tilde \Theta}$ in
increasing order. In particular, the latter set is discrete, bounded from below
and accumulates only at infinity. 

To prove Proposition \ref{thm:uncoupledbs}, it remains to show that there exists
$S \in \R$ such that the hypotheses of Proposition
\ref{prop:reformulation} hold for some $\delta \gg \rho^S$. The families
$\bmu$, $\bnu$ and $\btau$ will be realised as a partition of the set $\set{g_\bp
: \bp \in \tilde \Theta}$.
\begin{prop}
  \label{prop:counting}
    There exist $\rho_0 > 0$ and $S \in \R$ (depending on
    $\set{a_1,\dotsc,a_m}$, $\alpha$ and the implicit constants in
  \eqref{eq:condgsmall}) such that for all $\rho \ge \rho_0$, there exists an
    interval $[\bk_1,\bk_2]$ and three families of functions $\bmu, \bnu, \btau$ on
    $[\bk_1,\bk_2]$ satisfying the hypotheses
    \eqref{eq:firstbp}--\eqref{eq:thirdbp} of Proposition
    \ref{prop:reformulation}. 
\end{prop}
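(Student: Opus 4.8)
The plan is to adapt the combinatorial geometric machinery of \cite{ParSob2010} to the multi-component setting, carefully tracking the dependence on $m$. First I would recall the construction of the functions $g_j : \R^d \to \spec(A_j)$ from \cite{ParSob2010}: each $g_j$ agrees with $\abs\bxi^\alpha$ up to lower order corrections coming from the gauge-transformed symbol $\bbb^\CR$, and on every coset $\bxi + \Theta$ it enumerates $\spec(A_j(\set\bxi))$ with multiplicity. One then lifts these to the functions $g_\bp(\bk) = g_j(\bk+\btheta)$ on $\CO^\dagger$ indexed by $\bp = (j,\btheta) \in \tilde\Theta = \set{1,\dots,m}\times\Theta$, and it suffices to partition $\set{g_\bp : \bp \in \tilde\Theta}$ into the three families $\bmu,\bnu,\btau$ of Proposition \ref{prop:reformulation}.

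Next I would set up the resonant/non-resonant dichotomy in the phase space $\R^d$, following \cite{ParSob2010} but now with the finitely many radial profiles $a_1\abs\bxi^\alpha,\dots,a_m\abs\bxi^\alpha$. Fix a large energy $\rho^\alpha$ and the annulus $\abs\bxi \approx \rho^{1/1}$ (more precisely where $a_j\abs\bxi^\alpha \approx \rho^\alpha$ for the relevant $j$). The resonant set $\mathcal{B}$ is the union, over $\btheta \in \Theta$ and over pairs $(j,k)$, of thin neighbourhoods of the hypersurfaces where $\abs{a_j\abs{\bxi+\btheta}^\alpha - a_k\abs\bxi^\alpha}$ is small, together with the classical self-resonances $\abs{a_j\abs{\bxi+\btheta}^\alpha - a_j\abs\bxi^\alpha}$ small; the non-resonant set $\mathcal{G}$ is the complement within the annulus. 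The key quantitative input from \cite{ParSob2010} is that the angular measure of $\mathcal{B}$, and crucially the measure of the union of $\Theta$-translates of $\mathcal{B}$ intersected with $\mathcal{G}$, is small compared to $\vol(\mathcal{G})$; since we only add a finite index set of size $m$ and finitely many pairs $(j,k)$, the bounds degrade only by a constant depending on $m$, $\set{a_j}$, $d$ and $\alpha$. This yields a radial segment $[\bk_1,\bk_2]$ (a short interval in a fixed direction, parametrising cosets) along which \emph{every} translate by $\btheta \in \Theta$ of the base point stays in $\mathcal{G}$ for all the profiles simultaneously.

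Along such a segment I would then define the partition: $\btau$ consists of those $g_\bp$ whose values stay away from $[\rho^\alpha - \delta,\rho^\alpha+\delta]$ throughout (these are the $\bp$ for which the relevant branch is nowhere near the energy level), giving \eqref{eq:thirdbp}; $\bnu$ consists of the finitely many $g_\bp$ associated to the resonant directions or to pairs $(j,k)$ with $a_j$ close to $a_k$ — since $a_j \ne a_k$ and the segment was chosen non-resonant, this set is finite with cardinality bounded by a constant $C_0(m,\set{a_j},d,\alpha)$, independent of $\rho$; and $\bmu$ consists of the remaining $g_\bp$, which on $\mathcal{G}$ are continuous and strictly radially increasing by the non-resonance condition (the gradient of $a_j\abs\bxi^\alpha$ in the radial direction dominates the perturbation), giving the monotonicity in \eqref{eq:mucrossenough}. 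To get \eqref{eq:secondbp}, i.e. that more than $\#\bnu$ of the $\mu \in \bmu$ actually cross the full interval $[\rho^\alpha-\delta,\rho^\alpha+\delta]$ as $\bk$ runs over $[\bk_1,\bk_2]$, one counts lattice points: the number of $\btheta \in \Theta$ with $a_j\abs{\bk+\btheta}^\alpha$ passing through the energy window over the chosen segment grows like a positive power of $\rho$ (by a volume/Weyl-type count on the annulus), which for $\delta \asymp \rho^{-S}$ with $S$ chosen appropriately exceeds the fixed bound $C_0$ once $\rho \ge \rho_0$.

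The main obstacle I expect is precisely this last counting step combined with choosing the correct exponent $S$: one must simultaneously make $\delta$ large enough that $\gg \#\bnu$ branches traverse the window, yet small enough (and the segment $[\bk_1,\bk_2]$ short enough) that the non-resonance property — hence monotonicity and continuity of the $\mu$'s and separation of the $\tau$'s — is preserved uniformly. This is where the delicate volume estimates of \cite{ParSob2010} for translates of resonant regions are essential, and where the uniformity of $S$ in $d$ and $\alpha$ only (and of $\rho_0, c$ in the symbol norms) must be extracted. Once Proposition \ref{prop:counting} is established, Proposition \ref{thm:uncoupledbs} follows immediately by feeding $\bmu,\bnu,\btau$ into Proposition \ref{prop:reformulation}, which gives $\zeta(\rho^\alpha;\BA) \ge \delta \gg \rho^{-S}$ for all $\rho \ge \rho_0$; the semi-bounded-below reduction is justified by Lemma \ref{lem:semibounded}, and the unbounded-below case \eqref{it:bsneguc} follows by applying the result to $-\BA$.
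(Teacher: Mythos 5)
Your overall architecture matches the paper's: build the branch functions $g_\bp(\bk)=g_j(\bk+\btheta)$, split phase space into resonant and non-resonant parts with the volume estimates of \cite{ParSob2010}, take a short radial segment through a well-chosen base point, and sort the branches into $\bmu$ (monotone crossings), $\bnu$ (resonant branches near the window) and $\btau$ (branches avoiding the window). The use of Lemma \ref{lem:increasing}-type monotonicity, the feeding of the three families into Proposition \ref{prop:reformulation}, and the reductions via Lemma \ref{lem:semibounded} are all as in the paper.

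However, your treatment of the key counting hypothesis \eqref{eq:secondbp} has a genuine gap. You assert that $\#\bnu$ is bounded by a constant $C_0$ independent of $\rho$ and that the number of crossing $\mu$'s grows like a positive power of $\rho$, so that the former is eventually beaten by the latter. Neither claim is justified, and both fail in general. The branches in $\bnu$ correspond to lattice translates $\btheta$ with $\bxi_0+\btheta$ in the (enlarged) resonant set $\CZ'$ inside the annulus; their number $n(\bxi_0;\CZ')$ is not bounded uniformly in $\rho$ (its average over the cell is of order $\delta\rho^{d-\alpha-\eps_0}$, which diverges when $d-\alpha$ is large). Conversely, the number of good crossings $n(\bxi_0;\CG)$ has average of order $\delta\rho^{d-\alpha}$, which with the admissible choice $\delta\asymp\rho^{s'}$ can tend to zero (e.g.\ for the two-dimensional Dirac case $d=2$, $\alpha=1$, $s'<0$), so it certainly need not grow like a positive power of $\rho$. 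The paper's mechanism is different and essential: it compares the two $\rho$-dependent integer counts \emph{pointwise}, defining the bad set $\CN=\set{\bxi: n(\bxi;\CG)\le m\,n(\bxi;\CZ')}$ and showing via Lemma \ref{lem:numinter} that $\vol(\CN\cap\CG)\ll\vol(\CZ')=\smallo{\vol(\CG)}$, so that a point $\bxi_0\in\CG\setminus(\CN_\CG\cup\CU)$ exists; at such a point the strict integer inequality $n(\bxi_0;\CG)>m\,n(\bxi_0;\CZ')\ge\#\bnu$ gives \eqref{eq:secondbp} even when both counts are small (possibly $1$ versus $0$). Relatedly, your claim that along the chosen segment \emph{every} $\Theta$-translate of the base point stays in $\CG$ is too strong (it would force $\bnu=\varnothing$ and cannot be arranged); what is actually needed, and what the exclusion of the set $\CU$ in Lemma \ref{lem:goodcrossings} provides, is only that the translates that do land in $\CG'$ make an angle at most $\pi/4$ with $\bxi_0$, so that the corresponding $g_j(\bxi+\btheta)$ are increasing along the segment. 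Without the measure-theoretic selection of $\bxi_0$ and the pointwise comparison of the two counts, the proof of \eqref{eq:secondbp} does not go through.
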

The remainder of this section is dedicated to the proof of this proposition. We give some useful properties of the functions $g_j$. For $1 \le j \le m$,
there are $a_j > 0$ and $G_j : \R^d \to \R$ such that
\begin{equation}
  \label{eq:gsmall}
  g_j(\bxi) = a_j \abs \bxi^\alpha + G_j(\bxi), \qquad
  a_j
  > 0.
\end{equation}
Furthermore, whenever 
\begin{equation}
  \label{eq:range}
  \abs \bxi \asymp \rho,
\end{equation}
we have 
\begin{equation}
  \abs{G_j(\bxi)} \ll \rho^\beta
  \label{eq:boundGj}
\end{equation}
for some $\beta < \alpha$, the implicit
constant depending on the ones in \eqref{eq:range}.

We need to control the
first and second derivatives of the functions $G_j$ on a large enough region. To
this end, for $\delta \in (0,\rho^\alpha/4)$, define the “annular” regions
\begin{equation}
  \CA_j := \CA_j(\rho;\delta) := g_j^{-1}\left(\left[ \rho^\alpha - \delta, \rho^\alpha +
  \delta\right] \right).
\end{equation}
We suppose that for every $0 < \delta < \rho^\alpha/4$, $\CA_j$ can be
decomposed further into a non-resonant set $$\CB_j := \CB_j(\rho,\delta) \subset
\CA_j$$ and a resonant set $$\CR_j := \CA_j \setminus \CB_j.$$ A precise
description of those sets is given in \cite{ParSob2010}; the non-resonant sets
$\CB_j$ correspond to simple eigenvalues of the operators $A_j(\set{\bxi})$, 
whereas the resonant sets $\CR_j$ correspond to clusters of eigenvalues. In
particular, when $\bxi \in \CB_j$, 
\begin{equation}
  \label{eq:nrinv}
\BA e_{\bxi} \otimes v_j  = g_j(\bxi) e_{\bxi}
\otimes v_j
.
\end{equation}An
important property of the non-resonant set is that the functions $g_j$ behave
well there. We suppose that the restriction of $G_j$ to $\CB_j$ is of class $\RC^2$
and that for all $\bxi \in \CB'_j$, the bounds 
\begin{equation}
  \begin{aligned}
  \label{eq:condgsmall}
  \abs{\nabla G_j(\bxi)} &\ll \rho^\gamma, \\
  \abs{\nabla^2 G_j(\bxi)} &\ll \rho^\sigma
\end{aligned}
\end{equation}
hold, where $\gamma < \alpha - 1$ and $\sigma < \alpha - 2$, and
the implicit constants are once again allowed to depend on the implicit
constants in
\eqref{eq:range}. In order to prove Proposition \ref{prop:counting}, it will be
sufficient to show that the sets $\CB_j$ are, in a sense, large enough. In subsection \ref{sec:constr} we constrain the radial projection on
the sphere of those resonant and non-resonant sets.

\subsection{Description of the resonant sets} \label{sec:constr}

The description of the resonant sets in this section expands on the
results in \cite[Sections 5,7,8,9]{ParSob2010}. We first introduce some notation.
Most of the sets we define depend on the parameters $\rho$ and
$\delta$; however, we will be quick to drop the (explicit) dependence on these
parameters to make notation lighter.
For every $\bxi \in \R^d \setminus \set 0$, let $\bu_{\bxi} := \abs \bxi^{-1}\bxi$ be the
unit vector in the direction of $\bxi$. For $\nu > 0$ and $\btheta \in
\Theta \setminus \set 0$,
we define spherical resonant regions as
\begin{equation}
  \CS(\btheta;T) := \set{\bzeta \in \BBS^{d-1} : \abs{\bzeta \cdot \bu_{\btheta}}
  < T}.
\end{equation}
For any subset $\CU$ of the sphere $\BBS^{d-1}$, we denote its radial extension
by
$$
\CU_{rd} := \set{\bxi \in \R^d : \bu_{\bxi} \in \CU}.
$$
The following lemma is proved in \cite[Section 5]{ParSob2010}.
\begin{lem}
  \label{lem:ParSob2010}
  There exists $c$ (depending on the coefficients $a_j$) such that for every $\rho$ large enough, and $\delta \in  (0,c\rho^\alpha)$, there are
  $\varkappa$, and $\nu$ such that $$0
  < \varkappa < d^{-2}, \qquad  \Theta_\varkappa :=  (\Theta\setminus
  \set 0) \cap
  B(\rho^\varkappa), \qquad  \varkappa d < \nu < 1$$ and such that for all $1 \le j \le
  m$,
  \begin{equation}
    \label{eq:crjsub}
    \CR_j(\rho;\delta) \subset \CA_j \cap \bigcup_{\btheta \in \Theta_\varkappa}
    \CS(\btheta;\rho^{-\nu})_{rd}.
  \end{equation}
  Denoting
  \begin{equation}
    \CT(\rho) := \BBS^{d-1} \setminus \bigcup_{\btheta \in \Theta_\varkappa}
    \CS(\btheta;\rho^{-\nu})
  \end{equation}
and
\begin{equation}
  \tilde \CB_j(\rho) := \CA_j \cap \CT_{rd},
\end{equation}
we also have that $\tilde \CB_j \subset \CB_j$. 
\end{lem}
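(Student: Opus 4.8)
The plan is to deduce the lemma directly from the scalar analysis of \cite[Section 5]{ParSob2010}, applied componentwise to the uncoupled operator. Since $\BA\in\BU\BS_m^\alpha$ is semi-bounded below, we may write $\BA=A_1\oplus\dotsb\oplus A_m$, where each $A_j$ acts on $\RL^2(\R^d;\C)$ with principal symbol $a_j\abs\bxi^\alpha$, $a_j>0$, and a $\Lambda$-periodic pseudo-differential perturbation of order $\beta<\alpha$. Dividing by $a_j$ rescales the energy level $\rho^\alpha$ to $\rho^\alpha/a_j$ and the radial variable by the fixed factor $a_j^{-1/\alpha}$, so that $A_j/a_j$ falls within the class of operators to which the constructions of \cite[Section 5]{ParSob2010} apply; the functions $g_j$ (with $g_j(\bxi)=a_j\abs\bxi^\alpha+G_j(\bxi)$, cf.~\eqref{eq:gsmall}), the annular sets $\CA_j$, and the decomposition of $\CA_j$ into the non-resonant set $\CB_j$ and the resonant set $\CR_j=\CA_j\setminus\CB_j$ are exactly those obtained by applying that construction to $A_j/a_j$.

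First I would invoke \cite[Section 5]{ParSob2010} for each index $j$ at energy $\rho^\alpha/a_j$. Because $j$ ranges over the fixed finite set $\set{1,\dotsc,m}$ and each $a_j$ is a fixed nonzero constant, we have $a_j^{-1/\alpha}\rho\asymp\rho$ uniformly; hence the combinatorial parameters $\varkappa,\nu$ furnished by \cite{ParSob2010} (which depend only on $d$ and $\alpha$) may be taken common to all $j$, and all implicit constants in \eqref{eq:range}--\eqref{eq:condgsmall} depend only on $d$, $\alpha$, and $\set{a_1,\dotsc,a_m}$. Letting $c$ be the minimum over $j$ of the admissible upper bounds for $\delta/\rho^\alpha$ (obtained from the corresponding bound of \cite{ParSob2010} after the rescaling by $a_j$), we obtain for every $\delta\in(0,c\rho^\alpha)$ and every $j$ the inclusion \eqref{eq:crjsub}, with the \emph{same} frequency set $\Theta_\varkappa=(\Theta\setminus\set 0)\cap B(\rho^\varkappa)$; this is legitimate since $\Theta=\Lambda^\dagger$ is shared by all the $A_j$. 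The inequalities $0<\varkappa<d^{-2}$ and $\varkappa d<\nu<1$ are precisely those guaranteed in \cite{ParSob2010}.

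Finally, the inclusion $\tilde\CB_j\subset\CB_j$ is immediate: by definition $\tilde\CB_j(\rho)=\CA_j\cap\CT_{rd}$ with $\CT(\rho)=\BBS^{d-1}\setminus\bigcup_{\btheta\in\Theta_\varkappa}\CS(\btheta;\rho^{-\nu})$, so $\CT_{rd}$ is disjoint from $\bigcup_{\btheta\in\Theta_\varkappa}\CS(\btheta;\rho^{-\nu})_{rd}$; intersecting with $\CA_j$ and invoking \eqref{eq:crjsub} gives $\tilde\CB_j\cap\CR_j=\varnothing$, whence $\tilde\CB_j\subset\CA_j\setminus\CR_j=\CB_j$. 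The only step requiring genuine care is the uniformity of the parameters and constants across the finitely many components $j=1,\dotsc,m$, together with the bookkeeping of the rescaling $A_j\mapsto A_j/a_j$; the combinatorial geometry itself is entirely contained in \cite{ParSob2010} and needs no modification, so I do not anticipate a substantive obstacle.
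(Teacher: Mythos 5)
Your proposal is correct and matches the paper's route exactly: the paper itself offers no proof beyond the one-line citation of \cite[Section 5]{ParSob2010}, relying implicitly on the direct-sum decomposition $\BA=A_1\oplus\dotsb\oplus A_m$ of the uncoupled operator to reduce to the scalar case treated there. Your additional bookkeeping (rescaling each component by $a_j$, taking common parameters $\varkappa,\nu$ and the minimum admissible $c$ over the finitely many $j$, and deducing $\tilde\CB_j\subset\CB_j$ as a set-theoretic consequence of \eqref{eq:crjsub} and $\CR_j=\CA_j\setminus\CB_j$) is sound and simply makes explicit what the paper leaves implicit.
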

\begin{rem}
 The inclusion \eqref{eq:crjsub} means that the radial projection to the sphere
 of the resonant region is included in
 the union of the sets $\CS(\btheta;\rho^{-\nu})$, justifying naming them
 spherical resonant regions. Consequently, we name $\CT$ the spherical
 non-resonant region.
\end{rem}

The object of the next lemma is that small enough neighborhoods of the spherical
resonant regions have small volume.

\begin{lem} \label{lem:enlargedresonant}
  Let $\varkappa$, $\Theta_\varkappa$ and $\nu$ be obtained in Lemma \ref{lem:ParSob2010} and define
\begin{equation}
  \tilde \CT(\rho) := \BBS^{d-1} \setminus \left(\bigcup_{\btheta \in
  \Theta_\varkappa} \CS(\btheta;2\rho^{-\nu})\right),
\end{equation}
\begin{equation}
  \label{eq:czjsub}
  \CZ_j(\rho;\delta) := \CA_j \cap \bigcup_{\btheta \in \Theta_\varkappa}
  \CS_{rd}(\btheta;2\rho^{-\nu}),
\end{equation}
and
\begin{equation}
  \label{eq:cgjsub}
  \CG_j(\rho;\delta) := \CA_j \cap \tilde \CT_{rd} = \CA_j \setminus \CZ_j.
\end{equation}

Then, $\CG_j \subset \CB_j$, and for all $\bxi \in \CG_j$, $\dist(\bxi,\CR_j)\gg
\rho^{1-\nu}$. Furthermore, for $\eps_0 = \nu - d \varkappa > 0$,
\begin{equation}
  \label{eq:volCZ}
  \vol(\CZ_j(\rho;\delta)) \ll \delta \rho^{d - \alpha - \eps_0}.
\end{equation}
and
\begin{equation}
  \label{eq:voltCG}
  \vol(\CG_j(\rho;\delta)) \asymp \delta \rho^{d - \alpha}.
\end{equation}
\end{lem}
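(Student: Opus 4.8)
The plan is to build on Lemma \ref{lem:ParSob2010} in a purely geometric fashion, keeping in mind that the statements about the functions $g_j$ ($\CG_j\subset\CB_j$, the distance estimate, and the fact that $G_j$ is $\RC^2$ there with the bounds \eqref{eq:condgsmall}) follow from the corresponding constructions in \cite[Sections 5,7,8,9]{ParSob2010} applied componentwise to the scalar operators $A_j$. So the real content of the lemma is the two volume estimates \eqref{eq:volCZ} and \eqref{eq:voltCG}, together with the soft inclusions.

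First I would record the inclusions. Since $\tilde\CT(\rho)\subset\CT(\rho)$ (we are removing larger spherical caps $\CS(\btheta;2\rho^{-\nu})\supset\CS(\btheta;\rho^{-\nu})$), we get $\CG_j=\CA_j\cap\tilde\CT_{rd}\subset\CA_j\cap\CT_{rd}=\tilde\CB_j\subset\CB_j$ by Lemma \ref{lem:ParSob2010}. For the distance estimate: if $\bxi\in\CG_j$ and $\beta\in\CR_j$, then by \eqref{eq:crjsub} we have $\bu_\beta\in\CS(\btheta;\rho^{-\nu})$ for some $\btheta\in\Theta_\varkappa$, while $\bu_\bxi\notin\CS(\btheta;2\rho^{-\nu})$ for every such $\btheta$; hence the angular separation of $\bu_\bxi$ and $\bu_\beta$ is $\gg\rho^{-\nu}$, and since both $|\bxi|,|\beta|\asymp\rho$ on the annular regions $\CA_j$, the Euclidean distance is $\gg\rho\cdot\rho^{-\nu}=\rho^{1-\nu}$. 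I would write this as a short estimate using $|\bxi-\beta|\ge |\bxi|\,|\bu_\bxi-\bu_\beta|$ and the lower bound on $|\bu_\bxi-\bu_\beta|$ coming from the cap geometry, being slightly careful that $\CS(\btheta;\cdot)$ is a \emph{two-sided} cap (neighbourhood of the equator perpendicular to $\btheta$), so ``outside $2\rho^{-\nu}$'' means $|\bu_\bxi\cdot\bu_\btheta|\ge 2\rho^{-\nu}$ while ``inside $\rho^{-\nu}$'' means $|\bu_\beta\cdot\bu_\btheta|<\rho^{-\nu}$, giving $|\bu_\bxi\cdot\bu_\btheta-\bu_\beta\cdot\bu_\btheta|\ge\rho^{-\nu}$ and thus $|\bu_\bxi-\bu_\beta|\ge\rho^{-\nu}$.

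For the volume bounds, the key point is that $\CA_j=g_j^{-1}([\rho^\alpha-\delta,\rho^\alpha+\delta])$ is, thanks to \eqref{eq:gsmall}–\eqref{eq:boundGj} and the gradient bound $|\nabla g_j|\asymp\alpha a_j|\bxi|^{\alpha-1}\asymp\rho^{\alpha-1}$ on the relevant range, a ``spherical shell'' of angular extent all of $\BBS^{d-1}$ (for \eqref{eq:voltCG} one needs to know $\CA_j$ genuinely fills out $\tilde\CT_{rd}$ radially, which is where the monotonicity of $g_j$ in the radial direction on non-resonant rays from \cite{ParSob2010} enters) and radial thickness $\asymp\delta/\rho^{\alpha-1}$. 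Integrating in polar coordinates $\vol(\CU_{rd}\cap\CA_j)\asymp\int_{\CU}(\text{radial thickness})\,r^{d-1}\,d\bzeta\asymp\delta\rho^{-(\alpha-1)}\rho^{d-1}\,\sigma(\CU)=\delta\rho^{d-\alpha}\,\sigma(\CU)$, where $\sigma$ is the surface measure of the angular set. So \eqref{eq:voltCG} reduces to $\sigma(\tilde\CT)\asymp 1$, and \eqref{eq:volCZ} reduces to $\sigma\big(\bigcup_{\btheta\in\Theta_\varkappa}\CS(\btheta;2\rho^{-\nu})\big)\ll\rho^{-\eps_0}$. For the latter, each cap $\CS(\btheta;2\rho^{-\nu})$ has surface measure $\ll\rho^{-\nu}$, and $\#\Theta_\varkappa\ll(\rho^\varkappa)^d=\rho^{d\varkappa}$ since $\Theta$ is a lattice and we intersect with $B(\rho^\varkappa)$; summing gives $\ll\rho^{d\varkappa-\nu}=\rho^{-\eps_0}$ by the definition $\eps_0=\nu-d\varkappa>0$. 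Since $\rho^{-\eps_0}\to0$, this also shows $\sigma(\tilde\CT)\ge 1-\rho^{-\eps_0}\asymp1$, giving \eqref{eq:voltCG}. I expect the main obstacle to be the lower bound half of \eqref{eq:voltCG}: one must be sure that the annular region $\CA_j$, restricted to the non-resonant directions $\tilde\CT_{rd}$, really does span a full radial interval of length $\asymp\delta\rho^{1-\alpha}$ rather than breaking up — this is exactly the place where one invokes the radial monotonicity/continuity of $g_j$ on non-resonant rays established in \cite{ParSob2010}, and where the condition $\delta<c\rho^\alpha$ (so that $\CA_j$ stays in the range $|\bxi|\asymp\rho$ where \eqref{eq:condgsmall} holds) is used.
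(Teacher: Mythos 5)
Your proposal is correct and follows essentially the same route as the paper: the inclusions are read off from $\tilde\CT\subset\CT$, the distance bound comes from the angular separation $\gg\rho^{-\nu}$ between $\tilde\CT$ and the smaller caps together with $\abs\bxi\asymp\rho$, and both volume estimates are obtained by integrating in polar coordinates using the radial-interval length $\ll\delta\rho^{1-\alpha}$, the cap measure $\ll\rho^{-\nu}$, and $\#\Theta_\varkappa\ll\rho^{d\varkappa}$, with \eqref{eq:voltCG} deduced from $\vol(\CA_j)\asymp\delta\rho^{d-\alpha}$ and $\vol(\CZ_j)=\smallo{\vol(\CA_j)}$. The lower-bound issue you flag (that $\CA_j$ meets each non-resonant ray in a full interval of length $\asymp\delta\rho^{1-\alpha}$) is exactly what the paper delegates to the cited facts from \cite{ParSob2010}.
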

\begin{proof}
  It is clear from the definition that $\CG_j \subset \tilde \CB_j \subset
  \CB_j$. 
  Define the sets
  \begin{equation}
    \CI_j(\bzeta) := \CA_j \cap \set{\bzeta}_{rd}.
  \end{equation}
  It follows from \cite[pp.518--519]{ParSob2010} that for all $\bzeta$,
  $\CI_j(\bzeta)$ is an interval of length $\abs{\CI_j} \ll \delta
  \rho^{1 - \alpha}$ (uniformly in $\bzeta$), and 
  \begin{equation}
    \CI_j \subset \set{\bxi : \abs \bxi \asymp \rho }.
  \end{equation}
  
  Furthermore, by definition
  $$
  \dist\left(\tilde \CT(\rho), \bigcup_{\btheta \in \Theta_\varkappa}
  \CS(\btheta;\rho^{-\nu})\right) > \rho^{-\nu}.
  $$
  It therefore follows from \eqref{eq:crjsub}, \eqref{eq:cgjsub} and basic
  trigonometry 
  that $\dist(\CG_j,\CR_j) \gg \rho^{1-\nu}$. For the volume
  estimate for $\CZ_j$, we compute
  \begin{equation}
    \label{eq:volczj}
    \begin{aligned}
      \vol(\CZ_j) &\le \sum_{\btheta \in \Theta_\varkappa} \int_{\tilde
      \CS(\btheta;\rho^{-\nu})} \int_{\CI_j(\bzeta)} t^{d-1} \de t \de \bzeta \\
    & \ll \#(\Theta_\varkappa)\max_{\btheta}\vol_{d-1}(
    \CS(\btheta;2\rho^{-\nu}))\delta\rho^{d - \alpha}.
  \end{aligned}
  \end{equation}
  Uniformly in $\btheta$ we have that, $\vol_{d-1}(
    \CS(\btheta;2\rho^{-\nu})  \ll \rho^{-\nu}$. We also have that $\#\Theta_\varkappa \ll
    \rho^{d\varkappa}$.
    Putting these two estimates in \eqref{eq:volczj}  yields \eqref{eq:volCZ}.
    For the estimate on $\vol(\CG_j(\rho;\delta))$, we observe that
    $\vol(\CG_j) = \vol(\CA_j) - \vol(\CZ_j)$ and that by \eqref{eq:gsmall},
    \begin{equation}
      \vol(\CA_j) \asymp \delta \rho^{d - \alpha}.
    \end{equation}
    Estimate \eqref{eq:voltCG} then follows from the fact that $\vol(\CZ_j) =
    \smallo{\vol(\CA_j)}$.
\end{proof}

\subsection{Volumes of intersections}
For $\bb_1,\bb_2 \in \R^d$ and $i,j \in \set{1,\dotsc,m}$ we define the
\emph{crossing sets}
\begin{equation}
  \label{eq:defCX}
\CX_{ij}(\rho,\delta,\bb_1,\bb_2) := (\CA_i(\rho;\delta)+\bb_1)\cap(\CA_j(\rho;\delta)+\bb_2).
\end{equation}
We are interested in volume estimates, and since
\begin{equation}
  \vol(\CX_{ij}(\rho,\delta,\bb_1,\bb_2)) =
  \vol(\CX_{ij}(\rho,\delta,\bzero,\bb_2-\bb_1)),
\end{equation}
we restrict ourselves to sets of the form
\begin{equation}
  \CX_{ij}(\bb) := \CX_{ij}(\rho,\delta,\bb) :=
  \CX_{ij}(\rho,\delta,\bzero,\bb).
\end{equation}
 Denote by $\phi(\ba,\bb)$ the (smaller) angle
between $\ba$ and $\bb$. For any angle $\omega \in [0,\pi]$, we define the set
\begin{equation}
  \label{eq:badvol}
  \CX_{ij,\omega}(\bb) := \set{\bxi \in \CX_{ij}(\bb) : \phi(\bxi,\bxi-\bb) >
  \omega}.
\end{equation}
From \cite[Section 9]{ParSob2010} since $g_i, g_j$ are defined as
in \eqref{eq:gsmall}, so that the conditions \eqref{eq:condgsmall} are
respected, the following holds: for any $\omega \in (0,\pi)$,  $\eps >
0$, if $\delta \rho^{2 - \alpha + 2 \eps} \to 0$
as $\rho \to \infty$, then
\begin{equation}
\vol(\CX_{ij,\omega}(\rho,\delta,\bb)) \ll \delta^2 \rho^{4 - 2\alpha + d + 6\eps}
+ \delta \rho^{1 - \alpha - \eps(d-1)},
  \label{eq:nokissmu}
\end{equation}
uniformly in $\bb$.
\begin{prop}
\label{rem:xemptyint}
 There are constants $c$ and $C$ depending on $\delta$, $\omega$, and the
 numbers $a_j$ such that $\CX_{ij,\omega}(\bb) \ne \varnothing$ implies that
 $c \rho \le \abs\bb \le C\rho$.
\end{prop}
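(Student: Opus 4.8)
The statement to prove is Proposition \ref{rem:xemptyint}: if the set $\CX_{ij,\omega}(\bb)$ is nonempty, then $c\rho \le \abs\bb \le C\rho$.

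\textbf{Plan.} The plan is to unpack the definitions and use the fact that any point $\bxi$ of $\CX_{ij,\omega}(\bb)$ lives simultaneously in the annulus $\CA_i(\rho;\delta)$ and in the translated annulus $\CA_j(\rho;\delta)+\bb$, which pins both $\abs\bxi$ and $\abs{\bxi-\bb}$ to within a factor of $\rho$, and then to use the angular separation condition $\phi(\bxi,\bxi-\bb)>\omega$ to convert this into a two-sided bound on $\abs\bb$ via the law of cosines in the triangle with vertices $\bzero$, $\bxi$, $\bb$.

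\textbf{Key steps.} First I would record that, by \eqref{eq:gsmall}--\eqref{eq:boundGj}, membership $\bxi\in\CA_j(\rho;\delta)=g_j^{-1}([\rho^\alpha-\delta,\rho^\alpha+\delta])$ together with $\delta<\rho^\alpha/4$ forces $a_j\abs\bxi^\alpha = \rho^\alpha+O(\delta)+O(\rho^\beta)$, hence $\abs\bxi\asymp\rho$ with implicit constants depending only on the $a_j$ (this is exactly the relation $\CI_j\subset\set{\bxi:\abs\bxi\asymp\rho}$ already invoked in the proof of Lemma \ref{lem:enlargedresonant}). The same applies to the translated annulus: $\bxi\in\CA_j(\rho;\delta)+\bb$ means $\bxi-\bb\in\CA_j(\rho;\delta)$, so $\abs{\bxi-\bb}\asymp\rho$ as well. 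Thus for any $\bxi\in\CX_{ij,\omega}(\bb)$ we have $\abs\bxi\asymp\rho$ and $\abs{\bxi-\bb}\asymp\rho$. Next, the upper bound $\abs\bb\le C\rho$ is immediate from the triangle inequality, $\abs\bb\le\abs\bxi+\abs{\bxi-\bb}\ll\rho$. For the lower bound, I would apply the law of cosines to the triangle $\bzero,\bxi,\bb$: with $\omega_0:=\phi(\bxi,\bxi-\bb)$ one has
\begin{equation}
\abs\bb^2 = \abs\bxi^2 + \abs{\bxi-\bb}^2 - 2\abs\bxi\,\abs{\bxi-\bb}\cos\omega_0.
\end{equation}
Since $\omega_0>\omega>0$, we get $\cos\omega_0<\cos\omega<1$, so
\begin{equation}
\abs\bb^2 \ge \abs\bxi^2+\abs{\bxi-\bb}^2 - 2\abs\bxi\,\abs{\bxi-\bb}\cos\omega \ge (1-\cos\omega)\big(\abs\bxi^2+\abs{\bxi-\bb}^2\big) \gg \rho^2,
\end{equation}
using that $2\abs\bxi\abs{\bxi-\bb}\le\abs\bxi^2+\abs{\bxi-\bb}^2$ and that both $\abs\bxi$ and $\abs{\bxi-\bb}$ are $\gg\rho$. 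Hence $\abs\bb\gg\rho$, i.e. $\abs\bb\ge c\rho$ for some $c=c(\delta,\omega,\{a_j\})>0$.

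\textbf{Main obstacle.} There is no serious obstacle here; the proposition is essentially a packaging of geometric facts already established. The one point requiring a little care is making the dependence of the constants explicit — I would note that the $\asymp\rho$ estimates on $\abs\bxi$ and $\abs{\bxi-\bb}$ depend on the $a_j$ (through \eqref{eq:gsmall}) and, strictly, on $\delta$ and $\rho$ being in the admissible range, while the final conversion to the bound on $\abs\bb$ picks up the extra dependence on $\omega$ through the factor $1-\cos\omega$. I would also remark that $\delta$ enters only through the size of the annular width $\abs{\CI_j}\ll\delta\rho^{1-\alpha}$, which is negligible in the regime $\delta<c\rho^\alpha$, so in fact the constants can be taken uniform in $\delta$ in that range; but since the statement only claims dependence on $\delta,\omega$ and the $a_j$, it suffices to keep track of these.
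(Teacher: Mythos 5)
Your proof is correct and follows essentially the same route as the paper's: the upper bound comes from $\abs\bxi \asymp \rho$ and $\abs{\bxi - \bb} \asymp \rho$ together with the triangle inequality, and the lower bound is exactly the paper's ``basic planar trigonometry,'' which you simply make explicit via the law of cosines at the vertex $\bxi$. (One tiny caveat: the step $\abs\bxi^2 + \abs{\bxi-\bb}^2 - 2\abs\bxi\,\abs{\bxi-\bb}\cos\omega \ge (1-\cos\omega)\big(\abs\bxi^2 + \abs{\bxi-\bb}^2\big)$ implicitly uses $\cos\omega \ge 0$; for $\omega > \pi/2$ the cosine term is nonnegative and the bound $\abs\bb^2 \ge \abs\bxi^2 + \abs{\bxi-\bb}^2 \gg \rho^2$ is immediate, so the conclusion is unaffected.)
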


\begin{proof}
  We first make the observation that there exists $C > 0$, depending on $\alpha$
  and the numbers $a_j$ such that if $\abs \bbb > C\rho$, then for $\rho$ large enough $\CA_i \cap (\CA_j + \bbb) = \varnothing$.
  On the other
    hand, it follows from basic planar trigonometry that for every $\omega$, there
    exists $c$, depending on the constants in $\abs\bxi \asymp \rho$, such
    that if $\abs \bbb < c \rho$, then $\phi(\bxi, \bxi-\bbb)\le \omega$.
\end{proof}

\begin{rem}
  The estimates in \eqref{eq:nokissmu} were established under stronger
  hypotheses in \cite[Theorem 9.1]{ParSob2010}. This was because of the fact
  that their methods required control of the volume of $\CX_{ij}(\bbb)$, for all
  $1 \ll \abs{\bb} \ll \rho$, while we only need to control the volume of
  $\CX_{ij,\omega}(\bbb)$ for $\abs{\bbb} \asymp \rho$. The stronger conditions
  were needed to treat the case of small angles, and small $\bb$ (that is, such
  that $\abs \bb = \smallo \rho$). Our approach does not require any such
  estimates.
\end{rem}
Before going on, let us make the following notational convention.
\begin{convention}
 For any family of subsets $\CE(\delta) \subset \R^d$ depending on the parameter
 $\delta > 0$, we denote
$$\CE'(\delta) := \CE(Z\delta),$$
where $Z$ is some large constant to be determined later and depending only on
the dimension $d$, the order
$\alpha$ and the numbers $\set{a_1,\dotsc,a_m}$.
\end{convention}
We now define crossing sets for the non-resonant sets $\CG_j$. For $\bbb \in
\R^d$, let
\begin{equation}
  \CY_{ij}(\bb) := \CG_{i} \cap (\CG_j +
  \bb)
\end{equation}
and for any angle $\omega \in (0,\pi)$, 
\begin{equation}
  \CY_{ij,\omega}(\bb) := \set{\bxi \in \CY_{ij}(\bbb) : \phi(\bxi,\bxi-\bb)
  > \omega} = \CX_{ij,\omega}(\bbb) \cap \CY_{ij}(\bbb).
  \end{equation}
  We need the following lemma.
  \begin{lem} \label{lem:totalintersection}
    For any $\omega \in (0,\pi)$ and $\eps > 0$, the condition $\delta \rho^{2 -
      \alpha + 2\eps} \to 0$ as $\rho \to \infty$ implies
    \begin{equation}
      \vol\left( \bigcup_{i,j = 1}^m \bigcup_{\btheta \in \Theta}
      \CY_{ij,\omega}'(\btheta) \right) \ll 
        \delta^2 \rho^{4 - 2\alpha + 2 d +6\eps} + \delta \rho^{1- \alpha + d - \eps(d-1)},
      \label{eq:sobranie}
    \end{equation}
    the implicit constants depending only on $\delta$, $\omega$, $Z$, and the coefficients
    $a_j$. 
  \end{lem}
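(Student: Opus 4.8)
The plan is to bound the volume of the union in \eqref{eq:sobranie} by first reducing to a finite union over translations, then applying the volume estimate \eqref{eq:nokissmu} term by term. First I would observe that $\CY_{ij,\omega}'(\btheta) \subseteq \CX_{ij,\omega}'(\btheta)$, and by Proposition \ref{rem:xemptyint} (applied with parameter $Z\delta$ in place of $\delta$), the set $\CX_{ij,\omega}'(\btheta)$ is empty unless $c\rho \le \abs\btheta \le C\rho$ for constants $c, C$ depending only on $\delta$, $\omega$, $Z$ and the $a_j$. Since $\Theta$ is a lattice in $\R^d$, the number of $\btheta \in \Theta$ with $\abs\btheta \le C\rho$ is $\bigo{\rho^d}$, with the implicit constant depending only on $\Theta$. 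Hence the union in \eqref{eq:sobranie} is in fact a union over $\bigo{m^2 \rho^d}$ non-empty sets.

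Next I would apply the subadditivity of volume together with \eqref{eq:nokissmu}. For each pair $(i,j)$ and each such $\btheta$, the hypothesis $\delta \rho^{2 - \alpha + 2\eps} \to 0$ ensures (possibly after absorbing the constant $Z$, which only rescales $\delta$ by a fixed factor and so does not affect the limiting condition) that
\begin{equation}
  \vol\big(\CX_{ij,\omega}'(\btheta)\big) \ll \delta^2 \rho^{4 - 2\alpha + d + 6\eps} + \delta \rho^{1 - \alpha - \eps(d-1)},
\end{equation}
uniformly in $\btheta$, $i$, $j$. Summing over the $\bigo{m^2 \rho^d}$ non-empty translations multiplies each term by $\rho^d$ (with $m$ fixed), giving
\begin{equation}
  \vol\Big( \bigcup_{i,j=1}^m \bigcup_{\btheta \in \Theta} \CY_{ij,\omega}'(\btheta) \Big) \ll \delta^2 \rho^{4 - 2\alpha + 2d + 6\eps} + \delta \rho^{1 - \alpha + d - \eps(d-1)},
\end{equation}
which is exactly \eqref{eq:sobranie}. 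The implicit constants depend only on $\delta$, $\omega$, $Z$, and the $a_j$ (the dependence on $\Theta$ and $m$ being suppressed as in the rest of the section), as claimed.

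The main obstacle is making sure the reduction to finitely many translations is clean: one must verify that the emptiness criterion of Proposition \ref{rem:xemptyint} genuinely applies to the $Z\delta$-dilated sets $\CG_j' = \CG_j(\rho; Z\delta)$ rather than $\CG_j(\rho;\delta)$, which is why the convention fixing $Z$ to depend only on $d$, $\alpha$, and the $a_j$ is invoked — it guarantees the constants $c, C$ in the $\abs\btheta \asymp \rho$ bound remain uniform. A secondary subtlety is that $\CX_{ij,\omega}$ was defined in \eqref{eq:badvol} for the sets $\CA_i$, while here we intersect the smaller sets $\CG_i \subseteq \CB_i \subseteq \CA_i$; the inclusion $\CY_{ij,\omega}(\bb) = \CX_{ij,\omega}(\bb) \cap \CY_{ij}(\bb) \subseteq \CX_{ij,\omega}(\bb)$ is what lets us simply quote \eqref{eq:nokissmu} without reproving anything, so no genuinely new estimate is needed — the content is entirely bookkeeping of the lattice point count against the per-translation bound.
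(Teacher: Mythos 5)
Your proof is correct and follows essentially the same route as the paper's: reduce to a single pair $(i,j)$, use Proposition \ref{rem:xemptyint} to restrict to the $\bigo{\rho^d}$ translations with $\abs\btheta \asymp \rho$, bound each term via $\CY_{ij,\omega} \subseteq \CX_{ij,\omega}$ and \eqref{eq:nokissmu}, and sum. Your explicit discussion of why the $Z$-dilation does not disturb the constants or the limiting hypothesis is a point the paper leaves implicit, but the argument is the same.
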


  \begin{proof}
It is
sufficient to prove the result for a single pair $i,j$, then sum the estimates
over all $m^2$ of those pairs. From Proposition \ref{rem:xemptyint}, there are
constants $c$ and $C$ depending only on $\omega,\delta,T$ and the numbers $a_j$ such that
    \begin{equation}
      \begin{aligned}
        \vol\left(\bigcup_{\btheta \in \Theta}\CY_{ij,\omega}'(\btheta)\right)
        &\le \sum_{\substack{\btheta \in \Theta \\ c \rho \le \abs\btheta
        \le C\rho } }\vol\left( \CY_{ij,\omega}(\btheta) \right)\\
          &\ll\delta^2 \rho^{4 - 2\alpha + 2 d + 6 \eps} + \delta \rho^{1-
          \alpha + d - \eps(d-1)},
        \end{aligned}
    \end{equation}
  where the last line comes from $\CY_{ij,\omega}(\btheta) \subset
  \CX_{ij,\omega}(\btheta)$, estimate \eqref{eq:nokissmu} and the fact that
  \begin{equation}
    \#\set{\btheta \in \Theta^\dagger : c \rho \le \abs \btheta \le C \rho} \ll
    \rho^d.
  \end{equation}
  \end{proof}

  \subsection{Estimates on the overlap function}
 Define
  \begin{equation}
    \CZ := \bigcup_{1 \le j \le m} \CZ_j, \qquad \qquad \text{and} \qquad \qquad
    \CG := \bigcup_{1 \le j \le m} \CG_j.
  \end{equation}
  It follows directly from the definitions of $\CZ_j$ \eqref{eq:czjsub} and $\CG_j$ 
  \eqref{eq:cgjsub} that $\CG \cap \CZ = \varnothing$. Furthermore, for every
  $\bxi \in \CG$ and every $1 \le j \le m$, $\bxi \in \CG_j$ or $\bxi \not \in
  \CA_j$. 

  For every $\bxi \in \R^d$, and any subset $E \subset \R^d$, we define
  \begin{equation}
    n(\bxi,E) = \#\set{\btheta \in \Theta : \bxi + \btheta \in E}.
  \end{equation}
 
  Let us take a step back to see what is needed in order to prove Proposition
  \ref{prop:counting}. The estimates on the functions $G_j$ in the non-resonant
  regions $\CB_j$ ensure that $g_j$ are radially increasing in those regions. 

  We need three
  ingredients in order to prove the existence of a path (of the form, before
    projection down to $\CO$, $[t_1,t_2]\bzeta$ for some $\bzeta \in
  \BBS^{d-1}$) satisfying hypotheses \eqref{eq:firstbp} -- \eqref{eq:thirdbp} in
  Proposition \ref{prop:counting}.
  \begin{itemize}
    \item That for $\bxi \in \CG_j$, the functions $g_j$ are increasing not only
      in the radial direction but also along directions deviating from the
      radial one by angles smaller than $\pi/4$, at least for some controllable
      distance. 
    \item That there are points $\bxi \in \CG$ such that $n(\bxi;\CG) >
      n(\bxi;\CZ')$. 
     \item That there are some of those $\bxi$ such that for all $\btheta \in
       \Theta'=\Theta\setminus \set{0}$ satisfying
       $\bxi + \btheta \in \CG'$, we have that $\phi(\bxi,\bxi+\btheta) <
    \pi/4$. 
  \end{itemize}
  The first of those statements is proved directly. For the last two, we
  will show that the proportion of $\bxi \in \CG$ such that 
  either $n(\bxi;\CZ') \ge n(\bxi;\CG)$ or for which there exists $\btheta \in \Theta'$ such that
  $\bxi+\btheta \in \CG'$ and $\phi(\bxi,\bxi+\btheta) > \pi/4$ is smaller than
  $1$.
  Actually, for some appropriate choice of the free parameter $\delta$, we will in
  fact show that this proportion can be made arbitrarily small.

  The following lemma addresses the first bullet point.
  \begin{lem} \label{lem:increasing}
    Let $\nu$ be as obtained in Lemma \ref{lem:ParSob2010}. For all $\bxi \in
    \CG_j$, and all $\bbb$ such that $\abs{\bxi + \bbb} \asymp \rho$ and
    $\phi(\bxi,\bxi+\bbb)  \le \pi/4$ there is a $t_0 \gg \rho^{-\nu}$ such that for all $t \in
    [-t_0,t_0]$ the function $t \mapsto g_j(\bxi + t(\bxi + \bbb))$ is
    increasing and
    \begin{equation}
      \frac{\de}{\de t}g_j(\bxi + t(\bxi + \bbb)) \gg \rho^\alpha. 
    \end{equation}
    The implicit constant in $t_0 \gg \rho^{-\nu}$ depends only on 
  the functions $g_j$ and the implicit constants in $\abs{\bxi + \bbb} \asymp
  \rho$. 
  \end{lem}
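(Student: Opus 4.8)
The statement is essentially a first- and second-order Taylor estimate for the real-valued function $t \mapsto g_j(\bxi + t(\bxi + \bbb))$ along the line segment based at $\bxi \in \CG_j$ in the direction $\bxi + \bbb$, using that $\bxi$ lies in the non-resonant set where $g_j$ is $\RC^2$ with controlled derivatives. The plan is to write $g_j = a_j \abs \cdot^\alpha + G_j$ as in \eqref{eq:gsmall} and differentiate each piece along the segment. First I would observe that since $\bxi \in \CG_j \subset \CB_j$, the segment $\set{\bxi + t(\bxi + \bbb) : \abs t \le t_0}$ stays inside $\CB_j$ (indeed inside $\set{\abs \zeta \asymp \rho}$) once $t_0$ is chosen small enough, using the bound $\dist(\bxi, \CR_j) \gg \rho^{1-\nu}$ from Lemma \ref{lem:enlargedresonant} together with $\abs{\bxi + \bbb} \asymp \rho$; this guarantees that the $\RC^2$ bounds \eqref{eq:condgsmall} apply along the whole segment.

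The main computation is the derivative of the principal part. Writing $\bw := \bxi + \bbb$ and $f(t) := a_j \abs{\bxi + t \bw}^\alpha$, one computes
\begin{equation}
 f'(t) = a_j \alpha \abs{\bxi + t \bw}^{\alpha - 2} (\bxi + t\bw) \cdot \bw,
\end{equation}
so that $f'(0) = a_j \alpha \abs \bxi^{\alpha - 2} \bxi \cdot \bw = a_j \alpha \abs \bxi^{\alpha-1} \abs \bw \cos\phi(\bxi, \bxi + \bbb)$. Since $\phi(\bxi, \bxi + \bbb) \le \pi/4$ we have $\cos\phi \ge 1/\sqrt 2$, and $\abs \bxi \asymp \rho$, $\abs \bw \asymp \rho$, whence $f'(0) \gg \rho^\alpha$. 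A second differentiation gives $\abs{f''(t)} \ll \rho^{\alpha - 2} \abs \bw^2 \ll \rho^\alpha$ uniformly for $\abs t \le t_0 \le 1$ (using $\abs{\bxi + t\bw} \asymp \rho$ on the segment), so $f'(t) \ge f'(0) - t_0 \sup \abs{f''} \ge c \rho^\alpha - C t_0 \rho^\alpha \gg \rho^\alpha$ provided $t_0$ is a small enough constant. For the perturbation term $h(t) := G_j(\bxi + t\bw)$, the chain rule gives $h'(t) = \nabla G_j(\bxi + t\bw) \cdot \bw$, so by \eqref{eq:condgsmall} and $\abs \bw \asymp \rho$ we get $\abs{h'(t)} \ll \rho^{\gamma} \cdot \rho = \rho^{\gamma + 1}$ with $\gamma + 1 < \alpha$, which is $o(\rho^\alpha)$. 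Combining, $\frac{\de}{\de t} g_j(\bxi + t\bw) = f'(t) + h'(t) \ge c' \rho^\alpha - C' \rho^{\gamma + 1} \gg \rho^\alpha$ for $\rho$ large, which is both the positivity (hence monotonicity) claim and the quantitative lower bound.

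It remains to justify the claimed scale $t_0 \gg \rho^{-\nu}$ of the interval; this is the only genuinely delicate point. The constraint is that the segment must remain in the non-resonant region $\CB_j$, and since the relevant separation is $\dist(\bxi, \CR_j) \gg \rho^{1-\nu}$ (Lemma \ref{lem:enlargedresonant}) while the segment moves with speed $\abs \bw \asymp \rho$, staying in $\CB_j$ for $\abs t \le t_0$ requires $t_0 \rho \ll \rho^{1 - \nu}$, i.e. $t_0 \ll \rho^{-\nu}$; conversely this same estimate shows we \emph{may} take $t_0 \asymp \rho^{-\nu}$, and since $\nu < 1$ this is also small enough (for $\rho$ large) that the constant-scale requirements above for controlling $f''$ are satisfied. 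I would also note that we must check $\abs{\bxi + t\bw} \asymp \rho$ holds on the segment: this follows because $\abs{\bxi + t\bw} - \abs \bxi = O(t_0 \rho) = O(\rho^{1-\nu}) = o(\rho)$ and $\abs \bxi \asymp \rho$. The constants depend only on $\alpha$, the $a_j$, the implicit constants in \eqref{eq:condgsmall}, and the implicit constants in $\abs{\bxi + \bbb} \asymp \rho$, as asserted. I expect no serious obstacle beyond carefully tracking which quantities are $\asymp \rho^\alpha$ versus lower order, and making sure the choice of $t_0$ is simultaneously small enough (for the Taylor remainder of $f$) and compatible with $t_0 \gg \rho^{-\nu}$ (which is automatic since $\rho^{-\nu} \to 0$).
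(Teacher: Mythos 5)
Your proposal is correct and follows essentially the same route as the paper's proof: use $\dist(\bxi,\CR_{j'})\gg\rho^{1-\nu}$ to keep the segment in the (enlarged) non-resonant region for $\abs t\le t_0\asymp\rho^{-\nu}$, bound $\tfrac{\de}{\de t}G_j$ by $\rho^{\gamma+1}=\smallo{\rho^\alpha}$ via \eqref{eq:condgsmall}, and get the lower bound $\gg\rho^\alpha$ on the derivative of the principal part from $\cos\phi\ge\sqrt2/2$. The only cosmetic difference is that you control $f'(t)$ via $f'(0)$ plus a second-derivative Taylor bound, whereas the paper evaluates $\tfrac{\de}{\de t}\abs{\bxi+t(\bxi+\bbb)}^\alpha$ directly at general $t$ and absorbs the $t\abs{\bxi+\bbb}^2$ term; both are the same computation.
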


  \begin{proof}
    Since $\bxi \in \CG_j$, we have that not only $\bxi \in \CB_j$ for some $j$,
    but also, by Lemma \ref{lem:enlargedresonant}, that there exists $r > 0$
    such that for all $j'$,
    $\dist(\bxi;\CR_{j'}) > r \rho^{1-\nu}$. Therefore, for $\abs t \le t_0 :=  r \rho^{-\nu}$, we have that
    $\bxi + t(\bxi + \bbb) \in \CB'_j$. By \eqref{eq:condgsmall}, we have that
    \begin{equation}
      \abs{\frac{\de}{\de t} G_j(\bxi + t(\bxi + \bbb))} \ll \rho^{\gamma+1} 
      = \smallo{ \rho^\alpha}.
    \end{equation}
     On the other hand,
    \begin{align}
      \frac{\de}{\de t} \abs{\bxi + t(\bxi + \bbb)}^\alpha &= \alpha
      \abs{\bxi + t (\bxi + \bbb)}^{\alpha - 2}\times \\ & \qquad \times \left(\abs \bxi \abs{\bxi +
  \bbb}\cos\left( \phi(\bxi,\bxi+\bbb)\right) + t \abs{\bxi+\bbb}^2 \right) \\
 & \gg \rho^\alpha,
    \end{align}
    where the last line holds from the fact that $\cos \phi(\bxi,\bxi + \bbb) > \sqrt{2}/2$.
  \end{proof}
  
  \begin{lem} \label{lem:numinter}
    Let
    \begin{equation}
      \CN := \set{\bxi \in \R^d : n(\bxi;\CG) \le m n(\bxi;\CZ')},
    \end{equation}
    and $\CN_\CG = \CN \cap \CG$.
    Then, we have that
    \begin{equation}
      \vol(\CN_\CG) \ll \vol(\CZ') \ll
      \delta\rho^{d - \alpha - \eps_0},
    \end{equation}
    where $\eps_0$ is as in Lemma
    \ref{lem:enlargedresonant}.
  \end{lem}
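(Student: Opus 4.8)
The plan is to bound the volume of $\CN_\CG$ by relating the count $n(\bxi;\CG)$ on $\CN_\CG$ to the count $n(\bxi;\CZ')$ and then integrating. First I would observe that $\CN_\CG \subset \CG$, and for every $\bxi \in \CN_\CG$ we have, by definition of $\CN$, that $n(\bxi;\CG) \le m\, n(\bxi;\CZ')$. Since $\bxi\in\CG$ means $\bxi$ itself is a point of some $\CG_j$ (so $\bxi+\bzero\in\CG$ with $\btheta=\bzero$), we get $n(\bxi;\CG)\ge 1$, and hence $n(\bxi;\CZ')\ge 1/m$, i.e. $n(\bxi;\CZ')\ge 1$ since it is a nonnegative integer. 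Therefore $\CN_\CG$ is contained in the set of $\bxi$ for which there exists at least one $\btheta\in\Theta$ with $\bxi+\btheta\in\CZ'$.

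The next step is to convert this into an integral bound. Writing $\CZ' = \CZ(Z\delta)$ for the large constant $Z$ from the convention, we estimate
\begin{equation}
  \vol(\CN_\CG) \le \vol\Big( \bigcup_{\btheta\in\Theta} (\CZ' - \btheta) \cap \CG \Big)
  \le \sum_{\btheta\in\Theta} \vol\big((\CZ' - \btheta)\cap \CG\big).
\end{equation}
Here only finitely many $\btheta$ contribute: since every $\bxi\in\CG$ satisfies $\abs\bxi\asymp\rho$ and every point of $\CZ'$ also satisfies $\abs\cdot\asymp\rho$ (as $\CZ'\subset\CA_j$ for the enlarged parameter, and $\CA_j\subset\set{\abs\bxi\asymp\rho}$ by \eqref{eq:gsmall}), the intersection $(\CZ'-\btheta)\cap\CG$ is empty unless $\abs\btheta \ll \rho$. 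Thus the sum runs over $\#\set{\btheta\in\Theta:\abs\btheta\ll\rho}\ll\rho^d$ terms.

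The main obstacle — or rather the crucial input — is that each term must be bounded not merely by $\vol(\CZ')$ but by something with an extra negative power of $\rho$, so that after multiplying by $\rho^d$ translates one still gets a bound of the claimed order. This is exactly where the structure of $\CZ'$ as a thin spherical neighbourhood is needed: by Lemma \ref{lem:enlargedresonant} applied with parameter $Z\delta$ in place of $\delta$, one has $\vol(\CZ') \ll \delta\rho^{d-\alpha-\eps_0}$ (the constant $Z$ only affecting implicit constants). Actually, the cleanest route is to bound directly: since translating $\CG$ back into $\CZ'$ and using that the radial fibres $\CI_j$ of $\CZ'$ have length $\ll\delta\rho^{1-\alpha}$ while the angular cross-section has measure $\ll\rho^{-\nu}\#\Theta_\varkappa\ll\rho^{d\varkappa-\nu}=\rho^{-\eps_0}$ times the full sphere, one gets that each individual translate $(\CZ'-\btheta)\cap\CG$ — being contained in a set of the form (radial interval of length $\ll\delta\rho^{1-\alpha}$) $\times$ (angular set of measure $\ll\rho^{-\eps_0}$ within a bounded solid angle) — has volume $\ll \delta\rho^{d-\alpha-\eps_0-d}=\delta\rho^{-\alpha-\eps_0}$. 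Summing over the $\ll\rho^d$ relevant $\btheta$ yields $\vol(\CN_\CG)\ll\delta\rho^{d-\alpha-\eps_0}\ll\vol(\CZ')$, which is the assertion. I would then simply remark that the trivial inclusion $\CN_\CG\subset\CZ'$ already gives $\vol(\CN_\CG)\le\vol(\CZ')$ if in fact every $\bxi\in\CN_\CG$ lies in $\CZ'$ after noting $n(\bxi;\CG)\le m\,n(\bxi;\CZ')$ forces $\bxi$ near $\CZ'$; the careful translate-counting above is what makes the bound robust, and the estimate $\vol(\CZ')\ll\delta\rho^{d-\alpha-\eps_0}$ from Lemma \ref{lem:enlargedresonant} closes the argument.
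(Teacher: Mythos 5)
There is a genuine gap. Your reduction of the defining inequality of $\CN$ to the statement ``$n(\bxi;\CZ')\ge 1$'' throws away exactly the information the lemma needs: the resulting containment $\CN_\CG\subset\bigcup_{\btheta}(\CZ'-\btheta)\cap\CG$ is far too weak, since that union can in principle be all of $\CG$ (whose volume is $\asymp\delta\rho^{d-\alpha}$, with no $\eps_0$ gain). Your attempt to rescue the union bound by claiming $\vol\big((\CZ'-\btheta)\cap\CG\big)\ll\delta\rho^{-\alpha-\eps_0}$ for each individual $\btheta$ is unjustified and false in general: a single translate of $\CZ'$ intersected with $\CG$ can have volume comparable to $\vol(\CZ')\asymp\delta\rho^{d-\alpha-\eps_0}$ itself; the extra factor $\rho^{-d}$ you subtract in the exponent (``$\delta\rho^{d-\alpha-\eps_0-d}$'') does not come out of the radial-times-angular computation, which in fact reproduces $\delta\rho^{1-\alpha}\cdot\rho^{-\eps_0}\cdot\rho^{d-1}=\delta\rho^{d-\alpha-\eps_0}$ once the Jacobian $t^{d-1}$ is included. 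The closing remark that ``the trivial inclusion $\CN_\CG\subset\CZ'$'' might hold is also incorrect: $\CN_\CG\subset\CG$ and $\CG$ is disjoint from $\CZ$ by construction, so no such inclusion is available.

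The correct argument, which the paper uses, avoids the union bound entirely. For any $E\subset\R^d$ the function $n(\bxi;E)$ is constant on fibres $\bxi\bmod\Theta$, so $\CN$ is $\Theta$-invariant, and one has the unfolding identity $\vol(\CN\cap E)=\int_{\CN/\Theta}n(\bxi;E)\,\de\bxi$. Applying this with $E=\CG$, inserting the pointwise inequality $n(\bxi;\CG)\le m\,n(\bxi;\CZ')$ \emph{inside the integral}, and unfolding back with $E=\CZ'$ gives $\vol(\CN_\CG)\le m\,\vol(\CN\cap\CZ')\le m\,\vol(\CZ')$; Lemma \ref{lem:enlargedresonant} then supplies $\vol(\CZ')\ll\delta\rho^{d-\alpha-\eps_0}$. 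The multiplicities must be integrated, not merely bounded below by one.
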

  \begin{proof}
    Observe first that for any $E \subset \R^d$, the function $n(\bxi;E)$ is constant on the
    fibres $\bxi \mod \Theta$,
    as such it is well defined on $\CO$ and $\CN$ is invariant under the
    action of $\Theta$. We therefore have that
    \begin{align}
      \vol(\CN_\CG) &= \int_{\CN/\Theta} n(\bxi;\CG) \de \bxi \\
      &\le m \int_{\CN/\Theta}  n(\bxi;\CZ') \de \bxi \\
      & = m \vol(\CN \cap \CZ' ) \\
      &\le m \vol(\CZ').
    \end{align}
    The claim now follows from Lemma \ref{lem:enlargedresonant}.
  \end{proof}

  \begin{lem} \label{lem:goodcrossings}
    Let
    \begin{equation}
      \begin{aligned}
      \CU &:=\big\{\bxi \in \CG \setminus \CN : \bxi + \btheta_1 \in
      \CY_{ij,\pi/4}'(\btheta_1 - \btheta_2) \\ &\qquad \quad  \text{ for some } 
        1 \le i,j \le m
    \text{ and } \btheta_1,\btheta_2 \in \Theta\big\}.
      \label{eq:CU}
    \end{aligned}
    \end{equation}
    Then, if $\eps$ and $\delta$ are such that $\delta \rho^{d - \alpha + 2 \eps} \to 0$, we
    have
    \begin{equation}
      \vol(\CU) \ll \delta^2 \rho^{4 - 2 \alpha + 2 d + 6\eps} + \delta \rho^{1
      - \alpha + d - \eps(d-1)}.
    \end{equation}
  \end{lem}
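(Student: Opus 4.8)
The plan is to follow the combinatorial–geometric scheme of \cite[Section~9]{ParSob2010}, now keeping track of the $m$ components $\CG_1,\dots,\CG_m$ and of the extra lattice translation built into the definition of $\CU$. First I would unwind the definition. Writing $\bm:=\btheta_1-\btheta_2\in\Theta$, the condition $\bxi+\btheta_1\in\CY'_{ij,\pi/4}(\bm)$ says precisely that $\bxi+\btheta_1\in\CG'_i$, that $\bxi+\btheta_2\in\CG'_j$, and that $\phi(\bxi+\btheta_1,\bxi+\btheta_2)>\pi/4$; in particular $\bxi+\btheta_1$ lies in
\begin{equation*}
  \CW:=\bigcup_{1\le i,j\le m}\ \bigcup_{\bm\in\Theta}\CY'_{ij,\pi/4}(\bm),
\end{equation*}
the set whose volume is controlled by Lemma~\ref{lem:totalintersection}. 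Hence $\CU\subset(\CG\setminus\CN)\cap\big(\bigcup_{\btheta\in\Theta}(\CW-\btheta)\big)$, and since $\CG\subset\CG'$ every point of $\CU$ is itself a non-resonant point whose $\Theta$–orbit meets $\CW$.

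Passing to the torus $\CO=\R^d/\Theta$ and unfolding — using that $\CN$, and the event ``the $\Theta$–orbit of $\bxi$ meets $\CW$'', are $\Theta$–invariant — I would bound
\begin{equation*}
  \vol(\CU)\ \le\ \int_{\CO}\bone\!\big[(\bxi_0+\Theta)\cap\CW\neq\varnothing\big]\,
  \#\big((\bxi_0+\Theta)\cap(\CG\setminus\CN)\big)\,\de\bxi_0 .
\end{equation*}
The whole purpose of the restriction $\bxi\in\CG\setminus\CN$ is to keep the multiplicity factor under control: on $\CG\setminus\CN$ one has $\#\big((\,\cdot\,+\Theta)\cap\CG\big)>m\,\#\big((\,\cdot\,+\Theta)\cap\CZ'\big)$, and since $\CG'$ and $\CZ'$ are disjoint (immediate from \eqref{eq:czjsub} and \eqref{eq:cgjsub}) while every radial fibre of each $\CA_j$ is a single interval of length $\ll\delta\rho^{1-\alpha}$ (cf.~\cite[pp.~518--519]{ParSob2010}), this inequality lets one replace the count of $(\CG\setminus\CN)$–translates, up to an error supported where $\CZ'$–translates dominate, by a count of honest non-resonant crossings. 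One is then reduced to an estimate of the shape
\begin{equation*}
  \vol(\CU)\ \ll\ \sum_{1\le i,j\le m}\ \sum_{\bm\in\Theta}\vol\big(\CY'_{ij,\pi/4}(\bm)\big)\ +\ \vol(\CZ'),
\end{equation*}
in which the double sum is exactly the quantity bounded in Lemma~\ref{lem:totalintersection} and $\vol(\CZ')\ll\delta\rho^{d-\alpha-\eps_0}$ is of strictly lower order by Lemmata~\ref{lem:numinter} and~\ref{lem:enlargedresonant}.

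It remains to feed in the quantitative inputs. By Proposition~\ref{rem:xemptyint} one has $\abs\bm\asymp\rho$ whenever $\CY'_{ij,\pi/4}(\bm)\neq\varnothing$, so only $\ll\rho^d$ values of $\bm$ contribute; the inclusion $\CY'_{ij,\pi/4}(\bm)\subset\CX_{ij,\pi/4}(\rho,Z\delta,\bm)$ together with \eqref{eq:nokissmu} gives, uniformly in $\bm$,
\begin{equation*}
  \vol\big(\CY'_{ij,\pi/4}(\bm)\big)\ \ll\ \delta^2\rho^{4-2\alpha+d+6\eps}+\delta\rho^{1-\alpha-\eps(d-1)},
\end{equation*}
and multiplying by the number $\ll\rho^d$ of admissible $\bm$ and by the $m^2$ pairs $(i,j)$ produces the asserted bound $\delta^2\rho^{4-2\alpha+2d+6\eps}+\delta\rho^{1-\alpha+d-\eps(d-1)}$; here the standing hypothesis $\delta\rho^{d-\alpha+2\eps}\to0$ (recall $d\ge2$) secures the hypotheses of Lemma~\ref{lem:totalintersection} and of estimate \eqref{eq:nokissmu}.

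The one genuinely new point relative to \cite{ParSob2010}, and the main obstacle, is the multiplicity bookkeeping in the unfolding step: one must verify that the lattice translation by $\btheta_1$ in the definition of $\CU$ does not cost a power of $\rho$. This is exactly what the defining inequality $\#((\,\cdot\,+\Theta)\cap\CG)>m\,\#((\,\cdot\,+\Theta)\cap\CZ')$ of the complement of $\CN$ — hence the restriction $\bxi\in\CG\setminus\CN$ — is for, and making it rigorous will require combining that inequality with the fibrewise description of $\CA_j$ and the volume bounds of Lemmata~\ref{lem:numinter} and~\ref{lem:enlargedresonant}; everything else is a routine adaptation of the scalar argument.
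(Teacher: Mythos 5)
There is a genuine gap at the step you yourself flag as ``the main obstacle'', and it cannot be closed the way you propose. After unfolding over $\CO$ you are left with the weighted integral
\begin{equation*}
  \int_{\CO}\bone\big[(\bxi_0+\Theta)\cap\CW\neq\varnothing\big]\,
  \#\big((\bxi_0+\Theta)\cap(\CG\setminus\CN)\big)\,\de\bxi_0 ,
\end{equation*}
and to land on $\sum_{i,j}\sum_{\bm}\vol(\CY'_{ij,\pi/4}(\bm))+\vol(\CZ')$ you would need the multiplicity factor $n(\bxi_0;\CG)$ to be $O(1)$ on the set where the orbit meets $\CW$. The defining inequality of $\R^d\setminus\CN$, namely $n(\bxi;\CG)>m\,n(\bxi;\CZ')$, is a \emph{lower} bound on $n(\bxi;\CG)$; it gives no upper bound whatsoever, and indeed $n(\bxi_0;\CG)$ is on average of size $\vol(\CG)\asymp\delta\rho^{d-\alpha}$ and fluctuates (it counts lattice points in thin spherical shells). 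So the restriction to $\CG\setminus\CN$ does not do the job you assign to it, and your reduction to ``a count of honest non-resonant crossings'' is not established. In fact the hypothesis $\bxi\notin\CN$ plays no role at all in proving this lemma; it is carried in the definition of $\CU$ only because $\CU$ and $\CN_\CG$ are removed together later, in Proposition \ref{prop:overlap}.

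The paper's argument avoids unfolding and multiplicities entirely by placing $\bxi$ \emph{itself} in a crossing set. Since $\bxi\in\CG$, we have $\bxi\in\CG_k$ for some $k$. The condition $\bxi+\btheta_1\in\CY'_{ij,\pi/4}(\btheta_1-\btheta_2)$ gives $\bxi+\btheta_1\in\CG_i'$ and $\bxi+\btheta_2\in\CG_j'$, hence $\bxi\in\CY'_{ki}(-\btheta_1)\cap\CY'_{kj}(-\btheta_2)$; and from $\phi(\bxi+\btheta_1,\bxi+\btheta_2)>\pi/4$ and subadditivity of angles, at least one of $\phi(\bxi,\bxi+\btheta_1)$, $\phi(\bxi,\bxi+\btheta_2)$ exceeds $\pi/8$. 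Therefore
\begin{equation*}
  \CU\ \subset\ \bigcup_{i,j=1}^{m}\bigcup_{\btheta\in\Theta}\CY'_{ij,\pi/8}(\btheta),
\end{equation*}
and Lemma \ref{lem:totalintersection} applied with $\omega=\pi/8$ (its hypothesis $\delta\rho^{2-\alpha+2\eps}\to0$ being implied by yours since $d\ge 2$) gives the stated bound in one line. Your final quantitative step (counting $\ll\rho^d$ admissible shifts via Proposition \ref{rem:xemptyint} and using \eqref{eq:nokissmu}) is just a re-derivation of Lemma \ref{lem:totalintersection} and is fine as far as it goes; the missing idea is the angle-halving inclusion above, which is what lets one bound $\vol(\CU)$ by a volume rather than by a weighted lattice count.
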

  \begin{proof}
    Suppose that $\bxi \in \CU \subset \CG$, so that $\bxi \in \CG_k$ for some
    $1 \le k \le m$. Consider the lattice elements $\btheta_1, \btheta_2 \in
    \Theta^\dagger$ such that $\bxi + \btheta_1 \in \CY_{ij,\pi/4}'(\btheta_1 -
    \btheta_2)$. By definition of $\CY_{ij,\pi/4}'$ and translation, this means that
    \begin{equation}
      \bxi \in (\CG_i' - \btheta_1) \cap (\CG_j' - \btheta_2),
    \end{equation}
and therefore that
\begin{equation}
  \bxi \in \CY_{ki}'(-\btheta_1) \cap \CY_{kj}'(- \btheta_2).
\end{equation}
Furthermore, $\phi(\bxi+\btheta_1,\bxi+\btheta_2) > \pi/4$. As such,
\begin{equation}
  \max \set{\phi(\bxi,\bxi+\btheta_1),\phi(\bxi,\bxi+\btheta_2)} > \pi/8.
\end{equation}
Combining the previous two displays yields
\begin{equation}
  \bxi \in \CY_{ki,\pi/8}'(-\btheta_1) \cup \CY_{kj,\pi/8}(-\btheta_2).
\end{equation}
Therefore,
    \begin{equation}
      \begin{aligned}
        \vol(\CU) &\le \vol\left( \bigcup_{i,j = 1}^m \bigcup_{\btheta \in
        \Theta} \CY'_{ij,\pi/8}(\btheta) \right)  \\
        &\ll \delta^2 \rho^{4 - 2 \alpha + 2d + 6\eps} + \delta \rho^{1 -
        \alpha + d - \eps(d - 1)},
    \end{aligned}
    \end{equation}
    the last line holding by virtue of Lemma \ref{lem:totalintersection}.
  \end{proof}
  \begin{prop}\label{prop:overlap}
   Let 
   \begin{equation}
     \label{eq:propoverlap}
     s := \min \set{\frac{\alpha d  - d^2 - 3d - \alpha - 2}{2(d+2)}, \alpha - d
     + \frac{\alpha - d - 2}{2(d+2)}}.
   \end{equation}
   For $\rho$ large enough and $\delta = \smallo{\rho^s}$,
   the set
   $$
   \CK:= \CG \setminus(\CN_\CG \cup \CU)   
   $$
   is non empty.
  \end{prop}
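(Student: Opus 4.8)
The plan is to show that the measure of the ``bad'' set $\CN_\CG \cup \CU$ is of strictly smaller order than $\vol(\CG)$, which by \eqref{eq:voltCG} is $\asymp \delta \rho^{d-\alpha}$; once this is established, $\CK = \CG \setminus (\CN_\CG \cup \CU)$ is nonempty for $\rho$ large. Recall from Lemma \ref{lem:numinter} that $\vol(\CN_\CG) \ll \delta \rho^{d - \alpha - \eps_0}$ with $\eps_0 = \nu - d\varkappa > 0$, which is already $\smallo{\vol(\CG)}$ with no constraint on $\delta$ beyond the one implicit in Lemma \ref{lem:ParSob2010}. So the real constraint comes from $\CU$: by Lemma \ref{lem:goodcrossings}, provided $\delta \rho^{d - \alpha + 2\eps} \to 0$, we have
\begin{equation}
  \vol(\CU) \ll \delta^2 \rho^{4 - 2\alpha + 2d + 6\eps} + \delta \rho^{1 - \alpha + d - \eps(d-1)}.
\end{equation}

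First I would compare each of the two terms in this bound to $\vol(\CG) \asymp \delta\rho^{d-\alpha}$. Dividing through, we need
\begin{equation}
  \delta \rho^{4 - \alpha + d + 6\eps} \to 0 \qquad \text{and} \qquad \rho^{1 - \eps(d-1)} \to 0,
\end{equation}
together with the standing requirement $\delta \rho^{d - \alpha + 2\eps} \to 0$ from Lemma \ref{lem:goodcrossings} (which is weaker than the first of these two as long as $4 + 6\eps \ge 2\eps$, i.e.\ always). The second condition forces a choice $\eps > 1/(d-1)$, which is harmless since $\eps$ is a free parameter subject only to $\eps_0 = \nu - d\varkappa > 0$ staying positive and $\eps$ small; in practice one picks $\eps$ a small fixed quantity with $\eps(d-1) > 1$, e.g.\ any $\eps \in (1/(d-1), 2/(d-1))$ compatible with the constraints in Lemma \ref{lem:ParSob2010}. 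The first condition then reads $\delta = \smallo{\rho^{\alpha - d - 4 - 6\eps}}$. Tracking the optimization over the admissible range of $\eps$ and balancing against the other term yields precisely the exponent
\begin{equation}
  s = \min\set{\frac{\alpha d - d^2 - 3d - \alpha - 2}{2(d+2)},\ \alpha - d + \frac{\alpha - d - 2}{2(d+2)}}
\end{equation}
quoted in \eqref{eq:propoverlap}; the two expressions in the minimum come from balancing the $\delta^2$-term and the $\delta$-term respectively against $\vol(\CG)$, using the optimal $\eps \asymp 1/(d+2)$ hidden in the algebra. So the plan is: choose $\eps$ optimally (roughly $\eps = 1/(d+2)$, adjusted to satisfy $\eps(d-1) > 1$ and $\eps < \eps_0/(\text{const})$), then for any $\delta = \smallo{\rho^s}$ conclude
\begin{equation}
  \vol(\CN_\CG \cup \CU) \le \vol(\CN_\CG) + \vol(\CU) = \smallo{\delta \rho^{d - \alpha}} = \smallo{\vol(\CG)},
\end{equation}
whence $\vol(\CK) \asymp \delta\rho^{d-\alpha} > 0$ and in particular $\CK \ne \varnothing$.

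The main obstacle is the bookkeeping of the exponents: one must verify that a single choice of $\eps$ (and of the constant $Z$ in the convention defining primed sets, which enters through Lemma \ref{lem:goodcrossings}) can simultaneously satisfy all the constraints --- positivity of $\eps_0$, the convergence $\delta\rho^{d-\alpha+2\eps}\to 0$, the domination $\rho^{1-\eps(d-1)}\to 0$, and the two balancing inequalities against $\vol(\CG)$ --- and that the resulting threshold on $\delta$ is exactly $\rho^s$ with $s$ as in \eqref{eq:propoverlap}. I would carry this out by writing $\delta = \rho^{\sigma}$ with $\sigma < s$, substituting into both terms of the $\vol(\CU)$ bound and into $\vol(\CG)$, and reducing to a system of linear inequalities in $\sigma$ and $\eps$; the claimed $s$ is the supremum of $\sigma$ for which this system is solvable. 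This is elementary but must be done carefully, since an off-by-one in the dimension counting (e.g.\ confusing $\vol_{d-1}$ of a spherical cap with $\vol_d$ of its radial extension, or miscounting $\#\set{\btheta : \abs\btheta \asymp \rho} \ll \rho^d$) would corrupt the final exponent. Everything else --- nonemptiness from positive measure, invariance of $n(\bxi;E)$ under $\Theta$, the inclusions $\CG \subset \CB$ --- is already in place from the preceding lemmas.
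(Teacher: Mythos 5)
Your overall strategy is the same as the paper's: bound $\vol(\CN_\CG)$ by Lemma \ref{lem:numinter}, bound $\vol(\CU)$ by Lemma \ref{lem:goodcrossings}, and show the sum is $\smallo{\vol(\CG)}$ with $\vol(\CG)\asymp\delta\rho^{d-\alpha}$. The two reduced conditions you derive, $\delta\rho^{4-\alpha+d+6\eps}\to 0$ and $\rho^{1-\eps(d-1)}\to 0$, are the correct quotients. The problem is that the entire quantitative content of the proposition — that the admissible threshold for $\delta$ is exactly $\rho^s$ with $s$ as in \eqref{eq:propoverlap} — is precisely the part you defer ("tracking the optimization\dots yields precisely the exponent $s$"), and the parameter choice you sketch is internally inconsistent: you insist on $\eps>1/(d-1)$ (needed to kill the second term, since the $\delta$'s cancel there) while simultaneously claiming the optimal choice is $\eps\asymp 1/(d+2)$; but $1/(d+2)<1/(d-1)$ for every $d\ge 2$, so no single $\eps$ does both. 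The paper does not take $\eps\asymp 1/(d+2)$: it fixes the $\alpha$-dependent value $\eps=\frac{\alpha-d-2}{2(d+2)}$ (note that the second entry of the minimum in \eqref{eq:propoverlap} is exactly $\alpha-d+\eps$ for this $\eps$), and separately verifies the standing hypothesis $\delta\rho^{2-\alpha+2\eps}\to 0$ of Lemma \ref{lem:totalintersection}.

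Moreover, if one actually follows your route and takes $\eps$ slightly above $1/(d-1)$, the constraint from the $\delta^2$-term becomes $\delta=\smallo{\rho^{\alpha-d-4-6/(d-1)}}$, and this exponent does not coincide with $s$; for small $\alpha-d$ it is strictly smaller than $s$ (e.g.\ for $\alpha=1$, $d=2$ it is $-11$ while $s=-11/8$), so your argument would establish nonemptiness of $\CK$ only for a strictly narrower range of $\delta$ than the proposition asserts, and the proposition quantifies over \emph{all} $\delta=\smallo{\rho^s}$. A smaller point: the constraint you cite for $\eps$ being "small" — positivity of $\eps_0=\nu-d\varkappa$ — is irrelevant, since $\eps_0$ is determined by the $\nu,\varkappa$ of Lemma \ref{lem:ParSob2010} and has nothing to do with the free parameter $\eps$ entering \eqref{eq:nokissmu}. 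To close the gap you must exhibit a concrete $\eps$ (the paper's choice, or another) and verify, by the explicit exponent arithmetic you postponed, that every $\delta=\smallo{\rho^s}$ with the stated $s$ satisfies all the resulting inequalities simultaneously.
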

  \begin{proof}
    Recall from Lemma \ref{lem:enlargedresonant} that $\vol(\CG) \asymp \delta
    \rho^{d-\alpha}$. On the other hand we have from Lemma \ref{lem:numinter}
    that there is $\eps_0 > 0$ such that
    \begin{equation}
      \vol(\CN_\CG) \ll \delta\rho^{d- \alpha - \eps_0},
    \end{equation}
    and from Lemma \ref{lem:goodcrossings} that as soon as $\delta \rho^{d -
    \alpha + 2 \eps} \to 0$ we have that
    \begin{equation}
      \vol(\CU) \ll \delta^2 \rho^{4 - 2 \alpha + 2 d + 6 \eps} + \delta \rho^{1
      - \alpha + d - \eps(d-1)}.
    \end{equation}
 Take 
    \begin{equation}
    \label{eq:freefixed}
    \eps = \frac{\alpha - d - 2}{2(d+2)}.
  \end{equation}
  Observe that indeed when $\delta = \smallo{\rho^s}$ we have 
  $\delta \rho^{2 - \alpha + 2 \eps} \to 0$ as $\rho \to \infty$. We also
  observe that with that choice of parameters $\vol(\CU) + \vol(\CN_\CG) =
  \smallo{\vol(\CG)}$ and hence, for large enough $\rho$, $\CK$ is not empty.
\end{proof}  

We now have all the necessary ingredients to prove Proposition
\ref{prop:counting}.

\begin{proof}[Proof of Proposition \ref{prop:counting}]
  Let $s$ be defined as in \eqref{eq:propoverlap}. For any $\eps > 0$, set $s' = \min\set{s - \eps,\alpha
  -\nu}$ where $\nu$ is obtained in Lemma \ref{lem:ParSob2010}, and put $\delta
  = \rho^{s'}$. Then, by Proposition \ref{prop:overlap} $\CK$ is nonempty for
  $\rho$ large enough; choose $\bxi_0 \in \CK$. 
  For $1 \le j \le m$, let $\Gamma_j, \Gamma_j' \subset \Theta$ be defined as
      \begin{equation}
        \Gamma_j := \set{\btheta \in \Theta : \bxi_0 + \btheta
        \in \CG_j}, \qquad \Gamma_j' := \set{\btheta \in \Theta : \bxi_0
        + \btheta \in \CG_j'}.
      \end{equation}
      It follows from the definition of $\CK$ that 
      $$
      Q:= \sum_{j=1}^m \# \Gamma_j 
      \ge
      n(\bxi_0;\CG).
      $$
      Since $\bxi_0 \not \in \CU$, we have that 
      $\phi(\bxi_0,\bxi_0 + \btheta) \le \pi/4$ for all $\btheta \in
      \Gamma_j$. 
       From equations
       \eqref{eq:gsmall}--\eqref{eq:condgsmall}, Lemma
       \ref{lem:increasing}, since
       $\delta \ll \rho^{\alpha -\nu}$ there are $t \ll
      \rho^{-\nu}$ and $Z_0$ independent of $\rho$ such that for all $1 \le j \le
      m$ and $\btheta \in \Gamma_j$,
      \begin{equation}
        \label{eq:overlaponeside}
     \rho^\alpha - Z_0 \delta \le g_{j}\left(\left( 1 - t\right)\bxi_0
     + \btheta \right) \le \rho^\alpha - \delta,
      \end{equation}
      and
      \begin{equation}
        \label{eq:overlapotherside}
      \rho^\alpha + \delta \le   g_{j}\left(\left( 1 + t \right)\bxi_0
      + \btheta \right) \le \rho^\alpha +  Z_0 \delta.
      \end{equation}
      It is clear that if \eqref{eq:overlaponeside} and
      \eqref{eq:overlapotherside} hold, they also hold replacing $Z_0$ with any
      $Z > Z_0$. The precise value we assign to $Z$ might change as the proof goes along, but it will remain
      independent of $\rho$ and $\delta$. 
  We denote by $\CJ$ the radial interval
  $$
  \CJ:= [(1-t) \bxi_0,(1+t)\bxi_0].
  $$We now restrict ourselves to consider
      in \eqref{eq:overlap} only those $\bk \in \CJ \mod \Theta$. This
      corresponds, at the level of Besicovitch space, to the study of the
      operators
      \begin{equation}
        \BA^\CJ := \BA P_{\CJ + \Theta},
      \end{equation}
      where the projection $P_{\CJ+\Theta}$ is defined in Proposition
      \ref{prop:normorder}.
      We also define the projection $P_{\text{good}}$ acting on the basis
      elements $e_{\bxi} \otimes v_j$ as
      $$
    P_{\text{good}} e_{\bxi} \otimes v_j := (P_{\CJ + \Gamma_j} e_{\bxi}) \otimes
    v_j.
      $$
      From Lemma \ref{lem:increasing} and its proof, for all $\bxi \in \CJ$, and
      all $\btheta \in \Gamma_j$, we have that $\bxi
      + \btheta \in \CB_j'$. It therefore follows from \eqref{eq:nrinv} and that
      $P_{\text{good}}$ commutes with $\BA$ and $\BA^{\CJ}$. We decompose
      \begin{equation}
        \label{eq:decomp}
        \BA^\CJ = \BA_{\text{good}} + \BA_{\text{bad}},
      \end{equation}
      where
      \begin{equation}
        \BA_{\text{good}}   := \BA^\CJ P_{\text{good}} \qquad \text{and} \qquad
        \BA_{\text{bad}} = \BA^\CJ(\Id
        - P_{\text{good}}).
      \end{equation}
      For $\bxi \in \CJ$, denote by $\bmu := \set{\mu_p(\bxi):
      1 \le p \le Q}$ the eigenvalues of the
      operators $\BA_{\text{good}}(\set{\bxi})$, and
      $\hat \bnu := \set{\hat \nu_p (\bxi) : p \in \N}$ the eigenvalues of
      $\BA_{\text{bad}}(\set{\bxi})$, each ordered nondecreasingly pointwise. It follows that
      for every $\bxi \in \CJ$, we have the equality of multisets
      $$
      \set{\lambda_p(\BA(\set \bxi)) : p \in \N} = \bmu \cup \hat \bnu.
      $$
      Furthermore, for all $\nu \in \hat \bnu$ corresponds $(j,\btheta)$ such
      that either $\bxi_0 + \btheta \in \CZ_j'$ or $\bxi_0 + \btheta \not \in
      \CA_j'$. By Lemma \ref{lem:increasing}, for $1 \le j \le m$ and $\btheta \in
      \Gamma_j$, the functions $g_j(\bxi + \btheta)$ are continuous and
      increasing for $\bxi \in
      \CJ$. Since the functions $\mu_p$ are obtained simply by ordering the
      functions $g_j(\bxi + \btheta)$ for $1 \le j \le m$ and $\theta \in
      \Gamma_j$ in nondecreasing
      order at every point $\set{\bxi}$, they are
      therefore themselves continuous and increasing on $\CJ$.  
      
      Let us now consider the functions $\hat \nu_\bp$. Notice that they are formed by
      removing $\#\Gamma$ continuous branches from a family of continuous functions. They
      are therefore all continuous on $\CJ$, and \cite[Theorem 3.6]{ParSob2010}
      also applies to them. In particular, one can choose $Z \ge Z_0$ large enough
      so that all functions $\nu_\bp$ such that
      $\nu_\bp(\bxi_0) \le \rho^\alpha - Z\delta/2$, respect the bound $\nu_\bp(\bxi) <
      \rho^\alpha - \delta$ for all $\bxi \in \CJ$. Similarly, all functions
      $\nu_\bp$ such that
      $\nu_\bp(\bxi_0) > \rho^\alpha + Z\delta/2$ respect $\nu_\bp(\bxi) >
      \rho^\alpha + \delta$ for all $\bxi \in \CJ$. We therefore define the sets
      \begin{equation}
        \btau := \set{\nu \in \hat \bnu : \abs{\nu(\bxi_0) - \rho^\alpha}
        \ge \frac{Z\delta}{2}}
      \end{equation}
      and $\bnu = \hat \bnu \setminus \btau$. In particular, for all $\nu \in
      \bnu$ there corresponds $(j,\btheta)$ such that in $\bxi_0 + \btheta \in
      \CA_j'$ and therefore in $\CZ_j'$.
      
      By construction, we have
      indeed that at every $\bk \in [\bk_1,\bk_2]$ the images of the families
      $\bmu, \bnu, \btau$ are the eigenvalues of $\BA(\bk)$ counted with
      multiplicity, satisfying therefore \eqref{eq:firstbp}. Hypothesis
      \eqref{eq:mucrossenough} is also satisfied since all $\mu \in \bmu$ are
      increasing, and for all $\mu \in \bmu$, either
      \begin{equation}
        \label{eq:satisfied}
        \mu((1 - t)\bxi_0) < \rho^\alpha
      - \delta,\qquad \text{or} \qquad \mu( (1+t)\bxi_0) > \rho^\alpha + \delta.
    \end{equation}
      It follows from our choice of $t$ in
      \eqref{eq:overlaponeside}--\eqref{eq:overlapotherside} that there are at
      least $Q
      \ge n\left(\bxi_0;\CG \right)$ of them so that both inequalities in
      \eqref{eq:satisfied}. On the other hand, all
      $\nu \in \bnu$ correspond to some $(j,\btheta) \in \tilde \Theta$
      such that $\bxi_0 + \btheta \in \CZ'$, hence there's at most 
      $m n(\bxi_0;\CZ') < n(\bxi_0;\CG)$ of them, hence hypothesis
      \eqref{eq:secondbp} holds. Finally, we constructed
      $\btau$ explicitly so that for all $\tau \in \btau$,
      hypothesis \eqref{eq:thirdbp} holds.

      Proposition \ref{prop:counting} is therefore true, and we conclude that we
      have at least $\zeta(\rho^\alpha;\BA)>\delta$. In view of our choice of
      $\delta$ this also gives us an overlap exponent at least
      \begin{equation}
        S \ge \min\set{\alpha - \nu, s - \eps}
      \end{equation}
      where $\eps > 0$ is arbitrary and $s$ is given in \eqref{eq:propoverlap}.
\end{proof}
\section{The Dirac Operator} \label{sec:dirac}

In this section, we aim to get conditions on perturbations of the Dirac operator
so that the gauge transform and, more importantly, all the theorems from Part II
can be applied. Basic facts and theorem on the Dirac operator
are found in \cite{gilbertmurray,thaller}. We 
consider Dirac operators built through Clifford algebras, of which the usual
two- and three-dimensional cases are examples. We are then able to
explicitly describe perturbations to which we can apply the gauge transform
method and recover the results of Sections \ref{sec:besicovitch}--\ref{sec:cg}.

\subsection{Clifford algebras} \label{sec:clifford}

We give here basic facts about Clifford algebras used to construct the Dirac
operator in the flat setting. They can be found in \cite[Section
7]{gilbertmurray}. Let
$\R^{p,q}$ be the euclidean space of dimension $p+q$ equipped with the canonical
quadratic form $\eta$ of signature $(p,q)$. In our applications, we consider only the cases
$\R^{0,d}$ (Euclidean) and $\R^{1,d}$ (Minkowski). We denote their orthonormal
bases respectively
$\set{\bv_1,\dotsc,\bv_d}$ and $\set{\bv_0,\bv_1,\dotsc,\bv_d}$.
Consider the algebra $\FA_{p,q}$ generated by
$\set{1,\bv_1,\dotsc,\bv_d}$ or $\set{1,\bv_0,\dotsc,\bv_d}$ with the
relations 
\begin{align}
  \label{eq:anticommute}
  \bv_j \bv_k + \bv_k \bv_j = - 2\eta_{jk}.
\end{align}
It is easy to see that $\FA_{p,q}$ has dimension $2^{p+q}$.
 For
 any subset $S := \set{s_1,\dotsc,s_k} \subset \set{0,\dotsc,d}$ (or of
 $\set{1,\dotsc,d}$ in the euclidean setting), we denote
by $\bv_S$ the element $\bv_{s_1} \cdots \bv_{s_k} \in \FA_{p,q}$, where by
convention
$\bv_\varnothing = 1$. The Clifford algebra on $\R^{p,q}$ is isomorphic to the exterior
algebra $\Lambda^*(\R^d)$. 

From the anticommutation relation \eqref{eq:anticommute}, we deduce
that each pair of the $2^{p+q}$ generators of $\FA_{p,q}$ either commutes or
anticommutes, according to the rule
\begin{equation}
  \begin{cases}
    \bv_j \bv_S = (-1)^{\abs S} \bv_S \bv_j & \text{if } j \not \in S, \\
    \bv_j \bv_S = (-1)^{\abs S - 1} \bv_S \bv_j & \text{if } j  \in S. \\
\end{cases}
\end{equation}
When $p+q$ is even, there is a faithful representation of $\FA_{p,q}$ acting on the spinor
space $\C^{2^{(p+q)/2}}$. A specific representation by matrices constructed
recursively is given in
\cite{Upmeier} in the Euclidean and Minkowski cases. This representation
$\gamma$ has the
property that for all $1 \le j \le d$, the matrix $\gamma_j := \gamma(\bv_j)$ is
skew-hermitian and squares to $-\Id_{p+q}$, $\gamma_0 := \gamma(\bv_0)$ is
hermitian and squares to the identity, and there is some $\abs c = 1$ so that the grading operator $\Gamma := c \prod_j \gamma_j$ is a diagonal
matrix of the form
\begin{equation}
  \Gamma = \begin{pmatrix}
    \Id_{(p+q)/2} & 0 \\
    0 & - \Id_{(p+q)/2}
  \end{pmatrix}.
\end{equation}
We can observe that for all $j$, $\Gamma \gamma_j = -
\gamma_j \Gamma$. The operator $\Gamma$ is called “grading” because it induces a
$\Z_2$ grading on $\FA_{p,q}$. The even subalgebra of $\FA_{p,q}$ consists of
all the elements commuting with $\Gamma$, while the odd subspace consists of all
the anti-commuting elements. In particular, all the $\gamma_j$ are in the odd
subspace, which is characterised as a product of an odd number of generators,
while the even subalgebra is characterised as products of even number
of generators.

\begin{lem}
  \label{lem:anticom}
  Let $\gamma$ be an element of the odd subspace. Then, as a matrix it has the
  form
  \begin{equation}
    \gamma := \begin{pmatrix}
      \boldsymbol 0 & \BX \\
      \BY & \boldsymbol 0
    \end{pmatrix},
  \end{equation}
  where each of the blocks is a $m/2 \times m/2$ matrix.
\end{lem}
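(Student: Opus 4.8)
The plan is to read off the block structure of an odd element directly from the $\Z_2$ grading, using only the anticommutation relation with $\Gamma$. Recall that, by definition of the representation $\gamma$, the grading operator $\Gamma$ is the diagonal matrix $\operatorname{diag}(\Id_{m/2}, -\Id_{m/2})$ (here $m = 2^{(p+q)/2}$ is the spinor dimension), and that an element $\gamma$ lies in the odd subspace precisely when it anticommutes with $\Gamma$, i.e. $\Gamma \gamma = -\gamma \Gamma$, equivalently $\Gamma \gamma \Gamma = -\gamma$.

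First I would write $\gamma$ in $m/2 \times m/2$ block form as
\begin{equation}
  \gamma = \begin{pmatrix} \BW & \BX \\ \BY & \BZ \end{pmatrix}.
\end{equation}
Then a direct computation gives
\begin{equation}
  \Gamma \gamma \Gamma = \begin{pmatrix} \Id & \boldsymbol 0 \\ \boldsymbol 0 & -\Id \end{pmatrix}
  \begin{pmatrix} \BW & \BX \\ \BY & \BZ \end{pmatrix}
  \begin{pmatrix} \Id & \boldsymbol 0 \\ \boldsymbol 0 & -\Id \end{pmatrix}
  = \begin{pmatrix} \BW & -\BX \\ -\BY & \BZ \end{pmatrix}.
\end{equation}
Imposing $\Gamma \gamma \Gamma = -\gamma$ forces $\BW = -\BW$ and $\BZ = -\BZ$, hence $\BW = \BZ = \boldsymbol 0$, which is exactly the claimed form. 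This is elementary linear algebra once the shape of $\Gamma$ is in hand, so there is essentially no obstacle; the only thing to be careful about is invoking the correct fact, namely that membership in the odd subspace is equivalent to anticommutation with $\Gamma$, which was recorded in the discussion of the grading just above the lemma. (One should also note $m = 2^{(p+q)/2}$ is even, so the blocks are genuine square matrices, and that each generator $\gamma_j$, being a product of one generator, indeed lies in the odd subspace, consistent with the block form exhibited for $\gamma_j$ in the two- and three-dimensional examples.)

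I expect the write-up to be only a few lines; if anything needs slight care it is the bookkeeping of which half-integer block is which, but since $\Gamma$ is given in the explicit diagonal form above there is no genuine difficulty.
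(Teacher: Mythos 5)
Your proof is correct and is exactly the argument the paper gives: the paper's proof consists of the single remark that the block form follows from computing the relation $\gamma\Gamma + \Gamma\gamma = 0$ on matrix elements, which is precisely the computation you carry out explicitly. Nothing further is needed.
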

\begin{proof}
  This follows from a simple computation of the relation $\gamma \Gamma +
  \Gamma \gamma = 0$ on the matrix elements.
\end{proof}
The representation $\gamma$ also allows us to see that as a $C^*$-algebra, $\FA_{p,q}$ is naturally isomorphic to an algebra of operators on a
Hilbert space $\FS_{p+q}$, called the \emph{spinor space}. When $m$ is even, we have
that $\FS_{p+q} \cong \C^{2^{(p+q)/2}}$. Therefore, setting $m = 2^{(p+q)/2}$,
we can use this representation to obtain operators in $\BS^\infty_m$.

\subsection{Dirac operators}

We define (spatial) Dirac operators differently depending on whether the number of spatial
dimensions is even or odd.
\begin{defi}
  Let $d$ be odd. The $d$-dimensional free Dirac operator $\BA_d$ is the first order
  system acting on spinors in $\RL^2(\R^d;\C^m)$, for $m = 2^{\frac{d+1}{2}}$ given
  by
  \begin{equation}
    \BA_d = \sum_{j=1}^d \gamma_j \del_j,
  \end{equation}
  where the $\gamma_j$ are given by the representation of $\FA_{1,d-1}$ in
  $\CL(\C^{m})$.
\end{defi}
\begin{defi}
  Let $d$ be even. The $d$-dimensional free Dirac operator $\BA_d$ is the first
  order system acting on spinors $\RL^2(\R^d;\C^m)$, for $m = 2^{\frac d 2}$
  given by
  \begin{equation}
    \BA_d = \sum_{j=1}^d \gamma_j \del_j,
  \end{equation}
  where the $\gamma_j$ are given by the representation of $\FA_{0,d}$ in
  $\CL(\C^{m})$.
\end{defi}

It is easy to see in both cases that $\BA_d^2 = - \Delta \Id_{m}$. 
\begin{exa}
  The two-dimensional Dirac operator with mass $M$ is given in \cite[Equation 1.14]{thaller} 
   as
  \begin{equation}
    \BA_{2,M} = -i\left( \sigma_1 \del_{x_1} + \sigma_2 \del_{x_2} \right) + \sigma_3
    M,
  \end{equation}
where $\sigma_1, \sigma_2,
\sigma_3$ are the Pauli matrices
\begin{align}
\sigma_1=\begin{pmatrix}
0 & 1\\
1 & 0
\end{pmatrix},\ 
\sigma_2=\begin{pmatrix}
0 & -i\\
i & 0
\end{pmatrix},\
\sigma_3=\begin{pmatrix}
1 & 0\\
0 & -1
\end{pmatrix}.
\end{align}
It is a perturbation of order $0$ of the free Dirac operator. Indeed, the Pauli
matrices can be used for a representation of the Clifford algebra $\FA_{0,2}$,
and $\sigma_3$ corresponds to the grading operator $\Gamma$.
\end{exa}

\begin{exa}
  The three-dimensional Dirac operator with mass $M$ from \cite[Equation 1.11]{thaller} given
  by 
  \begin{equation}
    \BA_{3,M} = -i\left( \gamma_1 \del_{x_1} + \gamma_2 \del_{x_2}  + \gamma_3
    \del_{x_3}\right) + \Gamma M
  \end{equation}
  is also a perturbation of order $0$ of the free Dirac operator. Here, the
  matrices $\gamma_j$ are the Dirac $\gamma$-matrices used as a representation
  of $\FA_{1,3}$, and our notation generalises this notion, following
  \cite{Upmeier}.
\end{exa}

We now show
that the operators $\BA_d$ are elliptic in the sense of Section
\ref{sec:perturbation}.

\begin{prop}
  \label{prop:diagdirac}
  Let $m:= m(d)$ be the dimension of the spinor space on which $\BA_d$ acts. The
  operator $\BU \in \BS^0_m$ with symbol
  \begin{equation}
    \label{eq:explicitunitary}
    \bu(\bx,\bxi) := \frac{\bone_{\set{\abs \bxi \ge 1}}(\bxi)}{\sqrt 2}\left(\Id_m + \frac{i}{\abs \bxi} \Gamma
    \sum_{j=1}^d \bxi_j \gamma_j\right)+
    \bone_{\set{\abs{\bxi} < 1}}(\bxi)\Id_m
  \end{equation}
  is unitary. Furthermore, $\BU \BA_d \BU^* \in \BD\BE\BS_m^1$ and there is $\BR
  \in \BS_m^{-\infty}$ such that the symbol of $\BU \BA_d \BU^* - \BR$ is
  $\abs{\bxi} \Gamma$. 
\end{prop}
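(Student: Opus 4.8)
The plan is to verify the three assertions of Proposition~\ref{prop:diagdirac} by direct computation with the Clifford relations. First I would show that $\BU$ is unitary. Since the symbol $\bu(\bx,\bxi)$ does not depend on $\bx$, the operator $\BU=\Op(\bu)$ belongs to $\BD\BS_m^0$ and $\BU^*=\BU^\dagger=\Op(\bu^\dagger)$ has symbol $\bu(\bxi)^*$ (the pointwise conjugate transpose), so it suffices to check $\bu(\bxi)\bu(\bxi)^*=\bu(\bxi)^*\bu(\bxi)=\Id_m$ for each fixed $\bxi$. For $\abs\bxi<1$ this is trivial. For $\abs\bxi\ge 1$, write $D(\bxi):=\Gamma\sum_j\bxi_j\gamma_j$; using $\Gamma\gamma_j=-\gamma_j\Gamma$, $\Gamma^2=\Id_m$, that each $\gamma_j$ is skew-hermitian ($\gamma_j^*=-\gamma_j$) and $\Gamma$ hermitian, one gets $D(\bxi)^*=-\sum_j\bxi_j\gamma_j\Gamma^*=\sum_j\bxi_j\Gamma\gamma_j=D(\bxi)$, so $D(\bxi)$ is hermitian, and $D(\bxi)^2=\Gamma\big(\sum_j\bxi_j\gamma_j\big)\Gamma\big(\sum_k\bxi_k\gamma_k\big)=-\big(\sum_j\bxi_j\gamma_j\big)^2=\abs\bxi^2\Id_m$ by the anticommutation relation \eqref{eq:anticommute} applied in the Euclidean case (with the Minkowski variant handled the same way since only the spatial $\gamma_j$ appear). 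Then
\begin{equation*}
\bu(\bxi)\bu(\bxi)^*=\tfrac12\Big(\Id_m+\tfrac{i}{\abs\bxi}D(\bxi)\Big)\Big(\Id_m-\tfrac{i}{\abs\bxi}D(\bxi)\Big)=\tfrac12\Big(\Id_m+\tfrac{1}{\abs\bxi^2}D(\bxi)^2\Big)=\Id_m,
\end{equation*}
and symmetrically for $\bu(\bxi)^*\bu(\bxi)$; finiteness of all symbol norms $\snorm{\bu}{0}{l}$ is immediate since $\{\id\}$ is a frequency set and the entries are bounded, so $\BU\in\BS_m^0$ is unitary.

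Next I would compute $\BU\BA_d\BU^*$. The free Dirac operator $\BA_d=\sum_j\gamma_j\del_j$ has constant symbol $\ba_d(\bxi)=i\sum_j\bxi_j\gamma_j=:i\slashed\bxi$ (up to the sign convention in the Fourier representation), which is $\bx$-independent, so $\BU\BA_d\BU^*$ has symbol $\bu(\bxi)\,\ba_d(\bxi)\,\bu(\bxi)^*$, again $\bx$-independent, hence lies in $\BD\BS_m^\infty$. For $\abs\bxi\ge1$ I would expand
\begin{equation*}
\bu(\bxi)(i\slashed\bxi)\bu(\bxi)^*=\tfrac{i}{2}\Big(\Id_m+\tfrac{i}{\abs\bxi}\Gamma\slashed\bxi\Big)\slashed\bxi\Big(\Id_m-\tfrac{i}{\abs\bxi}\Gamma\slashed\bxi\Big),
\end{equation*}
and simplify using $\slashed\bxi^2=-\abs\bxi^2\Id_m$, $\Gamma\slashed\bxi=-\slashed\bxi\Gamma$, hence $\slashed\bxi\,\Gamma\slashed\bxi=-\slashed\bxi^2\Gamma=\abs\bxi^2\Gamma$ and $\Gamma\slashed\bxi\,\slashed\bxi=-\abs\bxi^2\Gamma$, $\Gamma\slashed\bxi\,\slashed\bxi\,\Gamma\slashed\bxi=-\abs\bxi^2\Gamma\,\Gamma\slashed\bxi=-\abs\bxi^2\slashed\bxi$. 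Collecting the four terms, the two "off-diagonal" contributions proportional to $\slashed\bxi$ should cancel and the remaining terms should combine to $\abs\bxi\,\Gamma$; I expect this bookkeeping (keeping track of the factors of $i$ and the signs from $\Gamma\slashed\bxi=-\slashed\bxi\Gamma$) to be the only real computation, and it is the step most prone to sign errors — this is the main obstacle, though it is routine. Having done this, the symbol of $\BU\BA_d\BU^*$ equals $\abs\bxi\,\Gamma$ on $\{\abs\bxi\ge1\}$ and equals $i\slashed\bxi$ on $\{\abs\bxi<1\}$ (where $\bu=\Id_m$).

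Finally, define $\BR:=\Op(\br)$ with $\br(\bxi):=\bone_{\{\abs\bxi<1\}}(\bxi)\big(i\slashed\bxi-\abs\bxi\,\Gamma\big)$; this is diagonal (frequency set $\{\id\}$), has a symbol supported in the bounded set $\{\abs\bxi<1\}$ with bounded entries, so $\snorm{\BR}{\gamma}{l}<\infty$ for every $\gamma\in\R$ and every $l\ge0$, i.e. $\BR\in\BS_m^{-\infty}$. By construction the symbol of $\BU\BA_d\BU^*-\BR$ is exactly $\abs\bxi\,\Gamma$ everywhere. Since $\Gamma=\diag(\Id_{m/2},-\Id_{m/2})$, the symbol $\abs\bxi\,\Gamma$ is diagonal with $\abs{(\abs\bxi\,\Gamma)_{kk}}=\abs\bxi\ge c\ang\bxi$ for $\ang\bxi\ge \er$ for suitable $c,\er$ (using $\abs\bxi\ge\tfrac12\ang\bxi$ for $\abs\bxi\ge1$), so it defines an operator in $\BD\BE\BS_m^1$; adding back $\BR\in\BS_m^{-\infty}\subset\BS_m^\beta$ for any $\beta<1$ keeps $\BU\BA_d\BU^*$ in $\BD\BE\BS_m^1$ because $\BD\BE\BS_m^1$ is stable under lower-order perturbations (its diagonal part retains the ellipticity lower bound outside a larger ball, and the off-diagonal part coming from $\BR$ is $\BS_m^{-\infty}\subset\BS_m^\gamma$ for $\gamma<1$, matching Definition~\ref{def:ES}). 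This gives all three claims.
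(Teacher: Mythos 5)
Your proof is correct and follows the same route as the paper: a direct computation of the symbol products $\bu\circ\bu^\dagger$ and $\bu\circ\ba_d\circ\bu^\dagger$ using the Clifford relations ($\Gamma\gamma_j=-\gamma_j\Gamma$, $\gamma_j^*=-\gamma_j$, $\gamma_j\gamma_k+\gamma_k\gamma_j=-2\delta_{jk}$), with the same choice of $\BR$ supported in $\set{\abs\bxi<1}$. You are in fact more explicit than the paper about the unitarity, the ellipticity constants, and the remainder; the only slight imprecision is that adding the non-diagonal smoothing term $\BR$ strictly places $\BU\BA_d\BU^*$ in $\BS\BE\BS_m^1$ rather than $\BD\BE\BS_m^1$ (the latter is not closed under off-diagonal perturbations by Definition~\ref{def:DES}), but that looseness is already present in the statement itself and does not affect the argument.
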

\begin{proof}
  The symbol of the adjoint of $\BU$ is given, following \eqref{eq:adjoint symbol}, by
  \begin{equation}
    \bu^\dagger(\bx,\bxi) =  \frac{\bone_{\set{\abs \bxi \ge 1}}(\bxi)}{\sqrt
    2}\left(\Id_m - \frac{i}{\abs \bxi} \Gamma
    \sum_{j=1}^d \bxi_j \gamma_j\right) + \bone_{\set{\abs{\bxi} < 1}}(\bxi)
    \Id_m
  \end{equation}
  and we can compute that 
  \begin{equation}
    \begin{aligned}
      \left[\bu\circ \bu^\dagger \right] (\bxi) &= \frac{\bone_{\set{\abs \bxi
      \ge 1}}(\bxi)}{2}\left(\Id_m -
        \frac{1}{\abs{\bxi}^2} \sum_{j,k = 1}^d \Gamma^2 \gamma_j \gamma_k \bxi_j
      \bxi_k\right) + \bone_{\set{\abs \bxi < 1}}(\bxi) \Id_m  \\
      &= \frac{\bone_{\set{ \abs \bxi \ge 1}}(\bxi)}{2} \left(\Id_m - \frac{1}{\abs{\bxi}^2} \sum_j \gamma_j^2
      \bxi_j^2\right) + \bone_{\set{\abs \bxi < 1}}(\bxi) \Id_m\\
      &= \Id_m.
  \end{aligned}
  \end{equation}
 In a very similar fashion, we see that the symbol of $\BU \BA_d \BU^*$ is given by
 \begin{equation}
   [\bu \circ \ba_d \circ \bu^\dagger](\bxi) = \bone_{\set{\abs \bxi \ge 1
   }}(\bxi) \abs \bxi\Gamma  + \bone_{\set{\abs \bxi < 1}}(\bxi) \ba_d(\bxi).
 \end{equation}
 This proves our claim where $\BR \in \BS_m^{-\infty}$ has symbol 
 \begin{equation}
   \br(\bxi)= \bone_{\set{\abs \bxi < 1}} (\bxi) (\ba_d(\bxi) - \abs \bxi
   \Gamma).
 \end{equation}
\end{proof}

We now see that for $d = m = 2$, the operators $\BA_2 + \BB$, $\BB \in \BS^\beta_m$,
$\beta < 1$ are unitarily
equivalent to an operator satisfying the hypotheses of Theorems
\ref{thm:aexpconcrete} and Theorem \ref{thm:bs}, which proves that we
generically have a
complete asymptotic expansion for the density of states, and that if $\BB$ is
periodic then $\BA$ has the Bethe--Sommerfeld property. In other words, the
following two theorems are proved, which are more precise reformulations of Theorems \ref{thm:dirac2dids}
and \ref{thm:dirac2dbs}. 

\begin{thm}
  Let $\beta < 1$ and $\BA = \BA_2 + \BB$, where $\BB \in \BS_2^\beta$ satisfies
  the generic conditions \textbf{A}, \textbf B and \textbf C. Then, for every $K
  > - 2$ there is a finite set $L \subset (0,2+K)$ so that for every $j \in L
  \cup \set{0}$ there are constants $C_{j}^\pm$, $C_{j,\log}^\pm$ such that
  \begin{equation}
    N^\pm(\BA; \lambda) = C_0^\pm \lambda^2 + \sum_{j \in L} \left( C_j^\pm
    \lambda^{2 - j} + C_{j,\log}^\pm \lambda^{2-j} \log \lambda \right) +
    \bigo{\lambda^{-K}}
  \end{equation}
  as $\lambda \to \infty$. 
\end{thm}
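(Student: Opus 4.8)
The plan is to reduce the statement to Theorem \ref{thm:aexpconcrete} by diagonalising the principal symbol of the free Dirac operator. First I would apply Proposition \ref{prop:diagdirac} with $d=m=2$: there is a unitary $\BU\in\BS_2^0$ with the explicit symbol \eqref{eq:explicitunitary} and a regularising operator $\BR\in\BS_2^{-\infty}$ such that $\BU\BA_2\BU^*-\BR$ has symbol $\abs{\bxi}\,\Gamma$, where $\Gamma=\diag(1,-1)=\sigma_3$. Conjugating $\BA=\BA_2+\BB$ by $\BU$ then yields
\begin{equation*}
  \BU\BA\BU^*=\Op(\abs{\bxi}\,\Gamma)+\BB',\qquad \BB':=\BR+\BU\BB\BU^*,
\end{equation*}
and Lemma \ref{lem:product}, together with $\BU\in\BS_2^0$ and $\BR\in\BS_2^{-\infty}$, gives $\BB'\in\BS_2^\beta$. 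The operator $\Op(\abs{\bxi}\,\Gamma)$ lies in $\BD\BE\BS_2^1$ and is of the form \eqref{eq:defh0} with order $\alpha=1$ and coefficients $a_1=1$, $a_2=-1$; these are nonzero and distinct, so $\BU\BA\BU^*\in\BS\BE\BS_2^1$ is a \mpu{} operator of order $1$ with subprincipal part $\BB'$ of order $\beta<1=\alpha$, whence $\BA\in\BE\BS_2^1$ is itself \mpu{}. Symmetry of $\BB$ forces $\BB'$ to be formally self-adjoint, as required by the definition of a \pd{} operator.

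Next I would verify that the perturbation $\BB'$ inherits the generic Conditions \ref{condI}--\ref{condIV} from $\BB$; this is the step where the hypothesis that $\BB$ is generic is used. The passage from $\BB$ to $\BB'$ amounts to adding the regularising term $\BR$, which changes neither the frequency set nor the leading homogeneous expansion of the symbol, and to conjugating by $\BU$, whose symbol depends only on $\bxi$ and is positively homogeneous of degree $0$ for $\abs{\bxi}\ge1$. Hence $\BB$ and $\BB'$ share a frequency set $\Theta$, so that Conditions \ref{condI}, \ref{condIII} and \ref{condIV} (which depend only on $\Theta$ and $Z(\Theta)$) transfer verbatim, and the homogeneous-function structure required by Condition \ref{condII} is preserved because it is stable under multiplication by degree-zero homogeneous symbols. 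As observed in the remark following Theorem \ref{thm:ctou}, these conditions only need to hold for the symbol \emph{prior} to the gauge transform, so no incompatibility arises.

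Then Theorem \ref{thm:aexpconcrete} applies directly: for every $K\in\R$ there exist a discrete set $L\subset(0,d+K)=(0,2+K)$ and constants $C_0^\pm$, $C_{j,q}^\pm$ ($j\in L$, $0\le q\le d-1=1$) such that
\begin{equation*}
  N^\pm(\BA;\rho^\alpha)=C_0^\pm\rho^d+\sum_{j\in L}\sum_{q=0}^{d-1}C_{j,q}^\pm\rho^{d-j}\log^q\rho+\bigo{\rho^{-K}}
\end{equation*}
as $\rho\to\infty$, where the IDS is the one defined in Section \ref{sec:ids}, which by the discussion in Section \ref{sec:besicovitch} coincides with the classical one for $\BA$ acting in $\RL^2(\R^2;\C^2)$. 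Since $\alpha=1$ we set $\rho=\lambda$, so $\rho^\alpha=\lambda$, $\rho^d=\lambda^2$, $\rho^{d-j}=\lambda^{2-j}$ and $\bigo{\rho^{-K}}=\bigo{\lambda^{-K}}$; writing $C_j^\pm:=C_{j,0}^\pm$ and $C_{j,\log}^\pm:=C_{j,1}^\pm$ produces the claimed expansion. Finally $a_1=1>0>-1=a_2$ gives $m^+=1\notin\set{0,m}$, so the vanishing clause of Theorem \ref{thm:aexpconcrete} does not trigger and the coefficients are genuinely present; in particular $N^+$ and $N^-$ both grow quadratically, reflecting that $\BA$ is unbounded in both directions.

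The main obstacle is the second step: making rigorous that the explicit unitary conjugation of Proposition \ref{prop:diagdirac} and the addition of the regularising remainder preserve the generic Conditions \ref{condI}--\ref{condIV}, i.e.\ that genericity of $\BB$ as a perturbation of $\BA_2$ is precisely the right input for the abstract scheme. Everything else --- the diagonalisation, the exponent bookkeeping via $\alpha=1$, $d=2$, and the sign analysis for $m^+$ --- is routine.
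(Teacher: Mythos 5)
Your proposal is correct and follows exactly the route the paper takes: diagonalise the principal symbol via the unitary $\BU$ of Proposition \ref{prop:diagdirac}, observe that the resulting operator has principal symbol $\abs{\bxi}\Gamma$ with $a_1=1\neq -1=a_2$ (hence is a \mpu{} operator of order $\alpha=1$ with subprincipal part in $\BS_2^\beta$, $\beta<1$), and apply Theorem \ref{thm:aexpconcrete} with $d=2$, $\rho=\lambda$. Your verification that conjugation by the $\bxi$-dependent, degree-zero homogeneous symbol of $\BU$ preserves the frequency set and hence Conditions \ref{condI}--\ref{condIV} is a detail the paper leaves implicit, so your write-up is if anything slightly more complete.
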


\begin{thm}
  Let $\beta < 1 $ and $\BA = \BA_2 + \BB$, where $\BB \in \BS_2^\beta$ is
  periodic. Then, $\BA$ has the Bethe--Sommerfeld property, i.e. there exists
  $\lambda_0 > 0 $ such that the spectrum of $\BA$ contains intervals $(-\infty,
  -\lambda_0]$ and $[\lambda_0,\infty)$. 
\end{thm}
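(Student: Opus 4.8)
The plan is to deduce this theorem as a direct consequence of Theorem~\ref{thm:bs} applied to the operator $\BA = \BA_{2,M} + \BB$. First I would observe that, by Proposition~\ref{prop:diagdirac} (applied with $d = m = 2$, where the Pauli matrices furnish a representation of $\FA_{0,2}$ with $\sigma_3 = \Gamma$), the free Dirac operator $\BA_2$ is unitarily equivalent via some $\BU \in \BS_2^0$ to an operator in $\BD\BE\BS_2^1$ whose symbol differs from $\abs\bxi\,\Gamma = \diag(\abs\bxi, -\abs\bxi)$ by a symbol in $\BS_2^{-\infty}$. The mass term $\sigma_3 M$ has order $0 < 1$, so $\BA_{2,M}$ is itself unitarily equivalent to an operator in $\BS\BE\BS_2^1$ whose principal part is $\diag(\abs\bxi, -\abs\bxi)$; that is, $\BA_{2,M}$ is a \mpu{} operator of order $\alpha = 1$ with $a_1 = 1$, $a_2 = -1$, so indeed $a_1 \neq a_2$. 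Since $\BB \in \BS_2^\beta$ with $\beta < 1 = \alpha$, the sum $\BA = \BA_{2,M} + \BB$ remains in $\BE\BS_2^1$ (the classes $\BS\BE\BS^\alpha$ and $\BE\BS^\alpha$ are closed under addition of operators of lower order), is self-adjoint on the appropriate Sobolev-type domain by Proposition~\ref{prop:closedness and self-adjointness}, and $\BB$ being periodic with a Dirac operator as principal part means $\BA$ is a periodic \mpu{} operator of the form required by Theorem~\ref{thm:bs}.

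Next I would record that $\BA$ is unbounded both above and below: since $a_1 = 1 > 0$ and $a_2 = -1 < 0$, one has $m^+ = 1 \notin \set{0, m}$, so the principal symbol has eigenvalues tending to $+\infty$ along one spinor component and $-\infty$ along the other; a lower-order perturbation cannot change this. Thus both hypotheses \eqref{it:bspos} and \eqref{it:bsneg} of Theorem~\ref{thm:bs} are in force, and it yields $\tilde\lambda, S, c > 0$ with $[\tilde\lambda,\infty)\subset\spec(\BA)$, $(-\infty,-\tilde\lambda]\subset\spec(\BA)$, and quantitative overlap bounds $\zeta(\pm\lambda;\BA)\ge c\lambda^{-S}$ for $\lambda\ge\tilde\lambda$. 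Setting $\lambda_0 := \tilde\lambda$ gives exactly the claimed statement: the spectrum of $\BA$ contains the two semi-axes $(-\infty,-\lambda_0]$ and $[\lambda_0,\infty)$. One technical point to check is that the genericity conditions appearing in Theorem~\ref{thm:bs} are satisfied here: for periodic $\BB$ the frequency set is the dual lattice $\Theta = \Lambda^\dagger$, $Z(\Theta) = \Theta$ is discrete, and Conditions~\ref{condI}--\ref{condIV} are automatically verified for periodic operators (Condition~\ref{condI} because lattices are discrete, and Condition~\ref{condIV} as noted in the remark following it); so no extra genericity hypothesis on $\BB$ is needed beyond being a symmetric pseudo-differential operator of order $\beta < 1$ with periodic coefficients.

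I do not anticipate a serious obstacle here, since this theorem is explicitly flagged in the introduction as the concrete specialisation of Theorem~\ref{thm:bs} to two-dimensional Dirac operators. The only genuinely substantive steps are the unitary reduction of $\BA_{2,M}$ to diagonal-plus-lower-order form (which is Proposition~\ref{prop:diagdirac}, already proved) and the verification that a symbol of order $\beta < 1$ is a legitimate lower-order perturbation in the sense required (which is immediate from $\beta < \alpha = 1$). The one place demanding a little care is making sure the periodicity structure is correctly threaded through: one should note that conjugation by $\BU$ preserves periodicity (as $\bu$ in \eqref{eq:explicitunitary} depends only on $\bxi$, hence $\BU$ is actually a Fourier multiplier commuting with all lattice translations), so that the reduced operator $\BU^*\BA\BU$ is still periodic and Theorem~\ref{thm:bs} applies to it; and then Proposition~\ref{prop:normspec} (or simply unitary invariance of the spectrum) transfers the conclusion back to $\BA$ itself. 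Finally, I would remark that the reformulation with explicit $\lambda_0$ depending only on the symbol norms of $\BB$ follows from the corresponding uniformity clause in Theorem~\ref{thm:bs}.
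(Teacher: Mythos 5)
Your proposal is correct and follows exactly the route the paper takes: conjugate by the Fourier multiplier $\BU$ of Proposition~\ref{prop:diagdirac} to exhibit $\BA_2+\BB$ as a periodic, self-adjoint \mpu{} operator of order $1$ with $a_1=1\ne a_2=-1$ (any $\BB\in\BS_2^\beta$, $\beta<1$, being an admissible lower-order perturbation since both eigenvalues of the principal symbol are simple when $m=2$), and then invoke Theorem~\ref{thm:bs}. Your side remarks — that periodicity is preserved under this conjugation, that the operator is unbounded in both directions since $m^+=1$, and that the relevant conditions are automatic for periodic symbols — are all consistent with the paper.
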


We now want to address the
question of the perturbations that are allowed whenever $d \ge 3$.

\begin{prop}
  \label{prop:permittedperturbations}
  For $\beta < 1$, and $0 \le j \le d$ (with $0$ omitted when $d$ is even), let
  $B_{\Id}, B_\Gamma, B_j \in \BS_1^\beta$ be scalar pseudo-differential
  operators of order
  $\beta$, and put
  \begin{equation}
    \BB = B_{\Id} \Id_m + B_\Gamma \Gamma + \sum_j B_j \gamma_j.
  \end{equation}
  Then, there are operators $\BB' \in \BU\BS^\beta_m$, $\BR \in \BS_m^{\beta -1
  }$ and $\tilde \BB \in
  \BS_m^\beta$ whose symbol has image in the odd subspace of $\FA_{p,q}$ such
  that
  \begin{equation}
    \label{eq:afterunit}
    \BU (\BA_d + \BB) \BU^* = \Op(\abs \bxi) \Gamma + \BB' + \tilde \BB + \BR
  \end{equation}
\end{prop}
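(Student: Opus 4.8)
The plan is to conjugate $\BA_d + \BB$ by the explicit unitary $\BU$ from Proposition \ref{prop:diagdirac} and then organise the resulting symbol according to the $\Z_2$-grading of the Clifford algebra $\FA_{p,q}$. First I would recall from Proposition \ref{prop:diagdirac} that $\BU \BA_d \BU^* = \Op(\abs\bxi)\Gamma + \BR_0$ for some $\BR_0 \in \BS_m^{-\infty}$; so the only genuine work is to analyse $\BU \BB \BU^*$. Since $\BU \in \BS_m^0$ is unitary and $\BB \in \BS_m^\beta$, Lemma \ref{lem:product} already gives $\BU \BB \BU^* \in \BS_m^\beta$, so the point is not membership in a symbol class but the finer decomposition of its symbol into an uncoupled (i.e.\ diagonal, commuting with $\Gamma$) part, an ``odd'' part whose symbol takes values in the odd subspace, and a genuinely lower-order remainder of order $\beta - 1$.

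The key computation is at the level of principal symbols. Write $\bu(\bxi) = \tfrac{1}{\sqrt 2}\left(\Id_m + \tfrac{i}{\abs\bxi}\Gamma \sum_j \bxi_j \gamma_j\right)$ for $\abs\bxi \ge 1$, and set $P(\bxi) := \tfrac{1}{\abs\bxi}\sum_j \bxi_j \gamma_j$, which is a self-adjoint odd element squaring to $-\Id_m$ (using $\gamma_j^2 = -\Id_m$ and anticommutativity). Then $\bu = \tfrac{1}{\sqrt 2}(\Id_m + i\Gamma P)$, and conjugating a constant matrix $C$ (one of $\Id_m$, $\Gamma$, or $\gamma_j$) by $\bu$ amounts to computing $\tfrac12(\Id_m + i\Gamma P)\,C\,(\Id_m - i\Gamma P)$. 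Using $\Gamma^2 = \Id_m$, $\Gamma \gamma_j = -\gamma_j\Gamma$ (so $\Gamma P = -P\Gamma$), and the parity rules for how $\gamma_j$ and $\Gamma$ interact with $P$, each such conjugate splits as a scalar multiple of $\Id_m$ plus a multiple of $\Gamma$ (the even, uncoupled piece) plus an element of the odd subspace (built from $P$, $\Gamma P$, and $\gamma_j P$ terms). Crucially, the coefficients are homogeneous of degree $0$ in $\bxi$ and smooth away from $\bxi = 0$. Because $B_\Id$, $B_\Gamma$, $B_j$ are \emph{scalar} operators, $\BU\BB\BU^*$ is, modulo lower-order terms, a sum over these three types, and collecting the $\Id_m$- and $\Gamma$-components gives $\BB' \in \BU\BS_m^\beta$ while the odd-subspace components give $\tilde\BB$.

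The remainder of order $\beta - 1$ arises for two reasons, which I would address separately. First, the symbol product in Lemma \ref{lem:product} is exact in our abstract setting --- this is the one-to-one correspondence remarked after \eqref{eq:symbrecovery} --- so conjugation introduces \emph{no} error from a symbol calculus as in the classical pseudo-differential setting; however, when $\BB$ itself is a nonconstant-coefficient pseudo-differential operator with symbol $\bbb_\btheta(\bxi)$, the conjugating factor $\bu$ depends on $\bxi$ and the shifts $\theta$ move $\bxi$ to $\theta \act \bxi = \bxi + \btheta$, so $\bu(\bxi + \btheta) - \bu(\bxi)$ contributes terms of relative order $-1$ (the derivative of $\abs\bxi^{-1}$ is of order $-1$); this is exactly the estimate \eqref{eq:condgsmall}-type smoothness of $\bu$ in $\bxi$, and it can be packaged into $\BR \in \BS_m^{\beta-1}$ using \eqref{eq:peetre} and the decay of $\ang\bxi^{-1}$. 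Second, the cutoff $\bone_{\abs\bxi < 1}$ in \eqref{eq:explicitunitary} contributes a compactly supported, hence $\BS_m^{-\infty}$, term by Proposition \ref{prop:normorder}, which is absorbed into $\BR$. Assembling: $\BU(\BA_d + \BB)\BU^* = \Op(\abs\bxi)\Gamma + \BB' + \tilde\BB + \BR$ with $\BB' \in \BU\BS_m^\beta$, $\tilde\BB \in \BS_m^\beta$ odd-valued, $\BR \in \BS_m^{\beta-1}$, which is \eqref{eq:afterunit}.

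The main obstacle I anticipate is bookkeeping the Clifford-algebraic identities cleanly: one must verify that conjugation by $\bu$ sends the even part (spanned by $\Id_m$ and $\Gamma$ in the relevant low-degree piece) into itself plus odd-subspace terms, and the odd generators $\gamma_j$ into odd-subspace terms plus at worst lower order, without accidental coupling. Concretely this is the identity $\bu \gamma_j \bu^\dagger = (\text{odd element}) + O(\ang\bxi^{-1})$ together with $\bu \Gamma \bu^\dagger = \abs\bxi^{-2}(\text{stuff})$ analysed so that its $\Id_m, \Gamma$ components survive as the uncoupled part; tracking which products of $P$ with $\gamma_j$ and $\Gamma$ land in the odd versus even subspace (Lemma \ref{lem:anticom}) is where care is needed, but it is a finite check in the Clifford algebra and presents no conceptual difficulty.
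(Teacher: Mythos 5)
Your proposal is correct and follows essentially the same route as the paper: conjugate by the explicit unitary $\BU$ of Proposition \ref{prop:diagdirac}, sort the resulting symbol by the $\Z_2$-grading of the Clifford algebra into an uncoupled even part (multiples of $\Id_m$ and $\Gamma$), an odd-valued part, and a remainder, with the order-$(\beta-1)$ term arising exactly as you say from the non-commutativity of the scalar multipliers $U_k = \Op(i\xi_k\chi(\bxi)/\abs\bxi)$ with the $B$'s (equivalently, from $\bu(\bxi+\btheta)-\bu(\bxi) = O(\ang\bxi^{-1})$) and the cutoff contributing an $\BS_m^{-\infty}$ term. The paper simply carries out the finite Clifford check you defer, writing out each conjugate $\BU B\gamma\BU^*$ term by term and verifying that the even terms not proportional to $\Id_m$ or $\Gamma$ (such as those carrying $\gamma_k\gamma_\ell$ with $k\neq\ell$) occur only with antisymmetrised, i.e.\ commutator, coefficients and hence land in $\BR$.
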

\begin{proof}
  The unitary operator $\BU$ from \eqref{eq:explicitunitary} can be written as
  \begin{equation}
    \BU = \frac{1}{\sqrt 2} \left(\Id_m + \sum_{j=1}^d U_j \Gamma \gamma_j\right)
    \mod \BS_m^{-\infty}.
    \label{eq:Udecomposed}
  \end{equation}
  Here, $U_j \in \BS_1^0$ are scalar pseudo-differential operators given by
  \begin{equation}
    U_j = \Op\left(\frac{i \bxi_j \chi(\bxi)}{\abs \bxi}\right),
  \end{equation}
  where $\chi$ is a smooth function supported in $\set{\abs{\bxi} \ge 1/2}$ and
  $\chi(\bxi) \equiv 1$ for all $\abs \bxi \ge 3/4$. We now compute $\BU B_\gamma \gamma \BU^*$ for different values of $\gamma$.
  All the sums range from $1$ to $d$ with additional restrictions, we have only
  written the restrictions to make notation lighter. For $1 \le j \le
  d$
  we have
  \begin{equation}
    \label{eq:unittrans}
    \begin{aligned}
      \BU B_{j} \gamma_j \BU^* &= \frac{1}{2}\Bigg(B_{j} \gamma_j + \sum_{k
          \neq j}[U_k;B_{j}]\Gamma \gamma_k \gamma_j -
        (U_j B_{j} + B_{j} U_j) \Gamma + \sum_{k} (U_k
        B_{j} U_j + U_j B_j U_k) \gamma_k - \\
          &\qquad \quad - \sum_{k} U_k B_{j} U_k \gamma_j -
          \sum_{\substack{\ell \neq j \\ k \neq j \\ k < \ell}}\left( [U_\ell;B_{j}
          U_k] + [B_{j};U_k]U_\ell \right) \gamma_k \gamma_\ell
    \gamma_j
  \bigg) \mod \BS_m^{- \infty}.
\end{aligned}
  \end{equation}
  Let us have a careful look at each of the six terms in Equation \eqref{eq:unittrans}. The second
  and the last terms involve commutators of operators with scalar-valued symbols,
  they are in $\BS_m^{\beta - 1}$ and we put $\BR_{j}$ as their sum. The third term is
  in $\BU\BS^\beta_m$, and we denote it $\BB'_j$. Finally, the first,
  fourth and fifth term are readily seen to have symbols in the odd subspace, we
  put $\tilde \BB_{j}$ as their sum.

  The operator $\BU B_{0} \gamma_0 \BU^*$ is computed similarly as in
  \eqref{eq:unittrans} with some of the terms vanishing. It is given by
  \begin{equation}
    \begin{aligned}
    \BU B_0 \gamma_0 \BU^* = \frac 1 2 \Bigg( B_0 \gamma_0 &+ \sum_{k} [U_k;B_0]
    \Gamma \gamma_k \gamma_0 - \sum_k U_k B_0 U_k \gamma_0 -   \\ & \qquad - \sum_{k<\ell}
  ([U_\ell;B_0 U_k] + [B_0;U_k]U_\ell) \gamma_k \gamma_\ell \gamma_0\Bigg) \mod
  \BS_m^{-\infty}.
\end{aligned}
  \end{equation}
  The first and third term have image in the odd subspace, we put $\tilde
  \BB_0$ as their sum. The second and
  last terms involve commutators of operators with scalar-valued symbols, as
  such they are in $\BS_m^{\beta - 1}$ and we put $\BR_0$ as their sum. We note
  that there are no uncoupled terms.

  The operator $\BU B_\Gamma \Gamma \BU^*$ is given by
  \begin{equation}
    \label{eq:Gammatrans}
    \begin{aligned}
      \BU B_\Gamma\Gamma \BU^* &= \frac{1}{2}\Bigg(B_\Gamma \Gamma - \sum_{k}(U_k B_\Gamma +
          B_\Gamma
        U_k) \gamma_k - \\
    &\qquad \quad - \sum_{k} U_k B_\Gamma U_k  \Gamma +
  \sum_{k < \ell}\left( [U_\ell;B_\Gamma U_k] + [B_\Gamma;U_k]U_\ell \right)\Gamma
    \gamma_\ell \gamma_k
  \bigg) \mod \BS_m^{-\infty}.
\end{aligned}
\end{equation}
This time, the first and third terms are seen to be in $\BU\BS^\beta_m$ and we
put their sum as $\BB'_\Gamma$. The
second term has symbol in the odd subspace and we denote it by $\tilde
\BB_\Gamma$. The last term can be seen to be in $\BS_m^{\beta - 1}$ and we
denote it by $\BR_{\Gamma}$.

Finally, the operator $\BU B_{\Id} \Id_m \BU^*$ is given by
  \begin{equation}
    \label{eq:Idtrans}
    \begin{aligned}
      \BU B_{\Id} \Id_m \BU^* &= \frac{1}{2}\Bigg(B_{\Id} \Id_m +
        \sum_{k}[U_k;B_{\Id}]\Gamma \gamma_k 
        + \sum_{k} U_k B_{\Id} U_k  \Id_m \\
    &\qquad \quad +
    \sum_{k < \ell}\left( [U_\ell;B_{\Id}U_k] + [B_{\Id};U_k]U_\ell \right)
    \gamma_\ell \gamma_k
  \Bigg) \mod \BS_m^{-\infty}.
\end{aligned}
\end{equation}
This time, we see that the first and third terms are in $\BU\BS_m^\beta$, we put
their sum as $\BB'_{\Id}$, while
the second and last terms are in $\BS_m^{\beta - 1}$ and we put their sum as
$\BR_{\Id}$. 

Finally, put $\tilde \BR \in \BS_m^{-\infty}$ as the sum of the remainders
$\mod \BS_m^{-\infty}$ obtained at every step. Combining all our computations
and Proposition \ref{prop:diagdirac} gives us that \eqref{eq:afterunit}
holds with
\begin{equation}
  \begin{aligned}
    \BB' &= \BB'_{\Id} + \BB'_{\Gamma} + \sum_{j=1}^d \BB'_j \\
    \tilde \BB &= \tilde \BB_{\Id} + \tilde \BB_{\Gamma} + \sum_{j=0}^d \tilde
    \BB_j \\
    \BR &= \tilde \BR + \BR_{\Id} + \BR_\Gamma + \sum_{j=0}^d \BR_j.
  \end{aligned}
\end{equation}
\end{proof}

The next theorem follows and includes Theorem \ref{thm:dirac3dids} as a
special case when $d = 3$.
\begin{thm}
  Let $m(d)$ be the dimension of the spinor space on which $\BA_d$ acts. For
  $\beta \le 1/2$ and $0 \le j \le d$ (with $0$ omitted when $d$ is even) let
  $B_\Gamma, B_j, B_{\Id} \in \BS^\beta$ be scalar
  pseudo-differential operators satisfying Conditions \ref{condI}--\ref{condIV}, and put
  \begin{equation}
    \BB = B_{\Id} \Id_m + B_\Gamma \Gamma + \sum_{j=0}^d B_j \gamma_j,
  \end{equation}
  and $\BA = \BA_d + \BB$. Then, putting $\gamma^* = \max \set{\beta - 1, 2
  \beta - 1}$, there exists a finite set $L \subset (0, 1 - \gamma^*)$ and constants $C_0^\pm$ and $C_{j,q}^\pm$, $0 \le q \le d-1$, $j \in
  L$ such that
  \begin{equation}
    N^\pm(\BA;\lambda) = C_0^\pm \lambda^d + \sum_{j \in L} \sum_{q = 0}^{d-1}
    C_{j,q}^\pm \lambda^{d-j} \log^q \lambda + \bigo{\lambda^{d-1 + \gamma^*}}
  \end{equation}
  as $\lambda \to \infty$. 
\end{thm}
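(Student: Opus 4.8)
The plan is to diagonalise the principal symbol of the Dirac operator and then invoke Theorem \ref{thm:aexpconcretecut}. First I would apply Proposition \ref{prop:permittedperturbations} to produce the unitary $\BU \in \BS_m^0$ of \eqref{eq:explicitunitary} together with the decomposition
\begin{equation}
  \BH := \BU(\BA_d + \BB)\BU^* = \Op(\abs\bxi)\,\Gamma + \BB' + \tilde\BB + \BR,
\end{equation}
where $\BB' \in \BU\BS_m^\beta$ is uncoupled, $\tilde\BB \in \BS_m^\beta$ has symbol taking values in the odd subspace of the Clifford algebra, and $\BR \in \BS_m^{\beta-1}$. Since $\BA = \BA_d + \BB$ is self-adjoint and elliptic, the DSM is unitarily invariant (Corollary \ref{cor:unitary}), and discarding a regularising operator does not change it (Lemma \ref{lem:monotonicity}), so it suffices to establish the expansion for $\BH$. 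By Proposition \ref{prop:diagdirac}, $\BH$ is elliptic and, modulo $\BS_m^{-\infty}$, has principal part $\ba_0(\bxi) = \abs\bxi\,\Gamma = \diag(\abs\bxi,\dots,\abs\bxi,-\abs\bxi,\dots,-\abs\bxi)$; hence $\BH$ is a \pd{} operator with $\alpha = 1$, with $a_j = 1$ for $1 \le j \le m/2$ and $a_j = -1$ for $m/2 < j \le m$, and with subprincipal part $\BB' + \tilde\BB + \BR \in \BS_m^\beta$ of order $\beta \le 1/2 = \alpha/2$.

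Next I would verify the extra hypothesis needed by Theorem \ref{thm:aexpconcretecut}: whenever $a_j = a_k$ with $j \ne k$, the entry $[\BB' + \tilde\BB + \BR]_{j,k}$ must lie in $\BS_m^\gamma$ for some $\gamma \le 0$. The equality $a_j = a_k$ forces $j$ and $k$ into the same $(\pm)$-block of $\Gamma$. Now $\BB'$ is uncoupled, so $[\BB']_{j,k} = 0$ for $j \ne k$; and by Lemma \ref{lem:anticom} the symbol of $\tilde\BB$ is block-antidiagonal, so $[\tilde\BB]_{j,k}$ vanishes unless exactly one of $j, k$ exceeds $m/2$ — in particular it vanishes when $a_j = a_k$. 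Hence $[\BB' + \tilde\BB + \BR]_{j,k} = [\BR]_{j,k} \in \BS_m^{\beta-1}$, and $\beta - 1 \le -1/2 < 0$, so the hypothesis holds with $\gamma = \beta - 1$. With this choice the exponent supplied by Theorem \ref{thm:aexpconcretecut} is $\gamma^* = \max(2\beta - \alpha, \gamma) = \max(2\beta - 1, \beta - 1)$, which is exactly the $\gamma^*$ in the statement.

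It then remains to confirm that Conditions \ref{condI}--\ref{condIV} pass from the $B_j$ to $\BH$. Because $\BU$ is, modulo $\BS_m^{-\infty}$, a Fourier multiplier — its symbol \eqref{eq:explicitunitary} is positively homogeneous of degree $0$ in $\bxi$ up to a compactly supported correction, and has trivial frequency set $\set{\mathbf 0}$ — conjugation by $\BU$ does not enlarge the frequency set $\Theta$ of the perturbation, so Conditions \ref{condI} and \ref{condIV}, which concern only $Z(\Theta)$ and the Diophantine geometry of $\Theta^{\tilde k}$, carry over unchanged; Conditions \ref{condII} and \ref{condIII}, describing the homogeneous expansion of the symbol in $\bxi$ and its quasi-periodic approximation, are preserved under composition with $\bu, \bu^\dagger$ and (as in the Remark following Theorem \ref{thm:ctou}) are to be imposed before the internal reduction to uncoupled operators. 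Applying Theorem \ref{thm:aexpconcretecut} with $\alpha = 1$ and writing $\lambda = \rho^\alpha = \rho$ then yields a finite set $L \subset (0, 1 - \gamma^*)$ and constants $C_0^\pm, C_{j,q}^\pm$ with
\begin{equation}
  N^\pm(\BA;\lambda) = C_0^\pm\lambda^d + \sum_{j\in L}\sum_{q=0}^{d-1} C_{j,q}^\pm\,\lambda^{d-j}\log^q\lambda + \bigo{\lambda^{d-1+\gamma^*}}, \qquad \lambda \to \infty,
\end{equation}
which is the assertion (note $m^+ = m/2 \notin \set{0,m}$, so no coefficient is forced to vanish). The main obstacle is the second step: showing that every coupling between equal principal eigenvalues is confined to the order-$(\beta-1)$ remainder is precisely where the $\mathbb Z_2$-grading of the Clifford algebra is essential, and one must carefully track through Proposition \ref{prop:permittedperturbations} that the only order-$\beta$ terms outside the uncoupled part indeed land in the odd subspace. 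A lesser nuisance is making the stability of Conditions \ref{condII}--\ref{condIII} under the conjugation by $\BU$ fully rigorous.
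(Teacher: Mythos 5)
Your proposal is correct and follows exactly the route of the paper, whose proof is a one-line appeal to Proposition \ref{prop:permittedperturbations} followed by Theorem \ref{thm:aexpconcretecut} with $\alpha = 1$; you have simply made explicit the key verification (that all order-$\beta$ coupling outside the uncoupled part lands in the odd, hence block-antidiagonal, subspace, so the within-block coupling sits in $\BS_m^{\beta-1}$), which is precisely what the paper leaves implicit.
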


\begin{proof}
  It follows from Proposition \ref{prop:permittedperturbations} that $\BU \BA \BU^*$ satisfies
  the hypotheses of Theorem \ref{thm:aexpconcretecut} with $\gamma^* =
  \max\set{\beta - 1, 2 \beta - 1}$. In particular, the
  restricted asymptotics of the IDS given in that theorem are true for such
  operators with $\alpha = 1$.
\end{proof}

Finally, in some highly non-generic cases we can get complete asymptotic
expansions and the Bethe--Sommerfeld property for $d$-dimensional Dirac
operators with $d \ge 3$. We state both results and observe that they follow
directly from the fact that after conjugation by $\BU$, these operators are
uncoupled.

\begin{thm}
  Let $m(d)$ be the dimension of the spinor space on which $\BA_d$ acts, $\beta
  < 1$ and $\BB \in \BU\BS_m^\beta$ satisfying Conditions
  \ref{condI}--\ref{condIV}. Put $\BA = \BA_d + \BU^* \BB \BU$. Then, $N^\pm(\BA;\lambda)$
  satisfies the complete asymptotic expansion \eqref{eq:asymptoticsconcrete}
  with $\alpha = 1$.
\end{thm}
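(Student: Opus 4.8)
The plan is to reduce $\BA$, by conjugation with the explicit Dirac-diagonalising unitary $\BU$ of Proposition~\ref{prop:diagdirac}, to an \emph{uncoupled} operator of order $1$, and then to quote Proposition~\ref{thm:aexpconcreteu} directly; crucially, since $\BB$ is already uncoupled, no gauge transform is needed at all. First I would use Proposition~\ref{prop:diagdirac} to write $\BU \BA_d \BU^* = \Op(|\bxi|\Gamma) + \BR_0$ with $\BR_0 \in \BS_m^{-\infty}$ and $\Op(|\bxi|\Gamma) \in \BD\BE\BS_m^1$. Since $\BU$ is unitary, conjugating $\BA = \BA_d + \BU^* \BB \BU$ by $\BU$ yields
\[
  \BU \BA \BU^* = \Op(|\bxi|\Gamma) + (\BR_0 + \BB).
\]
The principal part $\Op(|\bxi|\Gamma)$ has diagonal symbol $\diag(|\bxi|,\dots,|\bxi|,-|\bxi|,\dots,-|\bxi|)$ with $m/2$ entries of each sign; the coefficients $a_j = \pm 1$ are nonzero and already in non-increasing order, so it is of the form \eqref{eq:defh0} with $\alpha = 1$. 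The subprincipal part $\BR_0 + \BB$ lies in $\BU\BS_m^\beta$ with $\beta < 1$ (using $\BS_m^{-\infty} \subseteq \BS_m^\beta$ and that $\BR_0$ is diagonal, hence uncoupled), is symmetric since $\BR_0 = \BU \BA_d \BU^* - \Op(|\bxi|\Gamma)$ and $\BB$ both are, and satisfies \eqref{eq:selfadjb} for the same reason. Hence $\BU\BA\BU^* \in \BU\BS_m^1 \cap \BS\BE\BS_m^1$ is an \pd{} operator, and so is $\BA$.

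The second step is to verify that Conditions~\ref{condI}--\ref{condIV}, assumed for $\BB$, are inherited by the new subprincipal part $\BR_0 + \BB$; this is the only bookkeeping in the argument and it is routine. The frequency set of $\BR_0 + \BB$ equals that of $\BB$ (recall $\id$ always lies in the frequency set, and $\BR_0$ has frequency set $\set{\id}$), so Conditions~\ref{condI} and~\ref{condIV}, which involve only $Z(\Theta)$ and the angles between quasi-lattice subspaces, are unchanged. Since the symbol of $\BR_0$ is supported in $\set{|\bxi| < 1}$, the symbols of $\BB$ and $\BR_0 + \BB$ agree on every region $\set{|\bxi| \ge C}$ with $C \ge 1$, so Conditions~\ref{condII} and~\ref{condIII} persist after, if necessary, enlarging the thresholds $C_0$ (resp.\ $C_k$) to be $\ge 1$.

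With these checks in hand, I would apply Proposition~\ref{thm:aexpconcreteu} to the uncoupled \pd{} operator $\BU\BA\BU^* \in \BU\BS_m^1$ with $\alpha = 1$, obtaining for every $K \in \R$ a discrete set $L \subset (0,d+K)$ and constants $C_0^\pm$, $C_{j,q}^\pm$ for which $N^\pm(\BU\BA\BU^*;\rho)$ admits the expansion \eqref{eq:asymptoticsconcrete} (with $\alpha = 1$) as $\rho \to \infty$. Finally, Corollary~\ref{cor:unitary}, applied with the unitary $\BU \in \BS_m^0$ (whose image $\BU^\sharp$ is unitary in $\FA$ by Lemma~\ref{lem:sharpS0}), gives $N(J;\BA) = N(J; \BU\BA\BU^*)$ for every Borel set $J \subseteq \R$; specialising to $J = [0,\lambda)$ and $J = (-\lambda,0]$ carries the expansion over to $N^\pm(\BA;\lambda)$, which is the assertion. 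I do not anticipate a real obstacle: the single point to get right is the first step --- that conjugation by $\BU$ turns the perturbed free Dirac operator into a genuine uncoupled operator with principal symbol $\pm|\bxi|$ --- and once that is observed the theorem is immediate from Proposition~\ref{thm:aexpconcreteu} together with unitary invariance of the density of states. This is the most elementary of the Dirac applications precisely because the hypothesis $\BB \in \BU\BS_m^\beta$ removes any need for the gauge transform of Part~I.
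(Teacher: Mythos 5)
Your proposal is correct and is essentially the paper's own argument: the paper disposes of this theorem in one sentence ("they follow directly from the fact that after conjugation by $\BU$, these operators are uncoupled"), and you have simply written out that observation — conjugate by $\BU$, note $\BU\BA\BU^* = \Op(\abs{\bxi}\Gamma) + \BR_0 + \BB$ is an uncoupled \pd{} operator inheriting Conditions \ref{condI}--\ref{condIV}, apply Proposition \ref{thm:aexpconcreteu}, and transfer back via Corollary \ref{cor:unitary}. The one caveat is that, per the computation in the proof of Proposition \ref{prop:diagdirac}, the symbol of $\BR_0$ on $\set{\abs\bxi<1}$ is $\ba_d(\bxi)-\abs\bxi\Gamma$, which is block-anti-diagonal rather than diagonal, so $\BR_0$ is not literally uncoupled; this is harmless (it is an $\BS_m^{-\infty}$ operator with compactly supported symbol, removable by the spectral-localisation argument of Lemmas \ref{lem:spectralperturb} and \ref{lem:qpenough}) and is an imprecision already present in the paper's statement of that proposition.
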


\begin{thm}
  Let $m(d)$ be the dimension of the spinor space on which $\BA_d$ acts, $\beta
  < 1$ and $\BB \in \BU\BS_m^\beta$ be periodic. Put $\BA = \BA_d + \BU^* \BB
  \BU$. Then, $\BA$ has the Bethe--Sommerfeld property.
\end{thm}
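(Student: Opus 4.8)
The plan is to reduce the statement to the uncoupled case already treated in Proposition~\ref{thm:uncoupledbs}, and then to dispose of the smoothing remainder produced by the diagonalisation of the free Dirac operator.

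Since $\BU\in\BS_m^0$ is unitary, $\spec(\BA)=\spec(\BU\BA\BU^\ast)$, and by Proposition~\ref{prop:diagdirac} we may write $\BU\BA_d\BU^\ast=\Op(|\bxi|\Gamma)+\BR$, where $\BR\in\BS_m^{-\infty}$ has a symbol supported in $\{\,|\bxi|<1\,\}$. Hence
\begin{equation*}
  \BU\BA\BU^\ast=\BA'+\BR,\qquad \BA':=\Op(|\bxi|\Gamma)+\BB\in\BU\BS_m^1\cap\BS\BE\BS_m^1 .
\end{equation*}
The operator $\BA'$ is uncoupled, periodic and self-adjoint, and its principal part $\Op(|\bxi|\Gamma)\in\BD\BE\BS_m^1$ has symbol $\diag(a_1|\bxi|,\dots,a_m|\bxi|)$ with $a_j=\pm1\ne0$, that is, it is of the form \eqref{eq:defh0} with $\alpha=1$. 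As $\Gamma$ has both $+1$ and $-1$ as eigenvalues, $\BA'$ is unbounded above and below. Proposition~\ref{thm:uncoupledbs} therefore applies to $\BA'$ and provides $\tilde\lambda,S,c>0$ with $[\tilde\lambda,\infty)\subset\spec(\BA')$, $(-\infty,-\tilde\lambda]\subset\spec(\BA')$, and $\zeta(\pm\lambda;\BA')\ge c\lambda^{-S}$ for all $\lambda\ge\tilde\lambda$.

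It remains to check that adding $\BR\in\BS_m^{-\infty}$ preserves the Bethe--Sommerfeld property, and this is the only delicate point: $\BR$ is bounded but \emph{not} small, so it cannot be treated as a small perturbation in the sense of Section~\ref{sec:ids}; instead one must exploit that $\BR$ is localised at low frequencies. On each fibre $\ell^2_\bk(\Theta;\C^m)$ the operator $\BR(\bk)$ is block-diagonal in the momentum variable and supported on the finitely many $\btheta\in\Theta$ with $|\btheta+\bk|<1$, so it has rank bounded uniformly in $\bk$ and its support is spectrally far from $\pm\lambda$ once $\lambda$ is large. One then argues as in the passage from Proposition~\ref{thm:uncoupledbs} to Theorem~\ref{thm:bs} through Theorem~\ref{thm:ctou}: first replace $\BA'$ by a quasi-periodic approximation --- admissible because $\BB$ is periodic (Condition~\ref{condIII}) --- so that a finite decomposition of $\Xi_m$ into momentum shells adapted to the diagonal part of $\BA'$ satisfies the adjacent-coupling hypothesis of Lemma~\ref{lem:spectralperturb}; choose the lowest shell $P_0$ to contain the support of the symbol of $\BR$ and the intermediate shells to reach momenta $\asymp\lambda$. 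Then all the relevant distances $D_l$ between an interval $I$ centred at $\pm\lambda$ and the shells are $\gg\lambda$ while the number $L$ of shells is $\asymp\lambda$, and Lemma~\ref{lem:spectralperturb}, applied fibrewise (Remark~\ref{rem:Iinfinitybs}), gives for every $M$, all large $\lambda$, every $\bk\in\CO$ and every such $I$,
\begin{equation*}
  N(I_{-\eps};\BA'(\bk))\le N\big(I;(\BU\BA\BU^\ast)(\bk)\big)\le N(I_{\eps};\BA'(\bk)),\qquad \eps\ll\lambda^{-M},
\end{equation*}
the super-polynomial smallness of $\eps$ coming from the factor $3^{2-L/2}$ of order $3^{-c\lambda}$ in \eqref{eq:epschoice}.

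Feeding this into Lemma~\ref{lem:overlapidscont} and Proposition~\ref{prop:overlapcontev} yields $\zeta(\pm\lambda;\BU\BA\BU^\ast)\ge\zeta(\pm\lambda;\BA')-2\eps\ge c\lambda^{-S}-\bigo{\lambda^{-M}}$, and choosing $M>S$ the right-hand side is positive for all $\lambda$ large enough. Hence $\spec(\BA)=\spec(\BU\BA\BU^\ast)$ contains intervals $[\lambda_0,\infty)$ and $(-\infty,-\lambda_0]$ for some $\lambda_0>0$, i.e. $\BA$ has the Bethe--Sommerfeld property. The main obstacle, as indicated, is precisely this last step: the smoothing low-energy remainder $\BR$ has fixed size, so it has to be controlled through its spectral localisation (via Lemma~\ref{lem:spectralperturb}) together with the quantitative overlap bound $\zeta\ge c\lambda^{-S}$, rather than by any smallness of its norm.
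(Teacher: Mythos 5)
Your proposal is correct, and it follows the same route the paper (very tersely) indicates: conjugate by $\BU$, observe that the result is an uncoupled \pd{} operator of order $1$ with principal symbol $\abs{\bxi}\,\Gamma$, and invoke Proposition~\ref{thm:uncoupledbs}. Where you add genuine value is in noticing that the paper's one-line justification (``these operators are uncoupled after conjugation'') is not literally true: by Proposition~\ref{prop:diagdirac}, $\BU\BA_d\BU^*$ equals $\Op(\abs\bxi\Gamma)$ only up to the remainder $\BR\in\BS_m^{-\infty}$, whose symbol for $\abs\bxi<1$ contains the block--antidiagonal term $\ba_d(\bxi)$ and is therefore \emph{coupled}; moreover $\BR$ is bounded but of fixed size, so Lemma~\ref{lem:monotonicity} alone cannot absorb it against an overlap of size $c\lambda^{-S}$. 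Your treatment of $\BR$ via its low-frequency localisation and Lemma~\ref{lem:spectralperturb} applied fibrewise --- exactly the mechanism already used in Lemma~\ref{lem:qpenough} and in the passage from Proposition~\ref{thm:uncoupledbs} to Theorem~\ref{thm:bs} --- is the right repair, and the resulting $\eps$ is indeed far smaller than the lower bound $\gg\lambda^{-S-d/\alpha}$ on the spectral gap parameter $\delta$ needed for Lemma~\ref{lem:overlapidscont} and Proposition~\ref{prop:overlapcontev}. Two minor quantitative remarks: taking $L\asymp\log\lambda$ shells (as in Lemma~\ref{lem:qpenough}) already yields $\eps\ll\lambda^{-M}$ for any prescribed $M$, which suffices, so the claim $L\asymp\lambda$ is unnecessary (and, once the quasi-periodic truncation radius $R(\rho)$ grows with $\rho$, the shell widths must exceed $R(\rho)$, so one would in any case only get $L\asymp\lambda^{1-1/k}$); and the threshold should be $M>S+d/\alpha$ rather than $M>S$, which is immaterial since $M$ is free.
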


\bibliographystyle{alpha}
\bibliography{bibliography}

\begin{thebibliography}{{Shu}79a}

\bibitem[BP09]{barbpar}
G.~Barbatis and L.~Parnovski.
\newblock Bethe-{S}ommerfeld conjecture for pseudodifferential perturbation.
\newblock {\em Comm. Partial Differential Equations}, 34(4-6):383--418, 2009.

\bibitem[BS87]{BirmanSolomyak}
M.~Birman and M.~Solomjak.
\newblock {\em Spectral Theory of Self-Adjoint Operators in Hilbert Space}.
\newblock Mathematics and its Applications. Springer Netherlands, 1987.

\bibitem[CMS73]{CoburnMoyerSinger}
L.~A. Coburn, R.~D. Moyer, and I.~M. Singer.
\newblock {$C\sp*$}-algebras of almost periodic pseudo-differential operators.
\newblock {\em Acta Math.}, 130:279--307, 1973.

\bibitem[Dix81]{Dixmier}
J.~Dixmier.
\newblock {\em Von {N}eumann {A}lgebras}, volume~27 of {\em North-Holland
  Mathematical Library}.
\newblock North-Holland Publishing Co., Amsterdam-New York, 1981.
\newblock With a preface by E. C. Lance, Translated from the second French
  edition by F. Jellett.

\bibitem[GM91]{gilbertmurray}
J.~E. Gilbert and M.~A.~M. Murray.
\newblock {\em Clifford algebras and {D}irac operators in harmonic analysis},
  volume~26 of {\em Cambridge Studies in Advanced Mathematics}.
\newblock Cambridge University Press, Cambridge, 1991.

\bibitem[H{\"{o}}r07]{HormanderIII}
L.~H{\"{o}}rmander.
\newblock {\em The analysis of linear partial differential operators. {III}}.
\newblock Classics in Mathematics. Springer, Berlin, 2007.
\newblock Pseudo-differential operators, Reprint of the 1994 edition.

\bibitem[Ivr19]{Ivrii2018}
V.~Ivrii.
\newblock {\em Complete Semiclassical Spectral Asymptotics for Periodic and
  Almost Periodic Perturbations of Constant Operators}, pages 583--606.
\newblock Springer, 2019.

\bibitem[Kuc93]{Kuchment}
P.~Kuchment.
\newblock {\em Floquet theory for partial differential equations}, volume~60 of
  {\em Operator Theory: Advances and Applications}.
\newblock Birkh\"{a}user Verlag, Basel, 1993.

\bibitem[MPS14]{MorParSht2014}
S.~Morozov, L.~Parnovski, and R.~Shterenberg.
\newblock Complete asymptotic expansion of the integrated density of states of
  multidimensional almost-periodic pseudo-differential operators.
\newblock {\em Ann. Henri Poincar\'e}, 15(2):263--312, 2014.

\bibitem[Na{\u{\i}}72]{Naimark}
M.~A. Na{\u{\i}}mark.
\newblock {\em Normed algebras}.
\newblock Wolters-Noordhoff Publishing, Groningen, third edition, 1972.
\newblock Translated from the second Russian edition by Leo F. Boron,
  Wolters-Noordhoff Series of Monographs and Textbooks on Pure and Applied
  Mathematics.

\bibitem[Par08]{parnovski}
L.~Parnovski.
\newblock Bethe-{S}ommerfeld conjecture.
\newblock {\em Ann. Henri Poincar\'{e}}, 9(3):457--508, 2008.

\bibitem[PS10]{ParSob2010}
L.~Parnovski and A.~V. Sobolev.
\newblock Bethe-{S}ommerfeld conjecture for periodic operators with strong
  perturbations.
\newblock {\em Invent. Math.}, 181(3):467--540, 2010.

\bibitem[PS12]{ParSht2012}
L.~Parnovski and R.~Shterenberg.
\newblock Complete asymptotic expansion of the integrated density of states of
  multidimensional almost-periodic {S}chr\"odinger operators.
\newblock {\em Ann. of Math. (2)}, 176(2):1039--1096, 2012.

\bibitem[PS16]{ParSht2016}
L.~Parnovski and R.~Shterenberg.
\newblock Complete asymptotic expansion of the spectral function of
  multidimensional almost-periodic {S}chr\"odinger operators.
\newblock {\em Duke Math. J.}, 165(3):509--561, 2016.

\bibitem[PS19]{ParSht2019}
L.~Parnovski and R.~Shterenberg.
\newblock Perturbation theory for almost-periodic potentials {I}:
  one-dimensional case.
\newblock {\em Comm. Math. Phys.}, 366(3):1229--1257, 2019.

\bibitem[Roz78]{Rozenbljum1978}
G.~V. Rozenbljum.
\newblock Near-similarity of operators and the spectral asymptotic behavior of
  pseudodifferential operators on the circle.
\newblock {\em Trudy Moskov. Mat. Obshch.}, 36:59--84, 294, 1978.

\bibitem[Shu78]{Shubin1978}
M.~A. Shubin.
\newblock Almost periodic functions and partial differential operators.
\newblock {\em Uspehi Mat. Nauk}, 33(2):3--47, 247, 1978.

\bibitem[{Shu}79a]{Shubin1979a}
M.~A. {Shubin}.
\newblock {Pseudodifferential almost-periodic operators and von Neumann
  algebras.}
\newblock {\em {Trans. Mosc. Math. Soc.}}, 35:103--166, 1979.

\bibitem[Shu79b]{Shubin1979}
M.~A. Shubin.
\newblock Spectral theory and the index of elliptic operators with
  almost-periodic coefficients.
\newblock {\em Uspekhi Mat. Nauk}, 34(2):95--135, 1979.

\bibitem[Skr85]{Skriganov}
M.~M. Skriganov.
\newblock Geometric and arithmetic methods in the spectral theory of
  multidimensional periodic operators.
\newblock {\em Trudy Mat. Inst. Steklov.}, 171:122, 1985.

\bibitem[Sob05]{Sobolev2005}
A.~V. Sobolev.
\newblock Integrated density of states for the periodic {S}chr\"{o}dinger
  operator in dimension two.
\newblock {\em Ann. Henri Poincar\'{e}}, 6(1):31--84, 2005.

\bibitem[Sob06]{Sobolev2006}
A.~V. Sobolev.
\newblock Asymptotics of the integrated density of states for periodic elliptic
  pseudo-differential operators in dimension one.
\newblock {\em Rev. Mat. Iberoam.}, 22(1):55--92, 2006.

\bibitem[Tay11]{TaylorII}
M.~E. Taylor.
\newblock {\em Partial differential equations {II}. {Q}ualitative studies of
  linear equations}, volume 116 of {\em Applied Mathematical Sciences}.
\newblock Springer, New York, second edition, 2011.

\bibitem[Tha92]{thaller}
B.~Thaller.
\newblock {\em The {D}irac equation}.
\newblock Texts and Monographs in Physics. Springer-Verlag, Berlin, 1992.

\bibitem[Upm02]{Upmeier}
H.~Upmeier.
\newblock Dirac operator and real structure on {E}uclidean and {M}inkowski
  spacetime.
\newblock In {\em Noncommutative geometry and the standard model of elementary
  particle physics ({H}esselberg, 1999)}, volume 596 of {\em Lecture Notes in
  Phys.}, pages 136--151. Springer, Berlin, 2002.

\bibitem[Wei77]{Weinstein1977}
A.~Weinstein.
\newblock Asymptotics of eigenvalue clusters for the {L}aplacian plus a
  potential.
\newblock {\em Duke Math. J.}, 44(4):883--892, 1977.

\end{thebibliography}

\end{document}